\newcommand{\email}{\texttt}
\DeclareSymbolFont{AMSb}{U}{msb}{m}{n}
\DeclareSymbolFontAlphabet{\mathbb}{AMSb}
\newcommand{\binom}[2]{{#1 \choose #2}}
\newtheorem{theorem}{Theorem}[section]
\newtheorem{proposition}[theorem]{Proposition}
\newtheorem{lemma}[theorem]{Lemma}
\newtheorem{definition}[theorem]{Definition}
\newtheorem{example2}[theorem]{Example}
\newcommand{\qed}{}
\newenvironment{proof}{\begin{genproof}}{\end{genproof}}
\newenvironment{genproof}[1][]{\begin{trivlist}\item \textbf{Proof#1:} }{\nolinebreak\qquad\nolinebreak\framebox(5,5)[lb]{}\nolinebreak\end{trivlist}}
\newenvironment{genproofnoqed}[1][]{\begin{trivlist}\item \textbf{Proof#1:} }{\nolinebreak\end{trivlist}}
\newenvironment{equation*}{$$}{$$}
\newenvironment{acknowledgements}{\subsection*{Acknowledgements}}{}
\begin{document}

\title{Parameterized Complexity of Induced Graph Matching on Claw-Free Graphs\thanks{An extended abstract of the results in this paper have appeared in the Proceedings of 20th European Symposium on Algorithms (ESA 2012)~\cite{HermelinML12}. Partially supported by ERC StG project PAAl no.\ 259515.}}

\author{Danny Hermelin\footnote{Ben-Gurion University of the Negev, Beer-Sheva, Israel, \email{hermelin@bgu.ac.il}.} \and Matthias Mnich\footnote{Cluster of Excellence MMCI, Saarbr{\"u}cken, Germany, \email{mmnich@mmci.uni-saarland.de}} \and Erik~Jan~van~Leeuwen\footnote{Max-Planck Institut f\"{u}r Informatik, Saarbr\"{u}cken, Germany, \email{erikjan@mpi-inf.mpg.de}}}
\date{}

\maketitle

\begin{abstract}
The {\sc Induced Graph Matching} problem asks to find $k$ disjoint induced subgraphs isomorphic to a given graph~$H$ in a given graph $G$ such that there are no edges between vertices of different subgraphs.
This problem generalizes the classical {\sc Independent Set} and {\sc Induced Matching} problems, among several other problems.
We show that {\sc Induced Graph Matching} is fixed-parameter tractable in $k$ on claw-free graphs when $H$ is a fixed connected graph, and even admits a polynomial kernel when~$H$ is a complete graph.
Both results rely on a new, strong, and generic algorithmic structure theorem for claw-free graphs.

Complementing the above positive results, we prove $\mathsf{W}[1]$-hardness of {\sc Induced Graph Matching} on graphs excluding $K_{1,4}$ as an induced subgraph, for any fixed complete graph $H$.
In particular, we show that {\sc Independent Set} is $\mathsf{W}[1]$-hard on $K_{1,4}$-free graphs.

Finally, we consider the complexity of {\sc Induced Graph Matching} on a large subclass of claw-free graphs, namely on proper circular-arc graphs. We show that the problem is either polynomial-time solvable or $\mathsf{NP}$-complete, depending on the connectivity of $H$ and the structure of $G$.
\end{abstract}

\section{Introduction}
A graph is \emph{claw-free} if no vertex in the graph has three pairwise nonadjacent neighbors, i.e.~if it does not contain a copy of $K_{1,3}$ as an induced subgraph.
The class of claw-free graphs contains several well-studied graph classes such as line graphs, unit interval graphs, de Bruijn graphs, the complements of triangle-free graphs, and graphs of several polyhedra and polytopes.
Consequently, claw-free graphs have attracted much interest, and are by now the subject of numerous research papers (see, e.g., the surveys by Faudree et al.~\cite{FaudreeFR97} and Chudnovsky and Seymour~\cite{ChudnovskySeymour05}).

Our understanding of claw-free graphs and their structure was greatly extended with the recently developed theory of Chudnovsky and Seymour.
This highly technical and detailed claw-free structure theory is contained in a sequence of seven papers~\cite{ChudnovskyS2007,ChudnovskyS2008a,ChudnovskyS2008b,ChudnovskyS2008c,ChudnovskyS2008d,ChudnovskyS2010,ChudnovskyS2012} (see also the accessible survey of Chudnovsky and Seymour~\cite{ChudnovskySeymour05}), and culminates in several variants of \emph{claw-free decomposition theorems}.
Each of these decomposition theorems shows that every connected claw-free graph can be constructed by starting with a collection of ``basic claw-free graphs'', and then gluing these basic graphs together in some controlled manner.

Recently, Chudnovsky and Seymour's claw-free structure theory has been deployed for algorithmic purposes.
For example, it was used to develop approximation algorithms for {\sc Chromatic Number} on quasi-line graphs~\cite{ChudnovskyOvetsky07, KingReed08}.
Faenza et al.~\cite{FaenzaOrioloStauffer11} gave an algorithm to find a decomposition of claw-free graphs, and used this algorithm to expedite the fastest previously known algorithm for {\sc Weighted Independent Set} on claw-free graphs, from $O(n^6)$ to $O(n^3)$.
Their decomposition, although similar in nature, does not rely on Chudnovsky and Seymour's structure theorem.
In contrast, the authors of this paper, along with Woeginger, independently gave an algorithm for the Chudnovsky-Seymour decomposition theorem, and used this to show that {\sc Dominating Set} is fixed-parameter tractable on claw-free graphs, and admits a polynomial kernel as well~\cite{HermelinMLW11}.
(The fixed-parameter tractability result was shown independently by Cygan et al.~\cite{CyganPPPW11} using different methods.)
Recently, Golovach et al.~\cite{GolovachPL12} used the algorithmic decomposition theorem by Hermelin et al.~\cite{HermelinMLW11} to give a fixed-parameter algorithm for the {\sc Induced Disjoint Paths} problem on claw-free graphs.

In this paper, we extend this line of research by considering the {\sc Induced Graph Matching} problem on claw-free graphs.
This problem generalizes several important and well-studied problems, such as {\sc Independent Set} and {\sc Induced Matching}.
It can be defined as follows.
We are given (claw-free) graphs $G,H$ and an integer $k$, and the goal is to determine whether there is an induced occurrence of $k \cdot H$ in $G$; that is, whether there is a set of induced subgraphs $M = \{H_1,\ldots,H_k\}$ in $G$, each isomorphic to~$H$, that are pairwise vertex-disjoint and satisfy that $\{u,v\} \notin E(G)$ for any pair of vertices $u \in V(H_i)$ and $v \in V(H_j)$ with $i \neq j$.
Such an induced occurrence of $k\cdot H$ is called an \emph{induced $H$-matching} in $G$ of size $k$.
As discussed further below, the study of this problem among other things requires a significantly stronger decomposition theorem for claw-free graphs compared to those developed previously.

The {\sc Induced Graph Matching} problem is closely related to the equally well-studied {\sc Graph Matching} problem.
In {\sc Graph Matching}, the goal is to find an \emph{$H$-matching}, that is, a set of subgraphs $H_{1},\ldots,H_{k}$ of $G$ that are each isomorphic to $H$ and pairwise vertex-disjoint (but edges between copies of $H$ are allowed).
Observe that when $H$ has minimum degree at least 3, {\sc Graph Matching} reduces to {\sc Induced Graph Matching} by subdividing all edges of $G$ and $H$.
On the other hand, on line graphs (a subclass of claw-free graphs), a reduction exists in the other direction.
Recall that the \emph{line graph} $L(G)$ of a graph $G$ consists of a vertex $v_{e}$ for each $e \in E(G)$ and there is an edge $\{v_{e},v_{f}\}$ in $L(G)$ if and only if $e$ and $f$ are incident to the same vertex in $G$.
The graph $G$ is known as the \emph{pre-image} of $L(G)$.
It can be easily seen that there is a bijection between the set of induced $L(H)$-matchings on $L(G)$ and the set of $H$-matchings on $G$ when $L(H)$ is not a triangle.
Hence, {\sc Induced Graph Matching} on line graphs inherits essentially all known complexity results of {\sc Graph Matching} on general graphs.
In particular, this implies that {\sc Induced Graph Matching} is $\mathsf{NP}$-complete on line graphs if $H$ is not edgeless~\cite{GareyJ1979,KirkpatrickH1983,KoblerR2003} and not a triangle\footnote{Observe that the reduction given here does not work in the case that $H$ is a triangle, because the pre-image of a triangle is not unique.
The triangle is also the only line graph for which the pre-image is not unique~\cite{Whitney1932}, which is why the reduction works for all other cases.
We give a proof of the $\mathsf{NP}$-hardness of {\sc Induced Graph Matching} for $H = K_3$ in Appendix~\ref{sec:k3}.}.

In the context of parameterized complexity, several results on {\sc Graph Matching} are known.
In particular, the problem is known to be fixed-parameter tractable by the size of the matching~$k$~\cite{FellowsKNRRSTW2008,KneisMRR2006} and even has a polynomial kernel (when $H$ is fixed)~\cite{Moser2009}.
Recently, tight lower bounds on the kernel size for specific graphs $H$ were obtained by Dell and Marx~\cite{DellMarx2012}, and Hermelin and Wu~\cite{HermelinWu2012}.
Note again that these lower bound results immediately carry over to {\sc Induced Graph Matching} on line graphs when $H$ is not a triangle.

On general graphs, the {\sc Induced Graph Matching} problem is $\mathsf{W}[1]$-hard for any complete graph~$H$ when parameterized by the matching size $k$~\cite{DowneyFellows1999,MoserThilikos2006}.
Marx~\cite{Marx2005} showed that the {\sc Induced Graph Matching} problem for a graph $H$ on a single vertex (i.e.~the {\sc Independent Set} problem) is $\mathsf{W}[1]$-hard on $K_{1,5}$-free graphs.
Another related result by Cameron and Hell~\cite{CameronHell2006} shows that on certain graph classes the problem can be reduced to an instance of {\sc Independent Set} on a graph in that same graph class, provided that the set of all occurrences of $H$ are given.
Their results, however, do not provide methods to find this set, nor do they apply to claw-free graphs.

\subsection{Our Results}
The main result of this paper is that {\sc Induced Graph Matching} is fixed-parameter tractable on claw-free graphs when parameterized by $k$ for fixed connected graphs $H$.
It is important to note that requiring~$H$ to be fixed is essential, since the problem becomes $\mathsf{W}[1]$-hard when parameterized by $k + |V(H)|$, even for line graphs and co-bipartite graphs~\cite{GolovachPL12,GolovachPL2012}.
In the special case that $H$ is a fixed complete graph, we also show that the problem admits a polynomial kernel.
In contrast, we prove that the problem becomes $\mathsf{W}[1]$-hard on $K_{1,4}$-free graphs when parameterized by $k$, even if $|V(H)| = 1$.
These results both complement and tighten the above-mentioned known hardness results on the problem.

We also consider {\sc Induced Graph Matching} on a large and important subclass of claw-free graphs, namely on proper circular-arc graphs.
We prove that, if $H$ is connected and $G$ has a representation in which no three arcs cover the entire circle (i.e.~it is a proper Helly circular-arc graph), then the problem is polynomial-time solvable. However, if $H$ is connected but $G$ has no representation in which no three arcs cover the entire circle, then the problem is $\mathsf{NP}$-complete. If $H$ is not connected, then we observe that {\sc Induced Graph Matching} is $\mathsf{NP}$-complete on graphs $G$ that are in a subclass of proper circular-arc graphs, and show that it is fixed-parameter tractable on general proper circular-arc graphs $G$ when parameterized by $k$ and the number of connected components of $H$.

\subsection{Outline of the Main Algorithm}
First, we solve the {\sc Induced Graph Matching} problem on a subclass of claw-free graphs, called fuzzy circular-arc graphs (a superclass of proper circular-arc graphs).
On fuzzy circular-arc graphs, the {\sc Induced Graph Matching} problem was not previously known to be polynomial-time solvable for fixed~$H$.
We show that in fact it is, and give evidence why this might be the best possible result we can hope for.

Second, we prove a new decomposition theorem for claw-free graphs.
This new decomposition theorem becomes necessary, because both the original Chudnovsky-Seymour decomposition and the decomposition by Hermelin et al.~\cite{HermelinMLW11} require that certain structures have been removed from the claw-free graph, by some problem-specific preprocessing.
Yet, for the {\sc Induced Graph Matching} problem, it seems hard to get rid of these structures (so-called \emph{twins} and \emph{proper W-joins}), motivating our new decomposition theorem which can handle these structures.
We further show that this decomposition can be found in polynomial time, that is, a ``decomposition structure'' is computed that models the various parts of the input claw-free graph and their interactions.
The new decomposition theorem that we develop has the advantage that it is simpler to state than the earlier theorems by Chudnovsky-Seymour and Hermelin et al.~\cite{HermelinMLW11}.
In particular, it decomposes the input claw-free graph into only two graph classes, one of them being the aforementioned fuzzy circular-arc graphs. We stress that the decomposition is not problem-specific, and thus may be of independent interest.

Third, employing our refined decomposition theorem, we solve {\sc Induced Graph Matching} on claw-free graphs by applying the color-coding technique~\cite{AlonYZ1995}.
To give some intuition behind this approach, we recall that to solve {\sc Induced Graph Matching} on line graphs $G$ and $H$, we need to find pairwise vertex-disjoint induced subgraphs $H_1,\ldots,H_k$ in the pre-image of $G$ such that $H_{i}$ is isomorphic to the pre-image of $H$.
In particular, we show that we can find these $H_{i}$ in our structural decomposition for $G$.
However, if $G$ and $H$ are claw-free, there is no notion of pre-image and therefore no immediate relation between~$H$ and the $H_i$ that we would want to find.
Instead, we show that it is sufficient to find, in a coordinated manner and using color-coding, isomorphic copies of $H$ itself.
Thus, in some sense, color-coding allows us to reduce {\sc Induced Graph Matching} to {\sc Graph Matching}.

To obtain a polynomial kernel when $H$ is a fixed complete graph of order $h$, we reduce the size of the decomposition structure to $O(h^{4}k^{2})$.
We then construct a kernel that mimics an easy algorithm for {\sc Induced Graph Matching} on such reduced decomposition structures.
This kernel actually reduces the problem to an equivalent instance of {\sc Weighted Independent Set} of size polynomial in $k$, which in turn can be reduced to an equivalent instance of {\sc Induced Graph Matching} on claw-free graphs.
This approach substantially simplifies the approach that was used for the polynomial kernel for \textsc{Dominating Set} on claw-free graphs~\cite{HermelinMLW11}.

\paragraph{Organization}
In Section~\ref{sec:prelim}, we collect definitions used in the paper.
Section~\ref{sec:circ} presents a polynomial-time algorithm for {\sc Induced Graph Matching} on fuzzy circular-arc graphs when $H$ is a fixed connected graph.
It also considers the complexity of {\sc Induced Graph Matching} on circular-arc graphs.
Then, in Section~\ref{sec:algorithmicdecompositionofclawfreegraphs}, we prove a new algorithmic decomposition theorem for claw-free graphs.
The fixed-parameter algorithm for {\sc Induced Graph Matching} on claw-free graphs is given in Section~\ref{sec:algo}.
Proofs of the parameterized intractability of {\sc Induced Graph Matching} on $K_{1,4}$-free graphs are presented in Section~\ref{sec:hardness}.
We conclude in Section~\ref{sec:discussion}.
The $\mathsf{NP}$-hardness of {\sc Induced Graph Matching} for $H = K_3$ is proved in Appendix~\ref{sec:k3}.

\section{Preliminaries}
\label{sec:prelim}
All graphs considered in this paper are finite and undirected, but sometimes we consider multi-graphs with possible self-loops (we will explicitly mention when we consider such graphs).
For a (multi-)graph $G$, let $V(G)$ denote its vertex set and $E(G)$ its (multi-)set of edges.
Let $G$ be a graph.
We call $I \subseteq V(G)$ an \emph{independent set} if no two vertices of~$I$ are adjacent, and use $\alpha(G)$ to denote the size of a maximum independent set of $G$.
The neighborhood of a vertex $v \in V(G)$ is denoted by $N(v) = \{u \in V(G) \mid \{u,v\} \in E(G)\}$, and the closed neighborhood of $v$ is $N[v] = N(v) \cup \{v\}$.
We extend this notation to sets of vertices $X \subseteq V(G)$ by $N(X):= (\bigcup_{v \in X} N(v)) \setminus X$ and $N[X]:= N(X)\cup X$.
Given $X \subseteq V(G)$, the subgraph induced by $X$ is $G[X] = (X, E(G) \cap (X \times X))$, and the graph $G-X$ is the subgraph induced by $V(G) \setminus X$.
Then $G$ is \emph{claw-free} if $\alpha(G[N(v)]) \leq 2$ for any $v \in V(G)$.

Two simple graphs $G$ and $H$ are \emph{isomorphic} if there exists a bijection $\phi : V(H) \rightarrow V(G)$ such that $u$ and $v$ are adjacent in $H$ if and only if $\phi(u)$ and $\phi(v)$ are adjacent in $G$. We call an induced subgraph of $G$ that is isomorphic to $H$ an \emph{occurrence} of $H$ in $G$. Note that for every occurrence of $H$ in $G$ there is an injection $\phi : V(H) \rightarrow V(G)$ such that $u$ and $v$ are adjacent in $H$ if and only if $\phi(u)$ and $\phi(v)$ are adjacent in $G$.

We now define intersection graphs in general, and proper circular-arc graphs and several variants in particular.
For a family $\mathcal A$ of sets, a graph $G$ is the \emph{intersection graph} of $\mathcal A$ if each vertex corresponds to a set in $\mathcal A$ and there is an edge between two vertices if and only if the corresponding sets have non-empty intersection.
We call $\mathcal A$ a \emph{representation} of $G$.
We can then define (proper) circular-arc graphs. Consider a set of arcs on a circle (throughout, we assume that arcs are not a single point).
Then the set is called \emph{proper} if no arc is a subset of another. A graph is a \emph{(proper) circular-arc graph} if it is the intersection graph of a set of (proper) arcs on a circle. We call a set of arcs \emph{long} if the union of any three arcs does not cover the entire circle. The intersection graph of a long, proper set of arcs is a \emph{long proper circular-arc graph}.
The intersection graph of a set of (proper) intervals of a line is called a \emph{(proper) interval graph} (a set of intervals is proper if no interval is a subset of another). Note that the intersection graph of a set of (proper) arcs that does not cover the entire circle is a (proper) interval graph.

To define fuzzy circular-arc graphs, we start with the definition of a fuzzy intersection graph. Given a family $\mathcal{A}$ of sets, a graph $G$ is a \emph{fuzzy intersection graph} of $\mathcal A$ if there is an edge in $G$ between two vertices if the corresponding sets intersect in more than one element, and there is no edge if the corresponding sets do not intersect.
The fuzziness stems from what happens if the two sets intersect in precisely one element: the graph may have an edge or not.
We again call $\mathcal A$ a \emph{representation} of $G$.
Now, as before, consider a set of arcs of a circle. The set is \emph{almost proper} if any two arcs either have the same endpoints and cover the same part of the circle, or none of the two is a subset of the other. The set is \emph{almost strict} if, for any maximal set of arcs with the same endpoints, at most one of these endpoints is also an endpoint of an arc outside the set. Then a graph is a \emph{fuzzy circular-arc graph} if it is the fuzzy intersection graph of an almost-proper, almost-strict set of arcs on a circle. 
We sometimes also consider strict sets of arcs: a set of arcs is \emph{strict} if no two arcs share an endpoint.

Note that any intersection graph of an almost-proper set of arcs on a circle is a proper circular-arc graph. Hence, a fuzzy circular-arc graph without any fuzziness is just a proper circular-arc graph.
It is known that any fuzzy circular-arc graph is claw-free. Moreover, it follows from a result of Oriolo et al.~\cite{OrioloPS2012} that fuzzy circular-arc graphs can be recognized in polynomial time (in Theorem~\ref{thm:circ:fuzzy-reg}, we prove how this follows).

Finally, we use $\mathbb{N}$ to denote the set of nonnegative integers, i.e.~$\mathbb{N} = \{0,1,\ldots\}$.

\section{\textsc{Induced Graph Matching} on Proper and Fuzzy Circular-Arc Graphs}
\label{sec:circ}
In this section, we consider the complexity of {\sc Induced Graph Matching} on proper circular-arc graphs and on fuzzy circular-arc graphs.
We first completely analyze the complexity of this problem if $G$ is a proper circular-arc graph.
Then we observe that the problem is $\mathsf{W}[1]$-hard on fuzzy circular-arc graphs when parameterized by $|V(H)|$.
Finally, we give a polynomial-time algorithm on fuzzy circular-arc graphs when $H$ is a fixed connected graph.

\subsection{Proper Circular-Arc Graphs}
We consider the complexity of {\sc Induced Graph Matching} on proper circular-arc graphs.
These results generalize results by Heggernes et al.~\cite{HeggernesHMV2012} for {\sc Induced Subgraph Isomorphism} on proper interval graphs.
Recall that {\sc Induced Subgraph Isomorphism} asks, given two graphs $G$ and~$H$, whether $G$ contains an induced subgraph that is isomorphic to $H$. 
Also recall that a \emph{(proper) circular-arc graph} is the intersection graph of a set of (proper) arcs on a circle. If the union of any three arcs does not contain the entire circle, then the graph is called \emph{long}. If the union of all arcs does not cover the circle, then the graph is a \emph{(proper) interval graph}. 

\begin{theorem}
\label{thm:circ:int-int-poly}
{\sc Induced Graph Matching} on a proper interval graph $G$ and a connected proper interval graph $H$ can be solved in polynomial time.
\end{theorem}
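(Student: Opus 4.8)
The plan is to use the canonical orderings of $G$ and $H$ to recast the problem as an interval scheduling problem on a line. First I would fix proper interval representations of $G$ and $H$ in which all endpoints are distinct, and order the vertices of each graph by left endpoint; for a proper interval graph this coincides with the order by right endpoint, and the resulting \emph{canonical ordering} $v_1<\dots<v_n$ of $G$ has the umbrella property: if $a<b<c$ and $\{v_a,v_c\}\in E(G)$ then $\{v_a,v_b\},\{v_b,v_c\}\in E(G)$, so each closed neighbourhood $N[v_a]$ is an interval of the ordering; similarly for $H$, with canonical ordering $u_1<\dots<u_h$ and $N[u_a]=\{u_{L(a)},\dots,u_{R(a)}\}$. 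I would then invoke the classical fact that the canonical ordering of a \emph{connected} proper interval graph is unique up to reversal, together with the observation that restricting a proper interval ordering to an induced subgraph gives a proper interval ordering of that subgraph. Since the vertex set $S$ of any occurrence of $H$ in $G$ induces a graph isomorphic to the connected graph $H$, the ordering $S$ inherits from $v_1<\dots<v_n$ must agree, up to reversal, with $u_1<\dots<u_h$ transported through the isomorphism; hence every occurrence of $H$ is order-preserving. Furthermore, because $G[S]\cong H$ is connected, the union of the intervals representing $S$ is a single interval, whose endpoints are the left endpoint of the interval of the first vertex of $S$ and the right endpoint of the interval of the last vertex of $S$ in the canonical ordering; writing $i<j$ for the \emph{positions} of these two vertices, I call $(i,j)$ the \emph{span} of the occurrence.

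Next I would prove a separation lemma: two occurrences $S,S'$ of $H$ in $G$ are vertex-disjoint and have no edges between them if and only if their spans, viewed as position intervals $[i,j]$, are disjoint. If the spans overlap, then using distinct endpoints one checks that an interval of a vertex of $S$ properly meets an interval of a vertex of $S'$ (or else the spans contain a common vertex's interval, so $S\cap S'\neq\emptyset$), so $\{S,S'\}$ is not a legal pair for a matching; conversely disjoint spans make the two unions of intervals disjoint, so $S$ and $S'$ are disjoint with no edges between. Therefore an induced $H$-matching of size $k$ in $G$ is exactly a family of $k$ occurrences of $H$ with pairwise disjoint spans. Letting $\mathsf{span}(i,j)\in\{0,1\}$ record whether $G$ contains an occurrence of $H$ with span $(i,j)$, the remaining task is a weighted interval scheduling instance: with $f(0)=0$ and $f(j)=\max\bigl(f(j-1),\,1+\max\{f(i-1):\mathsf{span}(i,j)=1\}\bigr)$ for $j\ge 1$, the instance is a yes-instance if and only if $f(n)\ge k$. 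This dynamic program runs in $O(n^2)$ time once all values $\mathsf{span}(i,j)$ are known.

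It then remains to evaluate $\mathsf{span}(i,j)$ for all $O(n^2)$ pairs. An occurrence of $H$ with span $(i,j)$ is exactly an induced copy of $H$ inside the proper interval graph $G[\{v_i,\dots,v_j\}]$ that uses both $v_i$ and $v_j$ (any induced copy contained in $\{v_i,\dots,v_j\}$ and containing $v_i,v_j$ automatically has first vertex $v_i$ and last vertex $v_j$, hence span $(i,j)$). Deciding this is {\sc Induced Subgraph Isomorphism} on a proper interval host graph, with the first and last vertices of the canonical ordering of $H$ pinned to $v_i$ and $v_j$. I would solve it by adapting the polynomial-time algorithm of Heggernes et al.\ for {\sc Induced Subgraph Isomorphism} on proper interval graphs, which likewise proceeds by matching the essentially unique canonical orderings: restricting the host to the contiguous block $\{v_i,\dots,v_j\}$ is harmless since that induced subgraph is again proper interval, pinning the two extreme vertices of $H$ is straightforward to incorporate, and running the subroutine $O(n^2)$ times keeps the overall running time polynomial.

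I expect this last subroutine to be the main obstacle. A naive left-to-right dynamic program that scans the canonical ordering $u_1<\dots<u_h$ of $H$ and extends an order-preserving partial embedding while keeping only $O(1)$ information does not work: via the umbrella property a vertex $u_a$ whose neighbourhood reaches far forward imposes both a lower and an upper bound on the positions to which $u_{R(a)}$ and $u_{R(a)+1}$ may be mapped, and many such pending constraints can be live simultaneously (for instance when several vertices of $H$ are near-universal), so the embedding frontier does not have bounded size. Propagating these accumulated forward constraints correctly — which is precisely what the Heggernes et al.\ algorithm does, and what must be re-derived, together with the endpoint-pinning refinement, for a fully self-contained proof — is the technical heart of the argument; everything else is the bookkeeping reduction to interval scheduling sketched above.
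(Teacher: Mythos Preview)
Your approach is essentially the one the paper takes: for each candidate pair of extreme vertices, invoke the Heggernes--et al.\ subroutine to decide whether an occurrence of $H$ fits, then reduce to interval scheduling (the paper phrases that last step as {\sc Independent Set} on an auxiliary interval graph, following Cameron and Hell).

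There is, however, a genuine bug in your separation lemma and hence in the recurrence. You claim that two occurrences $S,S'$ with position-spans $[i,j]$ and $[i',j']$ are compatible if and only if these \emph{integer} intervals are disjoint. The ``only if'' direction is fine, but the ``if'' direction fails: take $H=K_2$ and let $v_1,v_2,v_3,v_4$ be consecutive in the canonical ordering of $G$ with $v_2v_3\in E(G)$; then $\{v_1,v_2\}$ has position-span $[1,2]$ and $\{v_3,v_4\}$ has position-span $[3,4]$, yet the edge $v_2v_3$ makes them incompatible. What \emph{is} true is that $S,S'$ are compatible if and only if the \emph{representation} intervals $[\ell(v_i),r(v_j)]$ and $[\ell(v_{i'}),r(v_{j'})]$ are disjoint, and this is exactly how the paper builds its auxiliary interval graph $G^*$. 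Your DP is easily repaired: after selecting an occurrence with first vertex $v_i$, recurse not on $f(i-1)$ but on $f(i^\star)$, where $i^\star$ is the largest index with $v_{i^\star}\notin N[v_i]$.

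Two smaller points. First, the order-preserving claim, while correct, is not needed; the paper never uses it. Second, you do not need to pin both endpoints in the Heggernes--et al.\ call: it suffices, for each pair $(l,r)$, to test whether \emph{some} occurrence of $H$ lives in $G[\{v_l,\dots,v_r\}]$, which is a plain {\sc Induced Subgraph Isomorphism} instance requiring no modification of the black box. Any occurrence so found has representation-span contained in $[\ell(v_l),r(v_r)]$, and this monotonicity is all the scheduling step needs; your worry about ``endpoint pinning'' being the main obstacle disappears.
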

\begin{proof}
Fix a representation of $G$. Throughout, we use vertices of $G$ and their corresponding intervals in this representation interchangeably.
The algorithm will follow an approach suggested by Cameron and Hell~\cite{CameronHell2006}. They show that if we construct for each occurrence of $H$ in $G$ a single interval that corresponds to the union of the intervals of the occurrence, then in the auxiliary graph consisting of all these single intervals it suffices to solve {\sc Independent Set} in order to obtain a solution for {\sc Induced Graph Matching} on the original graph. Therefore, we first show how to find all (relevant) occurrences of $H$ in $G$, and then reduce to an instance of {\sc Independent Set} on an auxiliary interval graph.

In the first step of the algorithm, we describe which occurrences of $H$ are relevant, and how to find all relevant occurrences in polynomial time.
Observe that any occurrence of $H$ in $G$ can be characterized by its leftmost and its rightmost interval in $G$. In particular, given two intervals $l, r$, all occurrences of $H$ in $G$ that have $l$ and $r$ as their leftmost and rightmost interval respectively can be considered equivalent for the purpose of {\sc Induced Graph Matching}. In particular, in any solution, we may freely exchange occurrences of $H$ from the same equivalence class without changing the feasibility of the solution. Therefore, it suffices to consider occurrences of $H$ in $G$ up to this equivalence relation. To enumerate all such occurrences of $H$ in $G$, we enumerate all pairs of vertices $l,r \in V(G)$ where the left endpoint of $l$ is to the left of the left endpoint of $r$. Then, remove from $G$ all vertices that have their left endpoint to the left of $l_{G}$ and all vertices that have their right endpoint to the right of $r_{G}$. Call the resulting graph $G'$, and observe that $G'$ is still a proper interval graph. Therefore, we can check in polynomial time using a result of Heggernes et al.~\cite{HeggernesHMV2012} whether there is an occurrence of $H$ in $G'$, and thus check whether there is an occurrence of $H$ in $G$ whose leftmost interval does not lie to the left of $l$ and whose rightmost interval does not lie to the right of $r$. In other words, we can indeed enumerate all possible occurrences of $H$ in $G$, up to the equivalence relation described above. Let $\mathcal{G}_\mathcal{H}$ denote this set of occurrences of $H$ in $G$.

We now follow the approach suggested by Cameron and Hell~\cite{CameronHell2006}. We construct an interval graph $G^*$ as follows: For each $G_H \in \mathcal{G}_\mathcal{H}$, we construct an interval from the position of the leftmost left endpoint in $V(G_H)$ to the rightmost right endpoint in $V(G_H)$. Note that since $H$ is connected, this interval corresponds to the union of the intervals of $G_H$. We let $G^*$ be the intersection graph of all these intervals. Note that since $|\mathcal{G}_\mathcal{H}| = O(|V(G)|^{2})$, the interval graph $G^*$ has polynomial size. Moreover, it is not difficult to see that any independent set in $G^*$ corresponds to an induced $H$-matching in $G$, and vice versa. Thus, as {\sc Independent Set} on interval graphs can be solved in polynomial time, we obtain the polynomial-time algorithm claimed in the theorem statement. \qed
\end{proof}
Possibly the running time of the above algorithm can be improved by merging the algorithm for {\sc Independent Set} on interval graphs with the given algorithm that finds relevant occurrences of $H$ in $G$.

Note that the requirement in Theorem~\ref{thm:circ:int-int-poly} above that $H$ is connected is crucial, since without this requirement already {\sc Induced Subgraph Isomorphism}, which is a special case of \textsc{Induced Graph Matching}, is $\mathsf{NP}$-complete~\cite{HeggernesHMV2012}. We extend Theorem~\ref{thm:circ:int-int-poly} to the case where both $G$ and $H$ are long proper circular-arc graphs, and $H$ is also required to be connected. First, we show that the restriction to long graphs is necessary. Recall that a graph $G$ is \emph{co-bipartite} if it is the complement of a bipartite graph.

\begin{theorem}
\label{thm:circ:np-complete}%
The {\sc Induced Subgraph Isomorphism} problem on connected proper circular-arc graphs is $\mathsf{NP}$-complete, even when the graphs are co-bipartite and have a representation in which there exist three arcs that cover the entire circle.
\end{theorem}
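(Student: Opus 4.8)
The plan is to reduce from a known $\mathsf{NP}$-complete problem that naturally lives on co-bipartite structure. Since we want the target graph $G$ to be co-bipartite and a proper circular-arc graph, the natural source is a problem on bipartite graphs whose complement we can realize with arcs. A clean choice is to reduce from a problem about finding a clique of prescribed size split across the two sides of a co-bipartite graph, or equivalently from an independent-set/biclique-type problem on bipartite graphs; but the slickest route is to reduce directly from a hard case of \textsc{Induced Subgraph Isomorphism} or from \textsc{Clique} by encoding the instance so that the ``pattern'' $H$ is itself a co-bipartite proper circular-arc graph. Concretely, I would take an arbitrary graph $G_0$ on $n$ vertices together with an integer $t$, and build $G$ as the complement of the incidence-type bipartite graph of $G_0$: one side $A$ has a vertex for each vertex of $G_0$, the other side $B$ has a vertex for each edge of $G_0$, and in the bipartite graph $a \in A$ is adjacent to $b \in B$ iff the vertex $a$ is \emph{not} an endpoint of the edge $b$. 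Taking complements, $G$ becomes co-bipartite with cliques $A$ and $B$, and finding a copy of $K_t$ in $G_0$ corresponds to finding in $G$ a clique on $t$ vertices of $A$ and $\binom{t}{2}$ vertices of $B$ with the prescribed adjacency pattern between them; this pattern is exactly a fixed co-bipartite graph $H_t$ depending only on $t$. One then checks that $G$ contains an induced copy of $H_t$ iff $G_0$ has a clique of size $t$.

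The main work is representing $G$ (and $H_t$) as proper circular-arc graphs in which three arcs cover the circle. For a co-bipartite graph with cliques $A$ and $B$, the standard trick is to put the $A$-arcs as a bundle of nearly-identical long arcs covering one ``half'' of the circle and the $B$-arcs as a bundle covering the complementary ``half,'' with the overlaps at the two junction regions of the circle tuned to encode exactly which $a$–$b$ pairs are adjacent. The key point is that every co-bipartite graph whose bipartite complement is an interval bigraph (equivalently, whose complement has a certain consecutive-arrangement property) is a proper circular-arc graph, and the bipartite graph we constructed can be arranged to have this property by ordering the edges of $G_0$ suitably (e.g. lexicographically by endpoints) so that for each vertex $a$, the set of edges incident to $a$ — hence the set of $b$'s nonadjacent to $a$ in the bipartite graph — forms a structured, near-consecutive block. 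Since three arcs (two from one bundle plus one spanning arc, say) already cover the circle, the resulting representation is automatically not long, which is exactly what the theorem asks for; and one must also verify the representation can be made \emph{proper}, i.e. no arc contains another, which is arranged by giving all arcs in a bundle the same length and only perturbing endpoints infinitesimally.

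After that, membership in $\mathsf{NP}$ is immediate (guess the injection $\phi : V(H) \to V(G)$ and verify adjacencies), and connectedness of both $G$ and $H_t$ is clear since each is built from two cliques with at least one cross edge (for $t \geq 2$; the cases $t \leq 2$ being trivial and excludable). The correctness equivalence — $G_0$ has a $K_t$ iff $G$ has an induced $H_t$ — requires checking both directions: a $t$-clique in $G_0$ yields the obvious vertex set in $G$ inducing $H_t$, and conversely any induced copy of $H_t$ in $G$ must place the ``$A$-side clique of $H_t$'' inside $A$ and the ``$B$-side clique'' inside $B$ (using that $H_t$'s two maximal cliques have the right sizes and that $G$'s only large cliques are subsets of $A$ or $B$, for $n$ large enough relative to $t$, which we may assume), and then the adjacency pattern forces the chosen $B$-vertices to be exactly the $\binom{t}{2}$ edges among the chosen $A$-vertices, so those $A$-vertices form a clique in $G_0$.

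The step I expect to be the real obstacle is pinning down the proper circular-arc representation of $G$ with full rigor: one has to exhibit an explicit ordering of vertices and edges of $G_0$ and explicit arc endpoints, and then prove that the intersection graph of these arcs is precisely $G$ — in particular that no unwanted edges are created between $A$-arcs and $B$-arcs and that the ``three arcs cover the circle'' condition holds. This is a delicate but routine geometric construction; an alternative, possibly cleaner, route is to first establish a lemma characterizing exactly which co-bipartite graphs are proper circular-arc graphs (in terms of the complement bipartite graph admitting a suitable ``staircase'' ordering), and then to verify only that our constructed bipartite graph satisfies the combinatorial hypothesis of that lemma, deferring the geometry to the lemma's proof.
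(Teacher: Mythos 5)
There is a genuine gap, and it is exactly at the step you flag as ``the real obstacle'': the constructed graphs are simply not proper circular-arc graphs, and no ordering trick can make them so. The class of co-bipartite proper circular-arc graphs is extremely restricted --- it coincides with the complements of bipartite permutation graphs (equivalently, proper interval bigraphs); this is the very fact the paper invokes (via Spinrad) in the easy direction. Your $G$ is the complement of the bipartite ``non-incidence'' graph of an arbitrary $G_0$, so for $G$ to be a proper circular-arc graph that bipartite graph would have to be a bipartite permutation graph. It is not, already for tiny instances: for $G_0 = K_4$ the non-incidence bigraph contains an induced $C_6$ (e.g.\ $1,\{3,4\},2,\{1,4\},3,\{2,4\}$), hence is not AT-free and not a bipartite permutation graph, so $G$ contains an induced $\overline{C_6}$ and falls outside the class. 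Worse, the pattern $H_t$ itself is the same kind of graph built from $K_t$, so for $t \geq 4$ even $H_t$ is not a proper circular-arc graph, and the reduction produces instances outside the problem you are trying to prove hard. The proposed repair --- ordering the edges of $G_0$ lexicographically so that each vertex's incident edges form a ``near-consecutive block'' --- does not work: under lexicographic order a vertex's incident edges are scattered (it occurs as the larger endpoint of many non-consecutive edges), and even per-vertex consecutiveness would not suffice; what is needed is the global staircase/permutation structure, which the incidence relation of a general graph cannot have. (A smaller, repairable issue: the claim that the only large cliques of $G$ lie inside $A$ or $B$ is false, since a vertex together with all its incident edges is a clique of size $\deg(a)+1$.)

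The paper sidesteps the representation problem entirely by choosing a source problem whose complements land in the class for free: \textsc{Induced Subgraph Isomorphism} is $\mathsf{NP}$-complete when both $G$ and $H$ are disjoint unions of paths (Damaschke, via \textsc{3-Partition}), and since the problem is closed under complementation one passes to $(\bar G,\bar H)$. Disjoint unions of paths are bipartite permutation graphs, so their complements are connected co-bipartite proper circular-arc graphs, and an explicit arc representation of the complement of a long path shows that three arcs cover the circle (one may assume some part size is at least seven). If you want to keep a reduction in the spirit of yours, you would first need a characterization-style lemma as you suggest, but it would immediately tell you that your target class cannot encode incidence structures of arbitrary graphs; the viable move is to start, as the paper does, from a problem already hard on (complements of) bipartite permutation graphs.
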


\begin{proof}
Damaschke~\cite{Damaschke1990} (see also the book of Garey and Johnson~\cite{GareyJ1979}) observed that {\sc Induced Subgraph Isomorphism} is $\mathsf{NP}$-complete even when both input graphs are a disjoint union of paths. The intuition behind that reduction is to reduce from {\sc $3$-Partition}, where one creates a path of length $a_{i}$ for each integer $a_{1},\ldots,a_{m} \geq 1$ of the instance ($H$ is the union of these paths), and creates $G$ as the union of $m/3$ paths of length $2+3(\sum_{i=1}^{m} a_{i})/m$ that capture the $m/3$ parts of the $3$-partition. We may assume that at least one integer $a_{i}$ is at least seven. Otherwise, there are a constant number of ways to make parts of the $3$-partition, and we can use exhaustive enumeration to solve the problem optimally in polynomial time.

Suppose that we are given an instance $(G,H)$ of {\sc Induced Subgraph Isomorphism} of the form described above. Let $\bar{G}$ and $\bar{H}$ be the complement of $G$ and $H$ respectively. Since the problem is closed under taking complements, $(\bar{G}, \bar{H})$ forms an equivalent instance of {\sc Induced Subgraph Isomorphism}. It remains to prove that $\bar{G}$ and $\bar{H}$ are connected proper circular-arc graphs that are co-bipartite and have a representation in which there exist three arcs that cover the entire circle.

To see that $\bar{G}$ and $\bar{H}$ are co-bipartite, connected, proper circular-arc graphs, note that a disjoint union of paths is a bipartite permutation graph, i.e.~a bipartite graph that is an intersection graph of a permutation diagram. Moreover, the complement of a bipartite permutation graph is a co-bipartite proper circular-arc graph~\cite{Spinrad2003}. It is easy to see that $\bar{G}$ and $\bar{H}$ are connected.

To see that $\bar{G}$ and $\bar{H}$ are connected proper circular-arc graphs that have a representation in which there exist three arcs that cover the entire circle, we show that the complement of a path is a proper circular-arc graph, and that the complement of a disjoint union of paths is a connected proper circular-arc graph. For the first property, let $P = \{v_{1},\ldots\}$ be a path. Then the complement of $P$ can be represented by arcs such that the arcs for $v_{2i+2}$ and $v_{2i+3}$ are copies of $v_{2i}$ and $v_{2i+1}$ that are slightly rotated. In particular, the arcs must be such that if $|E(P)| \geq 7$, then three arcs (say $v_{1},v_{4},v_{7}$) cover the entire circle. The second property can be obtained by appropriately rotating the representations of each path complement.
\qed
\end{proof}

Observe that a proper circular-arc graph that has a representation in which there exist two arcs that cover the entire circle is co-bipartite. Therefore, we can indeed complement Theorem~\ref{thm:circ:np-complete} by showing that {\sc Induced Subgraph Isomorphism} is polynomial-time solvable when $G$ and $H$ are long (i.e.~no three arcs cover the entire circle) and $H$ is connected.

\begin{theorem}
\label{thm:circ:circarc-cic-isi}
{\sc Induced Subgraph Isomorphism} on a long proper circular-arc graph $G$ and a connected proper circular-arc graph~$H$ can be solved in polynomial time.
\end{theorem}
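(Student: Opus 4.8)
The plan is to reduce everything to {\sc Induced Subgraph Isomorphism} on proper interval graphs, which by the result of Heggernes et al.~\cite{HeggernesHMV2012} is solvable in polynomial time. Fix a long proper circular-arc representation of $G$, and identify vertices of $G$ with their arcs. The key case distinction is whether $H$ is itself a proper interval graph; this is decidable in polynomial time.

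Suppose first that $H$ is a proper interval graph. I claim that then the arcs of any occurrence $H'$ of $H$ in $G$ cannot cover the entire circle. Indeed, pick an inclusion-minimal subfamily $\mathcal{C}$ of the arcs of $H'$ covering the circle; since $G$ is long, no three of its arcs cover the circle, so $|\mathcal{C}| \geq 4$. Order $\mathcal{C} = (A_1,\dots,A_t)$ by clockwise starting point. Consecutive arcs must overlap; minimality gives each $A_i$ a private point covered by no other arc of $\mathcal{C}$; and a short argument then forces non-consecutive arcs to be disjoint. Hence $\mathcal{C}$ induces an induced cycle $C_t$, $t \geq 4$, in $H' \cong H$ — impossible, since proper interval graphs are chordal. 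So every occurrence of $H$ in $G$ leaves some point $p$ of the circle uncovered. The arc endpoints split the circle into $O(|V(G)|)$ cells on each of which the set $X_p$ of arcs through $p$ is constant; for a representative $p$ of each cell, $G_p := G - X_p$ is a proper interval graph, and $G$ contains an induced copy of $H$ if and only if some $G_p$ does. Running the algorithm of~\cite{HeggernesHMV2012} on these $O(|V(G)|)$ instances finishes this case.

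Now suppose $H$ is connected but not a proper interval graph. By the remark in Section~\ref{sec:prelim} that a proper circular-arc graph having a representation with an uncovered point of the circle is a proper interval graph, every proper circular-arc representation of $H$ covers the whole circle; in particular the arcs of any occurrence $H'$ of $H$ in $G$ do, so $H'$ ``wraps around'' the circle exactly once (its arcs are proper). This lets us unroll the picture: guess an anchor vertex $a \in V(H)$ on some minimal circular cover of $H'$, its image $a' \in V(G)$, and the cell of $a'$ in which we cut (altogether $O(|V(G)|^3)$ guesses); cutting $G$ at this point and doubling the resulting linear representation yields a proper interval graph in which the wrapping occurrence of $H$ becomes a non-wrapping one, subject to boundary conditions recording how the arcs of $H'$ meeting $a'$ attach at the two ends. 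Each resulting instance is again {\sc Induced Subgraph Isomorphism} on a proper interval graph (after encoding the boundary conditions by small pendant gadgets, or by adapting the dynamic program of~\cite{HeggernesHMV2012}), and there are polynomially many of them.

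I expect the main obstacle to be this last case: making the unrolling precise — faithfully modelling the arcs of $H'$ that straddle the cut so that all their adjacencies are preserved, and ensuring no vertex of $G$ is used twice in the doubled graph (this is where the longness of $G$ enters again, via the Helly property of long arc families) — and checking that the boundary-constrained problem really does reduce to plain {\sc Induced Subgraph Isomorphism} on proper interval graphs. By contrast, the chordality argument for proper interval $H$ is the conceptual core: it is exactly the obstruction preventing an induced copy of a chordal graph from wrapping around the circle.
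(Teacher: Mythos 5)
Your first case ($H$ a proper interval graph) is sound: the minimal-cover argument (private points, consecutive arcs overlapping, non-consecutive arcs disjoint, hence an induced $C_t$ with $t\geq 4$ since $G$ is long) correctly shows that an occurrence of a chordal $H$ cannot cover the circle, and the enumeration of $O(|V(G)|)$ cut points then reduces to \textsc{Induced Subgraph Isomorphism} on proper interval graphs via~\cite{HeggernesHMV2012}. The genuine gap is your second case, which is the heart of the theorem. As sketched, you cut $G$ open and look for $H$ in the resulting (doubled) proper interval graph, but when $H$ is not a proper interval graph this cannot be plain \textsc{Induced Subgraph Isomorphism}: every induced subgraph of a proper interval graph is itself a proper interval graph, so the cut-open $G$ contains no induced copy of $H$ at all. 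The pattern graph must also be transformed --- $H$ has to be cut open at a corresponding place --- and your ``boundary conditions'' are exactly the missing construction: a single anchor vertex $a$ does not control the possibly many arcs of the occurrence that straddle the cut, a boundary-constrained search is not an instance of the problem that~\cite{HeggernesHMV2012} solves, and ``small pendant gadgets, or adapting the dynamic program'' is a placeholder rather than an argument. You flag these obstacles yourself, but they are not incidental; resolving them is the content of the proof.

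For comparison, the paper resolves precisely this wrap-around difficulty by using that long proper circular-arc graphs are proper Helly circular-arc graphs: if $H$ occurs in $G$, then some maximal clique $K_H$ of the occurrence is mapped into a maximal clique $K_G$ of $G$, and there is a point $p$ of the circle whose set of arcs equals $K_G$ and is contained in all arcs of $K_H$. The algorithm therefore cuts \emph{both} $G$ and $H$ at candidate points of this kind (computing a long representation of $H$ as well, and answering ``no'' if none exists), which duplicates the set of arcs through the cut point, and then attaches to each of the two copies, in both graphs, a new clique of size $1+\max\{|V(G)|,|V(H)|\}$; these oversized cliques pin the two ends of the pattern to the two ends of the host and make both graphs connected proper interval graphs, so one call to the algorithm of Heggernes et al.\ per pair of candidate cut points decides the problem. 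Your unrolling step would need an equivalent device (a symmetric cut of $H$ at a clique point, together with anchoring of the ends and an argument that no vertex of $G$ is used twice when folding back), so as it stands the proposal does not establish the case where occurrences wrap around the circle.
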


\begin{proof}
We observe that long proper circular-arc graphs are actually proper Helly circular-arc graphs~\cite{McKee2003} (see also the paper by Lin et al.~\cite[Theorem~7]{LinSS2013}). Therefore, we can compute a long representation $\mathcal{A}$ of $G$ in linear time~\cite{LinSS2013}. Since the arcs in $\mathcal{A}$ satisfy the Helly property, we know that no maximal clique of $G$ covers the entire circle and that the arcs of any maximal clique have a non-empty intersection.

Now suppose that $H$ occurs in $G$. Observe that there exists a maximal clique $K_H$ of $H$ in any occurrence of $H$ in $G$ that is mapped to a subset of a maximal clique $K_G$ of $G$. As no clique of $G$ covers the entire circle, there is a point $p$ such that the set of all arcs containing $p$ is equal to $K_{G}$ and is included in all arcs of $K_H$. Let $G' = G - K_G$ and $H' = H - K_H$. Then~$G'$ contains an occurrence of $H'$, and both $G'$ and $H'$ are proper interval graphs.

This suggests the following algorithm. Compute a long representation of $G$ and $H$ in polynomial time~\cite{LinSS2013} (if such a representation does not exist for $H$, we can answer ``no'' immediately). Consider two points in the representation of $G$ or $H$ to be equivalent if the set of arcs containing the one point is the same as the set of arcs containing the other point. Clearly, this equivalence relation has $O(|V(G)|)$ and $O(|V(H)|)$ equivalence classes on $G$ and $H$ respectively. For any point $p_{G}$ in an equivalence class of $G$, let $G_{p_{G}}$ denote the subgraph of $G$ obtained by removing an infinitesimally small part of each arc around $p_{G}$ for each arc containing $p_{G}$. This makes $G_{p_{G}}$ a proper interval graph, while effectively duplicating the set $A_{p_{G}}$ of all arcs containing $p_{G}$. Now construct two new cliques of size $1+\max\{|V(G)|,|V(H)|\}$ each, and make each adjacent to a copy of $A_{p_{G}}$, and call the resulting graph~$G_{p_{G}}'$. In a similar manner, we can consider a point $p_{H}$ and construct a graph $H_{p_{H}}'$. Note that both $G_{p_{G}}'$ and $H_{p_{H}}'$ are connected proper interval graphs. Hence, we can determine in polynomial time whether $H_{p_{H}}'$ occurs in~$G_{p_{G}}'$ using the algorithm by Heggernes et al.~\cite{HeggernesHMV2012}. Moreover, by the observations of the previous paragraph, $H_{p_{H}}'$ occurs in~$G_{p_{G}}'$ for some choice of $p_{G}$ and $p_{H}$ if and only if $H$ occurs in $G$. The theorem follows. \qed
\end{proof}

\begin{theorem}
\label{thm:circ:circarc-circarc-poly}
{\sc Induced Graph Matching} on a long proper circular-arc graph $G$ and a connected proper circular-arc graph~$H$ can be solved in polynomial time.
\end{theorem}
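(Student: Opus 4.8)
The plan is to combine the reduction to {\sc Independent Set} via ``merged arcs'', exactly as in the proof of Theorem~\ref{thm:circ:int-int-poly}, with the clique-guessing trick from the proof of Theorem~\ref{thm:circ:circarc-cic-isi} that reduces the circular-arc case to the interval case. Fix a long representation $\mathcal{A}$ of $G$, which exists and can be computed in polynomial time by~\cite{LinSS2013}; since long proper circular-arc graphs are proper Helly circular-arc graphs~\cite{McKee2003,LinSS2013}, no maximal clique of $G$ covers the entire circle. As before, I will identify vertices of $G$ with their arcs in $\mathcal{A}$. Because $H$ is connected, every occurrence $H_i$ of $H$ in $G$ has the property that the union of its arcs is itself an arc of the circle (it is connected as a subset of the circle); call this the \emph{span} of $H_i$. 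The span is either a proper arc (not the whole circle) or all of the circle.

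First I would enumerate all relevant occurrences of $H$ in $G$ whose span is \emph{not} the whole circle. For each ordered pair of points $p, q$ in the $O(|V(G)|)$-element set of equivalence classes of points of $\mathcal{A}$ (two points being equivalent if exactly the same arcs contain them), consider the open arc from $p$ to $q$; deleting from $G$ every vertex whose arc is not contained in this open arc yields a proper \emph{interval} graph $G_{p,q}$. Using the algorithm of Heggernes et al.~\cite{HeggernesHMV2012} — or rather the enumeration procedure from the proof of Theorem~\ref{thm:circ:int-int-poly} applied to $G_{p,q}$ — I can find all occurrences of $H$ in $G_{p,q}$ up to the leftmost/rightmost-interval equivalence. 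Ranging over all $O(|V(G)|^2)$ pairs $(p,q)$, this produces a polynomial-size family $\mathcal{G}_{\mathcal{H}}^{\mathrm{arc}}$ of occurrences of $H$ whose span is a proper arc, with the property that any induced $H$-matching can be assumed to use only occurrences from this family (any occurrence with proper-arc span is interchangeable with one in $\mathcal{G}_{\mathcal{H}}^{\mathrm{arc}}$ having the same span-endpoints, which is all that matters for the matching constraints). For each such occurrence I create a single merged arc equal to its span, as in Theorem~\ref{thm:circ:int-int-poly}.

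Second I would handle occurrences whose span is the whole circle. Such an occurrence contains a maximal clique of $H$ mapped into a maximal clique $K_G$ of $G$; since the Helly property holds, there is a point $p$ contained in exactly the arcs of $K_G$ and contained in all arcs of the image of that clique of $H$. So, guessing the equivalence class of $p$ among the $O(|V(G)|)$ classes, I cut the circle at $p$: delete the arcs through $p$ to obtain a proper interval graph, or more carefully proceed as in Theorem~\ref{thm:circ:circarc-cic-isi} by attaching two fresh large cliques to a duplicated copy of the arc-set $A_p$ so that the deleted clique is forced back in. Finding occurrences of $H$ (minus a maximal clique) in this interval graph, for each guessed class of $p$, yields a polynomial-size family of whole-circle occurrences; each contributes a merged ``arc'' which in fact is the entire circle. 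The crucial observation is that at most one whole-circle occurrence can appear in any induced $H$-matching, and it is adjacent to every other arc — so in the auxiliary graph the corresponding vertex is universal. Let $G^*$ be the intersection graph of all merged arcs from both families (including universal vertices for the whole-circle occurrences). Then, exactly as in Theorem~\ref{thm:circ:int-int-poly}, independent sets of $G^*$ correspond bijectively to induced $H$-matchings of $G$: disjointness and non-adjacency of two occurrences is equivalent to disjointness of their spans, which is equivalent to non-adjacency of the merged arcs. Since $G^*$ is a circular-arc graph of polynomial size on which we need {\sc Weighted Independent Set} with all weights one — and {\sc Independent Set} on circular-arc graphs is polynomial-time solvable (guess an arc in the solution, delete its closed neighborhood, and solve {\sc Independent Set} on the remaining interval graph) — the whole procedure runs in polynomial time.

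The main obstacle I anticipate is the careful bookkeeping for whole-circle occurrences: ensuring that the clique-cutting reduction of Theorem~\ref{thm:circ:circarc-cic-isi} not only detects such an occurrence but actually enumerates \emph{enough} of them, up to an equivalence that is compatible with the matching constraints, and that the merged-arc for such an occurrence really does behave as a universal vertex (in particular that two distinct whole-circle occurrences always conflict, so we never need more than one). Everything else — the proper-arc case, the equivalence-class argument, the reduction to {\sc Independent Set} on a circular-arc graph — is a routine adaptation of the earlier theorems.
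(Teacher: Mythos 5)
Your proposal is correct, but it follows a genuinely different (if related) route from the paper. You keep the circular structure throughout: you enumerate representative occurrences with proper-arc span via two-point windows, treat whole-circle occurrences as universal vertices, and then solve \textsc{Independent Set} on a polynomial-size circular-arc auxiliary graph in the spirit of Cameron and Hell. The paper instead makes one extra structural observation that lets it avoid all of this: if an induced $H$-matching has size at least two, its occurrences have pairwise disjoint arc-shaped spans, so some point $p$ of the circle is covered by no arc of any occurrence in the matching. The paper therefore guesses $p$ (over the $O(|V(G)|)$ point classes), deletes all arcs through $p$, and invokes Theorem~\ref{thm:circ:int-int-poly} on the resulting proper interval graph as a black box, taking the maximum over all choices of $p$; the only remaining case, matching size at most one, is settled by a single application of Theorem~\ref{thm:circ:circarc-cic-isi}. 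That same observation also dissolves the obstacle you flag: a whole-circle occurrence conflicts with every other occurrence, so it can only ever be the unique member of a size-one matching, and hence you never need to enumerate such occurrences or reason about equivalence among them --- a plain existence check for $H$ in $G$ (Theorem~\ref{thm:circ:circarc-cic-isi}) suffices, which is exactly the paper's fallback. What your approach buys is a single global reduction to \textsc{Independent Set} on a circular-arc graph (at the price of needing that solver and the extra bookkeeping); what the paper's approach buys is a shorter proof that reuses Theorems~\ref{thm:circ:int-int-poly} and~\ref{thm:circ:circarc-cic-isi} wholesale. One small technical caution for your version: whether an arc is contained in the open window between $p$ and $q$ is not determined by the point-equivalence classes of $p$ and $q$ alone (a class can consist of several regions of the circle), so you should take one candidate cut point per region between consecutive arc endpoints (still $O(|V(G)|)$ many, or allow $p=q$ meaning ``cut at $p$''), rather than one representative per abstract class; with that adjustment your enumeration is complete and the argument goes through.
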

\begin{proof}
Compute a representation of $G$, using that long proper circular-arc graphs are actually proper Helly circular-arc graphs~\cite{LinSS2013}. If $G$ contains at least two independent occurrences of $H$, then there is a point $p$ on the circle such that no occurrence of $H$ contains an arc that contains $p$. Again, define an equivalence relation where two points are equivalent if the set of arcs containing the one point is the same as the set of arcs containing the other point. The number of equivalence classes is $O(|V(G)|)$. For each equivalence class, consider a point $p$, and consider the graph $G_{p}$ obtained by removing all arcs from $G$ that contain $p$. As $G_{p}$ is a proper interval graph, we can apply Theorem~\ref{thm:circ:int-int-poly} to $G_{p}$ and $H$, and return the largest solution returned (if any) over all equivalence classes. If no solution is returned, then $G$ contains at most one occurrence of $H$. This can be checked by computing whether $G$ has an induced subgraph isomorphic to $H$ using Theorem~\ref{thm:circ:circarc-cic-isi}. \qed
\end{proof}

Finally, we show that we can obtain a parameterized result if $H$ is not connected.

\begin{theorem}
\label{thm:circ:circarc-circarc-fpt}
  {\sc Induced Graph Matching} on a proper circular-arc graph $G$ and a disconnected proper circular-arc graph~$H$ is fixed-parameter tractable when parameterized by $k$ and the number of connected components of $H$.
\end{theorem}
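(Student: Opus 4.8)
The plan is to reduce to the case where $G$ is a proper interval graph by ``cutting'' the circular‑arc representation, and then to solve that case by finding a suitably coloured independent set, via dynamic programming, in an auxiliary interval graph built from occurrences of the connected components of $H$. Let $c$ be the number of connected components of $H$; since $H$ is disconnected, $c\ge 2$, and we may assume $k\ge 1$ (otherwise the answer is ``yes''). Fix a proper circular‑arc representation of $G$ with closed arcs. In any induced $H$‑matching of size $k$, the $kc\ge 2$ connected components of the $k$ copies of $H$ are pairwise non‑adjacent connected induced subgraphs; the arc‑union of any one of them is a proper sub‑arc of the circle (were it the whole circle, that subgraph would be adjacent to every other vertex), and these sub‑arcs are pairwise disjoint, so their union — being a union of at least two disjoint closed sub‑arcs — does not cover the circle. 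Hence there is a point $p$ lying in no arc of the matching. Exactly as in the proof of Theorem~\ref{thm:circ:circarc-circarc-poly}, calling two points equivalent when they lie in the same set of arcs yields $O(|V(G)|)$ equivalence classes, and the algorithm enumerates one representative $p$ per class; for each, let $G_p$ be $G$ minus all vertices whose arc contains $p$. Then $G_p$ is a proper interval graph, and for the correct $p$ the matching survives in $G_p$. It therefore suffices to solve {\sc Induced Graph Matching} on each proper interval graph $G_p$ and the disconnected graph $H$; here I also use that each component $H_i$ of the disconnected proper circular‑arc graph $H$ has a representation whose arcs do not cover the circle (else $H_i$ would be adjacent to the remaining components), so each $H_i$ is a connected proper interval graph.

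Next I solve the proper interval case. Write $H_1,\ldots,H_c$ for the components of $H$ and group them into distinct isomorphism types $T_1,\ldots,T_t$ with multiplicities $m_1,\ldots,m_t$, so $t\le c$ and $\sum_j m_j=c$. Since no edges are required between components of copies of $H$, an induced $H$‑matching of size $k$ in $G'$ (any $G_p$) is precisely a family of $kc$ pairwise vertex‑disjoint, pairwise non‑adjacent induced subgraphs of $G'$ containing at least $k m_j$ occurrences of $T_j$ for every $j$; such a family always regroups into $k$ copies of $H$. I apply the occurrence‑enumeration technique from the proof of Theorem~\ref{thm:circ:int-int-poly} to each type $T_j$ separately: as $T_j$ is a connected proper interval graph, one computes in polynomial time a set $\mathcal{O}_j$ of $O(|V(G')|^2)$ occurrences of $T_j$ in $G'$, closed under the ``same leftmost and rightmost interval'' equivalence, such that every solution can be modified to use occurrences only from the $\mathcal{O}_j$; connectedness of $T_j$ makes the interval‑union $J(o)$ of each $o\in\mathcal{O}_j$ a single interval. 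Build the interval graph $G^{*}$ on $\bigsqcup_j\mathcal{O}_j$, keeping the colour $j$ on the elements of $\mathcal{O}_j$, with intervals $J(o)$. Then two occurrences are non‑adjacent in $G^{*}$ if and only if their interval‑unions are disjoint, which holds if and only if the occurrences are vertex‑disjoint with no edge between them; and two distinct selected occurrences always have distinct interval‑unions. Hence induced $H$‑matchings of size $k$ in $G'$ correspond exactly to independent sets of $G^{*}$ containing at least $k m_j$ vertices of colour $j$ for each $j$.

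It remains to find such a coloured independent set, which I do by dynamic programming over the intervals of $G^{*}$ sorted by right endpoint, with state a colour profile $\vec a=(a_1,\ldots,a_t)$, $0\le a_j\le k m_j\le kc$. For each profile, record $g(\vec a)$, the minimum over all independent sets of $G^{*}$ of colour profile exactly $\vec a$ of the largest right endpoint used, with $g(\vec 0)=-\infty$ and $g(\vec a)=+\infty$ if no such set exists (restricting to ``exactly $\vec a$'' is without loss, since extra vertices can always be dropped). The rightmost interval $o$ of such a minimizing set, of colour $j$ with $a_j\ge 1$, has left endpoint strictly to the right of $g(\vec a-e_j)$, which yields the transition $g(\vec a)=\min_{j:\,a_j\ge 1}\ \min\{\,r(o): o\in\mathcal{O}_j,\ \ell(o)>g(\vec a-e_j)\,\}$; the answer for $G'$ is ``yes'' if and only if $g\big((k m_1,\ldots,k m_t)\big)$ is finite. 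The number of profiles is $\prod_j(k m_j+1)\le (kc+1)^{c}$, a function of $k$ and $c$ alone, and each profile is processed in polynomial time, so together with the $O(|V(G)|)$ choices of the cut point $p$ the whole algorithm runs in time $f(k,c)\cdot\mathrm{poly}(|V(G)|+|V(H)|)$, establishing fixed‑parameter tractability in $k$ and $c$.

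The main obstacle I anticipate is the correctness of the two ``lossless'' reductions rather than the dynamic programming. First, one must argue carefully that cutting the circle at the guessed point $p$ neither destroys nor creates induced $H$‑matchings; this hinges on the endpoint convention for arcs and on the claim that the pairwise non‑adjacent components' arc‑unions cannot tile the circle. Second, one must verify that keeping only the $O(|V(G')|^2)$ equivalence‑class representatives in each $\mathcal{O}_j$ loses no solution even when several copies are used simultaneously — which follows because two selected occurrences in the same class share their interval‑union and would therefore be adjacent, so at most one occurrence per class ever appears in a solution, and any occurrence in a solution may be swapped for any equivalent one. Once these points and the exchange argument underlying $g(\vec a)$ are in place, the remaining work is routine.
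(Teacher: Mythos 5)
Your proof is correct, and while its first phase coincides with the paper's (guess, among the $O(|V(G)|)$ equivalence classes of points, a point $p$ avoided by all arcs of the matching --- which exists because the $kc\ge 2$ pairwise non-adjacent components have disjoint closed arc-unions, and a circle cannot be covered by disjoint nonempty closed sets --- and pass to the proper interval graph $G_p$), the second phase is genuinely different. The paper simply forms $H'=k\cdot H$ and invokes the result of Heggernes et al.~\cite{HeggernesHMV2012} that {\sc Induced Subgraph Isomorphism} between proper interval graphs is fixed-parameter tractable in the number of components of the pattern, which is $kc$ here. You instead make the second phase self-contained: you observe that for disconnected $H$ an induced $H$-matching of size $k$ is the same as a pairwise non-adjacent family containing $km_j$ occurrences of each component type $T_j$, reuse the occurrence-enumeration and interval-union construction from the proof of Theorem~\ref{thm:circ:int-int-poly} (so you only need the polynomial-time \emph{connected} case of~\cite{HeggernesHMV2012} as a subroutine), and then solve a colour-constrained independent set problem on the auxiliary interval graph by a right-endpoint dynamic program over colour profiles; your exchange and regrouping arguments, and the recurrence for $g(\vec a)$, are all sound. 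What this buys is an explicit bound of roughly $(kc+1)^{c}\cdot\mathrm{poly}(n)$ and independence from the FPT black box used in the paper; what the paper's route buys is brevity. Two small points you should make explicit: grouping the components of $H$ into isomorphism types requires a polynomial-time isomorphism test for (proper) interval graphs (standard, or obtainable by running the connected induced-subgraph-isomorphism algorithm in both directions), and, as in Theorem~\ref{thm:circ:int-int-poly} itself, the occurrence found for a pair $(l,r)$ need not have exactly these extreme intervals --- its interval-union is merely contained in $[\ell(l),r(r)]$ --- but containment of unions is all the exchange argument needs.
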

\begin{proof}
Since~$H$ is not connected, in any representation of $H$ there must be a point of the circle that is not covered by an arc. If we cut the circle open on this point, then we obtain a representation of $H$ as intervals of a line. Therefore, $H$ is a proper interval graph.

Consider any representation of $G$. As $H$ is not connected, there must be a point such that no occurrence of $H$ in a maximum induced $H$-matching uses an arc containing this point.
  Up to equivalence, there are $O(|V(G)|)$ such points that we need to consider.
  For any such point $p$, let $G_{p}$ be the graph obtained from $G$ by removing all arcs containing $p$.
  Note that~$G_{p}$ is a proper interval graph.
  Then create $H'$ as the disjoint union of $k$ copies of $H$, and find $H'$ as an induced subgraph in $G_{p}$.
  Heggernes et al.~\cite{HeggernesHMV2012} have shown that finding a proper interval graph $H'$ as an induced subgraph in another proper interval graph $G_{p}$ is fixed-parameter tractable when parameterized by the number of connected components of $H'$.
  Since the number of connected components of $H'$ is equal to~$k$ times the number of connected components of $H$, and there is a $G_{p}$ with an induced $H$-matching of size~$k$ if and only if there is an induced $H$-matching of size $k$ in $G$, the theorem follows.
\qed
\end{proof}

Note that {\sc Induced Graph Matching} is $\mathsf{NP}$-hard if both $G$ and $H$ are proper interval graphs, even if $k=1$~\cite{Damaschke1990}.
Hence, parameterization by some property of $H$ is really necessary in Theorem~\ref{thm:circ:circarc-circarc-fpt}.
We discuss further possibilities for improving on the results in this section in Section~\ref{sec:discussion}.

\subsection{Fuzzy Circular-Arc Graphs}
The generalization from proper circular-arc graphs to fuzzy circular-arc graphs makes {\sc Induced Graph Matching} substantially harder.
This is due to co-bipartite graphs.

\begin{theorem}
\label{thm:circ:co-hard}
  {\sc Induced Graph Matching} is $\mathsf{NP}$-hard, and $\mathsf{W}[1]$-hard when parameterized by $|V(H)|$, even if $k=1$ and both $G$ and $H$ are connected co-bipartite graphs, or both $G$ and $H$ are connected fuzzy circular-arc graphs, or both $G$ and $H$ are connected claw-free graphs.
\end{theorem}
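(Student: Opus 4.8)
The plan is to note that for $k=1$ an induced $H$-matching of size one is simply an induced copy of $H$, so {\sc Induced Graph Matching} with $k=1$ is exactly {\sc Induced Subgraph Isomorphism}, and to prove hardness of the latter by a reduction from {\sc Clique}, which is $\mathsf{NP}$-hard and, parameterized by the clique size $k'$, $\mathsf{W}[1]$-hard. I would set things up so that one reduction already produces connected co-bipartite instances; since every co-bipartite graph is claw-free, this simultaneously settles the claw-free case, and the fuzzy circular-arc case I would treat by a companion argument, drawing on Theorem~\ref{thm:circ:np-complete} for its $\mathsf{NP}$-hardness part.

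For the co-bipartite case, take an instance $(G',k')$ of {\sc Clique} with $k'\ge 4$ (after trivial padding, $G'$ connected with at least three vertices). Let $S$ be obtained from $G'$ by subdividing each edge once, so $V(S)=V(G')\cup E(G')$ and each subdivision vertex $uv\in E(G')$ is adjacent exactly to $u$ and $v$; let $T$ be the analogous one-subdivision of $K_{k'}$. Set $G:=\overline S$ and $H:=\overline T$. Since complementation preserves both adjacency and non-adjacency, $G$ contains an induced copy of $H$ iff $S$ contains an induced copy of $T$, and I claim this happens iff $G'$ has a $k'$-clique. For the forward direction, an induced embedding $\phi$ of $T$ into $S$ must send original vertices of $T$ to original vertices of $S$, because an original vertex of $T$ has degree $k'-1\ge 3$, hence its image has degree $\ge 3$ in $S$, and in $S$ only original vertices have degree $\ge 3$ (subdivision vertices have degree $2$); and it must send subdivision vertices of $T$ to subdivision vertices of $S$, because a subdivision vertex of $T$ has two original-vertex neighbours whose images are two distinct, and hence non-adjacent, original vertices of $S$, and a common neighbour of two non-adjacent original vertices of $S$ is necessarily a subdivision vertex. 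Then $\phi$ restricted to the $k'$ original vertices of $T$ picks out $k'$ vertices of $G'$, and the images of the $\binom{k'}{2}$ subdivision vertices certify that all $\binom{k'}{2}$ pairs among them are edges of $G'$; the converse is a direct check. The graphs $\overline S$ and $\overline T$ are co-bipartite by construction and connected because neither $S$ nor $T$ is complete bipartite (every subdivision vertex has only two neighbours), so the two cliques of $\overline S$, respectively $\overline T$, are joined by an edge. As $|V(H)|=k'+\binom{k'}{2}$ depends only on $k'$ and the construction is polynomial, this is at once an $\mathsf{NP}$-hardness and a parameterized reduction. The claw-free case is then immediate: in a co-bipartite graph with cliques $A$ and $B$, the neighbourhood of any vertex meets $A$ and $B$ each in a clique, so $\alpha(G[N(v)])\le 2$.

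For the fuzzy circular-arc case, the $\mathsf{NP}$-hardness follows directly from Theorem~\ref{thm:circ:np-complete}: the connected co-bipartite proper circular-arc graphs constructed there are fuzzy circular-arc graphs, and {\sc Induced Subgraph Isomorphism} is the $k=1$ restriction of {\sc Induced Graph Matching}. For the $\mathsf{W}[1]$-hardness on fuzzy circular-arc graphs I would again reduce from {\sc Clique}, but the instances now have to admit a (fuzzy) circular-arc representation, so the complement-of-subdivision construction above — whose output is in general not even a circular-arc graph — must be replaced by one that encodes the clique search directly via arcs placed around a circle (for instance with a suitable choice of $H$, such as a cocktail-party graph, which is proper circular-arc and connected). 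I expect this to be the main obstacle: one must encode an arbitrary {\sc Clique} instance while staying inside the class of fuzzy circular-arc graphs, which is contained in the claw-free graphs, so the adjacency structure of $G'$ cannot simply be copied into the host graph and must instead be represented indirectly through the geometry of the arcs and the fuzzy adjacency mechanism.
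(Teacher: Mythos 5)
Your co-bipartite argument is fine and is in fact more self-contained than the paper's, which simply invokes a result of Golovach et al.\ that {\sc Induced Subgraph Isomorphism} is $\mathsf{NP}$-hard and $\mathsf{W}[1]$-hard (parameterized by $|V(H)|$) on co-bipartite graphs; your subdivision-plus-complementation reduction from {\sc Clique} correctly reproves that, and the claw-free case does follow immediately since co-bipartite graphs are claw-free.

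The genuine gap is the fuzzy circular-arc case, which you leave unproved for the $\mathsf{W}[1]$-hardness part and describe as ``the main obstacle'' requiring a new arc-based encoding. No new construction is needed: \emph{every co-bipartite graph is a fuzzy circular-arc graph}, so your co-bipartite instances already lie in the class and the whole theorem follows from the one reduction. Your worry that the complement-of-subdivision graphs are ``in general not even circular-arc graphs'' conflates circular-arc with \emph{fuzzy} circular-arc: in a fuzzy representation, adjacency is forced only when two arcs overlap in more than one point and forbidden only when they are disjoint, while single-point intersections are unconstrained. Hence one can represent a co-bipartite graph with cliques $A$ and $B$ by two families of coincident arcs (one per clique) that meet exactly in their shared endpoints; the fuzziness at those points realizes an \emph{arbitrary} bipartite adjacency between $A$ and $B$. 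The paper makes this precise in a footnote via the trigraph machinery of Section~\ref{sec:algorithmicdecompositionofclawfreegraphs}: a co-bipartite graph is a thickening of a single semi-edge (or of a single vertex, or of two independent vertices), each of which is a circular interval trigraph, and by Lemma~\ref{lem:circ:circ-model} any such thickening is a fuzzy circular-arc graph. With that observation your proof closes; as written, the statement for fuzzy circular-arc graphs (in particular its $\mathsf{W}[1]$-hardness part) is not established, and the detour through Theorem~\ref{thm:circ:np-complete} and a hypothetical cocktail-party-graph reduction is unnecessary.
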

\begin{proof}
Golovach et al.~\cite{GolovachPL2012} implicitly prove that {\sc Induced Subgraph Isomorphism} is $\mathsf{NP}$-hard, and $\mathsf{W}[1]$-hard when parameterized by $|V(H)|$, on co-bipartite graphs. As {\sc Induced Graph Matching} for parameter $k = 1$ is the {\sc Induced Subgraph Isomorphism} problem, the result for co-bipartite graphs follows. It then suffices to observe that co-bipartite graphs are both fuzzy circular-arc graphs\footnote{We can explicitly prove this using the language of Section~\ref{sec:algorithmicdecompositionofclawfreegraphs}. Note that any co-bipartite graph is a thickening of a single semi-edge, a single vertex, or two independent vertices. Moreover, the graph consisting of a single semi-edge, a single vertex, or two independent vertices is a circular interval trigraph. Hence, using Lemma~\ref{lem:circ:circ-model}, we can see that any co-bipartite graph is a fuzzy circular-arc graph.} and claw-free graphs.
\qed\end{proof}
Therefore, we consider the case when $H$ is fixed.

\begin{theorem}
\label{thm:algo:circ}
  {\sc Induced Graph Matching} on fuzzy circular-arc graphs $G$ can be solved in polynomial time when $H$ is a fixed, connected graph.
\end{theorem}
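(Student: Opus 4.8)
The plan is to follow the Cameron--Hell paradigm from the proof of Theorem~\ref{thm:circ:int-int-poly}: reduce {\sc Induced Graph Matching} on $G$ to {\sc Independent Set} on an auxiliary graph $G^*$ whose vertices are the ``relevant'' occurrences of $H$ in $G$, with two occurrences joined by an edge exactly when they cannot coexist in an induced $H$-matching (because they share a vertex, or because there is an edge of $G$ between them). Two ingredients are needed: (a) a polynomial-time enumeration of a polynomial-size set $\mathcal{G}_\mathcal{H}$ of relevant occurrences, and (b) a proof that {\sc Independent Set} on $G^*$ is polynomial-time solvable. I would start by computing, in polynomial time, a fuzzy circular-arc representation $\mathcal{A}$ of $G$ (Theorem~\ref{thm:circ:fuzzy-reg}), fix it, and as usual identify each vertex with its arc.

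For (a), I first handle $k = 1$ separately, as it is just {\sc Induced Subgraph Isomorphism} and only a single occurrence must be found. Since $H$ is connected, the union of the arcs of any occurrence of $H$ is a connected subset of the circle, hence either a proper sub-arc or the whole circle; an occurrence of the latter type can lie in no induced $H$-matching of size $\geq 2$, so for $k \geq 2$ it may be discarded. For each pair of ``cut points'' on the circle---of which, up to the natural equivalence, there are $O(|V(G)|^{2})$---I would delete all vertices whose arc is not contained in the clockwise sub-arc between the two cut points. The remaining arcs are intervals of a line, so the resulting graph is a \emph{fuzzy interval graph}, and it suffices to find one occurrence of $H$ in it. This needs an algorithm for {\sc Induced Subgraph Isomorphism} with a fixed connected pattern on fuzzy interval graphs, which I would obtain by adapting the proper-interval algorithm of Heggernes et al.~\cite{HeggernesHMV2012}: the fuzzy bundles (maximal sets of arcs with common endpoints, each a clique of $G$) and the fuzzy adjacencies at their shared endpoints can be accommodated because a fixed $H$ uses at most $|V(H)|$ vertices of each bundle, leaving only a bounded amount of branching per bundle. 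Keeping one representative occurrence for each cut-point pair produces $\mathcal{G}_\mathcal{H}$.

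For (b), to every $G_H \in \mathcal{G}_\mathcal{H}$ I associate the sub-arc $R(G_H)$ equal to the union of the arcs of $V(G_H)$. Then: disjoint regions force the two occurrences to be compatible (a common vertex or a connecting edge would make the regions intersect); regions overlapping in more than a single point force incompatibility (two of the underlying arcs then overlap in more than a point, hence are adjacent in $G$, or coincide); and regions meeting only in boundary points may go either way. Consequently $G^*$ is precisely the \emph{fuzzy} intersection graph of the set of arcs $\{R(G_H)\}$, and {\sc Independent Set} on it is polynomial-time solvable (for instance by an infinitesimal perturbation of the endpoints realizing the prescribed fuzzy adjacencies, reducing to {\sc Independent Set} on an ordinary circular-arc graph, which is classically polynomial). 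Since independent sets of $G^*$ correspond exactly to induced $H$-matchings of $G$, comparing the size of a maximum independent set of $G^*$ with $k$ decides the instance.

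I expect the main difficulty to be ingredient (a): in contrast with the non-fuzzy setting of Theorem~\ref{thm:circ:int-int-poly}, where the algorithm of Heggernes et al.\ is invoked as a black box, here one must argue explicitly that the fuzzy bundles and the fuzzy adjacencies at their endpoints generate only polynomially many (for fixed $H$, boundedly many per bundle) candidate configurations, and splice this into an interval-graph subgraph-isomorphism routine. A secondary point deserving care is the correctness of the cut-point reduction---in particular that the stated equivalence of occurrences loses no relevant solution---which relies essentially on the connectivity of $H$, and the clean handling of the whole-circle / $k=1$ case.
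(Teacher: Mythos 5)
Your reduction has a genuine gap at its crucial step (b). You assert that the incompatibility graph $G^*$ is ``precisely the fuzzy intersection graph'' of the regions $R(G_H)$ and that {\sc Independent Set} on it becomes polynomial ``by an infinitesimal perturbation of the endpoints realizing the prescribed fuzzy adjacencies, reducing to an ordinary circular-arc graph.'' Perturbing endpoints at a touching point $p$ can only realize \emph{staircase} patterns between the regions ending at $p$ and those starting at $p$ (an ender becomes adjacent to a starter exactly when its right endpoint slides past the starter's left endpoint), whereas the compatibility pattern that $G^*$ must encode at $p$ is inherited from the arbitrary resolution of fuzziness in $G$ between the two bundles of arcs meeting at $p$ and can, for instance, be a perfect matching between two enders and two starters, which no perturbation realizes. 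Already for $H=K_1$ your claim amounts to saying that every fuzzy circular-arc graph is an ordinary circular-arc graph on (a perturbation of) the same arcs, which is false: every co-bipartite graph is a fuzzy circular-arc graph, but not every co-bipartite graph is a circular-arc graph. Nor can you retreat to the paper's notion of fuzzy circular-arc graph or to claw-freeness: the regions need not be almost proper (one occurrence's region may strictly contain another's), and nested disjoint regions create claws in $G^*$. A second, smaller flaw in (b): the inference ``regions overlapping in more than one point force incompatibility because two underlying arcs then overlap in more than a point'' fails when the two regions meet in \emph{two isolated points}, i.e.\ jointly cover the circle; there all crossing arc pairs may meet in single points with the fuzziness resolved to non-edges, so two compatible occurrences would receive a forced edge in $G^*$ (this can only affect $k\le 2$, but you do not handle it). Finally, your ingredient (a) rests on an unproven adaptation of the Heggernes et al.\ routine to ``fuzzy interval graphs''; this whole detour is unnecessary, since $H$ is fixed and all occurrences can be enumerated by brute force in $n^{O(h)}$ time.

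For comparison, the paper avoids every representation-theoretic claim about an auxiliary occurrence graph: it enumerates all occurrences of $H$ outright, guesses one occurrence $H^*$ of the solution and deletes $N[H^*]$ (which cuts the circle and leaves a fuzzy circular-arc graph), takes a representation via Theorem~\ref{thm:circ:fuzzy-reg}, and then runs a left-to-right dynamic program over the remaining occurrences ordered by the rightmost endpoint of their arcs, extending a partial matching by any occurrence compatible with the last one chosen. Repairing your step (b) would essentially force you to reprove such a sweep (after cutting the circle, track the identity of the last chosen occurrence rather than relying on a geometric realization of $G^*$), so as written the proposal does not constitute a proof.
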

\begin{proof}
  We generalize the approach of Theorem~\ref{thm:circ:int-int-poly}.
  First, observe that since $H$ is fixed, we can find the set~$\mathcal{H}$ of all occurrences of $H$ in $G$ in polynomial time.
  We call two occurrences of $H$ \emph{compatible} if no vertex of one occurrence is equal to or a neighbor of a vertex of the other.

  We may assume that $H$ occurs at least once in $G$.
  Pick an arbitrary $H^{*} \in \mathcal{H}$, and remove $N[H^{*}]$ from $G$.
  Since fuzzy circular-arc graphs are closed under vertex deletion, the resulting graph $G' = G - N[H^*]$ is also a fuzzy circular-arc graph.

  Now find a representation of $G'$ using Theorem~\ref{thm:circ:fuzzy-reg} (which essentially rephrases a result by Oriolo et al.~\cite{OrioloPS2012}).
  For each $H' \in \mathcal{H}\setminus\{H^*\}$, let $l(H')$ (resp.\ $r(H')$) be the leftmost (resp.\ rightmost) endpoint of any arc of $H'$.
  Let $\mathcal{P}'$ be the set of endpoints of the arcs of $\mathcal{A}'$.
  For each point $p' \in \mathcal{P}'$, let $\mathcal{H}_{p'}$ be the set of $H' \in \mathcal{H}\setminus\{H^*\}$ for which $r(H') = p'$.
  We use $\mathcal{H}_{p'}^{1}, \ldots,\mathcal{H}_{p'}^{|\mathcal{H}_{p'}|}$ to denote the elements of $\mathcal{H}_{p'}$.
  Finally, since we removed one occurrence $H^*$ from $G$ to obtain $G'$, we can uniquely order the points in $\mathcal{P}'$ from left to right as $p_{1},\ldots,p_{|\mathcal{P}'|}$.

We describe a dynamic programming approach that solves {\sc Induced Graph Matching}.
For each $i = 1,\ldots,|\mathcal{P}'|$ and for each $j = 1,\ldots,|\mathcal{H}_{p_{i}}|$, we will compute $M[i,j]$ as the size of a largest induced $H$-matching in the subgraph of $G'$ induced by the vertices whose arcs have their right endpoint on or to the left of $p_{i}$, such that this induced $H$-matching contains $\mathcal{H}_{p_{i}}^{j}$.
For simplicity, we set $M[0,1] = 0$ and let $\mathcal{H}_{p_{0}}^{1}$ be a ``fake'' occurrence that is compatible with all occurrences of $H$.
We can then compute $M$ using the following formula:
  \begin{equation*}
  M[i,j] = 
              \displaystyle 1 + \max_{0 \leq i^{*} < i}\ \max_{\stackrel{j^{*} = 1,\ldots,|\mathcal{H}_{p_{i^{*}}}|}{\mathcal{H}_{p_{i}}^{j}, \mathcal{H}_{p_{i^{*}}}^{j^{*}} \mbox{compatible} }}\ M[i^{*}, j^{*}]
  \end{equation*}
Then the size of a largest induced $H$-matching of $G-N[H^{*}]$ follows by computing:
  \begin{equation*}
  \max_{\stackrel{i = 1,\ldots,|\mathcal P'|}{j =1,\ldots,|\mathcal{H}_{p_{i}}|}} \{M[i,j]\} \enspace.
  \end{equation*}
  We can adapt this algorithm to obtain the induced $H$-matching that achieves the maximum.
  By running this dynamic programming algorithm for all possible choices of $H^{*}$, we obtain a polynomial-time algorithm (when $H$ is fixed) to solve {\sc Induced Graph Matching}.
\qed
\end{proof}

\section{Algorithmic Decomposition of Claw-Free Graphs}
\label{sec:algorithmicdecompositionofclawfreegraphs}
In this section we prove an algorithmic decomposition theorem of claw-free graphs.
We start by introducing some notation and stating the decomposition theorem, and then go on to prove the theorem.

The backbone for the decomposition theorem is formed by so-called strips and strip-structures, concepts first introduced by Chudnovsky and Seymour~\cite{ChudnovskyS2007,ChudnovskyS2008a,ChudnovskyS2008b,ChudnovskyS2008c,ChudnovskyS2008d,ChudnovskyS2010,ChudnovskyS2012}. We start by defining these notions, in a manner that is specialized towards their use for claw-free graphs. Throughout, we ask the reader to refer to Fig.~\ref{fig:stripstructure} for accompanying illustrations.

\begin{definition}
A \emph{strip} $(J,Z)$ of a graph $G$ is a tuple with a graph $J$ and an independent set $Z \subseteq V(J)$, such that $\emptyset \neq (V(J) \setminus Z) \subseteq V(G)$, the subgraphs of $J$ and $G$ induced by $V(J) \setminus Z$ are isomorphic, and for each $z \in Z$, $N(z)$ is a nonempty clique in $J$.
\end{definition}
It is crucial to observe that the vertices of $Z$ are \emph{not} vertices of $G$, but auxiliary vertices. Moreover, the set $Z$ may be empty; however, in the way that strips are used later on, this occurs only in one situation.

We introduce several terms for the different parts of a strip $(J,Z)$. The set $N(z)$ for a $z \in Z$ is called a \emph{boundary}, while the (possibly empty) graph $J - N[Z]$ is called the \emph{interior} of $(J,Z)$.

We also introduce two special types of strips.
First, we call a strip $(J,Z)$ a \emph{spot} if $J$ is a three-vertex path and $Z$ consists of its ends.
Second, we call a strip $(J,Z)$ a \emph{stripe} if no vertex of $J$ is adjacent to more than one $z \in Z$. This implies that for stripes the sets $N(z)$ for all $z \in Z$ are pairwise disjoint.

Before we can define strip-structures, we need the notion of a hypergraph.
A \emph{hypergraph} $\mathcal R$ consists of a set of vertices $V(\mathcal R)$ and a multi-set of hyperedges $E(\mathcal R)$ such that each $e \in E(\mathcal R)$ is a subset of $V(\mathcal R)$.
We explicitly allow empty hyperedges.
Note that if $|e| \leq 2$ for every $e \in E(\mathcal R)$, then $\mathcal R$ can be considered as a multi-graph, possibly with self-loops.

\begin{definition} \label{def:strip-structure}
A \emph{strip-structure} of a claw-free graph $G$ consists of a hypergraph~$\mathcal R$ (called the \emph{strip-graph}) with at least one edge, and for each hyperedge $e \in E(\mathcal R)$ a strip $(J_{e},Z_{e})$ of $G$ such that
\begin{itemize}
\item the sets $V(J_{e}) \setminus Z_{e}$ over all $e \in E(\mathcal R)$ partition $V(G)$;
\item for each $e \in E(\mathcal R)$, $J_{e}$ is a claw-free graph;
\item for each $e \in E(\mathcal R)$ and for each $r \in e$, there is a unique $z_{e}^{r} \in Z_{e}$, and $Z_{e} = \{ z_{e}^{r} \mid r \in e \}$;
\item for each $r \in V(\mathcal R)$, the union of $N(z_{e}^{r})$ over all $e \in E(\mathcal R)$ for which $r \in e$ induces a clique \emph{$C(r)$} in $G$;
\item if $u,v \in V(G)$ are adjacent, then $u,v \in V(J_{e})$ for some $e \in E(\mathcal R)$ or $u,v \in C(r)$ for some $r \in V(\mathcal R)$.
\end{itemize}
\end{definition}
For given $e \in E(\mathcal R)$ and $r \in e$, we call $N(z_{e}^{r})$ the \emph{boundary corresponding to $e$ and $r$}.
For given $r$, we call $C(r)$ the \emph{clique on $r$}.
Finally, to avoid confusion, we will from now on refer to hyperedges of a strip-graph as \emph{strip-edges} and vertices of a strip-graph as \emph{strip-vertices}; in this way, it is easy to distinguish whether we speak of vertices of the graph $G$ or of a strip-graph.

A crucial part of the definition of a strip-structure is the last item, which regulates how edges of $G$ are distributed over the strips.
Essentially, an edge can only go between a vertex of $J_{e} \setminus Z_{e}$ and a vertex of $J_{e'} \setminus Z_{e'}$ for strip-edges $e \not= e'$ of $\mathcal{R}$ if $e \cap e' \not= \emptyset$; even then these adjacent vertices must be part of the boundaries of $e$ and $e'$ that correspond to a strip-vertex of $e \cap e'$. In fact, the fourth item of the definition implies that the union of the boundaries corresponding to each strip-vertex of $e \cap e'$ is a clique. All edges of $G$ not in a clique on $r$ for some $r \in V(\mathcal{R})$ must be contained in $J_{e}[V(J_{e}) \setminus Z_{e}]$ for some $e \in E(\mathcal{R})$. Here, it is important to recall that the definition of a strip implies that $J_{e}[V(J_{e}) \setminus Z_{e}]$ and $G[V(J_{e}) \setminus Z_{e}]$ are isomorphic.
Some notions around strip-structures are illustrated in Fig.~\ref{fig:stripstructure}.

We note that any claw-free graph $G$ always has a trivial strip-structure consisting of a strip-graph with a single strip-edge $e$ and no strip-vertices, and of a strip $(J_{e},Z_{e}) = (G,\emptyset)$. The main result of this section, however, gives a much stronger strip-structure for claw-free graphs:

\tikzset{vertex/.style={minimum size=2mm,circle,fill=black,draw,inner sep=0pt},
        decoration={markings,mark=at position .5 with
{\arrow[black,thick]{stealth}}}}

\begin{figure}[t]
 \centering
 \begin{tikzpicture}[scale=0.5]
   \draw[fill=gray,fill opacity=0.2] (0,-4.5) ellipse (3.2cm and 1.4cm);
   \draw[fill=gray] (2,-4.5) ellipse (0.5cm and 1cm);
   \draw[fill=gray] (-2,-4.5) ellipse (0.5cm and 1cm);
   \draw[fill=gray, fill opacity=0.2] (-3,-2) ellipse (0.5cm and 0.5cm);
   \draw[fill=gray] (-3,-2) ellipse (0.35cm and 0.35cm);
   \draw[fill=gray, fill opacity=0.2] (-1,-2) ellipse (0.5cm and 0.5cm);
   \draw[fill=gray] (-1,-2) ellipse (0.35cm and 0.35cm);
   \draw[fill=gray, fill opacity=0.2] (1,-2) ellipse (0.5cm and 0.5cm);
   \draw[fill=gray] (1,-2) ellipse (0.35cm and 0.35cm);
   \draw[fill=gray, fill opacity=0.2] (3,-2) ellipse (0.5cm and 0.5cm);
   \draw[fill=gray] (3,-2) ellipse (0.35cm and 0.35cm);
   \draw[fill=gray, fill opacity=0.2] (2.5,-1) ellipse (0.5cm and 0.5cm);
   \draw[fill=gray] (2.5,-1) ellipse (0.35cm and 0.35cm);
   \draw[fill=gray, fill opacity=0.2] (-3,0) ellipse (0.5cm and 0.5cm);
   \draw[fill=gray] (-3,0) ellipse (0.35cm and 0.35cm);
            \draw[fill=gray,fill opacity=0.2] (-0.5,1.5) ellipse (1.5cm and 1.5cm);
            \draw[fill=gray] (-1,1) ellipse (0.45cm and 0.45cm);
            \draw[fill=gray] (0,1) ellipse (0.45cm and 0.45cm);
  \node (a) at (-3.75,-2){$a$};
  \node (b) at (-3.75,0){$b$};
  \node (c) at (-2.3,1.75){$c$};
  \node (d) at (-1,-2.75){$d$};
  \node (e) at (1,-2.75){$e$};
  \node (f) at (3.25,-1){$f$};
  \node (g) at (3.75,-2){$g$};
  \node (h) at (0,-6.5){$h$};
   \node (6) at (-2,-4) [vertex]{};
   \node (7) at (2,-4) [vertex]{};
   \node (8) at (-2,-5) [vertex]{};
   \node (9) at (2,-5) [vertex]{};
   \draw[thick] (7)--(9);
   \draw[thick] (6)--(8);
   \node (10) at (-1,-4) [vertex]{};
   \node (11) at (-0.5,-5) [vertex]{};
   \node (12) at (0.5,-4) [vertex]{};
   \draw[thick] (6)--(10);
   \draw[thick] (6)--(11);
   \draw[thick] (8)--(11);
   \draw[thick] (11)--(12);
   \draw[thick] (10)--(11);
   \draw[thick] (10)--(12);
   \draw[thick] (12)--(7);
   \draw[thick] (12)--(9);
   \node (13) at (3,-2) [vertex]{};
   \node (14) at (-3,-2) [vertex]{};
   \draw[thick] (13)--(7);
   \draw[thick] (13)--(9);
   \draw[thick] (14)--(6);
   \draw[thick] (14)--(8);
   \node (15a) at (-1,-2) [vertex]{};
   \draw[thick] (15a)--(13);
   \draw[thick] (15a)--(14);
   \node (15b) at (1,-2) [vertex]{};
   \node (16) at (-1,1) [vertex]{};
   \node (17) at (0,1) [vertex]{};
   \draw[thick] (16)--(17);
   \node (18) at (2.5,-1) [vertex]{};
   \draw[thick] (18)--(13);
   \draw[thick] (18)--(15b);
   \draw[thick] (16)--(15a);
   \draw[thick] (16)--(14);
   \draw[thick] (18)--(17);
   \node (19) at (-3,0) [vertex]{};
   \draw[thick] (19)--(14);
   \draw[thick] (19)--(16);
   \draw[thick] (13)--(14);
   \node (20) at (-0.5,2) [vertex]{};
   \draw[thick] (20)--(17);
   \draw[thick] (20)--(16);
   \draw[thick] (15a)--(19);
 \end{tikzpicture}
 \tikzset{every loop/.style={min distance=20mm,looseness=10}}
 \begin{tikzpicture}
  \node (0) at (-3,1) [vertex]{};
  \node (1) at (0,1) [vertex]{};
  \node (2) at (-3,3) [vertex]{};
  \node (3) at (0,3) [vertex]{};
  \node (4) at (-1.5,3) [vertex]{};
  \draw[thick] (0)--(1);
  \draw[thick] (1)--(3);
  \draw[thick] (2)--(0);
  \draw[thick] (2)--(3);
  \node (5) at (-1.5,4) [vertex]{};
  \draw[thick] (2)--(5);
  \draw[thick] (3)--(5);
  \draw[thick,in=135, out=45, loop,rotate=45] (2.center) to ();
  \node (a) at (-3.25,2){$a$};
  \node (b) at (-3,5,3){$b$};
  \node (c) at (-2.25,3.75){$c$};
  \node (d) at (-2.25,2.75){$d$};
  \node (e) at (-0.75,2.75){$e$};
  \node (f) at (-0.5,3.75){$f$};
  \node (g) at (0.25,2){$g$};
  \node (h) at (-1.5,0.75){$h$};
 \end{tikzpicture}
 \qquad
 \begin{tikzpicture}[scale=0.5]
   \draw[fill=gray, fill opacity=0.2] (0,0) ellipse (1.5cm and 1cm);
   \draw[fill=gray] (0,0) ellipse (0.5cm and 0.5cm);
   \node (1) at (0,0) [vertex] {};
   \node (2) at (-2,0) [vertex] {};
   \node (3) at (2,0) [vertex] {};
   \draw[thick] (1)--(2);
   \draw[thick] (1)--(3);
   \node at (0,-1.5){spot};
   \node at (-2,-0.8){$Z$};
   \node at (2,-0.8){$Z$};
   \draw[fill=gray, fill opacity=0.2] (0,-4.5) ellipse (2.5cm and 2cm);
   \draw[fill=gray] (2,-4.5) ellipse (0.5cm and 1cm);
   \draw[fill=gray] (-2,-4.5) ellipse (0.5cm and 1cm);
   \node (4) at (-3,-4.5) [vertex]{};
   \node (5) at (3,-4.5) [vertex]{};
   \node (6) at (-2,-4) [vertex]{};
   \node (7) at (2,-4) [vertex]{};
   \node (8) at (-2,-5) [vertex]{};
   \node (9) at (2,-5) [vertex]{};
   \draw[thick] (7)--(9);
   \draw[thick] (6)--(8);
   \draw[thick] (4)--(6);
   \draw[thick] (4)--(8);
   \draw[thick] (5)--(7);
   \draw[thick] (5)--(9);
   \node (10) at (-1,-4) [vertex]{};
   \node (11) at (-0.5,-5) [vertex]{};
   \node (12) at (0.5,-4) [vertex]{};
   \draw[thick] (6)--(10);
   \draw[thick] (6)--(11);
   \draw[thick] (8)--(11);
   \draw[thick] (11)--(12);
   \draw[thick] (10)--(11);
   \draw[thick] (10)--(12);
   \draw[thick] (12)--(7);
   \draw[thick] (12)--(9);
   \node at (0,-7){stripe};
   \node at (-3.2,-5.3){$Z$};
   \node at (3.0,-5.3){$Z$};
 \end{tikzpicture}
 \quad
 \quad
\caption{Illustrations for strip-structures and the two types of strips. In the left illustration, a claw-free graph is pictured. We have marked a strip-structure (for each strip, we left out vertices of $Z$): each light-gray ellipse corresponds to a strip, and each dark-gray ellipse corresponds to a boundary.
In the middle illustration, we pictured the strip-graph that corresponds to the strip-structure pictured in the left illustration. We remark that strip-graphs are \emph{not} necessarily claw-free, as the illustration shows.
In the right illustration, the two types of strips are pictured: spots and stripes. Again, the light-gray ellipses mark the set of vertices that are part of $G$, i.e.~the vertices of $Z$ do not belong to $G$, and the dark-gray ellipses mark the boundaries of the strips. We note that a spot always looks as pictured, and that the single vertex in the middle of the path is the boundary of the spot. Strips $a$, $d$, $e$, $f$, and $g$ of the left illustration correspond to spots. In the example of a stripe (note that it corresponds to the part $h$ of the left illustration), each boundary (and its associated vertex of $Z$) forms a clique. Strips $b$ and $c$ in the left illustration are also stripes, with one and two boundaries respectively.}
\label{fig:stripstructure}
\end{figure}
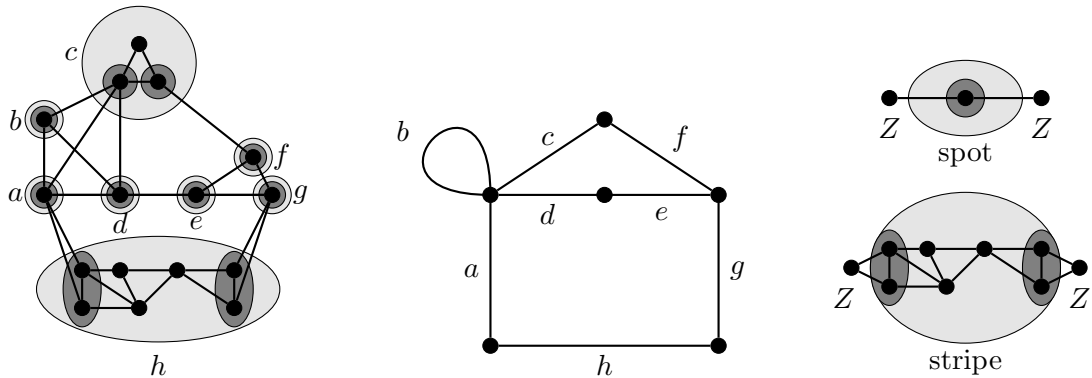

\begin{theorem}
\label{thm:prelim:main}%
  Let $G$ be a connected claw-free graph that is not a fuzzy circular-arc graph and has $\alpha(G) > 4$.
  Then $G$ admits a strip-structure such that each strip is either a spot, or a stripe $(J,Z)$ with $1 \leq |Z| \leq 2$ that is a fuzzy circular-arc graph or satisfies $\alpha(J) \leq 4$.
  Moreover, we can find such a strip-structure in polynomial time.
\end{theorem}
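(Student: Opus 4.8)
The plan is to derive Theorem~\ref{thm:prelim:main} from the existing Chudnovsky--Seymour decomposition theorem for claw-free graphs (and its algorithmic version due to Hermelin et al.~\cite{HermelinMLW11}), rather than to reprove a structure theorem from scratch. The CS theorem already gives us a strip-structure in which each strip is either a spot or belongs to one of a short list of ``basic'' classes; the work here is to massage those basic classes so that every non-spot strip is a \emph{stripe} with $|Z|\le 2$ that is either a fuzzy circular-arc graph or has $\alpha(J)\le 4$. First I would recall the precise statement of the CS/Hermelin et al.\ decomposition and inventory its basic strip classes (linear interval strips, strips arising from the icosahedron and other exceptional graphs, strips from line graphs, strips from the ``XX-graph'' constructions, antihat/antiprismatic strips, etc.), and separate them into those whose graph $J$ has bounded independence number (these are immediately fine, typically $\alpha\le 4$ by the structure of the exceptional graphs and of antiprismatic/co-bipartite pieces) and those with large $\alpha$, which are essentially the linear interval strips and their ``fuzzy'' refinements.

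**Key steps.** (1) Preprocessing: since the earlier decompositions assume that twins and proper W-joins have been removed, and since—as emphasized in the introduction—we cannot afford that for \textsc{Induced Graph Matching}, I would instead show that these operations can be \emph{absorbed into the strips}. Concretely, a proper W-join or a set of twins lives inside a single strip of the CS decomposition, and contracting it (or, conversely, keeping it) only enlarges the strip graph $J_e$ by a ``thickening'' operation; the point is that a thickening of a fuzzy circular-arc graph is again a fuzzy circular-arc graph, and a thickening of a bounded-$\alpha$ graph still has bounded $\alpha$ (thickening by independent vertices can raise $\alpha$, so one must check the thickening vectors that actually occur have bounded independent support—this is where the $\alpha(J)\le 4$ versus ``fuzzy circular-arc'' dichotomy gets calibrated). (2) Turning spots-plus-others into the claimed form: strips that are already spots are kept; for every other strip I would argue it can be taken to be a stripe, i.e.\ no vertex is adjacent to more than one $z\in Z$—if some vertex of $J$ sees two boundary vertices $z,z'$, then in $G$ the two boundary cliques $C(r),C(r')$ overlap and one can reroute the strip-structure (merge the two strip-vertices, or split the strip) to remove the violation; one must check this terminates and preserves claw-freeness of each $J_e$ and the last axiom of Definition~\ref{def:strip-structure}. (3) Bounding $|Z|$: the CS basic strips already have at most two boundaries (a strip with three or more boundaries decomposes further), so $|Z|\le 2$ comes essentially for free, but I would double-check the cases $|Z|=0$ (only the trivial whole-graph strip, excluded here since $G$ is not fuzzy circular-arc) do not survive. (4) Classifying the remaining big-$\alpha$ stripes: show that a linear-interval strip, together with any thickening and any absorbed W-joins, is a fuzzy circular-arc graph—this is exactly the content of the ``circular interval trigraph'' machinery alluded to in the footnote to Theorem~\ref{thm:circ:co-hard} (``using Lemma~\ref{lem:circ:circ-model}, any co-bipartite graph is a fuzzy circular-arc graph'') and one would extend that lemma from co-bipartite pieces to all linear-interval-type strips.

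**Algorithmic part.** For the ``polynomial time'' clause I would invoke the algorithmic decomposition of Hermelin et al.~\cite{HermelinMLW11}, which already computes a CS-style strip-structure (plus the list of removed twins/W-joins) in polynomial time; then each of the massaging steps above—reinserting absorbed structures, the finitely-terminating rerouting that makes strips into stripes, recognizing which strips are fuzzy circular-arc (polynomial by the recognition algorithm behind Theorem~\ref{thm:circ:fuzzy-reg}) versus bounded-$\alpha$—is a polynomial-time transformation of the strip-structure, and composing them keeps the total polynomial.

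**Main obstacle.** The hard part will be Step (4) together with the interaction of thickenings in Step (1): precisely identifying the boundary line between ``stays fuzzy circular-arc'' and ``$\alpha(J)\le 4$'' for every basic CS strip after reinserting twins and proper W-joins, and proving the fuzzy-circular-arc membership uniformly (i.e.\ producing an almost-proper, almost-strict arc representation of a thickened linear-interval strip with its $z$-vertices attached). I expect the exceptional/antiprismatic strips to be routine once one checks $\alpha\le 4$ directly from their explicit descriptions, so the real effort concentrates on the linear-interval family. A secondary subtlety is ensuring that none of the rerouting operations that fix stripe-violations or split large-$Z$ strips breaks claw-freeness of the individual $J_e$'s or the global consistency axiom of a strip-structure; this will require a careful but essentially local case analysis on how boundary cliques of adjacent strip-edges sit inside the cliques $C(r)$ of $G$.
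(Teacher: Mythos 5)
Your overall frame (start from the Chudnovsky--Seymour/Hermelin et al.\ strip-structure, keep spots and bounded-$\alpha$ strips, and show the remaining strips are fuzzy circular-arc) matches the paper's proof up to a point, but there is a genuine gap: you classify the large-$\alpha$ basic strips as ``essentially the linear interval strips and their fuzzy refinements,'' and all of your Step (4) effort goes there. This misses the class $\mathcal{S}_0$ of line trigraphs, which is exactly the case the paper has to work hardest on. A thickening of a line trigraph can have arbitrarily large independence number and in general is neither a fuzzy circular-arc graph nor of independence number at most $4$, so a strip of this kind cannot simply be kept or certified fuzzy circular-arc; as it stands it violates the theorem's conclusion. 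The paper handles it by decomposing such a strip \emph{further}: using Lemma~\ref{lem:prelim:line-graph} (together with Proposition~\ref{prp:prelim:pseudo}) the strip $J$ is the line graph of a multigraph $M$, which is computed explicitly (twin contraction, Roussopoulos' line-graph inversion, re-expansion of twin sets); $M$ is then modified so that each edge corresponding to a $z\in Z$ gets a pendant endpoint, and a new strip-structure is built from $M$ (one strip-edge per edge of $M$, each strip a spot or a single-vertex stripe with $|Z|\le 2$ and independence number $1$) and spliced into $\mathcal{R}$ in place of the offending strip, checking that all axioms of a strip-structure are preserved. Nothing in your proposal performs this refinement, and without it the decomposition you would output does not satisfy the theorem. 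The same refinement is also what resolves the case $|E(\mathcal{R})|=1$, where $G$ itself turns out to be a thickening of a line trigraph once $\mathcal{S}_3$ is excluded by the hypothesis that $G$ is not a fuzzy circular-arc graph.

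Two smaller points. First, your Step (2) (rerouting to turn strips into stripes) and your worry in Step (1) about thickenings inflating $\alpha$ are unnecessary: the Hermelin et al.\ lemma already yields a \emph{purified} strip-structure (spots or thickenings of almost-unbreakable stripes), and since every class $X_{v'}$ of a thickening is a strong clique, thickening preserves $\alpha$ exactly (Proposition~\ref{prp:prelim:alpha-thickening}). Second, the bound $|Z|\le 2$ does not come ``for free'' for all basic classes: Lemma~\ref{lem:prelim:z-bound} covers $\mathcal{S}_1,\ldots,\mathcal{S}_7$ but not $\mathcal{S}_0$; for line-trigraph strips the bound only emerges after the refinement described above, which is another reason that case cannot be skipped.
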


In order to prove this theorem, Subsection~\ref{sec:trigraphs:def} describes a convenient tool to work with claw-free graphs, the so-called trigraphs. In Subsection~\ref{sec:trigraphs:special}, we then define several special trigraphs and present some of their properties. Subsection~\ref{sec:trigraphs:joins} gives the structures through which we decompose claw-free graphs. Then, Subsection~\ref{sec:trigraphs:end} gives several auxiliary results, before giving the proof of Theorem~\ref{thm:prelim:main} in Subsection~\ref{sec:trigraphs:main}.

\subsection{Trigraphs} \label{sec:trigraphs:def}
The basic tool to describe the structure of claw-free graphs are trigraphs. We borrow this tool and its associated terminology from Chudnovsky and Seymour~\cite{ChudnovskyS2007,ChudnovskyS2008a,ChudnovskyS2008b,ChudnovskyS2008c,ChudnovskyS2008d,ChudnovskyS2010,ChudnovskyS2012}. The notion of a trigraph extends the standard notion of a graph.
A \emph{trigraph} consists of a set of vertices and a set of two types of edges, normal edges and \emph{semi-edges}, and each vertex can be incident on at most one semi-edge.
Observe that any graph is a trigraph, and that any trigraph without semi-edges is a graph.

We call two vertices of a trigraph \emph{strongly adjacent} if there is a normal edge between them, \emph{semi-adjacent} if there is a semi-edge between them, and \emph{strongly anti-adjacent} if there is no edge between them.
Two vertices are called \emph{adjacent} if they are strongly adjacent or semi-adjacent, and called \emph{anti-adjacent} if they are semi-adjacent or strongly anti-adjacent.

A trigraph $G$ is a \emph{thickening} of a trigraph $G'$ if there is a set $\mathcal{X} = \{X_{v'} \subseteq V(G) \mid v' \in V(G')\}$ such that each~$X_{v'}$ is nonempty, the $X_{v'}$ partition $V(G)$, and
\begin{itemize}
  \item if $u'$ is strongly adjacent to $v'$ in $G'$, then in $G$ each $u \in X_{u'}$ is strongly adjacent to each $v \in X_{v'}$;
  \item if $u'$ is strongly anti-adjacent to $v'$ in $G'$, then in $G$ each $u \in X_{u'}$ is strongly anti-adjacent to each $v \in X_{v'}$;
  \item if $u'$ is semi-adjacent to $v'$ in $G'$, then in $G$ there exist $u_{1},u_{2} \in X_{u'}$, $v_{1}, v_{2} \in X_{v'}$ such that $u_{1}$ is adjacent to $v_{1}$ and $u_{2}$ is anti-adjacent to $v_{2}$ (note that possibly $u_{1} = u_{2}$ or $v_{1}=v_{2}$);
  \item in $G$ any two vertices in $X_{v'}$ are strongly adjacent, for any $X_{v'}\in\mathcal X$.
\end{itemize}
We sometimes talk about the thickening $\mathcal{X}$ of $G'$ to $G$.

Let $G$ be a trigraph.
Given disjoint sets $A,B \subseteq V(G)$, we say that $A$ is \emph{(strongly) complete} to $B$ if each vertex of $A$ is (strongly) adjacent to each vertex of $B$.
Similarly, we define the notions of \emph{anti-complete} and \emph{strongly anti-complete}.
A set $C \subseteq V(G)$ is a \emph{(strong) clique} if every pair of vertices of~$C$ is (strongly) adjacent.
A set $I \subseteq V(G)$ is a \emph{(strong) independent set} if every pair of vertices of $C$ is (strongly) anti-adjacent.
We use $\alpha(G)$ to denote the size of a largest independent set of $G$.

For any $X \subseteq V(G)$, $G[X]$ is the subgraph of $G$ induced by $X$.
The notion of isomorphic trivially extends to trigraphs.
If $G[X]$ is isomorphic to some trigraph $G'$, then~$G'$ is said to be an \emph{induced subtrigraph} of $G$.

A \emph{claw} is the trigraph with four vertices $c,c_{1},c_{2},c_{3}$, where $c$ is complete to $c_{1}$, $c_{2}$, and $c_{3}$, and $\{c_{1},c_{2},c_{3}\}$ is an independent set.
Then~$G$ is \emph{claw-free} if no induced subtrigraph of $G$ is isomorphic to a claw.

We observe that if a trigraph has no semi-edges (and thus is a graph), then the notions of adjacent, anti-adjacent, complete, anti-complete, clique, independent set, claw, and claw-free behave exactly as expected from the usual definitions of these terms.

\subsection{Special Trigraphs} \label{sec:trigraphs:special}
Chudnovsky and Seymour~\cite{ChudnovskyS2007,ChudnovskyS2008a,ChudnovskyS2008b,ChudnovskyS2008c,ChudnovskyS2008d,ChudnovskyS2010,ChudnovskyS2012} identified eight special classes of trigraphs: $\mathcal{S}_{0},\ldots,\mathcal{S}_{7}$. We describe only those classes that are relevant to this paper, namely $\mathcal{S}_{0}$, $\mathcal{S}_{2}$, and $\mathcal{S}_{3}$. The interested reader is referred to the papers by Chudnovsky and Seymour~\cite{ChudnovskyS2007,ChudnovskyS2008a,ChudnovskyS2008b,ChudnovskyS2008c,ChudnovskyS2008d,ChudnovskyS2010,ChudnovskyS2012} for the definition of all other classes.

The class $\mathcal{S}_{0}$ consists of all line trigraphs.
A \emph{line trigraph} $G$ of a graph $G'$ has the set of edges of $G'$ as its vertex set. The edge set of $G$ is defined as follows. If two edges $e,f$ of $G'$ are incident on the same vertex and this vertex has degree two, then there is a normal edge or a semi-edge between $e,f$ in $G'$. If two edges $e,f$ of $G'$ are incident on the same vertex and this vertex has degree at least three, then there is a normal edge between $e,f$ in $G'$. If two edges $e,f$ are not incident on a common vertex, then there is no edge between $e,f$ in $G'$.

The class $\mathcal{S}_{2}$ consists all XX-trigraphs.
We say that $G$ is an \emph{XX-trigraph} if it can be obtained by removing any subset $X$ of $\{v_{7},v_{11},v_{12},v_{13}\}$ from the following trigraph on vertex set $v_{1},\ldots,v_{13}$:
\begin{itemize}
  \item $v_{i}$ is adjacent to $v_{i+1}$ for $i = 1,\ldots,5$ and $v_{6}$ is adjacent to $v_{1}$; also $v_{i}$ is anti-adjacent to $v_{j}$ for each $i = 1,\ldots,4$ and each $i+2 \leq j \leq 6$,
  \item $v_{7}$ is strongly adjacent to $v_{1}$ and $v_{2}$,
  \item $v_{8}$ is strongly adjacent to $v_{4}$, $v_{5}$, and possibly adjacent to $v_{7}$,
  \item $v_{9}$ is strongly adjacent to $v_{1}$, $v_{2}$, $v_{3}$, and $v_{6}$,
  \item $v_{10}$ is strongly adjacent to $v_{3}$, $v_{4}$, $v_{5}$, and $v_{6}$, and adjacent to $v_{9}$,
  \item $v_{11}$ is strongly adjacent to $v_{1}$, $v_{3}$, $v_{4}$, $v_{6}$, $v_{9}$, and $v_{10}$,
  \item $v_{12}$ is strongly adjacent to $v_{2}$, $v_{3}$, $v_{5}$, $v_{6}$, $v_{9}$, and $v_{10}$,
  \item $v_{13}$ is strongly adjacent to $v_{1}$, $v_{2}$, $v_{4}$, $v_{5}$, $v_{7}$, and $v_{8}$.
\end{itemize}
All other pairs of vertices are strongly anti-adjacent. An illustration of XX-trigraphs is given Fig.~\ref{fig:xxtrigraphs}. We need the following observation about XX-trigraphs.

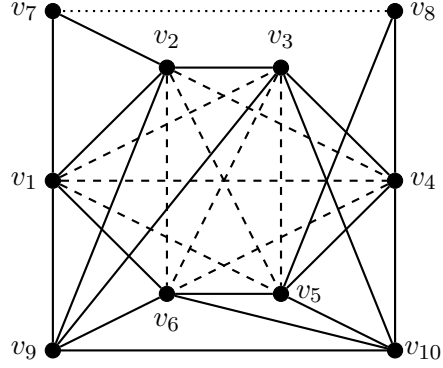
\begin{figure}[t]
  \begin{center}
    \begin{tikzpicture}[scale=1.5]
      \node (v1) at (1,1)[vertex]{};
      \node (v1l) at (0.75,1){$v_{1}$};
      \node (v2) at (2,2)[vertex]{};
      \node (v2l) at (2,2.25){$v_{2}$};
      \node (v3) at (3,2)[vertex]{};
      \node (v3l) at (3,2.25){$v_{3}$};
      \node (v4) at (4,1)[vertex]{};
      \node (v4l) at (4.25,1){$v_{4}$};
      \node (v5) at (3,0)[vertex]{};
      \node (v5l) at (3.25,0){$v_{5}$};
      \node (v6) at (2,0)[vertex]{};
      \node (v6l) at (2,-0.25){$v_{6}$};
      \draw[thick](v1)--(v2);
      \draw[thick](v2)--(v3);
      \draw[thick](v3)--(v4);                  
      \draw[thick](v4)--(v5);
      \draw[thick](v5)--(v6);
      \draw[thick](v6)--(v1);
      \draw[thick,dashed](v1)--(v3);
      \draw[thick,dashed](v1)--(v4);
      \draw[thick,dashed](v1)--(v5);
      \draw[thick,dashed](v2)--(v4);
      \draw[thick,dashed](v2)--(v5);
      \draw[thick,dashed](v2)--(v6);
      \draw[thick,dashed](v3)--(v5);
      \draw[thick,dashed](v3)--(v6);
      \draw[thick,dashed](v4)--(v6);
      \node (v7) at (1,2.5)[vertex]{};
      \node (v7l) at (0.75,2.5){$v_{7}$};
      \node (v8) at (4,2.5)[vertex]{};
      \node (v8l) at (4.25,2.5){$v_{8}$};
      \node (v9) at (1,-0.5)[vertex]{};
      \node (v9l) at (0.75,-0.5){$v_{9}$};
      \node (v10) at (4,-0.5)[vertex]{};
      \node (v10l) at (4.25,-0.5){$v_{10}$};
      \draw[thick](v7)--(v1);
      \draw[thick](v7)--(v2);
      \draw[thick](v8)--(v4);
      \draw[thick](v8)--(v5);
      \draw[thick,dotted](v8)--(v7);
      \draw[thick](v9)--(v1);
      \draw[thick](v9)--(v2);
      \draw[thick](v9)--(v3);
      \draw[thick](v9)--(v6);
      \draw[thick](v10)--(v3);
      \draw[thick](v10)--(v4);
      \draw[thick](v10)--(v5);
      \draw[thick](v10)--(v6);
      \draw[thick](v10)--(v9);
    \end{tikzpicture}  
  \end{center}
  \caption{An illustration of an XX-trigraphs with $X = \{v_{11},v_{12},v_{13}\}$. Between two vertices, a solid line indicates strong adjacency, a dashed line indicates (anti-)adjacency, a dotted line indicates possible adjacency, and no line indicates strong anti-adjacency. We already indicate that $v_1,\ldots,v_6$ form a cycle of strong adjacencies, as proved in Proposition~\ref{prp:prelim:xx}.}
\label{fig:xxtrigraphs}
\end{figure}

\begin{proposition}
\label{prp:prelim:xx}
  Let $G$ be a claw-free XX-trigraph obtained from the above graph on $\{v_{1},\ldots,v_{13}\}$ by removing $X \subseteq \{v_{7},v_{11},v_{12},v_{13}\}$.
  Then $\alpha(G) \leq 4$.
\end{proposition}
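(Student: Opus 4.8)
The plan is to cover $V(G)$ with at most four \emph{strong} cliques. Since any two vertices of a strong clique are adjacent and hence not anti-adjacent, every independent set of $G$ contains at most one vertex of each part, so $\alpha(G)\le 4$ follows immediately. The cover I have in mind consists of $A_1=\{v_1,v_2\}\cup(\{v_7,v_{13}\}\cap V(G))$, $A_2=\{v_4,v_5,v_8\}$, $A_3=\{v_3,v_9\}\cup(\{v_{11}\}\cap V(G))$, and $A_4=\{v_6,v_{10}\}\cup(\{v_{12}\}\cap V(G))$. Since $v_1,\dots,v_6,v_8,v_9,v_{10}$ are always present and each optional vertex $v_7,v_{11},v_{12},v_{13}$ belongs to exactly one of these sets, the $A_i$ partition $V(G)$.

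What remains is to verify that each $A_i$ is a strong clique, which is a direct inspection of the adjacency list defining an XX-trigraph. Every pair inside $A_3$ (i.e.\ $v_3v_9$, $v_3v_{11}$, $v_9v_{11}$) and every pair inside $A_4$ ($v_6v_{10}$, $v_6v_{12}$, $v_{10}v_{12}$) is explicitly strongly adjacent, so $A_3$ and $A_4$ are strong cliques outright. Inside $A_1$ all pairs other than $v_1v_2$ are explicitly strongly adjacent ($v_1v_7$, $v_1v_{13}$, $v_2v_7$, $v_2v_{13}$, $v_7v_{13}$), and inside $A_2$ all pairs other than $v_4v_5$ are ($v_4v_8$, $v_5v_8$). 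Hence the only point not immediate from the definition is that $v_1v_2$ and $v_4v_5$ are not semi-edges.

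This is where claw-freeness is used, and it is cleanest to prove the slightly stronger statement (also recorded in Fig.~\ref{fig:xxtrigraphs}) that every edge of the hexagon $v_1v_2\cdots v_6v_1$ is strong. For a cycle edge $v_iv_{i+1}$, suppose it were a semi-edge, so that $v_i$ and $v_{i+1}$ are anti-adjacent. I would exhibit a vertex $w$ strongly adjacent to both $v_i$ and $v_{i+1}$ and a vertex $u\in N(w)$ anti-adjacent to both $v_i$ and $v_{i+1}$; then $\{v_i,v_{i+1},u\}$ is an independent set all of whose vertices are adjacent to $w$, so $G[\{w,v_i,v_{i+1},u\}]$ is an induced claw, a contradiction. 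Suitable pairs $(w,u)$ are $(v_9,v_{10})$ for $v_1v_2$, $(v_9,v_6)$ for $v_2v_3$, $(v_{10},v_6)$ for $v_3v_4$, $(v_{10},v_9)$ for $v_4v_5$, $(v_{10},v_3)$ for $v_5v_6$, and $(v_9,v_3)$ for $v_6v_1$; checking these uses only that $v_9$ is strongly adjacent to $v_1,v_2,v_3,v_6$, that $v_{10}$ is strongly adjacent to $v_3,v_4,v_5,v_6$ and adjacent to $v_9$, that $v_9$ is anti-adjacent to $v_4,v_5$ and $v_{10}$ anti-adjacent to $v_1,v_2$, and that the chords of the hexagon ($v_1v_3$, $v_1v_4$, $v_1v_5$, $v_2v_4$, $v_2v_5$, $v_2v_6$, $v_3v_5$, $v_3v_6$, $v_4v_6$) are anti-adjacencies. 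Once $v_1v_2$ and $v_4v_5$ are known to be strong, $A_1,A_2,A_3,A_4$ are all strong cliques, and $\alpha(G)\le 4$.

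I do not expect a serious obstacle: the argument is a verification against the lengthy list of (anti-)adjacencies of XX-trigraphs, and the only real care needed is to keep straight the four trigraph relations ``adjacent'', ``strongly adjacent'', ``anti-adjacent'', ``semi-adjacent'' — in particular that a semi-edge makes its endpoints simultaneously adjacent and anti-adjacent, which is precisely why the cliques above must be argued to be \emph{strong} and why claw-freeness enters. (Here the clause ``$v_i$ anti-adjacent to $v_j$ for $i=1,\dots,4$ and $i+2\le j\le 6$'' is to be read as listing the nine hexagon chords, i.e.\ as not asserting anti-adjacency of $v_1$ and $v_6$, which are joined by the cycle edge.)
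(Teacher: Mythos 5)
Your proof is correct, but it takes a different route from the paper's for the counting step. The paper also begins by using claw-freeness to show the hexagon edges $v_iv_{i+1}$ and $v_6v_1$ are strong (with exactly the same claws you exhibit), but it then argues directly about a maximum independent set $I$: after normalizing ($X=\emptyset$, chords and $v_7v_8$ taken strongly anti-adjacent), it shows by exchange arguments that one may assume $v_7,v_8\in I$ and $v_1,v_2,v_4,v_5,v_{13}\notin I$, and finishes with a short case analysis on $v_3,v_6,v_9,v_{10}$ showing $|I|\le 4$. You instead partition $V(G)$ into the four strong cliques $\{v_1,v_2,v_7,v_{13}\}$, $\{v_4,v_5,v_8\}$, $\{v_3,v_9,v_{11}\}$, $\{v_6,v_{10},v_{12}\}$ (intersected with $V(G)$), so that $\alpha(G)\le 4$ is immediate; your cover is valid, since all pairs inside these sets other than $v_1v_2$ and $v_4v_5$ are strongly adjacent by definition, and those two are exactly the cycle edges you make strong via claw-freeness (so you in fact need only two of the six claw arguments), while the uncertain pairs $v_7v_8$ and $v_9v_{10}$ lie across different parts and are irrelevant. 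What your approach buys is brevity, independence from the exchange/case analysis, and the slightly stronger conclusion that every claw-free XX-trigraph is a union of four strong cliques; what the paper's approach buys is that it avoids having to guess a cover and proceeds by mechanical normalization of $I$. Your reading of the clause ``$v_i$ anti-adjacent to $v_j$ for $i=1,\ldots,4$, $i+2\le j\le 6$'' as listing only the nine chords (excluding the cycle edge $v_1v_6$) is the intended one and is consistent with the paper's figure and its own claw argument for $v_6,v_1$.
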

\begin{proof}
  We first claim that $v_{i}$ is strongly adjacent to $v_{i+1}$ for $i = 1,\ldots,5$ and $v_{6}$ is strongly adjacent to~$v_{1}$.
  To see this, observe that
  \begin{itemize}
    \item if $v_{1},v_{2}$ are semi-adjacent, then $v_{9},v_{1},v_{2},v_{10}$ is a claw;
    \item if $v_{2},v_{3}$ are semi-adjacent, then $v_{9},v_{2},v_{3},v_{6}$ is a claw;
    \item if $v_{3},v_{4}$ are semi-adjacent, then $v_{10},v_{3},v_{4},v_{6}$ is a claw;
    \item if $v_{4},v_{5}$ are semi-adjacent, then $v_{10},v_{4},v_{5},v_{9}$ is a claw;
    \item if $v_{5},v_{6}$ are semi-adjacent, then $v_{10},v_{3},v_{5},v_{6}$ is a claw;
    \item if $v_{6},v_{1}$ are semi-adjacent, then $v_{9},v_{1},v_{3},v_{6}$ is a claw.
  \end{itemize}
  The claim follows.

  Let $I$ be a maximum independent set of $G$.
  We can assume that $X = \emptyset$, because the size of a maximum independent set is non-increasing under the removal of vertices.
  Since adjacency between $v_{7},v_{8}$ can only decrease the size of a maximum independent set, we may assume that $v_{7},v_{8}$ are strongly anti-adjacent.
  Similarly, $v_{i}$ can be assumed to be strongly anti-adjacent to $v_{j}$ for each $i = 1,\ldots,4$ and each $i+2 \leq j \leq 6$.

We first show that we can assume that $v_{7},v_{8} \in I$.
  Observe that $v_{1}$, $v_{2}$, and $v_{13}$ are pairwise strongly adjacent. Hence, at most one of these vertices is in $I$. However, we also observe that $v_{7}$ is (strongly) adjacent to precisely $v_{1}$, $v_{2}$, and $v_{13}$. Therefore, if one of $v_{1},v_{2},v_{13}$ is in $I$, then $v_{7}$ is not. Moreover, we could replace the single vertex of $I \cap \{v_{1},v_{2},v_{13}\}$ by $v_{7}$. Conversely, if none of of $v_{1},v_{2},v_{13}$ is in $I$, then the choice of $I$ implies that $v_{7} \in I$. Thus, we can assume that $I \cap \{v_{1},v_{2},v_{13}\} = \emptyset$ and $v_{7} \in I$.
  Similarly, arguing about $v_{8}$ in relation to $v_{4},v_{5},v_{13}$, we can assume that $v_{8} \in I$ and $I \cap \{v_{4},v_{5},v_{13}\} = \emptyset$. Concluding, we have $v_{7},v_{8} \in I$ and  $v_{1},v_{2},v_{4},v_{5},v_{13} \not\in I$.

We now analyze all possible cases and prove that $|I| \leq 4$. Suppose that $v_{3} \in I$ or $v_{6} \in I$. Then $v_{9},v_{10},v_{11},v_{12} \not\in I$ by the definition of $G$. Hence, $I \subseteq \{v_{3},v_{6},v_{7},v_{8}\}$ and $|I| \leq 4$.
  So suppose that $v_{3},v_{6} \not\in I$. Recall that thus $v_{1},v_{2},v_{3},v_{4},v_{5},v_{6},v_{13} \not\in I$, but $v_{7},v_{8} \in I$.
  If $v_{9} \in I$ or $v_{10} \in I$, then $v_{11},v_{12} \not\in I$; hence, $I \subseteq \{v_{7},v_{8},v_{9},v_{10}\}$ and $|I| \leq 4$.
  If $v_{9},v_{10} \not\in I$, then $I \subseteq \{v_{7},v_{8}, v_{11}, v_{12}\}$ and $|I| \leq 4$.
  This exhausts all cases, and in each case $|I| \leq 4$.

Since $I$ is a maximum independent set of $G$, $\alpha(G) \leq 4$.
\qed
\end{proof}

The class $\mathcal{S}_{3}$ consists of all circular interval trigraphs.
A trigraph $G'$ is a \emph{(long) circular interval trigraph} if there exists a set of arcs $\mathcal{F} = \{F_{1},\ldots,F_{\ell}\}$ on a circle $\Sigma$, such that no two arcs share an endpoint (and the union of any three arcs does not contain the entire circle), and there is a mapping $p : V(G') \rightarrow \Sigma$ such that all points $p_{v'}$ are distinct and
\begin{itemize}
  \item $u',v'$ are adjacent if $p_{u'},p_{v'}$ are contained in a common arc $F_{i}$; moreover, if one of $p_{u'},p_{v'}$ is not an endpoint of $F_{i}$, then $u',v'$ are strongly adjacent;
  \item $u',v'$ are strongly anti-adjacent if there is no arc $F_{i}$ containing both $p_{u'},p_{v'}$.
\end{itemize}
Note that in the definition of a (long) circular interval trigraph the set of arcs is (long and) strict.
Moreover, one can assume that the set $\mathcal{F}$ of arcs is proper.

We now prove that any graph that is a thickening of a (long) circular interval trigraph is a fuzzy circular-arc graph.

\begin{lemma}
\label{lem:circ:circ-model}
  Any graph $G$ that is a thickening of a (long) circular interval trigraph has a representation such that each vertex $v$ of $G$ corresponds to an arc $A_{v}$, the set of arcs is (long,) almost proper and almost strict, and
  \begin{itemize}
    \item if $\{u,v\} \in E(G)$, then $A_{u}$ and $A_{v}$ intersect;
    \item if $\{u,v\} \not\in E(G)$, then either $A_{u}$ and $A_{v}$ do not intersect or $A_{u} \cap A_{v}$ consists of precisely one point.
  \end{itemize}
In particular, a thickening of a circular interval trigraph is a fuzzy circular-arc graph.
\end{lemma}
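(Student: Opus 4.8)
The plan is to take a (long) circular interval trigraph $G'$ with its defining set of arcs $\mathcal{F} = \{F_1, \ldots, F_\ell\}$ on a circle $\Sigma$ and point map $p : V(G') \to \Sigma$, together with a thickening $\mathcal{X} = \{X_{v'} \mid v' \in V(G')\}$ of $G'$ to $G$, and to build the desired arc representation of $G$ by ``blowing up'' each point $p_{v'}$ into a tiny arc containing the vertices of $X_{v'}$. More precisely, I would first choose, for each $v' \in V(G')$, a small closed interval $I_{v'} \subseteq \Sigma$ centered at $p_{v'}$, pairwise disjoint and so short that each $I_{v'}$ lies strictly inside every arc $F_i$ with $p_{v'} \in \mathrm{int}(F_i)$ and does not meet any arc $F_i$ with $p_{v'} \notin F_i$; this is possible since the points $p_{v'}$ are distinct and finite in number and the arcs are open/closed appropriately. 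Then I would place the $|X_{v'}|$ vertices of $X_{v'}$ along $I_{v'}$ as nested arcs (or arcs sharing the endpoints of $I_{v'}$ — see below), so that any two of them intersect in more than one point, guaranteeing (via the fuzzy intersection convention) that $X_{v'}$ induces a strong clique as required for a thickening.

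Next I would define the arc $A_v$ for $v \in X_{v'}$ so that it behaves like $p_{v'}$ with respect to the arcs $F_i$, but at the level of the thickening. The cleanest route is: let $A_v$ extend from inside $I_{u'}$-region to inside $I_{w'}$-region along the union of all arcs $F_i$ that ``see'' $v'$, so that $A_u$ and $A_v$ intersect precisely when $v'$ and the corresponding $v''$ are adjacent in $G'$, and — for the semi-adjacent case — arrange that the overlap is a single point exactly when the overlap of the original arcs/points was ``at an endpoint.'' Concretely, if $u', v'$ are strongly adjacent in $G'$ then some arc $F_i$ contains both $p_{u'}, p_{v'}$ with at least one in its interior, so $A_u \cap A_v$ contains a whole subinterval; if $u', v'$ are strongly anti-adjacent, no $F_i$ contains both, so I would ensure $A_u, A_v$ are disjoint; if $u', v'$ are semi-adjacent, the points lie only at a shared endpoint of some arc $F_i$, and I would have $A_u$ and $A_v$ just touch at that endpoint, contributing a single-point intersection, which the fuzzy convention allows to be an edge or a non-edge. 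Since in the thickening semi-adjacency of $u', v'$ only requires that some pair in $X_{u'} \times X_{v'}$ is adjacent and some pair anti-adjacent, I would be free to let each such touching pair be an edge or not as dictated by $G$ — the single shared point is exactly the ``fuzzy'' freedom. The properties ``almost proper'' and ``almost strict'' then need to be checked: within a block $X_{v'}$, the arcs either share both endpoints (and cover the same part of $\Sigma$) or are nested — so if I use the nested design I must instead use the shared-endpoint design to stay almost proper, and I should make sure at most one endpoint of the common interval $I_{v'}$ is also an endpoint of an outside arc, which I can force by construction since only one ``side'' of $I_{v'}$ connects to neighbors. Finally, longness transfers directly: if no three of the $F_i$ cover $\Sigma$, then since each $A_v$ is contained in a union of a bounded set of $F_i$'s — in fact each $A_v$ is contained in the union of arcs through $p_{v'}$, and after shrinking I can keep $A_v$ inside a single ``extended'' copy of such arcs — no three of the $A_v$ cover $\Sigma$ either; I would argue this by noting any point of $\Sigma$ not covered by three chosen $F_i$'s remains uncovered.

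The main obstacle, I expect, is the bookkeeping around endpoints: the definition of (long) circular interval trigraph says $u', v'$ are adjacent (but only possibly strongly) when one of $p_{u'}, p_{v'}$ is an endpoint of the common arc $F_i$, and making the blown-up arcs $A_u, A_v$ overlap in exactly one point in precisely those cases — while simultaneously keeping the whole family almost proper and almost strict — requires a careful, case-by-case placement of the $2\ell$ arc-endpoints and the $|V(G')|$ interval-endpoints on $\Sigma$, perturbing them to general position without creating spurious single-point intersections or destroying the almost-strictness condition. I would handle this by processing the circle point by point: at each original arc-endpoint of $F_i$, I decide on which micro-side the relevant $A_v$ ends, using the rule that an $A_v$ ''extends to'' an endpoint $x$ of $F_i$ iff $p_{v'} = x$ or $p_{v'}$ is strictly inside $F_i$, and I break ties consistently so that two arcs meet at $x$ iff their originating points were exactly the two distinct points lying at $x$ and an endpoint respectively. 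The ``In particular'' conclusion is then immediate: the produced representation has almost-proper, almost-strict arcs realizing $G$ as a fuzzy intersection graph, hence $G$ is a fuzzy circular-arc graph by definition.
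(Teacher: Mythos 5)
There is a genuine gap, and it is not the endpoint bookkeeping you single out as the main obstacle. Your arcs are two-sided: $A_v$ runs ``from inside $I_{u'}$-region to inside $I_{w'}$-region along the union of all arcs $F_i$ that see $v'$.'' Such arcs can overlap in more than one point even when the underlying vertices are strongly anti-adjacent, which forces an edge in the fuzzy intersection graph that $G$ must not have. Concretely, take points $a,b,c,d$ in clockwise order and the two arcs $F_1$ with endpoints $a,c$ and $F_2$ with endpoints $b,d$; this family is proper, strict and long, and $a,d$ are strongly anti-adjacent. The union of arcs through $p_a$ is $F_1$ and the union through $p_d$ is $F_2$, so your $A_a$ and $A_d$ both contain the whole segment between $b$ and $c$ and intersect in more than one point. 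No perturbation of endpoints repairs this; the sentence ``I would ensure $A_u,A_v$ are disjoint'' has no mechanism behind it, because with interleaved neighborhoods the two-sided arcs simply cannot be disjoint. The same two-sidedness undermines your longness argument (a two-sided arc is in general contained only in a union of two arcs glued at $p_{v'}$, not in a single $F_i$, so ``no three arcs cover the circle'' does not transfer) and the claim that ``only one side of $I_{v'}$ connects to neighbors,'' which you use for almost-strictness.

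The missing idea is the directional construction used in the paper: for each $v'$, take $A_{v'}$ to be the arc from $p_{v'}$ clockwise to the clockwise-furthest endpoint $r_{v'}$ of any arc containing $p_{v'}$, with two infinitesimal extension rules (one when two vertices share the same $r$-value, one to distinguish strong adjacency from semi-adjacency at a shared arc endpoint). One-sidedness guarantees that $A_{u'}$ can meet $A_{v'}$ only if one of $p_{u'},p_{v'}$ lies in an arc containing the other, so strongly anti-adjacent pairs stay disjoint; and each $A_{v'}$ lies inside the single arc $F_i$ realizing $r_{v'}$, so longness is inherited. The thickening is then handled exactly as you eventually converge to for the within-class cliques: duplicate $A_{v'}$ once per vertex of $X_{v'}$ (identical arcs are permitted by almost-properness), and for semi-adjacent $u',v'$ every cross pair meets in a single point, which is precisely the fuzzy freedom needed to realize an arbitrary edge/non-edge pattern between $X_{u'}$ and $X_{v'}$.
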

\begin{proof}
  Suppose that $G$ is a thickening $\mathcal{X} = \{X_{v'} \mid v' \in V(G')\}$ of a (long) circular interval trigraph $G'$. Let $p$ denote the mapping from $V(G')$ to the circle and let $\mathcal{F} = \{F_{1},\ldots,F_{\ell}\}$ denote the set of arcs that is (long,) proper and strict, as in the definition of a (long) circular interval trigraph.

We first show that $G'$ has a representation such that each vertex $v'$ of $G'$ corresponds to an arc $A_{v'}$, the set of arcs is (long,) proper and almost strict, and
  \begin{enumerate}
    \item $u',v'$ are strongly adjacent if and only if $A_{u'}$ and $A_{v'}$ intersect in more than one point;
    \item $u',v'$ are semi-adjacent if and only if $A_{u'} \cap A_{v'}$ contains exactly one point, which is an endpoint of $A_{u'}$ and an endpoint of $A_{v'}$;
    \item $u',v'$ are strongly anti-adjacent if and only if $A_{u'}$ and $A_{v'}$ do not intersect.
  \end{enumerate}
To this end, for each vertex $v' \in V(G')$, let $A_{v'}$ be the arc from $p_{v'}$ to the clockwise furthest endpoint~$r_{v'}$ of any arc $F_{i}$ containing $p_{v'}$.
If $r_{u'} = r_{v'}$ for distinct $u',v'\in V(G')$, then extend $A_{u'}$ or $A_{v'}$ by an infinitesimal amount clockwise if $p_{u'}$ appears after $p_{v'}$ or if $p_{v'}$ appears after $p_{u'}$, respectively. If, for distinct $u',v'\in V(G')$, there is an arc $F_{i}$ with endpoints $p_{u'}, p_{v'}$ where $r_{u'} = p_{v'}$ and $u',v'$ are strongly adjacent, then extend $A_{v'}$ by an infinitesimal amount counterclockwise.

We claim that the above construction gives the requested representation. The construction immediately implies that the set of arcs is long if $G'$ is a long circular interval trigraph. Furthermore, by the infinitesimal extensions that we performed, the set of arcs is proper, and moreover, also using that $\mathcal{F}$ is strict, the set of arcs is almost strict.
For the adjacency, we note that:
\begin{enumerate}
\item if $u'$ and $v'$ are strongly adjacent, then $p_{u'}$ or $p_{v'}$ is in the interior of an arc $F_{i}$, and thus by construction (and the second infinitesimal extension operation) the arcs $A_{u'}$ and $A_{v'}$ intersect in more than one point. The converse is argued similarly.
\item if $u'$ and $v'$ are semi-adjacent, then there is an arc $F_{i}$ with endpoints $p_{u'}$ and $p_{v'}$. Suppose that $p_{u'}$ comes clockwise before $p_{v'}$. Since $\mathcal{F}$ is proper, $r_{u'} = p_{v'}$ and moreover, $A_{v'}$ will not be extended infinitesimally counterclockwise by the second extension operation. Hence, $A_{u'} \cap A_{v'}$ contains exactly one point, which is an endpoint of $A_{u'}$ and an endpoint of $A_{v'}$. The converse is argued similarly.
\item if $u'$ and $v'$ are strongly anti-adjacent, then no arc $F_{i}$ contains both $p_{u'}$ and $p_{v'}$. By construction, $A_{u'}$ and $A_{v'}$ do not intersect. The converse is argued similarly.
\end{enumerate}
This proves the claim.

Finally, the representation as described in the lemma statement can be found by copying $A_{v'}$ for each $v \in X_{v'}$. 
The resulting set of arcs is clearly (long,) almost proper and almost strict. Moreover, strong (anti-)adjacency in $G'$ immediately implies (anti-)adjacency in $G$ by properties~1 and~3. Finally, semi-adjacency implies fuzzy adjacency by property~2.
\qed
\end{proof}

We also require the converse, i.e.~that any fuzzy circular-arc graph is a thickening of a circular interval trigraph.

\begin{lemma} \label{lem:circ:circ-model-back}
Any graph $G$ that has a representation as in Lemma~\ref{lem:circ:circ-model} (where the arcs are not necessarily long) is a thickening of a circular interval trigraph.
\end{lemma}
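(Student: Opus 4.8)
The plan is to invert the construction from the proof of Lemma~\ref{lem:circ:circ-model}. Let $\mathcal{A}=\{A_v\mid v\in V(G)\}$ be the given almost-proper, almost-strict representation. First I would read off the thickening: call $u,v\in V(G)$ equivalent when $A_u$ and $A_v$ have the same endpoints and cover the same part of the circle (so $A_u=A_v$ as point sets), let $\mathcal{X}=\{X_{v'}\}$ be the resulting partition of $V(G)$, and let $A_{v'}$ be the common arc of $X_{v'}$. Almost-properness has a strong consequence here: two collapsed arcs sharing a counterclockwise endpoint, or sharing a clockwise endpoint, would be nested, contradicting properness; hence distinct collapsed arcs have pairwise distinct counterclockwise endpoints and pairwise distinct clockwise endpoints. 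The only possible failure of mutual strictness is therefore a \emph{touching point} $q$ where the clockwise endpoint of one collapsed arc equals the counterclockwise endpoint of another; by the same nesting observation, at each touching point exactly one class (say $\ell'$) ends and exactly one class ($r'$) begins, and $A_{\ell'}\cap A_{r'}=\{q\}$. Moreover, almost-strictness forbids an arc from being touched on both sides. Thus the class-pairs whose arcs meet in a single point are exactly the touching pairs, one per touching point, these pairs are vertex-disjoint among the arcs involved, and every pair of vertices of $G$ whose adjacency is not forced by $\mathcal{A}$ (intersecting in more than one point forces an edge; disjoint forces a non-edge) lies across one such pair of classes. Since arcs are not single points, any two vertices of one class are adjacent in $G$, so $\mathcal{X}$ is a legitimate thickening partition.

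Next I would define the target trigraph $G'$ on the classes: make $u'$ and $v'$ strongly adjacent if $A_{u'}\cap A_{v'}$ has more than one point, strongly anti-adjacent if $A_{u'}\cap A_{v'}=\emptyset$, and, for a touching pair $(\ell',r')$, strongly adjacent, semi-adjacent, or strongly anti-adjacent according to whether all, some-but-not-all, or none of the pairs in $X_{\ell'}\times X_{r'}$ are edges of $G$. Using the two bullet properties of the representation, it is routine to check that $G$ is the thickening $\mathcal{X}$ of $G'$: within-class adjacency holds; a strong (anti-)adjacency of $u',v'$ forces $X_{u'}$ to be completely (anti-)adjacent to $X_{v'}$; and a semi-adjacency of $(\ell',r')$ comes exactly with the required witnessing edge and non-edge between $X_{\ell'}$ and $X_{r'}$.

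The substantive step is to exhibit $G'$ as a circular interval trigraph, i.e.\ to produce a strict arc family $\mathcal{F}$ and an injective point map $p$ realizing $G'$. Start from the collapsed arcs $\{A_{v'}\}$ as the skeleton of $\mathcal{F}$, and put $p_{v'}$ near the counterclockwise endpoint of $A_{v'}$. Now resolve each touching point $q$ with pair $(\ell',r')$ \emph{independently} (possible because no arc is touched on two sides) by infinitesimal modifications: shift the counterclockwise end of $A_{r'}$ slightly clockwise so $q$ is no longer a shared endpoint, place $p_{r'}$ strictly inside the shifted $A_{r'}$, and extend the clockwise end of $A_{\ell'}$ clockwise by an amount that (i) pushes past $p_{r'}$ if $(\ell',r')$ is strongly adjacent in $G'$ (so $p_{r'}$ lands in the interior of $A_{\ell'}$, giving a strong overlap), (ii) stops exactly at $p_{r'}$ if $(\ell',r')$ is semi-adjacent (so $A_{\ell'}$ is a dedicated arc having both $p_{\ell'}$ and $p_{r'}$ as its endpoints, and — by properness — no other arc of $\mathcal{F}$ contains both), or (iii) is zero if $(\ell',r')$ is strongly anti-adjacent (so $p_{r'}$ leaves $A_{\ell'}$). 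Choosing all these infinitesimals generically and smaller than the distance from the endpoints involved to every other endpoint and vertex point keeps $\mathcal{F}$ strict and changes no adjacency except the one being resolved. One then verifies the two defining conditions of a circular interval trigraph: $p_{u'}$ and $p_{v'}$ lie in a common arc of $\mathcal{F}$ exactly when $u'v'$ is adjacent in $G'$, and one of them lies in the interior of such an arc exactly when $u'v'$ is strongly adjacent.

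I expect this last step to be the main obstacle, not because of global consistency (the touching-point resolutions are independent), but because of two points in the verification. The ``only if'' direction for anti-adjacent pairs — that no arc of $\mathcal{F}$ contains both $p_{u'}$ and $p_{v'}$ when $A_{u'}\cap A_{v'}=\emptyset$ — is where properness of $\{A_{v'}\}$ is essential: if some arc $A_{w'}$ contained both points, chasing the endpoints around the circle forces $A_{u'}$ or $A_{v'}$ to be nested inside $A_{w'}$, a contradiction. And the semi-adjacent case is delicate because it genuinely needs a dedicated ``witnessing'' arc whose two endpoints are exactly the two vertex points, which is why $p_{r'}$ must be placed strictly inside $A_{r'}$ rather than at its endpoint. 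Everything else is short bookkeeping, and the three hypotheses on the representation enter exactly where expected: almost-properness gives a proper collapsed family with all endpoints distinct, almost-strictness confines each arc's touching to at most one side, and ``arcs are not single points'' makes the classes valid thickening sets.
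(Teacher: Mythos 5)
Your proposal is correct and follows essentially the same route as the paper's (much terser) proof: collapse duplicate arcs into the thickening classes, perturb the finitely many shared endpoints infinitesimally to obtain a strict arc family realizing a circular interval trigraph, and thicken back to recover $G$. Your explicit case analysis at the touching points (choosing strong adjacency, semi-adjacency, or strong anti-adjacency in $G'$ according to the edges of $G$ across the two classes) only fills in details that the paper's four-line argument leaves implicit.
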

\begin{proof}
Let $\mathcal{A} = \{A_{v} \mid v \in V(G) \}$ be a representation for $G$ as in Lemma~\ref{lem:circ:circ-model}. We construct a circular interval trigraph $G'$. The set of vertices of $G'$ is constructed by taking the left endpoints of all arcs in $\mathcal{A}$. 
Let the set $\mathcal{F}$ be obtained from $\mathcal{A}$ by first removing all duplicate arcs; then, if two arcs share an endpoint, extend the clockwise second arc infinitesimally counterclockwise. Clearly, $G'$ is a circular interval trigraph. An appropriate thickening of $G'$ will bring back the duplicate arcs of $\mathcal{A}$ that were removed to obtain $\mathcal{F}$. Hence, $G$ is a thickening of $G'$.
\qed\end{proof}
We recall a result by Oriolo et al.~\cite{OrioloPS2012}, who showed that a thickening of a circular interval trigraph can be recognized in polynomial time. Combined with Lemma~\ref{lem:circ:circ-model} and Lemma~\ref{lem:circ:circ-model-back}, the result of Oriolo et al.~can be rephrased as follows.

\begin{theorem}[\cite{OrioloPS2012}] \label{thm:circ:fuzzy-reg}
Fuzzy circular-arc graphs can be recognized in polynomial time.
\end{theorem}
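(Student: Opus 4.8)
The plan is to derive this purely from the machinery already assembled, namely Lemma~\ref{lem:circ:circ-model}, Lemma~\ref{lem:circ:circ-model-back}, and the cited result of Oriolo et al.~\cite{OrioloPS2012}. First I would recall precisely what Oriolo et al.\ prove: given an input graph $G$, one can decide in polynomial time whether $G$ is a thickening of a circular interval trigraph (and, if so, produce such a trigraph together with the witnessing thickening partition $\mathcal{X}$). The whole task is then to argue that ``$G$ is a fuzzy circular-arc graph'' and ``$G$ is a thickening of a circular interval trigraph'' are the \emph{same} property, so that the Oriolo et al.\ algorithm is exactly a recognition algorithm for fuzzy circular-arc graphs.

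The core of the argument is the two-sided equivalence. For the forward direction, suppose $G$ is a fuzzy circular-arc graph; by definition it is the fuzzy intersection graph of an almost-proper, almost-strict set of arcs on a circle. This set of arcs is precisely a representation of the kind described in Lemma~\ref{lem:circ:circ-model} (with the arcs not required to be long), so Lemma~\ref{lem:circ:circ-model-back} applies and yields that $G$ is a thickening of a circular interval trigraph. For the reverse direction, suppose $G$ is a thickening of a circular interval trigraph $G'$. Then Lemma~\ref{lem:circ:circ-model} gives $G$ a representation by an almost-proper, almost-strict set of arcs in which adjacency corresponds to intersection in more than one point and non-adjacency to either disjointness or a single common point; this is exactly a fuzzy intersection representation, so $G$ is a fuzzy circular-arc graph. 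Combining the two directions, the two graph classes coincide.

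Given the equivalence, the theorem is immediate: run the polynomial-time algorithm of Oriolo et al.~\cite{OrioloPS2012} that recognizes thickenings of circular interval trigraphs; its answer is correct for fuzzy circular-arc graphs by the equivalence just established. I would also remark that the same chain of reductions (Oriolo et al.\ outputs $G'$ and $\mathcal{X}$, then Lemma~\ref{lem:circ:circ-model} converts $(G',\mathcal{X})$ into an explicit almost-proper, almost-strict arc representation) in fact produces a \emph{representation} of $G$ as a fuzzy circular-arc graph, which is what is actually needed in the applications (e.g.\ in Theorem~\ref{thm:algo:circ}).

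The only genuine subtlety — and the place I would be most careful — is checking that the notion of representation in Lemma~\ref{lem:circ:circ-model}/Lemma~\ref{lem:circ:circ-model-back} matches the definition of a fuzzy circular-arc graph verbatim: that ``almost proper'' together with ``almost strict'' is the same condition in both places, that the treatment of arcs sharing a single endpoint (the fuzzy case) is handled consistently, and that Lemma~\ref{lem:circ:circ-model-back}'s hypothesis really is satisfied by an arbitrary fuzzy representation (including duplicate arcs). These are bookkeeping points about the definitions in Section~\ref{sec:prelim} and Section~\ref{sec:trigraphs:special} rather than new mathematics, so no hard step remains once they are lined up; the theorem is essentially a corollary of the preceding two lemmas plus~\cite{OrioloPS2012}.
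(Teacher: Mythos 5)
Your proposal is correct and follows exactly the paper's route: the paper also obtains Theorem~\ref{thm:circ:fuzzy-reg} by observing that Lemmas~\ref{lem:circ:circ-model} and~\ref{lem:circ:circ-model-back} show fuzzy circular-arc graphs coincide with thickenings of circular interval trigraphs, and then invoking the recognition algorithm of Oriolo et al.~\cite{OrioloPS2012} for the latter class. The definitional bookkeeping you flag (matching the fuzzy intersection condition to the representation in Lemma~\ref{lem:circ:circ-model}) is indeed all that remains, and it goes through as you expect.
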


\subsection{Joins and Other Structures} \label{sec:trigraphs:joins}
Let $G$ be a trigraph throughout.
A \emph{$0$-join} is a partition of $V(G)$ into $V_{1},V_{2}$ such that $V_{1}$ is strongly anti-complete to $V_{2}$.

A \emph{$1$-join} is a partition of $V(G)$ into $V_{1},V_{2}$ for which there exist $A_{1} \subseteq V_{1}$ and $A_{2} \subseteq V_{2}$ such that
\begin{itemize}
  \item $A_1 \cup A_2$ is a strong clique;
  \item $V_1 \setminus A_1$ is strongly anti-complete to $V_{2}$ and $V_{2} \setminus A_{2}$ is strongly anti-complete to $V_{1}$;
  \item $A_{1}$, $A_{2}$, $V_{1}\setminus A_{1}$, and $V_{2} \setminus A_{2}$ are nonempty.
\end{itemize}
The definition of a \emph{pseudo-$1$-join} is the same as that of a $1$-join, except that the last condition is replaced by the condition that $V_{1},V_{2}$ are not allowed to be strong independent sets.
Observe that a trigraph admitting a $1$-join but not a $0$-join also admits a pseudo-$1$-join.

A \emph{generalized $2$-join} is a partition of $V(G)$ into $V_{0},V_{1},V_{2}$ for which there exist disjoint sets $A_{1},B_{1} \subseteq V_{1}$ and $A_{2},B_{2} \subseteq V_{2}$ such that
\begin{itemize}
  \item $A_{1} \cup A_{2} \cup V_{0}$ and $B_{1} \cup B_{2} \cup V_{0}$ form a strong clique;
  \item $V_{1} \setminus (A_{1} \cup B_{1})$ is strongly anti-complete to $V_{0} \cup V_{2}$ and $V_{2} \setminus (A_{2} \cup B_{2})$ is strongly anti-complete to $V_{0} \cup V_{1}$;
  \item $A_{1}$, $A_{2}$, $B_{1}$, $B_{2}$, $V_{1}\setminus (A_{1} \cup B_{1})$, and $V_{2} \setminus (A_{2} \cup B_{2})$ are nonempty.
\end{itemize}
The definition of a \emph{pseudo-$2$-join} is the same as that of a generalized $2$-join, except that the last condition is replaced by the condition that $V_{1},V_{2}$ are not allowed to be strong independent sets.
Observe that a trigraph admitting a generalized $2$-join but not a $0$-join also admits a pseudo-$2$-join.

Disjoint sets $A,B \subseteq V(G)$ form a \emph{W-join} if $A$ and $B$ are strong cliques, every vertex of $V(G)\setminus(A \cup B)$ is either strongly complete or strongly anti-complete to $A$ and either strongly complete or strongly anti-complete to $B$, $A$ is neither strongly complete nor strongly anti-complete to $B$, and~$\max\{|A|,|B|\} \geq 2$.

The trigraph $G$ admits \emph{twins} if it has two strongly adjacent vertices for which any third vertex is either strongly adjacent or strongly anti-adjacent to both vertices.
A set of vertices that are pairwise twins is called a \emph{twin set}.

The definition of a strip-structure extends naturally to trigraphs, where we demand that $C(h)$ is a strong clique, and that for any strip $(J,Z)$, $Z$ is a strong independent set and any vertex of $V(J)$ that is adjacent to some vertex of $Z$ is in fact strongly adjacent to it.
Similarly, we call a strip $(J,Z)$ a spot if $J$ consists of three vertices $z_{1},s,z_{2}$, such that $z_{1},z_{2}$ are strongly anti-adjacent and both $z_{1}$ and $z_{2}$ are strongly adjacent to $s$.
The definition of a stripe again extends directly.
Then a strip-structure of a trigraph is called \emph{purified} if each strip is either a spot or a stripe.

Finally, a stripe $(J,Z)$ is a \emph{thickening of a stripe} $(J',Z')$ if $J$ is a thickening $\mathcal{X}$ of $J'$ and there is a bijection $\sigma$ between $Z$ and $Z'$ such that $X_{z'} = \{\sigma(z')\}$.

\subsection{Supporting Results} \label{sec:trigraphs:end}
We need the following auxiliary results from Hermelin~et al.~\cite{HermelinMLW11} before we can prove Theorem~\ref{thm:prelim:main}.

Call a stripe $(J,Z)$ \emph{almost-unbreakable} if $J$ does not admit a $0$-join, a pseudo-$1$-join, or a pseudo-$2$-join, $J[V(J) \setminus Z]$ does not admit twins, and $J$ does not admit a W-join $(A,B)$ such that $Z \cap (A \cup B) = \emptyset$.

\begin{lemma}[\cite{HermelinMLW11}]
\label{lem:prelim:strip-structure}
  Every connected claw-free trigraph admits a purified strip-structure in which its strips are either spots or thickenings of almost-unbreakable stripes.
  Moreover, if $G$ is a graph, such a strip-structure can be found in polynomial time.
\end{lemma}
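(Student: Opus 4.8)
The plan is to prove the existence of the decomposition and the polynomial-time algorithm simultaneously, by an iterative refinement procedure whose correctness is certified by (the strip-structure form of) the Chudnovsky--Seymour structure theorem for claw-free trigraphs. We maintain a valid strip-structure of $G$, starting from the trivial one --- a strip-graph with a single loopless strip-edge $e$ and strip $(J_e,Z_e)=(G,\emptyset)$. Call a strip \emph{final} if it is a spot or a thickening of an almost-unbreakable stripe (the latter is automatically a stripe, since thickening cannot make a vertex adjacent to two members of $Z$). The procedure stops refining a strip once it is final and halts when every strip is final; at that point the strip-structure is purified and of the required form, so it only remains to ensure that refinement always makes progress and can be carried out in polynomial time.

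A single refinement step takes a non-final strip $(J,Z)$, maximally compresses the twin sets of $J[V(J)\setminus Z]$ to single vertices and the $W$-joins of $J$ disjoint from $Z$ to semi-edges --- producing a strip $(\hat J,\hat Z)$ of which $(J,Z)$ is a thickening --- and then invokes the structural dichotomy. If $(\hat J,\hat Z)$ is a spot or an almost-unbreakable stripe, then $(J,Z)$ is final, a contradiction; so we may assume it is neither. The content imported from Chudnovsky--Seymour is precisely that in this remaining case $\hat J$ admits a $0$-join, a pseudo-$1$-join, or a pseudo-$2$-join in which, for every $z\in\hat Z$, the set $N(z)$ together with $z$ lies inside one side of the partition: an undecomposable connected claw-free trigraph with no twins and no $W$-joins lies in one of the basic classes, and a basic class occurring as a strip is a spot or an almost-unbreakable stripe. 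Such a join lifts through the thickening to a join of $J$ respecting $Z$, because under a thickening strong cliques inflate to strong cliques and strongly anti-complete pairs to strongly anti-complete pairs, and the thickening restricted to $\hat Z$ is the identity. We then use this join to replace the strip-edge $e$ by two strip-edges, one carrying each side of the partition and sharing at most one newly created strip-vertex whose boundaries are the two clique parts of the join; verifying that the five conditions of Definition~\ref{def:strip-structure} still hold is routine, and both new strips own at least one vertex of $G$ because the ``non-clique'' side of each of the three join types is required to be nonempty.

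For termination and running time, note that the procedure builds a binary forest whose nodes are the strips it ever creates and whose leaves are exactly the final strips of the output; distinct leaves own disjoint nonempty subsets of $V(G)$, so there are at most $|V(G)|$ leaves and hence $O(|V(G)|)$ refinement steps in total. Each step is polynomial: the twin/$W$-join compression is a polynomial contraction process; deciding whether $(\hat J,\hat Z)$ is a spot is trivial, and whether it is almost-unbreakable reduces to polynomial-time tests for $0$-, $1$-, and $2$-joins, for twins, and for $W$-joins; and, when it is neither, the same analysis outputs the join to split along. The genuinely hard part --- and the reason this is a theorem rather than a transcription of Chudnovsky--Seymour --- is making the structural dichotomy effective: one must turn the Chudnovsky--Seymour case analysis of undecomposable claw-free trigraphs into an actual recognition-and-extraction algorithm, which in particular relies on the polynomial-time recognition of thickenings of circular and linear interval trigraphs of Oriolo et al.\ (Theorem~\ref{thm:circ:fuzzy-reg} and the analogous result for linear interval trigraphs), and one must check that every join found in the compressed trigraph can in fact be pulled back efficiently to one of the original strip that respects $Z$ and produces two valid strips. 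Everything else is bookkeeping.
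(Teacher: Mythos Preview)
The paper does not prove this lemma; it imports it verbatim from~\cite{HermelinMLW11}. So there is no in-paper proof to compare against, and your sketch is essentially an attempt to reconstruct the argument of that reference. The high-level strategy --- start from the trivial strip-structure and iteratively split non-final strips along $0$-, pseudo-$1$-, or pseudo-$2$-joins, with termination bounded by the number of leaves of the resulting refinement tree --- is indeed the natural approach and is in the spirit of how~\cite{HermelinMLW11} proceeds.

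There is, however, a concrete gap in your refinement step. You write: ``If $(\hat J,\hat Z)$ is a spot or an almost-unbreakable stripe, then $(J,Z)$ is final.'' The second case is fine, but the first is not. If $(\hat J,\hat Z)$ is a spot and the twin-compression was nontrivial, then $(J,Z)$ is a \emph{thickening} of a spot: its vertex set outside $Z$ is a clique of size at least two, every vertex of which is adjacent to both elements of $Z$. Such a strip is neither a spot (too many vertices) nor a stripe (the definition of stripe forbids any vertex adjacent to two members of $Z$), so it is not final and your dichotomy does not fire --- there is no join to split along either, since the compressed object admits none. The fix is a separate refinement rule: replace a thick spot on $\{r_1,r_2\}$ by $|V(J)\setminus Z|$ parallel spot strip-edges between $r_1$ and $r_2$. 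This is easy, but it is a third kind of refinement that your binary-forest termination argument must also account for.

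A second point that you flag but do not resolve is the $Z$-respecting property of the join. You assert that the join found in $\hat J$ can be chosen so that each $z\in\hat Z$ together with $N(z)$ lies entirely on one side, and that both resulting pieces are again valid strips. This is where the real work in~\cite{HermelinMLW11} lies: one must argue, using that each $N(z)$ is a strong clique and that $\hat J$ is claw-free, that a join which cuts through some $N(z)$ can be modified into one that does not (or that its existence already forces the strip to be of a simpler form). Simply citing the Chudnovsky--Seymour basic-class dichotomy for the full trigraph does not immediately yield this, because the vertices of $\hat Z$ are artificial and the dichotomy is stated for trigraphs, not for strips.
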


\begin{lemma}[\cite{HermelinMLW11}]
\label{lem:prelim:unbreakable-structure}
  Let $(J,Z)$ be (a thickening of) an almost-unbreakable stripe.
  Then either $(J,Z)$ is a thickening of a stripe $(J',Z')$ such that $J'$ is a member of one of $\mathcal{S}_{0},\ldots,\mathcal{S}_{7}$, or $J$ is a union of at most three strong cliques and $|Z| \leq 2$.
\end{lemma}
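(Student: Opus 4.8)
The plan is to read this off from the Chudnovsky--Seymour structure theory for claw-free trigraphs~\cite{ChudnovskyS2007,ChudnovskyS2008a,ChudnovskyS2008b,ChudnovskyS2008c,ChudnovskyS2008d,ChudnovskyS2010,ChudnovskyS2012}, after a small translation of hypotheses and a finite case analysis. I would first upgrade the ``pseudo'' hypotheses to ordinary ones. Suppose $J$ had a $1$-join with a side $V_i$ that is a strong independent set. Then $A_i\subseteq V_i$ is a strong clique, so $|A_i|\le 1$; since $A_i\neq\emptyset$ and $V_i\setminus A_i\neq\emptyset$, and $V_i\setminus A_i$ is strongly anti-complete to $V_{3-i}$ by the definition of a $1$-join and strongly anti-complete to $A_i$ because both lie in the strong independent set $V_i$, the pair $(V_i\setminus A_i,\,A_i\cup V_{3-i})$ is a $0$-join --- a contradiction. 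The identical computation eliminates generalized $2$-joins with a strong-independent side. Hence the hypothesis ``no $0$-join, no pseudo-$1$-join, no pseudo-$2$-join'' is equivalent to ``no $0$-join, no $1$-join, no generalized $2$-join'' (and in particular $J$ is connected), so $(J,Z)$ is a claw-free stripe whose underlying trigraph admits no $0$-join, no $1$-join, and no generalized $2$-join, whose interior admits no twins, and which admits no W-join disjoint from $Z$.

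These are exactly the non-decomposability hypotheses under which the Chudnovsky--Seymour analysis of the stripes arising in a claw-free strip-structure applies, and it gives one of three outcomes. Either $(J,Z)$ is a thickening of a basic stripe built on one of the trigraph classes $\mathcal{S}_0,\ldots,\mathcal{S}_7$ --- which is to say a thickening of a stripe $(J',Z')$ with $J'\in\mathcal{S}_i$ for some $i$; here I would also note that the thickening is genuinely of a \emph{stripe}, since the absence of twins and of W-joins forces every $z\in Z$ to form its own singleton thickening group (two vertices of one group are twins; a semi-adjacency between two groups is a W-join), so that $Z$ corresponds to a strong independent set $Z'$ of $J'$ whose members inherit nonempty clique neighbourhoods from $Z$. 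Or $J$ admits a hex-join, in which case $V(J)$ partitions into strong cliques $A_1,A_2,A_3,B_1,B_2,B_3$ with $A_i$ strongly complete to $B_{i+1}$ for all $i$ (indices mod $3$), so that $A_1\cup B_2$, $A_2\cup B_3$ and $A_3\cup B_1$ are strong cliques covering $V(J)$. Or $J$ is one of finitely many explicitly listed ``exceptional'' trigraphs, each of which I would check by hand to be a union of at most three strong cliques.

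In the last two outcomes $J$ is a union of at most three strong cliques, and it remains to see that then $|Z|\le 2$, or that we are in the first outcome after all. If $|Z|\ge 3$, pick pairwise anti-adjacent $z_1,z_2,z_3\in Z$; since each clique of a fixed three-clique cover $C_1\cup C_2\cup C_3$ of $J$ can contain at most one $z_i$, we may index so that $z_i\in C_i\setminus(C_j\cup C_k)$. Then any interior vertex $v\in V(J)\setminus N[Z]$ lies in some $C_i$, hence is adjacent to $z_i\in C_i$ --- contradicting $v\notin N[Z]$. So the interior is empty, $V(J)=Z\cup N(Z)$, and one checks directly that a claw-free three-clique stripe with empty interior admitting no $0$-, $1$- or generalized $2$-join has its cliques arranged ``circularly'', making it a thickening of a stripe on a circular interval trigraph ($\mathcal{S}_3$) --- the first outcome. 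The main obstacle is this last paragraph together with the precise sourcing of the Chudnovsky--Seymour analysis: one must (i) verify that their hypotheses on the stripes of a strip-structure line up exactly with ``almost-unbreakable'' (in particular that W-joins and twins meeting $Z$ are harmless and are built into their formulation), and (ii) carry out the finite case analysis, checking for every exceptional stripe both that it is covered by three strong cliques and that the interplay with a strong independent set $Z$ of clique-neighbourhood vertices forces $|Z|\le2$ whenever it is not already a thickening of an $\mathcal{S}_i$-stripe. The deep structural input is taken verbatim from Chudnovsky and Seymour.
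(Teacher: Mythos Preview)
The paper does not prove this lemma; it is quoted verbatim from~\cite{HermelinMLW11} and used as a black box. So there is no in-paper argument to compare against, and your plan to derive it from the Chudnovsky--Seymour structure theory is exactly the provenance the citation points to.

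Your outline is broadly right, but one step is not justified. In the final paragraph you assume $|Z|\ge 3$, deduce the interior is empty, and then assert that ``one checks directly'' that such a three-clique stripe with no $0$-, $1$- or generalized $2$-join is a thickening of a circular interval stripe. That check is not routine: you must simultaneously respect the stripe conditions on $Z$ (each $N(z_i)$ a nonempty clique, $Z$ strongly independent, no vertex adjacent to two $z_i$), the claw-freeness, and the absence of all three join types, and conclude a specific circular arrangement. As stated this is an assertion, not an argument. A cleaner route, and the one actually taken in~\cite{HermelinMLW11}, is to handle the bound on $|Z|$ separately via the analogue of Lemma~\ref{lem:prelim:z-bound} rather than trying to fold it into the three-clique case.

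A smaller point: your argument that the thickening is of a \emph{stripe} (i.e.\ that each $z\in Z$ sits in its own singleton thickening class) appeals to the absence of twins and W-joins, but the almost-unbreakable hypothesis only excludes twins in $J\setminus Z$ and W-joins disjoint from $Z$. You need instead to use the stripe axioms on $Z$ directly (clique neighbourhood, strong independence) to locate $Z'$ inside $J'$.
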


\begin{lemma}[\cite{HermelinMLW11}]
\label{lem:prelim:z-bound}
Let $(J,Z)$ be an almost-unbreakable stripe such that $Z \not= \emptyset$ and $J$ is a thickening of a member of one of $\mathcal{S}_{1},\ldots,\mathcal{S}_{7}$. Then $|Z| \leq 2$.
\end{lemma}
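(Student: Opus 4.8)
The plan is to reduce the statement to the underlying special trigraph and then carry out a case analysis over the classes $\mathcal{S}_1,\ldots,\mathcal{S}_7$, in each case combining the structure of the class with the almost-unbreakability hypotheses. First I would do the bookkeeping: combining the hypothesis with Lemma~\ref{lem:prelim:unbreakable-structure}, $(J,Z)$ is in fact a thickening of a stripe $(J',Z')$ with $J'$ a member of some $\mathcal{S}_i$, $i\in\{1,\ldots,7\}$, and by the definition of a thickening of a stripe the set $Z$ is in bijection with $Z'$ (each $X_{z'}$ is a singleton), so $|Z|=|Z'|$ and it suffices to bound $|Z'|$ for the genuine special trigraph $J'$. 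Because all the sets $X_{v'}$ of the thickening are nonempty, every $0$-join, pseudo-$1$-join, pseudo-$2$-join, or $Z'$-avoiding W-join of $J'$, and every pair of twins of $J'[V(J')\setminus Z']$, blows up to the same structure in $J$; hence $J'$ inherits all the almost-unbreakable properties, and in particular $J'$ is connected. For the local picture: each boundary $B_{z'}:=N(z')$ is a nonempty clique, $z'$ is strongly complete to $B_{z'}$ and strongly anti-complete to the rest of $V(J')$, and the boundaries $B_{z'}$ are pairwise disjoint since $(J',Z')$ is a stripe. Now suppose for contradiction that $|Z'|\ge 3$, pick distinct $z_1,z_2,z_3\in Z'$ with boundaries $B_1,B_2,B_3$, and fix any $b_1\in B_1$; note $b_1,z_1,z_2,z_3$ are four distinct vertices.

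The antiprismatic-type classes among $\mathcal{S}_4,\mathcal{S}_5,\mathcal{S}_7$ are the easiest: an antiprismatic trigraph has the property that no four of its vertices span fewer than two adjacent pairs, while the four vertices $\{b_1,z_1,z_2,z_3\}$ span exactly one, namely $\{b_1,z_1\}$ — indeed $b_1$ is anti-adjacent to $z_2$ and $z_3$ because $b_1\in B_1$ is disjoint from $B_2$ and $B_3$, and $z_1,z_2,z_3$ are pairwise anti-adjacent as $Z'$ is a strong independent set. This settles the genuinely antiprismatic class(es) outright; for a near-antiprismatic class the same argument applies once a bounded set of exceptional vertices is set aside, and the finitely many residual configurations in which one of $z_1,z_2,z_3,b_1$ is exceptional are a finite check (using almost-unbreakability when convenient). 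For the remaining bounded-size sporadic classes — this covers $\mathcal{S}_1$ (icosahedron-derived), $\mathcal{S}_2$ (XX-trigraphs, cf.\ Proposition~\ref{prp:prelim:xx}), and the other sporadic ones — $J'$ ranges over a finite list of trigraphs, so one simply inspects that list: a vertex usable as an end of a stripe must have a clique neighborhood and be strongly anti-complete to the remainder, and one verifies that none of these trigraphs admits three such vertices forming a strong independent set with pairwise disjoint clique neighborhoods without thereby admitting a $0$-join, a pseudo-$1$-join, a pseudo-$2$-join, twins in the interior, or a W-join avoiding $Z'$.

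The one real obstacle is $\mathcal{S}_3$, the long circular interval trigraphs — the only unbounded class, where the bound genuinely rests on almost-unbreakability. I would fix a long, proper, strict arc representation $\mathcal{F}$ of $J'$ on a circle $\Sigma$ together with the point-placement $p$. Since each $z_i$ has a clique neighborhood, it occupies only a localized stretch of $\Sigma$, and the three stretches of $z_1,z_2,z_3$ occur in some cyclic order around $\Sigma$. I would then pick a point $q$ of $\Sigma$ strictly between two consecutive stretches and, if needed, a second point $q'$ strictly between another pair, arranged so that $q$ and $q'$ lie on opposite sides of $z_1$'s stretch, and let $C_q$ (resp.\ $C_{q'}$) be the vertices whose points lie in an arc through $q$ (resp.\ $q'$); each such set is a strong clique. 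Cutting $\Sigma$ at $q$ and $q'$ partitions $V(J')$ into a ``$z_1$-side'' and a ``$\{z_2,z_3\}$-side'' with every crossing edge confined to $C_q\cup C_{q'}$. If some cut clique is empty this is already a $0$-join; otherwise, using longness and properness of $\mathcal{F}$ to control how $C_q$ and $C_{q'}$ sit relative to the two sides, one checks that this is a pseudo-$1$-join (when a single cut suffices) or a pseudo-$2$-join (when two are needed), with neither side a strong independent set because the $z_1$-side carries the edge $\{z_1,b_1\}$ and the other side carries an edge inside $B_2$ (or between $z_2$ and $B_2$). Each outcome contradicts almost-unbreakability. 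The hard part will be exactly this last step: choosing the cut point(s) so that both sides are genuinely nonempty and the crossing edges are confined to the cut cliques, and then verifying that every axiom of the relevant (pseudo-)join — including all the nonemptiness clauses — literally holds; longness and properness of $\mathcal{F}$ are what make this possible. Since every one of the seven cases contradicts the assumption $|Z'|\ge 3$, we conclude $|Z'|\le 2$, and therefore $|Z|=|Z'|\le 2$.
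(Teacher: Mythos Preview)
The paper does not contain a proof of this lemma: it is quoted verbatim from~\cite{HermelinMLW11} (note the citation in the lemma heading), and the present paper only \emph{uses} it in the proof of Theorem~\ref{thm:prelim:main}. So there is no ``paper's own proof'' to compare your proposal against.

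As for the proposal itself, the overall architecture is sensible and is indeed how such a statement is proved in the source: pass from $(J,Z)$ to the underlying stripe $(J',Z')$ with $|Z|=|Z'|$, then do a case analysis over the $\mathcal{S}_i$. But what you have written is a plan, not a proof, and several steps are not substantiated by anything in the present paper. First, the paper does not define $\mathcal{S}_1,\mathcal{S}_4,\mathcal{S}_5,\mathcal{S}_6,\mathcal{S}_7$, so your classification of which classes are ``antiprismatic-type'' versus ``bounded-size sporadic'' cannot be checked here; in particular, your statement of the antiprismatic property (``no four vertices span fewer than two adjacent pairs'') is weaker than the actual definition (at most two anti-adjacent pairs among any four vertices), though your contradiction with a single adjacent pair survives either formulation. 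Second, the ``finite checks'' you invoke for the sporadic classes and the near-antiprismatic residual configurations are asserted, not performed; you also never say where $\mathcal{S}_6$ falls. Third, for $\mathcal{S}_3$ your cut-the-circle idea is the right one, but you yourself flag the hard part --- choosing cut points so that both sides are nonempty and not strong independent sets, and so that all crossing edges land in the cut cliques --- and you do not carry it out; the nonemptiness of $A_1,A_2,B_1,B_2$ in a pseudo-$2$-join is exactly where such arguments tend to break. To turn this into a proof you would need the explicit definitions of the $\mathcal{S}_i$ from Chudnovsky--Seymour and the detailed casework from~\cite{HermelinMLW11}.
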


\begin{lemma}[\cite{HermelinMLW11}]
\label{lem:prelim:line-graph}
  Let a graph $G$ be a thickening of a line trigraph such that $G$ admits no $0$-join, pseudo-$1$-join, or pseudo-$2$-join.
  Then $G$ is the line graph of a multigraph without self-loops, or $G$ is a union of two strong cliques.
\end{lemma}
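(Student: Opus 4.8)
The plan is to construct a loopless multigraph $H^{\ast}$ with $L(H^{\ast})=G$, and to show that this construction can fail only on small configurations that are themselves unions of two strong cliques. Write $G$ as a thickening $\mathcal{X}=\{X_{e}\mid e\in E(H)\}$ of the line trigraph $G'$ of a (simple) graph $H$, so that the vertices of $G'$ are the edges of $H$: each edge $e$ of $H$ blows up to a nonempty strong clique $X_{e}$ of $G$; two edges of $H$ sharing a vertex of degree $\geq 3$ give strongly complete blobs; two edges sharing a degree-$2$ vertex give a \emph{fuzzy} bipartite adjacency between their blobs (neither strongly complete nor, by the definition of a thickening of a semi-edge, strongly anti-complete); and two edges with no common vertex give strongly anti-complete blobs. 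A disconnected $G$ has a $0$-join, and a one-vertex $G$ is trivially a union of two strong cliques, so we may assume $G$, and hence $H$, is connected with at least one edge. For every vertex $v$ of $H$ with $\deg_{H}(v)\neq 2$, the union $\bigcup_{e\ni v}X_{e}$ is already a strong clique of $G$, so the whole difficulty concentrates on the fuzzy adjacencies at degree-$2$ vertices.

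Fix a degree-$2$ vertex $v$ of $H$ with incident edges $e=uv$ and $f=vx$; since $H$ is simple and connected, $u,v,x$ are distinct. The core step is a local structural claim: in $G$, any two vertices of $X_{e}$ have \emph{equal or disjoint} neighbourhoods inside $X_{f}$, so that the bipartite graph between $X_{e}$ and $X_{f}$ is a disjoint union of complete bipartite graphs together with some ``private'' vertices that are anti-complete to the other blob. To prove this I would exclude a ``staircase'' $a\sim b$, $a\sim b'$, $a'\sim b$, $a'\not\sim b'$ (with $a,a'\in X_{e}$, $b,b'\in X_{f}$): either this, together with a vertex on a third edge at $u$ or at $x$, forms a claw, or else it yields a pseudo-$2$-join of $G$ — take the common neighbours of $a$ and $a'$ inside $X_{f}$ into $V_{0}$ and split the remainder of $X_{e}\cup X_{f}$ and the two sides hanging off $u$ and $x$ into the $A$- and $B$-parts — and the only way to avoid both is that $H$ is essentially a triangle or a short path, in which case a direct check shows $\alpha(G)\leq 2$ and $G$ is a union of two strong cliques. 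Given the complete-bipartite block structure at every degree-$2$ vertex, one builds $H^{\ast}$ from $H$ by replacing each edge $e$ with $|X_{e}|$ parallel copies and then, at each degree-$2$ vertex $v$, splitting $v$ into one new vertex per block (incident to the copies of $e$ and of $f$ lying in that block) plus one new vertex per private vertex of $X_{e}$ (resp.\ $X_{f}$), incident only to the corresponding parallel copy. A routine verification then gives $L(H^{\ast})=G$, and $H^{\ast}$ is loopless precisely because $H$ is simple.

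The hypothesis that $G$ admits no $0$-, pseudo-$1$-, or pseudo-$2$-join is used throughout to discard degenerate sub-cases. For instance, a leaf $u$ of $H$ whose incident edge $e$ has $|X_{e}|\geq 2$ makes $X_{e}$ a nontrivial clique module, which yields a pseudo-$1$-join with $V_{1}=X_{e}$, $A_{1}=X_{e}$, and $A_{2}=N_{G}(X_{e})$ (a strong clique), a contradiction; hence in $G$ all blobs at leaves are singletons, which in turn prevents the block decompositions and split vertices from collapsing. Whatever remains after these reductions either already falls under the $L(H^{\ast})$-construction or is a graph of bounded size that one checks by hand to be a union of two strong cliques. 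I expect the main obstacle to be exactly this local-to-global analysis of the fuzzy adjacencies: making the staircase-exclusion argument watertight in the presence of arbitrary further structure at $u$ and $x$, and cleanly enumerating the finitely many exceptional graphs that must be recognised as unions of two cliques rather than produced by $L(H^{\ast})$.
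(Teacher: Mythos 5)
First, note that the paper itself never proves this lemma: it is imported as a black box from Hermelin et al.~\cite{HermelinMLW11}, so there is no in-paper argument to compare you against, and your proposal has to stand on its own. Judged that way, it has a genuine gap at its pivotal step, namely the claim that a staircase between $X_{e}$ and $X_{f}$ forces a claw or a pseudo-$2$-join. The claw alternative can never fire: every thickening of a line trigraph is claw-free (that is exactly why $G$ is claw-free here), and the staircase itself is just $K_{4}$ minus an edge, so the entire burden falls on producing a pseudo-$2$-join -- and the one you sketch does not satisfy the definition in Section~\ref{sec:trigraphs:joins}. With $V_{0}$ the common neighbours of $a,a'$ in $X_{f}$, every vertex outside the two ``middle'' sets must be strongly complete to $V_{0}$; but every blob at $u$ other than $X_{e}$ is strongly anticomplete to $V_{0}\subseteq X_{f}$, so it is forced into a middle set and must then be strongly anticomplete to the entire opposite side, which fails as soon as the two ends of the strip are connected around the outside; moreover a vertex of $X_{e}$ adjacent to some but not all of $V_{0}$ can be placed nowhere. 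Your auxiliary claim about leaves is also false as stated: for $K_{4}$ minus an edge, viewed as a thickening of the line trigraph of the path $u\,v\,x$ with the leaf blob $X_{e}$ doubled, the proposed sets give $A_{1}\cup A_{2}=V(G)$, which is not a clique, and in fact this graph admits no pseudo-$1$-join at all. So the ``equal or disjoint neighbourhoods'' block structure, on which the whole construction of $H^{\ast}$ rests, is not established -- and it is even false in the exceptional graphs, where staircases survive and only the union-of-two-cliques outcome saves the statement.

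The hypotheses actually yield something stronger, which both closes the gap and makes the splitting construction unnecessary. Suppose the line trigraph has a semi-edge between $e=uv$ and $f=vx$ at a degree-$2$ vertex $v$; by the definition of a thickening its realization is mixed. Take $V_{1}=X_{e}\cup X_{f}$, $A_{1}=X_{e}$, $B_{1}=X_{f}$, let $A_{2}$ and $B_{2}$ be the unions of the blobs of the remaining edges at $u$ and at $x$, and set $V_{0}=X_{ux}$ if $ux\in E(H)$ and $V_{0}=\emptyset$ otherwise. Since $e$ and $f$ spend their single allowed semi-edge on each other, $X_{e}$ is strongly complete to all other blobs at $u$ (similarly for $f$ at $x$), so all clique and anticompleteness conditions hold and this is a pseudo-$2$-join -- with no reference to a staircase -- unless $V(G)\setminus(V_{0}\cup V_{1})$ is a strong independent set. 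In that degenerate case, connectivity forces $H$ to be contained in the triangle $u,v,x$ together with at most one pendant edge at each of $u$ and $x$, all blobs outside $X_{e}\cup X_{f}\cup X_{ux}$ are singletons, and $G$ is visibly the union of the two strong cliques formed by the edges incident to $u$ and those incident to $x$. Hence, if $G$ is not a union of two strong cliques, the trigraph has no semi-edges at all, so $G$ arises from $L(H)$ by substituting the clique (twin class) $X_{e}$ for each vertex $e$, i.e.\ $G=L(M)$ where $M$ is obtained from $H$ by taking $|X_{e}|$ parallel copies of each edge, and $M$ is loopless because $H$ is a graph. In particular the blocks, private vertices and vertex splitting in your $H^{\ast}$ are never needed, and the case analysis you defer to a ``direct check'' is precisely the degenerate case just described.
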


\begin{lemma}[\cite{HermelinMLW11}] \label{lem:prelim:alpha}
  Let $G$ be a trigraph such that $G \in \mathcal{S}_{1} \cup \mathcal{S}_{4} \cup \mathcal{S}_{5} \cup \mathcal{S}_{6} \cup \mathcal{S}_{7}$.
  Then $\alpha(G) \leq 3$.
\end{lemma}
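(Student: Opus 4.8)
The plan is to prove Lemma~\ref{lem:prelim:alpha} by a straightforward case analysis over the five classes $\mathcal{S}_1,\mathcal{S}_4,\mathcal{S}_5,\mathcal{S}_6,\mathcal{S}_7$, using for each one the explicit description of the class due to Chudnovsky and Seymour. Two observations reduce the work considerably. First, only the \emph{anti}-adjacency relation matters for independent sets, and a semi-adjacent pair is anti-adjacent in exactly the same way as a strongly anti-adjacent pair; consequently, replacing every semi-edge of $G$ by a non-edge leaves $\alpha(G)$ unchanged, and, for an upper bound valid over an entire class, one may additionally assume that every edge the class description leaves ``optional'' is absent (this is the same device used in the proof of Proposition~\ref{prp:prelim:xx}). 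Second, $\alpha$ is monotone under passing to induced subtrigraphs, so for a class defined as ``all induced subtrigraphs of $\dots$'' it suffices to bound $\alpha$ of its maximal members. After these reductions each class reduces to bounding $\alpha$ of a concretely described graph, or of a finite family of graphs.

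These concretely described graphs fall into two types. The first type comprises the classes whose members are (sub-trigraphs of) a bounded list of explicitly given trigraphs --- the icosahedron together with the trigraphs obtained from it by deleting at most two mutually adjacent vertices, and a few further sporadic trigraphs such as the Schl\"afli graph. For each such base graph one checks directly that its independence number equals $3$ (this is classical for the icosahedron and for the Schl\"afli graph), a property trivially inherited by all induced subgraphs; and one checks that the handful of semi-edges the class description permits, if any, lie on pairs whose relaxation does not create an independent $4$-set. This is a finite verification.

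The second type comprises the antiprismatic family together with its near-antiprismatic refinement. If $G$ is antiprismatic, then its defining property forbids four pairwise anti-adjacent vertices (equivalently, $\overline{G}$ contains no $K_4$): given an anti-triangle $\{a,b,c\}$, every further vertex has exactly one anti-neighbour among $a,b,c$ and hence cannot be anti-adjacent to all three; so $\alpha(G)\le 3$. If $G$ is near-antiprismatic, write $G$ as an antiprismatic core $G-\{a_0,b_0\}$ together with the two special vertices $a_0,b_0$, whose attachment to the core is constrained precisely so that $a_0$ and $b_0$ are (strongly) adjacent and so that the anti-neighbourhood of $a_0$ (resp.\ of $b_0$) within the core is covered by a single clique. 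An independent set $I$ of $G$ then contains at most one of $a_0,b_0$; if it contains neither, then $I$ lies in the antiprismatic core and $|I|\le 3$ by the previous paragraph; if $a_0\in I$ (the case $b_0\in I$ being symmetric), then $I\setminus\{a_0\}$ lies in the core and is anti-adjacent to $a_0$, hence is contained in a clique, so $|I\setminus\{a_0\}|\le 1$ and $|I|\le 2$. In all cases $\alpha(G)\le 3$.

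The main obstacle is not the mathematics of any individual case, each of which is short, but the bookkeeping: one must unpack the precise Chudnovsky--Seymour definition of each of $\mathcal{S}_1,\mathcal{S}_4,\mathcal{S}_5,\mathcal{S}_6,\mathcal{S}_7$, decide which of the two types above it belongs to, keep careful track of which vertex pairs are strongly adjacent, semi-adjacent, or only possibly adjacent, and --- for the near-antiprismatic class --- verify exactly how the extra vertices attach to the antiprismatic core. Once those definitions are in place, the finite check for the sporadic classes and the anti-triangle argument for the (near-)antiprismatic classes together dispose of all five cases.
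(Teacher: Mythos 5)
Note first that the paper itself contains no proof of this lemma: it is imported verbatim from Hermelin et al.~\cite{HermelinMLW11}, and the present paper never even defines the classes $\mathcal{S}_{1},\mathcal{S}_{4},\ldots,\mathcal{S}_{7}$ (only $\mathcal{S}_{0}$, $\mathcal{S}_{2}$ and $\mathcal{S}_{3}$ are described). So there is no in-paper argument to compare against; the justification in the cited work is precisely the kind of class-by-class inspection you outline, and parts of your sketch are carried out correctly: treating semi-edges and ``optional'' edges as non-edges can only help an upper bound on $\alpha$ (the same device as in the proof of Proposition~\ref{prp:prelim:xx}), $\alpha$ is monotone under induced subtrigraphs, the icosahedron family has independence number $3$, and the antiprismatic argument (a fourth pairwise anti-adjacent vertex would have to be strongly adjacent to two vertices of the triad it forms with any three of the others) is correct.

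The genuine gap is that for $\mathcal{S}_{4},\ldots,\mathcal{S}_{7}$ your proposal never actually establishes anything: it postulates a two-type taxonomy (sporadic explicit graphs versus (near-)antiprismatic graphs) without verifying that the five classes fall into it, and where it commits to specifics those specifics do not match the Chudnovsky--Seymour definitions. Several of the classes in question are explicit constructions whose vertex set is covered by at most three strong cliques, so that $\alpha\leq 3$ is immediate --- a case absent from your taxonomy; the Schl\"afli graph is not one of the basic classes of the claw-free structure theorem (it arises inside the structural analysis of prismatic, i.e.\ complement-antiprismatic, graphs, not as a separate class to be checked here); and your description of the near-antiprismatic class is guessed rather than quoted: in the actual definition the two special vertices sit inside two of three cliques $A$, $B$, $C$ that together cover the vertex set, and the anti-neighbourhood of such a vertex is in general the union of two cliques, not one, so your sub-claim that an independent set containing $a_{0}$ has size at most $2$ is unjustified as stated (the correct conclusion, $\alpha\leq 3$, follows instead from the three-clique cover). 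Since the substance of the lemma is exactly the verification for each of the five classes with their precise definitions in hand, and that verification is deferred rather than performed, what you have is a plausible plan rather than a proof.
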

We require the following observation, which follows from the definition of a thickening.

\begin{proposition}
\label{prp:prelim:alpha-thickening}
  Let $G$ be a trigraph that is a thickening of a trigraph $G'$.
  Then $\alpha(G) = \alpha(G')$.
\end{proposition}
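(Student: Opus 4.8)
The plan is to prove the two inequalities $\alpha(G) \le \alpha(G')$ and $\alpha(G') \le \alpha(G)$ separately, in each case transferring a largest independent set across the thickening $\mathcal{X} = \{X_{v'} \mid v' \in V(G')\}$ and using the four defining properties of a thickening.

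For $\alpha(G) \le \alpha(G')$, I would start from a largest independent set $I$ of $G$, i.e.\ a set of pairwise anti-adjacent vertices. Since each $X_{v'}$ is a strong clique and two strongly adjacent vertices are never anti-adjacent, we get $|I \cap X_{v'}| \le 1$ for every $v'$. Setting $I' = \{v' \in V(G') \mid I \cap X_{v'} \ne \emptyset\}$ then gives $|I'| = |I|$. To see that $I'$ is independent in $G'$, take distinct $u',v' \in I'$ with corresponding vertices $u \in I \cap X_{u'}$ and $v \in I \cap X_{v'}$: if $u'$ were strongly adjacent to $v'$ then $X_{u'}$ would be strongly complete to $X_{v'}$, forcing $u$ and $v$ to be strongly adjacent, hence not anti-adjacent, contradicting $u,v \in I$. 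So $u'$ and $v'$ are anti-adjacent, and $\alpha(G') \ge |I'| = \alpha(G)$.

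For the reverse inequality I would start from a largest independent set $I'$ of $G'$ and build a set of representatives $I \subseteq V(G)$ containing exactly one vertex of $X_{v'}$ for each $v' \in I'$. The only obstruction is semi-adjacency: if $u',v' \in I'$ are semi-adjacent, the definition of a thickening only guarantees one anti-adjacent pair $u_2 \in X_{u'}$, $v_2 \in X_{v'}$, so arbitrarily chosen representatives need not be anti-adjacent. The key point—and the one place that needs care—is that in a trigraph every vertex is incident to at most one semi-edge, so the semi-adjacent pairs inside $I'$ form a matching. Hence I can process $I'$ pair by pair: for each semi-adjacent pair $\{u',v'\}$ contained in $I'$, pick the witnesses $u_2, v_2$ (these lie in the disjoint sets $X_{u'}$, $X_{v'}$, so they are automatically distinct); for every remaining $v' \in I'$ pick an arbitrary vertex of the nonempty set $X_{v'}$. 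This yields $I$ with $|I| = |I'|$.

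It then remains to check that $I$ is independent in $G$. For distinct $u \in X_{u'}$ and $v \in X_{v'}$ in $I$ with $u',v' \in I'$, independence of $I'$ gives that $u'$ and $v'$ are either strongly anti-adjacent—in which case $X_{u'}$ is strongly anti-complete to $X_{v'}$ and so $u,v$ are strongly anti-adjacent—or semi-adjacent, in which case $\{u',v'\}$ is one of the matched pairs and $u,v$ were chosen precisely as the anti-adjacent witnesses. In either case $u$ and $v$ are anti-adjacent, so $I$ is an independent set and $\alpha(G) \ge |I| = \alpha(G')$, completing the proof. The argument is entirely elementary; the only subtlety is the coordination of representatives across semi-adjacent pairs, which is resolved by the matching structure of the semi-edges.
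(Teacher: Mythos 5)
Your proposal is correct and follows essentially the same route as the paper's proof: count representatives via the strong cliques $X_{v'}$ for one inequality, and for the other transfer $I'$ into $G$ by choosing the anti-adjacent witness pair for each semi-adjacent pair of $I'$ (which, as you note, form a matching since each vertex meets at most one semi-edge) and arbitrary representatives elsewhere. Your explicit case check of independence matches the paper's argument, so there is nothing to add.
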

\begin{proof}
We prove a slightly stronger statement, namely that each independent set of $G'$ corresponds to an independent set of $G$ of equal size, and vice versa. Throughout, let $G$ be obtained from $G'$ using a thickening $\mathcal{X} = \{X_{v'} \mid v' \in V(G')\}$.

Let $I'$ be an independent set of $G'$. To construct an independent set $I$ of $G$, consider each $v' \in I'$ in turn. If $v' \in I'$ is semi-adjacent to some $u' \in I'$, then from the definition of thickening there exist an anti-adjacent pair $v \in X_{v'}$ and $u \in X_{u'}$; if $u'$ has not been considered yet, add both $u$ and $v$ to $I$. Otherwise, i.e.~if $v' \in I$ is not adjacent to any $u' \in I'$, then add an arbitrary vertex of $X_{v'}$ to $I$. By construction, $|I| = |I'|$. Moreover, since $G'$ is a trigraph, $v'$ is semi-adjacent to at most one vertex of $G'$. Therefore, it follows from the definition of thickening that $I$ is an independent set of $G$.

Let $I$ be an independent set of $G$. Let $I'$ contain $v'$ if and only if $I \cap X_{v'} \not= \emptyset$. Since $X_{v'}$ is a strong clique for each $v' \in V(G')$, $|I \cap X_{v'}| \leq 1$ for each $v' \in V(G')$, and thus $|I'| = |I|$. Moreover, by the definition of thickening, $u,v \in V(G)$ are anti-adjacent only if $u \in X_{u'}$, $v \in X_{v'}$, and $u',v'$ are anti-adjacent, for some $u',v' \in V(G')$. Hence, $I'$ is an independent set of $G'$.
\qed\end{proof}

We also need the following observation.

\begin{proposition}\label{prp:prelim:pseudo}
Let $G$ be a trigraph that is a thickening of a trigraph $G'$ such that $G$ does not admit a $0$-join, a pseudo-$1$-join, or a pseudo-$2$-join. Then $G'$ does not admit a $0$-join, a pseudo-$1$-join, or a pseudo-$2$-join.
\end{proposition}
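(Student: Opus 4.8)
The plan is to prove the contrapositive: if $G'$ admits one of these structures, then so does $G$. I would lift each structure along the thickening $\mathcal{X} = \{X_{v'} \mid v' \in V(G')\}$ by pulling back the partition of $V(G')$ to a partition of $V(G)$, replacing each strip-vertex set by the union of its blocks, i.e.~for $W' \subseteq V(G')$ set $W := \bigcup_{v' \in W'} X_{v'}$. The key point is that strong adjacency and strong anti-adjacency are preserved exactly under thickening: if $u'$ is strongly (anti-)complete to $v'$ in $G'$, then $X_{u'}$ is strongly (anti-)complete to $X_{v'}$ in $G$. Likewise, unions of blocks of a strong clique form a strong clique in $G$, since within each block $X_{v'}$ all vertices are strongly adjacent and between blocks strong adjacency is inherited. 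So all the ``completeness'' conditions defining $0$-joins, pseudo-$1$-joins, and pseudo-$2$-joins transfer directly.

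First I would handle the $0$-join: if $V(G') = V_1' \cup V_2'$ with $V_1'$ strongly anti-complete to $V_2'$, then $V_1 := \bigcup_{v' \in V_1'} X_{v'}$ and $V_2 := \bigcup_{v' \in V_2'} X_{v'}$ give a $0$-join of $G$, since each $X_{v'}$ is nonempty (so both parts are nonempty) and strong anti-completeness lifts. Next, the pseudo-$1$-join: given $V_1', V_2'$ with witnesses $A_1' \subseteq V_1'$, $A_2' \subseteq V_2'$, take $V_i := \bigcup_{v' \in V_i'} X_{v'}$ and $A_i := \bigcup_{v' \in A_i'} X_{v'}$. Then $A_1 \cup A_2$ is a strong clique (union of blocks of the strong clique $A_1' \cup A_2'$), $V_1 \setminus A_1 = \bigcup_{v' \in V_1' \setminus A_1'} X_{v'}$ is strongly anti-complete to $V_2$ by lifting, and symmetrically. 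For the non-degeneracy condition I must check that $V_1, V_2$ are not strong independent sets: this is where I use that $V_1', V_2'$ are not strong independent sets in $G'$ — if $u', v' \in V_i'$ are strongly adjacent, then any $u \in X_{u'}$, $v \in X_{v'}$ are strongly adjacent in $V_i$. (If $V_i'$ were a single vertex $v'$, then $X_{v'}$ has at least two vertices only if $|X_{v'}| \geq 2$, but actually a single-vertex $V_i'$ is itself trivially a strong independent set, contradicting the pseudo-$1$-join hypothesis on $G'$, so this case does not arise.) The pseudo-$2$-join is entirely analogous with the extra sets $B_1', B_2'$ and the part $V_0'$, all lifted the same way, and the same non-degeneracy argument applies to $V_1, V_2$.

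The proof is essentially a routine verification that the three defining properties in each of the three definitions (strong clique condition, strong anti-completeness condition, non-emptiness/non-degeneracy condition) are each stable under taking unions of thickening blocks. The only mildly subtle point, and the one I would state carefully, is the non-degeneracy clause ``$V_1, V_2$ are not strong independent sets'': here one genuinely needs the pseudo-version hypothesis on $G'$ rather than just a join hypothesis, since a block $X_{v'}$ being a strong clique means a singleton $V_i' = \{v'\}$ would lift to a strong clique $V_i$, which is a strong independent set only if $|X_{v'}| = 1$ — so I would note that the hypothesis on $G'$ already rules this out. There is no real obstacle; the argument is short and mechanical once the lifting map $W' \mapsto \bigcup_{v' \in W'} X_{v'}$ is fixed and the preservation of strong (anti-)completeness is recorded.
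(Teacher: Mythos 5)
Your proposal is correct and takes essentially the same approach as the paper, which also argues by lifting any $0$-join, pseudo-$1$-join, or pseudo-$2$-join of $G'$ to $G$ via the map $W' \mapsto \bigcup_{v' \in W'} X_{v'}$ (the paper only spells out the $0$-join case and declares the pseudo-joins analogous). One small point to patch in your non-degeneracy check: the pair witnessing that $V_i'$ is not a strong independent set may be merely semi-adjacent rather than strongly adjacent, but then the thickening definition still supplies at least one adjacent pair $u_1 \in X_{u'}$, $v_1 \in X_{v'}$, so the lifted $V_i$ is again not a strong independent set.
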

\begin{proof}
Let $G$ be obtained from $G'$ using a thickening $\mathcal{X} = \{X_{v'} \mid v' \in V(G')\}$.
If $G'$ admits a $0$-join $V_{1},V_{2}$, where $V_{1},V_{2}$ are as in the definition of a $0$-join, then $V_{1}'= \bigcup_{v' \in V_{1}} X_{v'}$ and $V_{2}' = \bigcup_{v' \in V_{1}} X_{v'}$ is a $0$-join of $G$, a contradiction. A similar observation holds in the case that $G'$ would admit a pseudo-$1$-join or a pseudo-$2$-join.
\qed\end{proof}

\subsection{Proof of the Algorithmic Decomposition Theorem} \label{sec:trigraphs:main}
The proof of the decomposition theorem is similar in structure to the proof of the decomposition in Hermelin~et al.~\cite{HermelinMLW11}.

\begin{proof}[of Theorem~\ref{thm:prelim:main}]
By Lemma~\ref{lem:prelim:strip-structure}, $G$ has a purified strip-structure $(\mathcal R, \{(J_{e}, Z_{e})\} \mid e \in E(\mathcal R)\})$ such that its strips are connected and are spots or thickenings of almost-unbreakable stripes. Moreover, since $G$ is a graph, we can compute such a strip-structure in polynomial time.
We distinguish several cases.

First, suppose that $|E(\mathcal R)| = 1$. Then the strip-structure is the trivial strip-structure consisting of a single strip $(G,\emptyset)$. Note that $(G,\emptyset)$ is not a spot by definition. Therefore, $G$ is a thickening of an almost-unbreakable stripe. Since $\alpha(G) > 4$, $G$ is not a union of at most three strong cliques. Then Lemma~\ref{lem:prelim:unbreakable-structure} implies that $G$ is a thickening of a trigraph $G'$ that is a member of one of $\mathcal{S}_{0},\ldots,\mathcal{S}_{7}$. Since $\alpha(G) > 4$, Proposition~\ref{prp:prelim:alpha-thickening} implies that $\alpha(G') > 4$. Then, by Lemma~\ref{lem:prelim:alpha} and Proposition~\ref{prp:prelim:xx}, $G' \in \mathcal{S}_{0} \cup \mathcal{S}_{3}$. Since $G$ is not a fuzzy circular-arc graph by assumption, Lemma~\ref{lem:circ:circ-model} implies that $G' \not\in \mathcal{S}_{3}$. Hence, $G' \in \mathcal{S}_{0}$. It remains to output a strip-structure that satisfies the conditions of the theorem statement.

We construct a special strip-structure in the remaining case that $G' \in \mathcal{S}_{0}$. Since $G'$ admits no $0$-join, pseudo-$1$-join, or pseudo-$2$-join, and $\alpha (G') > 4$, Lemma~\ref{lem:prelim:line-graph} implies that $G'$ is the line graph of a multigraph without self-loops. Since $G'$ is a graph, any thickening of $G'$ can also be obtained by replacing each vertex of $G'$ with a twin set. Since line graphs are closed under adding twins, $G$ is also the line graph of a multigraph without self-loops, which we denote by $M$. As the first step, we find $M$ in linear time.
  First, we find all twin sets in linear time using an algorithm implicit in Habib et al.~\cite{HabibMPV2000}.
  Second, we remove all but one vertex from each twin set and mark this remaining vertex with the size of the original twin set.
  Third, we run the linear-time algorithm by Roussopoulos~\cite{Roussopoulos1973} on this marked graph $G'$ (which must be a line graph) to find its pre-image.
  Finally,~$M$ is obtained by duplicating all edges in this pre-image according to the markings of the corresponding vertices in $G'$.
  As the second step, we use $M$ to compute the required strip-structure $\mathcal{R}$ with strips $(J_{e},Z_{e})$.
  First, we choose
  \begin{equation*}
  V(\mathcal{R}) = V(M) \setminus \{ v \in V(M) \mid v\ \mbox{is a pendant vertex in}\ M\},
  \end{equation*}
  and
  \begin{eqnarray*}
    E(\mathcal{R}) & = & \{\, \{u,v\} \mid \{u,v\} \in E(M)\ \mbox{and neither of}\ u,v\ \mbox{is pendant in}\ M\} \ \cup\\
    && \{\, \{u\} \mid \{u,v\} \in E(M)\ \mbox{and}\ v\ \mbox{is a pendant vertex in}\ M\}.
  \end{eqnarray*}
  Note that each strip-edge of $\mathcal{R}$ corresponds to an edge of $M$.
  Then for each $e \in E(\mathcal{R})$, let $w_{e}$ denote the vertex of $G$ corresponding to the edge of $M$ that corresponds to $e$. We can now define the strips. For $e \in E(\mathcal{R})$, let $Z_{e}$ be a set of $|e|$ new vertices, and let $J_{e}$ be the graph consisting of $w_{e}$ and the vertices in $Z_{e}$ such that $w_{e}$ is (strongly) adjacent to all vertices of $Z_{e}$. We observe that this is indeed a strip-structure of $G$. Moreover, each strip is a spot or a stripe where a maximum independent set has size one (and thus at most four).

Suppose that $|E(\mathcal R)| > 1$.
We consider each strip-edge $e \in E(\mathcal{R})$ and its associated strip $(J_{e},Z_{e})$ in turn, and check whether further simplification is necessary. To improve legibility, we drop the subscript $e$ and let $(J,Z)$ denote the considered strip. We also use $r_{z}$ to denote the unique strip-vertex in $e$ for which $z = z^{r}_{e}$, for each $z \in Z$.

We can check in constant time whether~$(J,Z)$ is a spot. Since spots are part of the theorem statement, we do not need to simplify and can proceed to the next strip.
Then, $(J,Z)$ is a thickening of an almost-unbreakable stripe $(J',Z')$. 
Since $G$ is connected, $|Z| = |Z'| \geq 1$.
We can check in polynomial time whether $\alpha(J) \leq 4$ and $|Z| \leq 2$. Since such stripes are part of the theorem statement, we do not need to simplify and can proceed to the next strip.
We can also check in polynomial time whether $J$ is a fuzzy circular-arc graph (see Theorem~\ref{thm:circ:fuzzy-reg}) and $|Z| \leq 2$. Since such stripes are part of the theorem statement, we do not need to simplify and can proceed to the next strip.
Now, if $(J',Z')$ is a union of at most three strong cliques and $|Z'| \leq 2$, then $|Z| \leq 2$ and $\alpha(J) \leq 4$ by Proposition~\ref{prp:prelim:alpha-thickening}, a contradiction. 
Then Lemma~\ref{lem:prelim:unbreakable-structure} implies that $(J',Z')$ is a thickening of a stripe $(J'',Z'')$ such that $J''$ is a member of one of $\mathcal{S}_{0},\ldots,\mathcal{S}_{7}$.
The assumptions on $(J',Z')$, Lemma~\ref{lem:prelim:z-bound}, Lemma~\ref{lem:prelim:alpha}, Proposition~\ref{prp:prelim:xx}, and Proposition~\ref{prp:prelim:alpha-thickening} imply that $J''$ is not a member of one of $\mathcal{S}_{1},\mathcal{S}_{2}, \mathcal{S}_{4},\ldots,\mathcal{S}_{7}$.
Furthermore, the assumptions on $(J',Z')$, Lemma~\ref{lem:circ:circ-model}, and Lemma~\ref{lem:prelim:z-bound} imply that $J''$ is not a member of $\mathcal{S}_{3}$.
Therefore, $J''$ is a member of $\mathcal{S}_{0}$.

Since $J''$ is a member of $\mathcal{S}_{0}$, we need to simplify further. As $J$ is a thickening of $J'$ and $J''$ is a thickening of $J'$, it follows from the definition of thickening that $J$ is a thickening of $J''$. Since $J'$ is almost-unbreakable, it follows from Proposition~\ref{prp:prelim:pseudo} that $J''$ does not admit a $0$-join, a pseudo-$1$-join, or a pseudo-$2$-join. Therefore, by Lemma~\ref{lem:prelim:line-graph}, $J$ is the line graph of a multigraph $M$. As outlined before, we can compute $M$ in polynomial time.

We first apply a minor modification to $M$. Consider the edge $f_{z}$ of $M$ that corresponds to some $z \in Z$, and let $a,b$ denote its endpoints. Suppose that $a$ is adjacent to a vertex $c_{a} \not= b$ in $M$, and that $b$ is adjacent to a vertex $c_{b} \not= a$ in $M$. Since the neighborhood of $z$ in $J$ is a strong clique by the definition of a strip, $a$ and $b$ are incident to exactly one and the same vertex $c$ of $M$, and in particular, $c = c_{a} = c_{b}$. Therefore, we can replace each edge $\{b,c\}$ of $M$ by an edge $\{a,c\}$, and $J$ is still the line graph of this modified graph. We apply now apply this modification exhaustively, and by abuse of notation, call the resulting multigraph $M$ as well. Observe that now $f_{z}$ contains a pendant vertex (i.e.~a vertex adjacent to exactly one other vertex) for each $z \in Z$.

As outlined before, we can find a strip-structure for $M$ in polynomial time. Let $\mathcal{R}'$ denote this strip-structure, and let $(J'_{e'},Z'_{e'})$ denote the strip associated with each $e' \in E(\mathcal{R'})$. Note that each $v \in J$ corresponds to some strip-edge $e' \in E(\mathcal{R'})$, which we denote by $e'_{v}$. We now integrate $\mathcal{R}'$ into $\mathcal{R}$. Consider each $z \in Z$ in turn. By our modification step and the way that we construct $\mathcal{R}'$, $|e'_{z}| = 1$. Let $r'_{z}$ denote the single strip-vertex in $e'_{z}$. Then, as $Z$ is a strong independent set, no strip-edge of $\mathcal{R}'$ incident on $r'_{z}$ can correspond to another vertex of $Z$. Therefore, we can remove $e'_{z}$ and replace each occurrence of $r'_{z}$ in a strip-edge of $\mathcal{R}'$ with $r_{z}$. We do this for each $z \in Z$. Then, finally, add all remaining strip-vertices and strip-edges of $\mathcal{R}'$ to $\mathcal{R}$, and remove $e$ from $\mathcal{R}$. One can readily verify that this modification of $\mathcal{R}$ preserves all five properties of Definition~\ref{def:strip-structure}, and thus the modified $\mathcal{R}$ and associated strips forms a strip-structure for $G$. Moreover, we have removed the offending strip $(J,Z)$ and replaced it with strips that are either spots or stripes where a maximum independent set has size one (and thus at most four).

We observe that all above computations take polynomial time in total, and that the resulting strip-structure has the properties set forth in the theorem statement.
\qed
\end{proof}

\section{Fixed-Parameter Algorithm on Claw-Free Graphs}
\label{sec:algo}
In this section, we provide our main result, a fixed-parameter algorithm for {\sc Induced Graph Matching} on claw-free graphs, parameterized by the size $k$ of the matching and for any fixed connected graph $H$.
Throughout the section, we use $H$ to denote a connected graph on $h$ vertices, and $G$ to denote a claw-free graph on $n$ vertices, so that $G$, $H$, and an integer $k\in\mathbb N$ are given as input to the {\sc Induced Graph Matching} problem.
For simplicity, we assume that $G$ is connected, as the extension to the disconnected case is immediate.

Our algorithm deploys the claw-free decomposition theorem stated in Theorem~\ref{thm:prelim:main}.
Note that in order to apply this decomposition theorem, we need to first handle two cases: the case where $\alpha(G)$ is small, and the case where $G$ is a fuzzy circular-arc graph.

\begin{proposition}
\label{prp:algo:alpha}%
  {\sc Induced Graph Matching} on graphs $G$ with $\alpha(G) \leq 4$ can be solved in $n^{O(h)}$ time.
\end{proposition}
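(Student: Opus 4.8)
The plan is to exploit the fact that the constraint $\alpha(G)\le 4$ forces the matching size $k$ to be bounded by a constant, after which a brute-force search over small vertex subsets of $G$ suffices.

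First I would dispose of the degenerate cases: if $h=0$ or $k=0$ the instance is trivially a yes-instance (an empty matching, or $k$ empty copies of $H$, always works), and if $h\ge 1$ but $h>n$ then $G$ is too small to contain even one occurrence of $H$, so the instance is a no-instance. Hence we may assume $1\le h\le n$ and $k\ge 1$. The key observation is that any induced $H$-matching $\{H_1,\dots,H_k\}$ of $G$ yields an independent set of size $k$ in $G$: choosing one vertex $v_i\in V(H_i)$ for each $i$ (possible since $h\ge 1$), the $v_i$ are pairwise distinct because the $H_i$ are vertex-disjoint, and $\{v_i,v_j\}\notin E(G)$ for $i\ne j$ by the definition of an induced $H$-matching. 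Therefore $k\le\alpha(G)\le 4$, so if $k>4$ we may answer ``no'' immediately; from now on $1\le k\le 4$.

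The algorithm then enumerates every subset $S\subseteq V(G)$ with $|S|=kh$; there are $\binom{n}{kh}\le\binom{n}{4h}=n^{O(h)}$ of these. For each $S$ I would test whether $G[S]$ is an induced occurrence of $k\cdot H$ as follows: compute the connected components $C_1,\dots,C_m$ of $G[S]$, and accept $S$ precisely when $m=k$ and each $C_i$ is isomorphic to $H$. This is correct because $H$ is connected, so in any induced occurrence of $k\cdot H$ the $k$ copies of $H$ must be exactly the connected components of the subgraph they induce (there are no edges between distinct copies, by the definition of an induced $H$-matching). The isomorphism test between an $h$-vertex graph and the fixed graph $H$ can be carried out by trying all $h!$ bijections, costing $h^{O(h)}\le n^{O(h)}$ time since $h\le n$; thus the whole test for a fixed $S$ runs in $n^{O(h)}$ time. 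The algorithm outputs ``yes'' (and the matching itself, if desired) iff some $S$ passes. Correctness follows from the two observations above, and the running time is $\binom{n}{4h}\cdot n^{O(h)}=n^{O(h)}$.

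There is essentially no deep obstacle here: the one thing to notice is that $\alpha(G)\le 4$ makes $k$ effectively constant, and after that the proof is a routine brute force. The only points needing a little care are the degenerate cases ($h=0$, $k=0$, $h>n$) and keeping the isomorphism test inside the $n^{O(h)}$ budget, which is precisely why we first reduce to the case $h\le n$.
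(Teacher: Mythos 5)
Your proof is correct and follows the same route as the paper: observe that $\alpha(G)\leq 4$ bounds the matching size by four (so $k>4$ is an immediate ``no''), and then solve the remaining cases by exhaustive enumeration in $n^{O(h)}$ time. You merely spell out the enumeration and the component/isomorphism check that the paper leaves implicit, which is consistent with the section's standing assumption that $H$ is connected.
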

\begin{proof}
  Note that since $\alpha(G) \leq 4$, any induced $H$-matching of $G$ will have size at most four.
  Hence, by exhaustive enumeration, {\sc Induced Graph Matching} can be solved in $n^{O(h)}$ time.
\qed
\end{proof}

By Theorem~\ref{thm:algo:circ} and Proposition~\ref{prp:algo:alpha}, we can solve {\sc Induced Graph Matching} in polynomial time if $G$ is a fuzzy circular-arc graph or if $\alpha(G) \leq 4$.
Hence, we may assume from now on that $G$ is not a fuzzy circular-arc graph and satisfies $\alpha(G) > 4$. We can then apply Theorem~\ref{thm:prelim:main} to obtain a strip-structure of $G$. Let $\mathcal R$ denote the strip-graph of this strip-structure.

The remainder of this section consists of two parts. In the first part (Section~\ref{sec:algo:covered}), we exhibit a global structural impression that an induced $H$-matching of size $k$ would leave on $G$ and, in particular, on $\mathcal{R}$. In the second part (Section~\ref{sec:algo:color}), we turn this around, in that we try to find certain structures in $G$ and, in particular, in $\mathcal{R}$, and then use this to discover an induced $H$-matching of size $k$ (if it exists).

\subsection{Covered Subgraph and Annotations} \label{sec:algo:covered}
In this subsection, we describe the structure that an induced $H$-matching of size $k$ would impose on $G$ and $\mathcal{R}$. We first formalize the notion of the covered subgraph of $\mathcal R$.

\begin{definition} \label{def:algo:covered}
Let $M$ be an induced $H$-matching of $G$.
We call a strip-edge of $\mathcal R$ \emph{covered} if its corresponding strip contains a vertex of $M$.
Similarly, we call a strip-vertex $r$ of $\mathcal R$ \emph{covered} if the clique $C(r)$ corresponding to $r$ contains a vertex of $M$.
The \emph{covered subgraph} $C_{M}$ of $\mathcal{R}$ is formed by the covered strip-edges and the strip-vertices that they contain. 
\end{definition}
Note that by definition, all covered strip-vertices are part of the covered subgraph.
By straightforward extension of the definition, we may also speak of the covered subgraph of an occurrence of $H$.

\begin{proposition} \label{prp:algo:covered}
If $G$ has an induced $H$-matching $M$ of $G$ of size $k$, then $C_{M}$ has at most $hk$ strip-edges and $2hk$ strip-vertices.
\end{proposition}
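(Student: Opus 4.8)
The statement is really a counting argument about how a size-$k$ induced $H$-matching $M = \{H_1,\dots,H_k\}$ interacts with the strip-structure. First I would observe that since $H$ has $h$ vertices, $M$ uses exactly $hk$ vertices of $G$ in total. The key fact to exploit is the first property of Definition~\ref{def:strip-structure}: the sets $V(J_e)\setminus Z_e$ over all strip-edges $e\in E(\mathcal R)$ partition $V(G)$. Hence each of the $hk$ vertices of $M$ lies in exactly one set $V(J_e)\setminus Z_e$, so the number of strip-edges $e$ whose strip $(J_e,Z_e)$ contains a vertex of $M$ — i.e.\ the number of covered strip-edges — is at most $hk$. By Definition~\ref{def:algo:covered}, $C_M$ consists precisely of these covered strip-edges together with the strip-vertices they contain, so $C_M$ has at most $hk$ strip-edges. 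This is the easy half.

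For the bound on strip-vertices, I would split $V(\mathcal R)$ appearing in $C_M$ into two kinds according to Definition~\ref{def:algo:covered}: strip-vertices $r$ that are endpoints of covered strip-edges (these are automatically in $C_M$), and strip-vertices $r$ that are \emph{covered}, meaning $C(r)$ contains a vertex of $M$. For the first kind, each covered strip-edge $e$ has $|e|\le 2$ strip-vertices (since in the strip-structure produced by Theorem~\ref{thm:prelim:main} every strip is a spot or a stripe with $1 \le |Z_e| \le 2$, and $|Z_e| = |e|$ by Definition~\ref{def:strip-structure}), so the at most $hk$ covered strip-edges contribute at most $2hk$ strip-vertices. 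It then remains to argue that every covered strip-vertex is already an endpoint of some covered strip-edge, so that no additional strip-vertices need to be counted. Indeed, if $C(r)$ contains a vertex $v \in V(M)$, then by the fourth property of Definition~\ref{def:strip-structure}, $v$ lies in $N(z_e^{r})$ for some $e \in E(\mathcal R)$ with $r \in e$; since $N(z_e^r) \subseteq V(J_e)\setminus Z_e$ (the boundary is a clique in $J$ disjoint from $Z$), that strip-edge $e$ is covered, and $r$ is one of its endpoints. Hence $C_M$ has at most $2hk$ strip-vertices, completing the count.

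**Where the subtlety lies.** The only genuinely delicate point is making sure the two definitions of ``covered strip-vertex'' and ``strip-vertex in $C_M$'' are reconciled cleanly, so that the $2hk$ bound really does cover everything — this is the argument in the previous paragraph that a covered strip-vertex is always an endpoint of a covered strip-edge. One should double-check that a vertex $v$ of $G$ lying in $C(r)$ genuinely forces some strip $J_e$ with $r \in e$ to contain $v$ as a non-$Z$ vertex; this follows because $C(r)$ is defined as the union of the boundaries $N(z_e^r)$, each boundary being a subset of $V(J_e)\setminus Z_e \subseteq V(G)$. Everything else is bookkeeping with the partition property and the bound $|e| \le 2$ guaranteed by Theorem~\ref{thm:prelim:main}. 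I would write this up in perhaps half a page.
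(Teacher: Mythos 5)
Your proposal is correct and follows essentially the same argument as the paper: each of the $hk$ vertices of $M$ lies in some strip, so at most $hk$ strip-edges are covered, and since each strip-edge of the strip-structure from Theorem~\ref{thm:prelim:main} contains at most two strip-vertices, at most $2hk$ strip-vertices arise. Your extra care in checking that every covered strip-vertex is an endpoint of a covered strip-edge only spells out what the paper leaves implicit in the remark after Definition~\ref{def:algo:covered}.
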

\begin{proof}
It suffices to observe that each vertex of an occurrence of $H$ in $M$ is contained in a strip, which corresponds to some strip-edge. Since each strip-edge contains at most two strip-vertices, the proposition follows.
\qed\end{proof}
The covered subgraph gives a global impression of the structure that an induced $H$-matching of size at least $k$ induces in $\mathcal{R}$. We now need to obtain a global impression of which vertices of an induced $H$-matching are in which strip-edge and whether they are in a boundary or in the interior of a strip. To this end, create a set of \emph{tokens}: $k$ tokens for every vertex of $H$, leading to $hk$ tokens. We can partition these tokens into $k$ \emph{token groups} of size $h$, which each contain one token of each vertex of $H$. The tokens can be seen as a representation of the vertices of $k$ occurrences of $H$. We use them in the following definition:

\begin{definition} \label{def:algo:annotation}
An \emph{annotation} of a hypergraph with edges of size one or two describes for each edge $e$ of the hypergraph:
\begin{itemize}
\item whether $e$ corresponds to a stripe or (if $e$ contains two vertices) a spot;
\item if $e$ corresponds to a stripe, then which tokens of the $hk$ tokens are assigned to the interior of the stripe and, for each boundary of the stripe, which tokens of the $hk$ tokens are assigned to that boundary;
\item if $e$ corresponds to a spot, then which token of the $hk$ tokens is assigned to the spot.
\end{itemize}
All $hk$ tokens should be assigned in this way.
\end{definition}
Observe that in a way, an induced $H$-matching of $G$ of size $k$ is nothing more than a constrained way to assign the $hk$ tokens to vertices of $G$. The following annotation of the covered subgraph then follows.

\begin{proposition} \label{prp:algo:annotation}
If $G$ has an induced $H$-matching $M$ of $G$ of size $k$, then $M$ induces an annotation of $C_{M}$.
\end{proposition}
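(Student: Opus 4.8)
The plan is to read the annotation off directly from the induced $H$-matching $M = \{H_{1},\ldots,H_{k}\}$, using the fact that the sets $V(J_{e})\setminus Z_{e}$ over all strip-edges $e$ partition $V(G)$. First I would fix, for each $i$, an isomorphism $\phi_{i}:V(H)\rightarrow V(H_{i})$, so that every token of the $hk$ tokens is identified with a pair $(v,i)$ with $v\in V(H)$ and $1\leq i\leq k$, and hence with the vertex $\phi_{i}(v)\in V(G)$. Since the $H_{i}$ are pairwise vertex-disjoint and each $\phi_{i}$ is injective, the map sending token $(v,i)$ to $\phi_{i}(v)$ is injective; call its image the set of \emph{occupied} vertices. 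Each occupied vertex lies in $V(J_{e})\setminus Z_{e}$ for a unique strip-edge $e$, which is therefore covered, i.e.\ $e\in E(C_{M})$.

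Next I would define the annotation of $C_{M}$ edge by edge. For a strip-edge $e$ of $C_{M}$, Theorem~\ref{thm:prelim:main} says $(J_{e},Z_{e})$ is a spot or a stripe; the annotation records which (a spot has $|e|=2$, as required by Definition~\ref{def:algo:annotation}). If $(J_{e},Z_{e})$ is a spot, then $J_{e}$ contains exactly one vertex of $G$, namely its middle vertex $s_{e}$; since $e$ is covered and the only vertex of $J_{e}$ lying in $V(G)$ is $s_{e}$, the vertex $s_{e}$ is occupied, so $s_{e}=\phi_{i}(v)$ for a unique pair $(i,v)$, and I assign the single token $(v,i)$ to the spot $e$. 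If $(J_{e},Z_{e})$ is a stripe, I consider each occupied vertex $u$ in $V(J_{e})\setminus Z_{e}$: by the defining property of a stripe the boundaries $N(z_{e}^{r})$ for $r\in e$ are pairwise disjoint, so $u$ lies either in the interior $J_{e}-N[Z_{e}]$ or in exactly one boundary $N(z_{e}^{r})$, and I assign the token corresponding to $u$ to the interior, respectively to the boundary indexed by $r\in e\subseteq V(C_{M})$, accordingly.

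Finally I would verify that this is indeed an annotation of $C_{M}$: every token is assigned, since for token $(v,i)$ the occupied vertex $\phi_{i}(v)$ lies in a covered strip-edge and thus falls under one of the two cases above; every token is assigned to exactly one place, since the strip-edge containing $\phi_{i}(v)$ is unique and, within a stripe, the interior and the individual boundaries are mutually exclusive; and a spot receives exactly one token because a spot strip has a unique $G$-vertex. I expect no genuine obstacle here: the statement is essentially a bookkeeping consequence of the partition of $V(G)$ into strip interiors and boundaries together with the disjointness of boundaries inside a stripe. The only point requiring a line of care is that a boundary to which a token is assigned is actually present in $C_{M}$, which holds because every strip-vertex of a covered strip-edge belongs to $C_{M}$ by the definition of the covered subgraph.
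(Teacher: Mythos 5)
Your proposal is correct and follows essentially the same route as the paper: view $M$ as an assignment of the $hk$ tokens to vertices of $G$ via the isomorphisms, and read off the annotation of each covered strip-edge from whether each occupied vertex lies in a spot, in a stripe's interior, or in one of its (pairwise disjoint) boundaries. Your version merely makes explicit the bookkeeping (injectivity of the token-to-vertex map, uniqueness of the containing strip-edge) that the paper's shorter argument leaves implicit.
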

\begin{proof}
We construct the annotation as follows. Let $e$ be an edge of $C_{M}$. Note that $C_{M}$ already contains the information to annotate whether $e$ corresponds to a stripe or a spot. Now consider $M$ as an assignment of the $hk$ tokens to the vertices of $G$. If $e$ corresponds to a stripe, then augment the annotation according to which tokens of $M$ are assigned to vertices in the interior of the stripe and, for each boundary of the stripe, which tokens of $M$ were assigned to vertices of that boundary. If $e$ corresponds to a spot, then only a single token of $M$ (if any) can be assigned to its vertex. Therefore, we can augment the annotation by assigning this token to $e$. Since $M$ is an induced $H$-matching of size $k$, all tokens are assigned.
\qed\end{proof}
We call the annotation described in Proposition~\ref{prp:algo:annotation} the \emph{natural annotation} of $C_{M}$.

\subsection{Algorithm} \label{sec:algo:color}
The previous subsection shows (roughly) that an induced $H$-matching of size $k$ imposes a subgraph of $\mathcal{R}$ with $hk$ strip-edges and $2hk$ strip-vertices (Proposition~\ref{prp:algo:covered}) and an annotation of that subgraph (Proposition~\ref{prp:algo:annotation}). We now attempt the converse: we try to find such a subgraph and an annotation of it, and use this to find an induced $H$-matching of size $k$ (if it exists). A major complication to this idea is that it seems to require solving an instance of {\sc Subgraph Isomorphism}, which is known to be $\mathsf{W[1]}$-hard when parameterized by the size of the pattern. We can, however, get around this seeming complication by taking advantage of the color-coding technique. We construct colors for each vertex of the guessed covered subgraph and, for each edge, colors for the different items of the guessed annotation. Then we use the color-coding technique~\cite{AlonYZ1995} to distribute these colors on $\mathcal{R}$, so that each strip-vertex and strip-edge receives some color. Finally, we try to recover as much of the guessed covered subgraph and its annotation from this coloring of $\mathcal{R}$, and ultimately find an induced $H$-matching of size $k$ (if it exists).

We formalize the above intuitive description. The algorithm consists of five major steps, which we describe and analyze in turn, and then give a proof of correctness as well as analysis of the running time.

\paragraph{Step~1:~Bases}
We call a \emph{base} the combination of a hypergraph on at most $hk$ edges and $2hk$ vertices and an annotation of this hypergraph. Two bases are distinct if their hypergraphs are non-isomorphic\footnote{We note that the definition of isomorphism can be easily extended to hypergraphs.} or if their hypergraphs are isomorphic, then their annotations are distinct.

\begin{proposition} \label{prp:algo:base}
The number of distinct bases is $(hk)^{O(hk)}$ and they can be enumerated in $(hk)^{O(hk)}$ time.
\end{proposition}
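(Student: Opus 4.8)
The plan is to bound the number of \emph{labeled} objects of each type, observe that this bound dominates the number of distinct (i.e.\ up-to-isomorphism) bases, and note that the same estimate governs a naive enumeration. So first I would count the possible underlying hypergraphs. Such a hypergraph has at most $2hk$ vertices, which we may name $\{1,\dots,2hk\}$, and every hyperedge has size one or two; hence there are at most $\binom{2hk}{2} + 2hk = O((hk)^{2})$ distinct possible hyperedges over this vertex set. Since a strip-graph is a multigraph, the hypergraph is a multiset of at most $hk$ of these, so the number of hypergraphs is at most $\bigl(O((hk)^{2})+1\bigr)^{hk} = (hk)^{O(hk)}$.

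Next I would count annotations of a fixed hypergraph, directly from Definition~\ref{def:algo:annotation}. An annotation first picks, for each of the at most $hk$ edges, whether it is a stripe or (when the edge has two vertices) a spot; this is at most $2^{hk}$ choices. It then assigns each of the $hk$ tokens to a unique location: to some spot-edge, to the interior of some stripe-edge, or to one of the (at most two, by the theorem) boundaries of some stripe-edge. Each token therefore has at most $hk + hk + 2hk = O(hk)$ destinations, giving at most $(O(hk))^{hk} = (hk)^{O(hk)}$ token assignments. Multiplying the edge-type choices by the assignments, a fixed hypergraph has $(hk)^{O(hk)}$ annotations. Combining with the previous paragraph, the number of labeled hypergraph-with-annotation pairs is $(hk)^{O(hk)} \cdot (hk)^{O(hk)} = (hk)^{O(hk)}$; since each distinct base is realized by at least one such labeled pair, the number of distinct bases is $(hk)^{O(hk)}$.

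For the enumeration claim I would simply implement the counting argument as nested loops: iterate over all multisets of at most $hk$ hyperedges on the vertex set $\{1,\dots,2hk\}$, then over all stripe/spot labelings, then over all token assignments, each time outputting one candidate base in $\mathrm{poly}(hk)$ overhead. The number of candidates produced is $(hk)^{O(hk)}$, so the enumeration runs in $(hk)^{O(hk)}$ time. (If one wants each isomorphism class listed exactly once, one may additionally discard duplicates using brute-force isomorphism tests; comparing all pairs of candidates costs $(hk)^{O(hk)}$ and does not affect the bound.)

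There is no substantial obstacle here; the proof is a routine counting estimate. The only points that need a moment's care are remembering that the strip-graph is a multigraph, so hyperedges must be counted with multiplicity, and checking that the requirement that \emph{all} $hk$ tokens be assigned, together with the bound of at most two boundaries per stripe, leaves each token with only $O(hk)$ choices; both are immediate from the definitions.
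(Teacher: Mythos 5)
Your proposal is correct and follows essentially the same route as the paper: count the hypergraphs on at most $2hk$ vertices with at most $hk$ edges of size one or two, count the annotations by the spot/stripe choice per edge and the $O(hk)$ possible destinations per token, and multiply, with the enumeration carried out by iterating over these choices. The only differences are cosmetic (you track multiplicities and duplicate elimination explicitly, and your per-token destination bound is $O(hk)$ rather than the paper's $3hk$), which do not change the bound.
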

\begin{proof}
A trivial upper bound on the number of hypergraphs on at most $2hk$ vertices and at most $hk$ edges of size one or two is $(hk)^{O(hk)}$ --- think that $hk$ edges choose at most two of $2hk$ vertices, where empty edges are discarded. Since each of the hypergraphs has at most $hk$ edges, the number of possible annotations of such a hypergraph is bounded by $2^{hk}\, (3hk)^{hk}$. The proposition follows.
\qed\end{proof}
The following proposition is immediate from Proposition~\ref{prp:algo:covered} and~\ref{prp:algo:annotation}.

\begin{proposition} \label{prp:algo:base-k}
If $G$ has an induced $H$-matching of size $k$, then there is a base such that the base hypergraph is isomorphic to the covered subgraph and the base annotation is equal to the natural annotation of the covered subgraph.
\qed\end{proposition}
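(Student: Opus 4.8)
The plan is to simply combine the two structural propositions already established, since the content of this statement has been front-loaded into Propositions~\ref{prp:algo:covered} and~\ref{prp:algo:annotation} and into the definitions of \emph{base} and \emph{annotation}. Suppose $M$ is an induced $H$-matching of $G$ of size $k$. First I would invoke Proposition~\ref{prp:algo:covered}, which guarantees that the covered subgraph $C_M$ of $\mathcal R$ has at most $hk$ strip-edges and at most $2hk$ strip-vertices; since every strip-edge of $\mathcal R$ has size one or two, this means $C_M$ is, up to the renaming of vertices and edges that isomorphism allows, a hypergraph on at most $2hk$ vertices and at most $hk$ edges of size one or two — precisely the kind of hypergraph underlying a base in Step~1.

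Next I would invoke Proposition~\ref{prp:algo:annotation}, which says that $M$ induces the natural annotation of $C_M$. By Definition~\ref{def:algo:annotation}, this natural annotation records for each edge of $C_M$ whether it corresponds to a stripe or a spot and assigns all $hk$ tokens to interiors, boundaries, and spots accordingly; hence it is a valid annotation of the hypergraph $C_M$. Therefore the pair consisting of (a hypergraph isomorphic to) $C_M$ together with this natural annotation satisfies the definition of a base.

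Finally I would check the bookkeeping: this base has exactly the required property, since its hypergraph is isomorphic to $C_M$ by construction, and transporting the natural annotation along the isomorphism yields an annotation equal to the natural annotation of $C_M$. I do not expect any genuine obstacle here; the only thing that needs to be verified is that the numerical bounds ($\leq hk$ edges, $\leq 2hk$ vertices) produced by Proposition~\ref{prp:algo:covered} coincide with those in the definition of a base, which they do exactly. Accordingly the proposition is immediate from Propositions~\ref{prp:algo:covered} and~\ref{prp:algo:annotation}, and a one- or two-sentence proof suffices.
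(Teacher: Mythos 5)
Your proposal is correct and matches the paper exactly: the paper states this proposition as immediate from Propositions~\ref{prp:algo:covered} and~\ref{prp:algo:annotation}, which is precisely the combination you carry out (including the observation that the strip-edges have size one or two, so the bounds fit the definition of a base). No further comment is needed.
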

We now enumerate all bases using Proposition~\ref{prp:algo:base}. Let $B$ denote the hypergraph of the current base.

We then verify whether the base satisfies two necessary (but not sufficient) conditions for it to be equivalent to the natural annotation of the covered subgraph of an induced $H$-matching.

\begin{proposition}[Condition~1] \label{prp:algo:cond1}
If two vertices of $H$ are adjacent, then the corresponding tokens of each token group must be assigned to the same edge of $B$ or to two different edges of $B$ that share an endpoint.
In the latter case, the annotation must also assign the tokens to the boundaries that correspond to that endpoint.
\end{proposition}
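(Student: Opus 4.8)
\textbf{Proof plan for Proposition~\ref{prp:algo:cond1} (Condition~1).}
The plan is to unwind the relevant definitions: that of a strip-structure (Definition~\ref{def:strip-structure}), that of an occurrence of $H$, and that of the natural annotation (Proposition~\ref{prp:algo:annotation}). Suppose $M$ is an induced $H$-matching of $G$ of size $k$, let $C_{M}$ be its covered subgraph with natural annotation, and consider one occurrence $H_{i}\in M$ together with an isomorphism $\phi_i$ from $H$ to $H_i$. Fix two adjacent vertices $x,y\in V(H)$; then $\phi_i(x)$ and $\phi_i(y)$ are adjacent in $G$. The token for $x$ in token group $i$ is assigned, by construction of the natural annotation, to the strip-edge $e$ whose strip $(J_e,Z_e)$ contains $\phi_i(x)$ (and to a boundary or to the interior of that strip, as appropriate); likewise the token for $y$ is assigned to the strip-edge $f$ containing $\phi_i(y)$.

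The core of the argument is the last item of Definition~\ref{def:strip-structure}, which says that any edge of $G$ either lies inside a single strip (i.e.~$u,v\in V(J_e)$ for a common $e$) or lies inside a clique $C(r)$ on a strip-vertex $r$. Apply this to the edge $\{\phi_i(x),\phi_i(y)\}$. In the first case, $\phi_i(x)$ and $\phi_i(y)$ lie in a common strip, and since each vertex of $G$ lies in exactly one strip-edge's strip (the sets $V(J_e)\setminus Z_e$ partition $V(G)$), we get $e=f$: the two tokens are assigned to the same edge of $B$. In the second case, $\phi_i(x),\phi_i(y)\in C(r)$ for some strip-vertex $r$. Since $C(r)$ is the union of the boundaries $N(z_e^r)$ over strip-edges $e\ni r$, and since boundaries sit inside their own strips, $\phi_i(x)$ lies in the boundary of $e$ corresponding to $r$ and $\phi_i(y)$ lies in the boundary of $f$ corresponding to $r$, with $r\in e\cap f$. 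If $e=f$ we are again in the first case; otherwise $e\neq f$ are two distinct edges of $B$ sharing the endpoint $r$, and by the definition of the natural annotation the token for $x$ is assigned to the boundary of $e$ corresponding to $r$ and the token for $y$ to the boundary of $f$ corresponding to $r$ — exactly the conclusion claimed. One should also note the degenerate possibility $\phi_i(x)=\phi_i(y)$: this cannot happen since $\phi_i$ is injective and $x\neq y$ (they are adjacent, hence distinct).

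The only mild subtlety — and the step I would be most careful about — is matching the bookkeeping of the natural annotation precisely against the cases coming from Definition~\ref{def:strip-structure}: one must check that ``assigned to a boundary'' in the annotation corresponds exactly to the vertex lying in $N(z_e^{r})$ for the relevant $r$, and that the interior/boundary dichotomy is respected, rather than, say, a vertex being counted in the interior of one strip while participating in a clique $C(r)$. This is handled by recalling that the interior of a strip is $J-N[Z]$, so interior vertices are not in any boundary; hence a vertex in a clique $C(r)$ that is contained in strip $e$ must be a boundary vertex of $e$, and the natural annotation records it as such. Assembling these observations over all token groups $i=1,\dots,k$ and all adjacent pairs $x,y$ of $H$ yields the statement. \qed
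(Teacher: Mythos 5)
Your proposal is correct and follows essentially the same route as the paper: it applies the last condition in the definition of a strip-structure to the edge between the images of two adjacent vertices of $H$, and translates the two resulting cases (same strip, or a common clique $C(r)$ forcing both vertices into boundaries at a shared strip-vertex) into the language of the natural annotation of the covered subgraph. Your version simply spells out the bookkeeping (uniqueness of the containing strip-edge via the partition, and the boundary/interior dichotomy) that the paper leaves implicit.
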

\begin{proof}
By the definition of a strip-structure (recall Definition~\ref{def:strip-structure}), two vertices of $G$ can only be adjacent if they are part of the same strip-edge of $\mathcal{R}$ or part of two different strip-edges of $\mathcal{R}$ that share an endpoint. In the latter case, they also have to be part of the boundaries that correspond to that endpoint.
Hence, if $G$ has an induced $H$-matching $M$ of size $k$, $B$ is isomorphic to the covered subgraph $C_{M}$, and the base annotation is equal to the natural annotation of $C_{M}$, then the proposition statement is indeed true.
\qed\end{proof}

\begin{proposition}[Condition~2] \label{prp:algo:cond2}
For each vertex $b$ of $B$, the boundaries corresponding to~$b$ of the edges incident to $b$ can contain tokens of at most one token group. Moreover, the vertices of $H$ corresponding to these tokens must induce a clique in~$H$.
\end{proposition}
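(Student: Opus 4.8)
The plan is to argue exactly as in the proof of Condition~1 (Proposition~\ref{prp:algo:cond1}), except that instead of the last item of Definition~\ref{def:strip-structure} we invoke its fourth item, namely that for a strip-vertex $r$ the union $C(r)$ of the boundaries $N(z_e^r)$ over all strip-edges $e$ with $r \in e$ induces a clique in $G$. Combined with the defining non-adjacency property of an induced $H$-matching, this pins down both assertions.

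Concretely, suppose $G$ has an induced $H$-matching $M = \{H_1,\dots,H_k\}$ of size $k$, that $B$ is isomorphic to the covered subgraph $C_M$, and that the base annotation equals the natural annotation of $C_M$. Fix a vertex $b$ of $B$ and let $r$ be the corresponding strip-vertex of $\mathcal{R}$ under the isomorphism. First I would check, directly from the construction of the natural annotation in Proposition~\ref{prp:algo:annotation}, that the tokens the annotation assigns to the boundaries corresponding to $b$ of the edges incident to $b$ are precisely the tokens that $M$ (viewed as an assignment of the $hk$ tokens to vertices of $G$) places on vertices of $C(r)$; here one uses that the union of the boundaries corresponding to $r$ over all incident strip-edges is exactly $C(r)$, and one checks the spot case (where the single assigned token sits on the middle vertex, which also lies in $C(r)$) separately but trivially. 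Since $C(r)$ is a clique of $G$, any two such vertices are adjacent in $G$; but two vertices lying in distinct occurrences $H_i \neq H_j$ of an induced $H$-matching are non-adjacent by definition. Hence all vertices of $M$ in $C(r)$ belong to a single occurrence $H_i$, so their tokens all lie in one token group, which is the first assertion.

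For the second assertion, recall that $H_i$ is an induced subgraph of $G$ isomorphic to $H$, so two vertices of $H_i$ are adjacent in $G$ if and only if the corresponding vertices of $H$ are adjacent in $H$. As the vertices of $M$ in $C(r)$ are pairwise adjacent in $G$, the corresponding vertices of $H$ are pairwise adjacent, i.e.\ they induce a clique in $H$. I do not anticipate a genuine obstacle: the only care needed is the bookkeeping across the three layers (tokens, vertices of $G$, vertices of the strip-graph) and verifying that, under the isomorphism $B \cong C_M$, the set of $M$-vertices on the boundaries corresponding to $b$ is exactly $M \cap C(r)$, so that the clique property of $C(r)$ transfers without loss.
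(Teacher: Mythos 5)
Your proposal is correct and follows essentially the same route as the paper: the paper's proof likewise combines the clique property of the boundaries corresponding to a strip-vertex (the fourth item of the strip-structure definition, giving the clique $C(r)$) with the non-adjacency between distinct occurrences in an induced $H$-matching, concluding that at most one occurrence can meet these boundaries and that its vertices there form a clique, which transfers to $H$ via the induced isomorphism. Your version merely spells out the token bookkeeping of the natural annotation more explicitly than the paper does.
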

\begin{proof}
By the definition of a strip-structure, the vertices of any boundary induce a clique in $G$. Therefore, in any induced $H$-matching of $G$, at most one occurrence of $H$ can have vertices in each boundary and these vertices form a clique. Hence, if $G$ has an induced $H$-matching $M$ of size $k$, $B$ is isomorphic to the covered subgraph $C_{M}$, and the base annotation is equal to the natural annotation of $C_{M}$, then the proposition statement is indeed true.
\qed\end{proof}
If $B$ does not satisfy either Condition~1 or~2, then we proceed to the next base. Hence, from now on, we assume that $B$ satisfies Condition~1 and~2.

\paragraph{Step~2:~Color Coding}
We now apply the technique of color-coding~\cite{AlonYZ1995}.
We use a set of colors $\Gamma_{B}$ that consists of:
\begin{itemize}
\item a unique color for each vertex of $B$;
\item a unique color for each edge of $B$ that is annotated as a spot by the base annotation (a \emph{spot color});
\item at most three unique colors for each edge of $B$ that is annotated as a stripe by the base annotation: one for the interior of the stripe (an \emph{interior color}) and one for each boundary (a \emph{boundary color}).
\end{itemize}
The colors for the vertices of $B$ are called \emph{vertex colors} and the colors for the edges of $B$ are called \emph{edge colors}.

Next, we describe the set $\mathcal E$ of elements that we wish to color. The set $\mathcal{E}$ consists of:
\begin{itemize}
\item an element for each strip-vertex of $\mathcal R$;
\item an element for each strip-edge of $\mathcal R$ that corresponds to a spot;
\item at most three elements for each strip-edge of $\mathcal R$ that corresponds to a stripe: an element for the interior of the stripe (an \emph{interior element)} and an element for each boundary (a \emph{boundary element}).
\end{itemize}
The elements for the strip-vertices of $\mathcal R$ are called \emph{vertex elements} and the elements for the strip-edges of $\mathcal{R}$ are called \emph{edge elements}. 

We call a \emph{$\Gamma_{B}$-coloring} any coloring of $\mathcal E$ with the colors of $\Gamma_{B}$. The following proposition is immediate from the work of Alon~et al.~\cite{AlonYZ1995} on perfect hash families.

\begin{proposition} \label{prp:algo:colorcoding}
There exists a family $\mathcal{F}_{B}$ of $\Gamma_{B}$-colorings such that for any subset $X$ of $|\Gamma_B|$ elements in $\mathcal E$ there is a subset of colorings in $\mathcal{F}_{B}$ that will assign distinct colors to $X$ in all possible ways. Moreover, the family $\mathcal{F}_{B}$ has size $f(k,h) \cdot n^{O(1)}$ and can be computed in $g(k,h) \cdot n^{O(1)}$ time, where $f,g$ are computable functions.
\qed\end{proposition}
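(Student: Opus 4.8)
The plan is to obtain $\mathcal{F}_B$ as the composition of a standard $q$-perfect hash family (in the sense of Alon, Yuster and Zwick~\cite{AlonYZ1995}) with all permutations of the color set $\Gamma_B$, where $q = |\Gamma_B|$. First I would record the two size bounds that make this work. Since the base hypergraph $B$ has at most $2hk$ vertices and at most $hk$ edges, and each vertex of $B$ contributes one color while each edge contributes at most three, we have $|\Gamma_B| = O(hk)$; the crucial point is that this is bounded by a function of $h$ and $k$ only, independently of $n$. On the other hand, the strip-structure produced by Theorem~\ref{thm:prelim:main} has polynomial size, so the set $\mathcal{E}$ (one vertex element per strip-vertex and at most three edge elements per strip-edge) satisfies $|\mathcal{E}| = n^{O(1)}$.

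Next I would invoke the perfect hash family construction of~\cite{AlonYZ1995}: there is a family $\mathcal{H}$ of functions $\mathcal{E} \to \{1,\dots,|\Gamma_B|\}$ of size $2^{O(|\Gamma_B|)} \log |\mathcal{E}|$, computable in time $2^{O(|\Gamma_B|)} |\mathcal{E}| \log |\mathcal{E}|$, such that for every $X \subseteq \mathcal{E}$ with $|X| = |\Gamma_B|$ some member of $\mathcal{H}$ is injective on $X$. Fix an arbitrary bijection between $\{1,\dots,|\Gamma_B|\}$ and $\Gamma_B$ and set $\mathcal{F}_B = \{\, \pi \circ h \mid h \in \mathcal{H},\ \pi \mbox{ a permutation of } \Gamma_B \,\}$. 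Then $|\mathcal{F}_B| = |\mathcal{H}| \cdot |\Gamma_B|! = 2^{O(hk \log(hk))} \cdot O(\log n)$, which is of the form $f(k,h) \cdot n^{O(1)}$ since $\log n \leq n$, and $\mathcal{F}_B$ is produced within $g(k,h)\cdot n^{O(1)}$ time by first building $\mathcal{H}$ and then composing each of its members with each of the $|\Gamma_B|!$ permutations.

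For correctness, given any $X \subseteq \mathcal{E}$ with $|X| = |\Gamma_B|$ and any prescribed assignment $\phi\colon X \to \Gamma_B$ of distinct colors to the elements of $X$, pick $h \in \mathcal{H}$ that is injective on $X$. Then $h$ restricted to $X$ is a bijection onto $\{1,\dots,|\Gamma_B|\}$, so there is a unique permutation $\pi$ of $\Gamma_B$ with $(\pi \circ h)|_X = \phi$, and $\pi \circ h \in \mathcal{F}_B$ realizes $\phi$. As $\phi$ was an arbitrary bijection $X \to \Gamma_B$, the family $\mathcal{F}_B$ assigns distinct colors to $X$ in all possible ways.

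The only mild subtlety, and the one point worth spelling out, is the phrase ``in all possible ways'': a bare perfect hash family guarantees merely \emph{some} injective coloring of $X$, and the permutation-composition step is exactly what upgrades this to covering \emph{every} bijection $X \to \Gamma_B$. This is affordable precisely because $|\Gamma_B|$ depends only on $h$ and $k$, so $|\Gamma_B|!$ and $2^{O(|\Gamma_B|)}$ are absorbed into $f(k,h)$ and $g(k,h)$; everything else is a direct application of~\cite{AlonYZ1995}.
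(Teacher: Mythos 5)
Your proposal is correct and follows the same route the paper intends: the paper treats the proposition as immediate from the perfect hash families of Alon, Yuster and Zwick~\cite{AlonYZ1995}, and your argument simply spells out the standard details (bounding $|\Gamma_B|$ by a function of $h$ and $k$, taking a perfect hash family on $\mathcal{E}$, and composing with all $|\Gamma_B|!$ permutations to realize every bijection on a fixed $|\Gamma_B|$-subset). The permutation-composition step you highlight is exactly the implicit content of the phrase ``in all possible ways,'' so nothing is missing.
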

We now consider each $\Gamma_{B}$-coloring in $\mathcal{F}_{B}$ in turn. Let $F$ denote the current $\Gamma_{B}$-coloring.

\paragraph{Step~3:~Blanking}
Ideally, from the coloring $F$, we would like to recover a subgraph of $\mathcal R$ that is isomorphic to~$B$. However, as discussed before, this is unlikely to succeed using a fixed-parameter algorithm. Instead, we look to obtain a (partial) surjection from $\mathcal{R}$ onto $B$, which is a necessary (but not sufficient) condition. Therefore, we remove the color of any vertex or edge element if the color is not consistent with this goal. A vertex or edge element without color is called \emph{blanked}, whereas an element that still has a color is called \emph{unblanked}.
To simplify the description, we will not distinguish between a strip-edge (strip-vertex) of $\mathcal{R}$ and its corresponding edge elements (vertex element). In particular, by blanking a strip-edge, we mean removing the color of all edge elements that correspond to that strip-edge.

To describe the blanking procedure, we need one further definition, namely the \emph{color sequence} of an edge $f$ of $B$. We distinguish three cases:
\begin{itemize}
\item If $f$ is on a vertex $p$, then the color sequence of $f$ is the vertex color of $p$, the boundary color of $f$, and the interior color of $f$ (in this order).
\item If $f$ is on two vertices $p,q$ and is annotated as a spot, then the color sequence of $f$ is the vertex color of $p$, the spot color of $f$, and the vertex color of $q$ (in this order).
\item If $f$ is on two vertices $p,q$ and is annotated as a stripe, then the color sequence of $f$ is the vertex color of $p$, the boundary color of $f$ for the boundary that corresponds to $p$, the interior color of $f$, the boundary color of $f$ for the boundary that corresponds to $q$, and the vertex color of $q$ (in this order).
\end{itemize}
We can construct a similar order on the elements of each strip-edge $e$ of $\mathcal{R}$; we call this the \emph{element sequence} of $e$.
Then we blank:
\begin{enumerate}
\item any strip-vertex that has not received a vertex color;
\item any strip-edge for which the ordered colors of its element sequence do not form a color sequence.
\end{enumerate}
Rule~1 is just common sense. Rule~2 has four important consequences. First, it ensures that each strip-edge is colored as expected, in the sense that a boundary element receives a boundary color, etc. Second, we can use the coloring to establish a surjection $\delta$ of unblanked strip-vertices and strip-edges of $\mathcal R$ onto vertices and edges of $B$. Third, for any unblanked strip-edge $e$ that contains a strip-vertex $x$, $\delta(e)$ contains $\delta(x)$ in $B$. In particular, the color of an unblanked strip-vertex determines the allowed colors for the unblanked strip-edges that contain the strip-vertex. Fourth, for any unblanked strip-vertex $x$, the boundary color of the boundary corresponding to $x$ of any strip-edge $e$ that contains $x$ is equal to the boundary color of the boundary of $\delta(e)$ corresponding to $\delta(x)$. In other words, the boundaries `line up' with the strip-vertices.

Let $F'$ denote the resulting coloring of $\mathcal E$ and let $\delta$ denote the resulting surjection. We note that if $\delta$ would not be a surjection (i.e.~there is a color of $\Gamma_{B}$ that does not appear in $F'$), then we can immediately proceed to the next coloring in $\mathcal{F}_{B}$.

\paragraph{Step~4:~Strip Interiors}
In this step, we want to find a largest number of occurrences of $H$ that are in the interior of a strip, conditioned on what the coloring says the boundary of the strip should look like.
To this end, consider all strip-edges $e$ that are unblanked in turn. Let $(J,Z)$ denote the strip corresponding to $e$.
We assume that $(J,Z)$ is a stripe with $|Z| = 2$; the other cases are similar.
Using the colors of $\delta(e)$, we can find the tokens assigned to the boundaries of the strip by the annotation.
Recall that by Condition~2, for each boundary, tokens of at most one token group are assigned.
Let $T^{1}$ denote the set of all tokens assigned by the annotation to $\delta(e)$ that belong to the token group that has a token assigned to the first boundary of the strip.
We define $T^{2}$ similarly with respect to the second boundary (note that possibly $T^{1} = T^{2}$).

We now enumerate all possible realizations $X$ in $J \setminus Z$ of $T^{1} \oplus T^{2}$ that are consistent with the annotation of $\delta(e)$. Here, a consistent realization of $T^{1}$ is a subgraph of $J \setminus Z$ that is isomorphic to the subgraph of $H$ induced by the vertices of $H$ that correspond to the tokens in $T^{1}$, where additionally, if a token is assigned to the first (second) boundary by the annotation, then its realization has to be a vertex in the first (second) boundary, and if a token is assigned to the interior by the annotation, then its realization has to be a vertex in the interior of the strip.
A consistent realization of $T^{2}$ is defined similarly. Then a consistent realization of $T^{1} \oplus T^{2}$ is a consistent realization of $T^{1}$ and, if $T^{1} \not= T^{2}$, then also a consistent realization of $T^{2}$ containing only vertices that are distinct and not adjacent to the vertices of the realization of $T^{1}$.
If no consistent realization of $T^{1} \oplus T^{2}$ exists, then we blank the strip-edge, update $F'$ and $\delta$, and proceed to the next strip-edge. Using exhaustive enumeration, this procedure takes $n^{O(h)}$ time.

For each consistent realization $X$, we remove $N[X \cup Z]$ from $J$, and call the resulting graph $J'$.
From Theorem~\ref{thm:prelim:main}, we know that $\alpha(J') \leq 4$ or $J'$ is a fuzzy circular-arc graph.
Hence, it follows from Theorem~\ref{thm:algo:circ} and Proposition~\ref{prp:algo:alpha} that we can solve {\sc Induced Graph Matching} on~$J'$ in polynomial time.
Denote the solution for $X$ by $M^{e}(X)$.

Let~$X^{e}_{\max}$ denote a consistent realization $X$ such that $|M^{e}(X)|$ is maximum. 
Let $k' = \sum_{e} |M^{e}(X^{e}_{\max})|$, where the sum is over all strip-edges $e$ for which the corresponding edge elements have not been blanked. Intuitively, this means that we can find $k'$ occurrences of $H$ that are in the interiors of strips and that are conditioned on what the coloring says the boundary of the strips should look like.

\paragraph{Step~5:~Strip Global}
In this step, we find occurrences of $H$ that span multiple strips or affect a strip boundary.
To this end, we color the vertices of $G$ with a set of colors $\Gamma_{k}=\{0,\ldots,k\}$ as follows.
For each unblanked strip-edge $e$, let $(J,Z)$ denote the corresponding strip.
Recall that $X^{e}_{\max}$ consists of vertices that are realizations of tokens of at most two token groups, say token groups $i,j \in \{1,\ldots,k\}$.
We first color the vertices of $X^{e}_{\max}$ (in~$G$) by colors $i$ and $j$ from $\Gamma_{k}$ in the obvious way (i.e.~vertices from token group~$i$ receive color $i$), and color all remaining vertices of $V(J) \setminus Z$ with color $0$.
We then attempt to find one occurrence of $H$ in each color class of $\Gamma_{k} \setminus \{0\}$ independently.
This takes $n^{O(h)}$ time in total.
If this yields at least $k - k'$ occurrences of $H$, then we claim that $G$ has an induced $H$-matching of size $k$.
Otherwise, we proceed to the next coloring or base; if there are none left, then we return a ``no''-answer.

\medskip\noindent
The correctness of the algorithm is proved below.
  
\begin{theorem}
\label{thm:algo:main}%
{\sc Induced Graph Matching} is fixed-parameter tractable on claw-free graphs when parameterized by the size $k$ of the matching, for any fixed connected graph $H$.
\end{theorem}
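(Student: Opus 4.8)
The plan is to establish correctness and the running-time bound of the five-step algorithm described above; since $h=|V(H)|$ is a constant, fixed-parameter tractability in $k$ then follows. First dispose of the two easy cases: if $\alpha(G)\le 4$ apply Proposition~\ref{prp:algo:alpha}, and if $G$ is a fuzzy circular-arc graph apply Theorem~\ref{thm:algo:circ}, both in polynomial time; otherwise Theorem~\ref{thm:prelim:main} yields in polynomial time a strip-structure $\mathcal R$ each of whose strips is a spot or a stripe that is fuzzy circular-arc or has independence number at most $4$. For the running time, multiply the costs of the steps: $(hk)^{O(hk)}$ bases by Proposition~\ref{prp:algo:base}, each tested against Conditions~1 and~2 in polynomial time; a family $\mathcal F_B$ of colourings of size $f(k,h)\cdot n^{O(1)}$ per base by Proposition~\ref{prp:algo:colorcoding}; blanking in polynomial time per colouring; Step~4 in $n^{O(h)}$ time per colouring, where the crucial point is that each residual graph $J'=J-N[X\cup Z]$ is again fuzzy circular-arc or has $\alpha(J')\le 4$, so Theorem~\ref{thm:algo:circ} and Proposition~\ref{prp:algo:alpha} apply to it; and Step~5 in $n^{O(h)}$ time by exhaustive enumeration. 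The product is $f'(k,h)\cdot n^{O(h)}$, which is FPT in $k$ for fixed $H$.

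For soundness, suppose some execution outputs ``yes'' for a base $B$ and a colouring $F$. Let $\mathcal M_1=\bigcup_e M^{e}(X^{e}_{\max})$ be the occurrences found in Step~4 and let $\mathcal M_2$ be the occurrences found in Step~5, one inside each of at least $k-k'$ colour classes. The plan is to verify that $\mathcal M_1\cup\mathcal M_2$ is an induced $H$-matching of size at least $k$. Within $\mathcal M_1$ this is immediate since interiors of distinct strips are anti-complete and each $M^{e}(\cdot)$ is an induced $H$-matching of $J_e-N[X^{e}_{\max}\cup Z_e]$. For two occurrences in distinct colour classes $i\ne j$, any edge between a vertex $u$ realising a token of group $i$ and a vertex $u'$ realising a token of group $j$ would either lie inside a single strip, where a consistent realisation of $T^{1}\oplus T^{2}$ keeps the two groups anti-adjacent, or go between two strips, where by the definition of a strip-structure it must lie in a clique $C(r)$ on a common strip-vertex; but then the blanking rules force both $u$ and $u'$ to realise boundary tokens at $\delta(r)$, so Condition~2 forces $i=j$, a contradiction. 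The same dichotomy, combined with the fact that $M^{e}(\cdot)$ avoids every boundary of its own strip, rules out edges between $\mathcal M_1$ and $\mathcal M_2$; disjointness holds because the sets $V(J_e)\setminus Z_e$ partition $V(G)$ and $M^{e}(\cdot)$ is disjoint from $X^{e}_{\max}$.

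For completeness, suppose $G$ has an induced $H$-matching $M$ of size $k$. By Propositions~\ref{prp:algo:covered},~\ref{prp:algo:annotation}, and~\ref{prp:algo:base-k} there is a ``good'' base whose hypergraph is isomorphic to the covered subgraph $C_M$ and whose annotation is the natural annotation of $C_M$; it passes Conditions~1 and~2 by Propositions~\ref{prp:algo:cond1} and~\ref{prp:algo:cond2}. The embedding of $C_M$ into $\mathcal R$ induced by $M$ singles out $|\Gamma_B|$ elements of $\mathcal E$, and by Proposition~\ref{prp:algo:colorcoding} some colouring in $\mathcal F_B$ colours exactly these elements consistently with the embedding; fix it. After blanking, the embedded strip-edges and strip-vertices survive and $\delta$ agrees with the embedding. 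Split the occurrences of $M$ into the $k_1$ that lie wholly in some strip interior and the other $k_2=k-k_1$. Plugging the restriction of $M$ to an embedded strip $e$ in for $X$ shows $|M^{e}(X^{e}_{\max})|$ is at least the number of interior occurrences of $M$ in strip $e$, hence $k'\ge k_1$. For Step~5 the decisive observation is that, because $H$ is connected, every boundary-touching occurrence $H_i$ meets a boundary of \emph{every} strip it intersects, so all tokens of group $i$ are realised inside colour class $i$; and because every clique $C(r)$ on a strip-vertex is complete, the partial realisations chosen in different strips in Step~4 glue automatically, so the subgraph of $G$ induced by colour class $i$ has exactly the adjacencies dictated by the natural annotation and is therefore isomorphic to $H$. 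Hence Step~5 recovers at least $k_2\ge k-k'$ occurrences, the test succeeds, and the algorithm outputs ``yes''.

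I expect the completeness direction, and within it the interaction between Steps~4 and~5, to be the main obstacle: one must show that the strip-by-strip greedy maximisation in Step~4 never spoils the colour classes needed in Step~5, no matter how ties among consistent realisations are broken. The structural fact that makes this go through is that the boundaries attached to a strip-vertex $r$ all lie inside the single clique $C(r)$, so partial realisations in different strips are automatically adjacent exactly where the pattern $H$ requires, while Conditions~1 and~2 on the base prevent any spurious adjacency; turning this ``automatic gluing'' into a rigorous argument, carefully coordinated with the bookkeeping of tokens, token groups, boundaries, and the surjection $\delta$ produced by blanking, is where most of the work lies.
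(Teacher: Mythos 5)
Your proposal follows essentially the same route as the paper's proof: the same two-claim structure (soundness of combining Steps~4 and~5, using that every clique $C(r)$ can meet tokens of only one token group so a vertex of colour $i\neq 0$ is adjacent only to colours $i$ and $0$; completeness via the good base, a perfect-hash colouring extending the natural colouring, interior token groups giving $k'\geq k_1$, and the automatic gluing of boundary realisations across strips), together with the same running-time accounting. The only slip is the assertion that colour class $i$ \emph{induces} a copy of $H$ --- since $\delta$ is merely a surjection, spuriously unblanked strip-edges may contribute extra vertices of colour $i$ --- but Step~5 only needs each class to \emph{contain} an occurrence (namely the realisations inside the embedded copy of $B$), which your gluing argument already yields and which is exactly the level at which the paper argues.
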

\begin{proof}
To prove the theorem, we prove two claims:
\begin{enumerate} \renewcommand{\theenumi}{(\roman{enumi})}
\item If $k-k'$ occurrences of $H$ are found in Step~5, then $G$ has an induced $H$-matching of size $k$.
\item If $G$ has an induced $H$-matching of size $k$, then the algorithm will find one.
\end{enumerate}
The first claim ensures that if the algorithm returns something (that is not a ``no''-answer), then it does so correctly. The second claim speaks for itself. Together, the claims demonstrate the correctness of the algorithm described in this section.

\medskip
To prove Claim (i), it suffices to consider only Step~4 and Step~5. By the definition of a strip-structure, there are no edges between two vertices in the interior of different strips. Since $M^{e}(X^{e}_{\max})$ contains an induced $H$-matching in $J \setminus N[X^{e}_{\max} \cup Z]$, where $e$ is some unblanked strip-edge and $(J,Z)$ is the corresponding strip, the occurrences of $H$ in each of these induced $H$-matchings are independent. Hence, $K' := \bigcup_{e} M^{e}(X^{e}_{\max})$ is an induced $H$-matching of size $k'$, where the union is over all unblanked strip-edges $e$.
By construction, only vertices that belong to $X^{e}_{\max}$ will receive a color in $\Gamma_{k} \setminus\{0\}$. Since $M^{e}(X^{e}_{\max})$ does not contain any vertices from $J \setminus N[X^{e}_{\max} \cup Z]$ by construction, where again $e$ is some unblanked strip-edge and $(J,Z)$ is the corresponding strip, it follows that the occurrences of $H$ that are found in Step~5 are independent from those in $K'$.

It remains to show that the occurrences of $H$ found in Step~5 are independent of each other. Note that by Condition~2, for each vertex $b$ of $B$, the boundaries corresponding to $b$ of the edges incident to $b$ can contain tokens of at one most token group. Using Rule~2 and its third and fourth consequences, this means that for any strip-vertex $r$ of $\mathcal{R}$ all vertices of $C(r)$ receive a color in $\{0,j\}$, for some color $j \not= 0$. As the realizations of the two token groups $T^{1}$ and $T^{2}$ (if they are distinct) in a strip are independent, it follows that a vertex with color $i \in \Gamma_{k} \setminus\{0\}$ can only be adjacent to vertices of color $i$ or~$0$. Hence, the occurrences of $H$ found in Step~5 are independent of each other. This proves Claim (i).

\medskip
To prove Claim (ii), suppose that $G$ has an induced $H$-matching $M$ of size $k$. Consider the base promised by Proposition~\ref{prp:algo:base-k} and denote its hypergraph by $B$. By Proposition~\ref{prp:algo:cond1} and~\ref{prp:algo:cond2}, the base satisfies both Condition~1 and Condition~2. Using this base, we can define a natural coloring $\sigma$ using $\Gamma_{B}$ of the elements of $\mathcal{E}$ that correspond to the strip-vertices and strip-edges of the covered subgraph. By Proposition~\ref{prp:algo:colorcoding}, there is a coloring $F$ in $\mathcal{F}_{B}$ that contains $\sigma$ as a sub-coloring. We now only consider the steps of the algorithm for this choice of $B$ and $F$. Observe that after applying Rules~1 and 2, the resulting coloring $F'$ will still contain $\sigma$ as a sub-coloring, as by the third consequence of Rule~2, the colors of the strip-vertices determine the allowed colors for the incident strip-edges. Note that the surjection $\delta$ now contains (but is not necessarily equal to) a subgraph of $\mathcal{R}$ isomorphic to $B$.

In Step~4 of the algorithm, each unblanked strip-edge $e$ is treated individually, and therefore, we focus only on those strip-edges that are colored by $\sigma$. Let $e$ be such a strip-edge and $(J,Z)$ its corresponding strip. We know that $M$ yields a consistent realization $X^{e}_{M}$ of $T^{1} \oplus T^{2}$. Now observe that by Condition~1 and since $H$ is connected, if a token group has a token assigned to the interior of $\delta(e)$ but no tokens to the boundaries, then all tokens of the token group are assigned to the interior by the annotation of $\delta(e)$. We call these \emph{interior token groups}. We again know that $M$ contains an induced $H$-matching of $J \setminus N[X^{e}_{M} \cup Z]$ of size equal to the number of interior token groups of $e$. Then it follows that $k' = \sum_{e} |M^{e}(X^{e}_{\max})| \geq \sum_{e} |M^{e}(X^{e}_{M})|$ (where the first sum is over all unblanked strip-edges, and the second over all strip-edges that are colored by $\sigma$), which in turn is at least the number $t$ of interior token groups of the annotation.

In Step~5 of the algorithm, we note $\bigcup_{e} X^{e}_{M}$, where the union is over all strip-edges $e$ colored by $\sigma$, contains $k-t$ independent occurrences of $H$. Therefore, $\bigcup_{e} X^{e}_{\max}$ also contains $k-t$ independent occurrences of~$H$. In Step~5, the vertices of these occurrences will be given distinct colors of $\Gamma_{k}$, as the corresponding tokens belong to distinct token groups. It follows that Step~5 finds at least $k-t$ independent occurrences of $H$. As argued in the proof of Claim (i), this yields an induced $H$-matching of size $k$. This proves Claim~(ii).

\medskip
Finally, we need to show the running time. In Step~1, we need to enumerate all bases, which can be done in $(hk)^{O(hk)}$ time by Proposition~\ref{prp:algo:base}. Condition~1 and~2 can be checked in polynomial time. In Step~2, we need to enumerate all $\Gamma_{B}$-colorings of $\mathcal{F}_{B}$, which can be done in $g(k,h) \cdot n^{O(1)}$ time by Proposition~\ref{prp:algo:colorcoding}. The computations in Step~3--5 can each be performed in $n^{O(h)}$ time, using Theorem~\ref{thm:algo:circ} and Proposition~\ref{prp:algo:alpha}. Hence, the problem is indeed fixed-parameter tractable when parameterized by $k$ for any fixed connected graph $H$.
\qed\end{proof}

\section{Polynomial Kernel for Induced Graph Matching on Claw-Free Graphs when $H$ is Complete}
\label{sec:polynomialkernel}

The main result of this section is the following theorem.
\begin{theorem}
\label{thm:kernel:main}%
{\sc Induced Graph Matching} admits a polynomial kernel on claw-free graphs when parameterized by the size $k$ of the matching, for any fixed complete graph $H$.
\end{theorem}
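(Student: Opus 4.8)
The plan is to feed the decomposition theorem (Theorem~\ref{thm:prelim:main}) into reduction rules that shrink the resulting decomposition structure, and then to re-encode the shrunken structure as a small claw-free instance. Write $H = K_h$ throughout, so every occurrence of $H$ in $G$ is a clique on $h$ vertices.

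First I would dispose of the easy cases. If $G$ is disconnected, kernelize its components separately and add the partial budgets, since occurrences in distinct components are automatically compatible; so assume $G$ is connected. If $\alpha(G) \le 4$ or $G$ is a fuzzy circular-arc graph, then by Proposition~\ref{prp:algo:alpha} and Theorem~\ref{thm:algo:circ} the instance can be decided in polynomial time, so output a constant-size equivalent instance. Otherwise, apply Theorem~\ref{thm:prelim:main} to get, in polynomial time, a strip-structure in which every strip is a spot or a stripe $(J,Z)$ with $1 \le |Z| \le 2$ that is a fuzzy circular-arc graph or has $\alpha(J) \le 4$; in both subcases {\sc Induced Graph Matching} can be solved on $J$ -- and on $J$ with a bounded vertex set deleted -- in polynomial time, again by Theorem~\ref{thm:algo:circ} and Proposition~\ref{prp:algo:alpha}.

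The next ingredient is a localization lemma: using the last axiom of Definition~\ref{def:strip-structure} together with the facts that in a stripe each vertex lies in at most one boundary and a spot contributes a single vertex, one shows that every clique of $G$ -- in particular every occurrence of $H$ -- is contained either in a single subgraph $G[V(J_e) \setminus Z_e]$ or in a single clique $C(r)$ on a strip-vertex $r$; in the first case it interacts with the rest of $G$ only through the cliques $C(r)$ of the strip-vertices whose boundary it meets. Occurrences living in the interiors of distinct strips are pairwise compatible, and at most one occurrence of a solution can meet any given clique $C(r)$. With this in hand I would apply reduction rules that bring the total size of the decomposition structure down to $O(h^4 k^2)$: \emph{(1)} if at least $k$ strip-edges contain an occurrence of $H$ in their interior, output a trivial yes-instance; otherwise for each of the at most $k-1$ remaining such strip-edges $e$, precompute in polynomial time, for every ``pattern'' a solution could impose on the (at most two) boundaries of $J_e$, the maximum number of boundary-compatible interior occurrences of $H$ in $J_e$, capped at $k$; \emph{(2)} since an induced $H$-matching of size $k$ meets at most $hk$ cliques $C(r)$, mark a bounded family of relevant strip-vertices and, around each, a bounded number of relevant strip-edges and relevant vertices (at most $O(h^2 k)$ in total, using that each $C(r)$ and each boundary is a clique from which a solution uses at most $h$ vertices), and discard everything unmarked. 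After these rules the decomposition structure has $O(h^4 k^2)$ strip-vertices, strip-edges and vertices, with a capped count attached to each surviving ``interior'' strip-edge.

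On the compressed structure, {\sc Induced Graph Matching} becomes the combinatorial problem of selecting a pairwise-compatible family of occurrences of maximum total value -- interior strip-edges contributing their capped counts, occurrences meeting a clique $C(r)$ contributing one each -- which is exactly a {\sc Weighted Independent Set} instance on the conflict graph of the polynomially many surviving relevant occurrences and strip-edges, with integer weights bounded by $k$, hence of size polynomial in $k$. Finally, this weighted instance is re-encoded as an instance of {\sc Induced Graph Matching} with the same pattern $H = K_h$ on a claw-free graph of size polynomial in $k$: a weight-$w$ item is realized by $w$ pairwise non-adjacent copies of $K_h$ and each conflict by making the relevant copy-sets completely adjacent, inserting gadgets where a high-multiplicity item would otherwise create a claw. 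The hard part is precisely this last encoding together with rule~\emph{(2)}: one must verify that deleting the unmarked strip-vertices, strip-edges and vertices preserves the existence of an induced $H$-matching of size $k$ -- the interplay between occurrences that meet a $C(r)$, boundary-touching interior occurrences, and the capped interior counts has to be tracked exactly -- and that the weighted instance admits a claw-free realization with the same pattern.
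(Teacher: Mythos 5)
Your overall route --- decomposition via Theorem~\ref{thm:prelim:main}, shrinking the strip-graph, compressing to \textsc{Weighted Independent Set}, and translating back to \textsc{Induced Graph Matching} on claw-free graphs --- matches the paper, but two of your steps have genuine gaps. The first is the ``localization lemma'': it is false that every occurrence of $H=K_h$ lies inside a single $G[V(J_e)\setminus Z_e]$ or a single clique $C(r)$. The last axiom of Definition~\ref{def:strip-structure} only covers \emph{edges}, not cliques: an occurrence can combine a vertex of $N(z_e^x)\setminus C(y)$ and a vertex of $N(z_e^y)\setminus C(x)$ that are adjacent inside a stripe $e$ on $x,y$ with vertices of spots on $x,y$ (which are adjacent to all of $C(x)\cup C(y)$), or it can consist solely of spot vertices spanning three strip-vertices $w,x,y$ whose strip-edges form a triangle in $\mathcal{R}$. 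The paper's structural observation is correspondingly weaker (the covered subgraph of a single occurrence is a star or a triangle), and the bulk of the proof of Theorem~\ref{thm:kernel:crucial} --- the type IIa, IIb and III selection vertices and their consistency edges --- exists precisely to coordinate these cross-strip occurrences. A compression built on your localization claim can answer ``no'' on yes-instances whose solutions use such occurrences, so your rule (2) and the conflict-graph encoding are not correct as stated.

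The second gap is the back-translation. You sketch an explicit gadget (a weight-$w$ item becomes $w$ independent copies of $K_h$, conflicts become complete adjacency) and yourself flag it as unverified; indeed, if a conflicting item has weight at least three, a vertex of one item sees three pairwise non-adjacent copies of the other and a claw appears, so the unspecified repair gadgets would have to be constructed and proved both correct and claw-free. The paper avoids this entirely: since \textsc{Weighted Independent Set} is in $\mathsf{NP}$ and \textsc{Induced Graph Matching} on claw-free graphs is $\mathsf{NP}$-hard for every fixed complete $H$ with $|V(H)|\ge 2$, a Karp reduction exists and maps the $\mathrm{poly}(k)$-size, weight-at-most-$k$ \textsc{Weighted Independent Set} instance to a $\mathrm{poly}(k)$-size claw-free instance, with the case $|V(H)|=1$ solved outright in polynomial time. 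Substituting this generic argument would close the second gap, but the first one requires redoing the compression along the lines of the paper's star-or-triangle analysis (and, for the strip-graph bound itself, the exchange arguments behind the dis-degree and reduction-step lemmas rather than an unproven marking scheme).
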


Let $(G,H,k)$ be an instance of {\sc Induced Graph Matching}, where $G$ is a claw-free graph and $H$ is a complete graph. Let $n$ denote the number of vertices of $G$ and let $h$ denote the number of vertices of $H$.
Throughout, we assume that $\alpha(G) > 4$ and that $G$ is not a fuzzy-circular arc graph.
Otherwise, we can solve {\sc Induced Graph Matching} in polynomial time by Theorem~\ref{thm:algo:circ} and Proposition~\ref{prp:algo:alpha}, and reduce to a trivial ``yes''- or ``no''-instance.
We can then also assume that we have used Theorem~\ref{thm:prelim:main} and computed a strip-structure with a corresponding strip-graph~$\mathcal R$.

We start by showing how to bound the size of the strip-graph in Section~\ref{sec:poly:bound}, before giving the actual kernel in Section~\ref{sec:poly:kernel}.

\subsection{Bounding the Size of the Strip-Graph} \label{sec:poly:bound}
In this section, we show that we can assume that~$\mathcal R$ has a bounded number of strip-edges. We start by removing some vertices of $G$ that are irrelevant to the problem. The following proposition is trivial.

\begin{proposition}
\label{prp:kernel:easy-red}%
  If a vertex $v \in V(G)$ is not contained in any occurrence of $H$ in $G$, then $G$ has an induced $H$-matching of size $k$ if and only if $G - \{v\}$ has an induced $H$-matching of size $k$.
\end{proposition}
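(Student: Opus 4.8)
The statement to prove is Proposition~\ref{prp:kernel:easy-red}, which is flagged as ``trivial'' in the text. The plan is therefore to give the short, direct argument establishing both directions of the equivalence, being careful about the fact that $G - \{v\}$ is still a claw-free graph (so the statement is meaningful as a reduction within the problem class) and that the size parameter $k$ is unchanged.

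First I would observe that one direction is entirely immediate: any induced $H$-matching of $G - \{v\}$ is also an induced $H$-matching of $G$, since adding an isolated-from-the-matching vertex back to the ambient graph does not destroy the property that the $H_i$ are induced, pairwise vertex-disjoint, and pairwise non-adjacent. (Formally, $G - \{v\}$ is an induced subgraph of $G$, and the defining conditions of an induced $H$-matching are preserved under passing from an induced subgraph to a supergraph as long as the matching avoids the new vertex, which it trivially does here.) Hence if $G - \{v\}$ has an induced $H$-matching of size $k$, so does $G$.

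For the converse, suppose $M = \{H_1, \ldots, H_k\}$ is an induced $H$-matching of size $k$ in $G$. The hypothesis is that $v$ is not contained in any occurrence of $H$ in $G$; in particular $v \notin V(H_i)$ for every $i$, because each $H_i$ is by definition an occurrence of $H$ in $G$. Therefore $V(H_i) \subseteq V(G) \setminus \{v\} = V(G - \{v\})$ for all $i$, and since $G - \{v\}$ is the induced subgraph of $G$ on this vertex set, each $H_i$ is also an induced subgraph of $G - \{v\}$ isomorphic to $H$, i.e.\ an occurrence of $H$ in $G - \{v\}$. The pairwise vertex-disjointness and the non-adjacency condition $\{u,w\} \notin E(G - \{v\})$ for $u \in V(H_i)$, $w \in V(H_j)$, $i \neq j$, follow immediately from the corresponding conditions in $G$ since $E(G - \{v\}) \subseteq E(G)$. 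Thus $M$ is an induced $H$-matching of size $k$ in $G - \{v\}$, completing the equivalence.

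There is essentially no obstacle here; the only point that merits a word is why this is a legitimate reduction rule, namely that $G - \{v\}$ remains claw-free (immediate, since the class of claw-free graphs is closed under vertex deletion, as already noted in the preliminaries), so the reduced instance is again an instance of {\sc Induced Graph Matching} on claw-free graphs with the same parameter $k$. I would state the proof in two or three sentences along exactly these lines.
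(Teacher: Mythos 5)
Your proof is correct and is exactly the direct two-way argument the paper has in mind; the paper itself gives no proof, simply declaring the proposition trivial, and your elementary verification (the matching avoids $v$ in one direction, and induced $H$-matchings of an induced subgraph remain induced $H$-matchings of $G$ in the other) is the intended justification.
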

We show that some vertices can be removed even if they are contained in an occurrence of $H$ in $G$. For a strip-vertex $r \in V(\mathcal R)$, let $D_{r} = \bigcup_{e \in \mathcal R \mid r \in e} e \setminus\{r\}$ denote the set of distinct neighbors of $r$ in $\mathcal R$. We then define the \emph{dis-degree} of $r$ as the size of $D_{r}$, i.e.~this is the number of distinct neighbors of $r$ in~$\mathcal R$.

The idea behind the following lemma is reminiscent of arguments by Prieto and Sloper~\cite[Lemma~4]{PrietoSloper2006} and Dell and Marx~\cite[Observation~5.1]{DellMarx2012}.

\begin{lemma}
\label{lem:kernel:dis-degree}%
Let $r \in \mathcal R$ be a vertex in the strip-graph of dis-degree at least $2h(k-1) + h$.
Then $G$ has an induced $H$-matching of size $k$ if and only if $G' := G - C(r)$ has an induced $H$-matching of size $k-1$.
\end{lemma}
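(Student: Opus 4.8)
The plan is to prove both directions of the equivalence, where the forward direction is essentially trivial and the backward direction is where all the work lies. For the forward direction, suppose $G$ has an induced $H$-matching $M$ of size $k$. Since $C(r)$ is a clique in $G$ and $H$ is complete (hence, since $|V(H)| = h \geq 1$, any occurrence of $H$ is a clique), at most one occurrence of $H$ in $M$ can have a vertex in $C(r)$. Removing that occurrence (if it exists) leaves an induced $H$-matching of size at least $k-1$ in $G - C(r) = G'$; if no occurrence touches $C(r)$, then $M$ itself restricts to an induced $H$-matching of size $k$ in $G'$, which certainly contains one of size $k-1$. So $G'$ has an induced $H$-matching of size $k-1$.

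The backward direction is the main obstacle. Suppose $G'$ has an induced $H$-matching $M'$ of size $k-1$; I want to extend it by one more occurrence of $H$ that lives ``near $r$'' and is independent from $M'$. The idea, following Prieto--Sloper and Dell--Marx, is that the dis-degree being large means $r$ has so many distinct neighbours in $\mathcal{R}$ that $M'$ cannot ``block'' all of them. First I would observe that since $r$ has dis-degree at least $2h(k-1)+h$, there are at least that many distinct strip-vertices in $D_r$, and correspondingly many strip-edges incident to $r$ (via the strip-edges witnessing each neighbour). Each such strip-edge $e$ with $r \in e$ has a boundary $N(z_e^r)$, and the union of these boundaries over all $e \ni r$ is exactly the clique $C(r)$; moreover each such boundary together with the interior of the strip $J_e$ gives ``room'' for an occurrence of $H$ that uses a vertex of $C(r)$. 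The key structural point is: $M'$ consists of $k-1$ occurrences of $H$, each of which is a clique of size $h$, and each vertex of $M'$ can be incident (in $G$) to vertices of $C(r)$ or its neighbouring strips only through a bounded number of strip-vertices. Concretely, each occurrence of $H$ in $M'$ is contained in at most $h$ strip-edges (one per vertex, since $H$ is a clique they are actually in very few), and hence can ``touch'' at most $2h$ strip-vertices; so $M'$ touches at most $2h(k-1)$ strip-vertices in total. Since $D_r$ has more than $2h(k-1)$ elements, there is a strip-vertex $r' \in D_r$ that is not touched by $M'$, witnessed by a strip-edge $e$ with $r, r' \in e$.

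With such an untouched $e \ni r$ in hand, the final step is to build the extra occurrence of $H$. Since $H$ is complete on $h$ vertices and $G$ had an induced $H$-matching of size $k$ before (or: since we have not deleted the reason $r$ had high dis-degree), I need to exhibit an $h$-clique using a vertex of $C(r)$ that is anti-complete to $V(M')$. The boundary $N(z_e^r) \subseteq C(r)$ together with the interior of $J_e$ forms a claw-free graph (namely $J_e$ with $Z_e$ deleted, which is isomorphic to the corresponding induced subgraph of $G$), and the adjacency of any vertex of $J_e \setminus Z_e$ to the rest of $G$ is confined to $C(r)$, $C(r')$, and the cliques on the at most two strip-vertices of $e$ — all of which, by the choice of $e$, are untouched by $M'$ except possibly $C(r)$ itself. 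Here I would need the combinatorial bookkeeping: we actually want $h$ copies of the ``$C(r)$-side'' available, which is why the bound is $2h(k-1) + h$ and not merely $2h(k-1)+1$: after discarding the $\le 2h(k-1)$ strip-vertices blocked by $M'$, we still have $\ge h$ distinct neighbours of $r$, and we can assemble an $h$-clique in $G$ contained in $C(r) \cup N[\text{these strips}]$ that avoids $V(M')$; I expect the cleanest way is to note that each of the $\ge h$ surviving strip-edges contributes at least one vertex to $C(r)$, and since $C(r)$ is a clique, any $h$ such vertices form an $h$-clique, i.e.\ an occurrence of $H$, independent from $M'$. Adding this occurrence to $M'$ yields an induced $H$-matching of size $k$ in $G$. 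The main delicacy — and the step I expect to have to be most careful about — is verifying that the chosen $h$ vertices of $C(r)$ are genuinely anti-complete to $V(M')$: this uses the last two items of the strip-structure definition (Definition~\ref{def:strip-structure}) to argue that any edge from $C(r)$ to $V(M')$ would force $V(M')$ to meet one of the $\le 2h(k-1)$ blocked strip-vertices or the clique $C(r)$ on $r$ itself, and one must handle the possibility that $M'$ already uses vertices of $C(r)$ (in which case at most one occurrence of $M'$ does so, and we still have enough surviving strip-edges disjoint from it since $h \ge 1$ gives the needed slack, or more carefully we reserve an extra block of $\le h$ strip-vertices, which is exactly the ``$+h$'' in the hypothesis).
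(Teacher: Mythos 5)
Your proposal is correct and follows essentially the same route as the paper: bound the strip-vertices ``touched'' (covered) by the $k-1$ occurrences by $2h(k-1)$, use the dis-degree bound to find at least $h$ untouched neighbours of $r$, pick one boundary vertex of $C(r)$ from each corresponding (uncovered) strip, and note these form a copy of $K_h$ inside the clique $C(r)$ that is anti-complete to the matching. The only superfluous worry is the case that $M'$ uses vertices of $C(r)$, which cannot occur since $M'$ lives in $G-C(r)$; otherwise the bookkeeping and the role of the ``$+h$'' are exactly as in the paper's argument.
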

\begin{proof}
If $G$ has an induced $H$-matching $M$ of size $k$, then certainly $G'$ has an induced $H$-matching of size~$k-1$, since at most one occurrence of $H$ in~$M$ can include vertices of the clique $C(r)$.
For the converse direction, let~$M$ be an induced $H$-matching of $G'$ of size $k-1$.
Note that $M$ is also an induced $H$-matching of $G$.
Recall the definition of the covered subgraph of $\mathcal{R}$ with respect to $M$ from Definition~\ref{def:algo:covered}.
Let $M_{r} \subseteq D_{r}$ denote the set of strip-vertices contained in the covered subgraph.
By Proposition~\ref{prp:algo:covered}, $|M_{r}| \leq 2h(k-1)$.
For each strip-vertex $x$ of $D_{r} \setminus M_{r}$, pick a vertex of $C(r)$ that is contained in a strip corresponding to a strip-edge on $x,r$. Note that each such strip-edge is not covered.
Let $X$ denote the set of picked vertices.
As $|D_{r}| \geq 2h(k-1) + h$ by assumption, $|X| = |D_{r} \setminus M_{r}| \geq h$. Since $X$ consists of vertices of $C(r)$, $G[X]$ contains an occurrence of $H$. By picking vertices from strips corresponding to strip-edges that are not covered and whose strip-vertices are also not covered, we can guarantee that this occurrence of $H$ is independent of the occurrences of $H$ in $M$.
Hence, $G$ has an induced $H$-matching of size $k$.
\qed
\end{proof}

We now distinguish two types of strip-edges: those that are promising and those that are not. We call a strip-edge $e \in E(\mathcal R)$ \emph{promising} if it corresponds to a strip $(J_{e},Z_{e})$ such that $J \setminus N[Z_{e}]$ contains an occurrence of $H$; otherwise, the strip-edge is \emph{non-promising}. We will bound the number of strip-edges of both types, starting with promising strip-edges.

\begin{proposition}
\label{prp:kernel:promising}
If the strip-graph has at least $k$ promising strip-edges, then $G$ has an induced $H$-matching of size $k$.
\end{proposition}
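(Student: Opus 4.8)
The plan is to directly construct an induced $H$-matching of size $k$ by placing one occurrence of $H$ inside the interior of each of $k$ distinct promising strip-edges, and then checking that these occurrences are pairwise vertex-disjoint and pairwise non-adjacent. Completeness of $H$ plays no role; the argument is purely a consequence of the five axioms in Definition~\ref{def:strip-structure}.

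First I would fix $k$ promising strip-edges $e_{1},\ldots,e_{k} \in E(\mathcal R)$, which exist by hypothesis, with corresponding strips $(J_{e_{i}},Z_{e_{i}})$. By the definition of a promising strip-edge, the interior $J_{e_{i}} - N[Z_{e_{i}}]$ contains an occurrence of $H$; since the definition of a strip guarantees that $J_{e_{i}}[V(J_{e_{i}}) \setminus Z_{e_{i}}]$ and $G[V(J_{e_{i}}) \setminus Z_{e_{i}}]$ are isomorphic, this occurrence corresponds to an induced subgraph $H_{i}$ of $G$ isomorphic to $H$ whose vertex set lies in $V(J_{e_{i}}) \setminus N[Z_{e_{i}}]$. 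Vertex-disjointness of $H_{1},\ldots,H_{k}$ is then immediate: the sets $V(J_{e}) \setminus Z_{e}$ partition $V(G)$ (first item of Definition~\ref{def:strip-structure}), so $V(H_{i})$ and $V(H_{j})$ lie in distinct parts whenever $i \neq j$.

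The one step requiring a (short) argument is that no edge of $G$ runs between $V(H_{i})$ and $V(H_{j})$ for $i \neq j$. I would argue by contradiction: suppose $u \in V(H_{i})$ is adjacent to $v \in V(H_{j})$. By the last item of Definition~\ref{def:strip-structure}, either $u,v \in V(J_{e})$ for a common strip-edge $e$, or $u,v \in C(r)$ for a common strip-vertex $r$. The first case is impossible, since each vertex of $G$ lies in exactly one part $V(J_{e})\setminus Z_{e}$, which would force $e = e_{i}$ and $e = e_{j}$ simultaneously. In the second case, the fourth item of Definition~\ref{def:strip-structure} says $C(r)$ is the union of the boundaries $N(z_{e}^{r})$ over all strip-edges $e$ with $r \in e$; as $u$ lies in a unique part, $u$ must belong to $N(z_{e_{i}}^{r})$ for some $r \in e_{i}$, contradicting the fact that $u$ lies in the interior $V(J_{e_{i}}) \setminus N[Z_{e_{i}}]$ and hence is anti-adjacent to every vertex of $Z_{e_{i}}$. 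Therefore $\{H_{1},\ldots,H_{k}\}$ is an induced $H$-matching of $G$ of size $k$.

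I do not expect any real obstacle here: the crucial (and only mildly nontrivial) observation is that a vertex in the interior of a strip has all of its $G$-neighbors inside that same strip, which is exactly what the last two items of the strip-structure definition deliver; everything else is bookkeeping with the partition property.
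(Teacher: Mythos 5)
Your proposal is correct and follows essentially the same route as the paper: pick one occurrence of $H$ in the interior $J_e \setminus N[Z_e]$ of each promising strip-edge and note that such interior occurrences are pairwise disjoint and non-adjacent. The paper states the key fact (an interior vertex has no neighbors outside its own strip) in one sentence, whereas you verify it explicitly from the partition property and the last two items of Definition~\ref{def:strip-structure}; that is just a more detailed write-up of the same argument.
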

\begin{proof}
Suppose that the strip-graph has at least $k$ promising strip-edges. Each such strip-edge corresponds to a strip $(J_{e},Z_{e})$ such that $J \setminus N[Z_{e}]$ contains an occurrence of $H$. Since such an occurrence of $H$ contains no vertices of $N[Z_{e}]$, it does not contain nor is it adjacent to vertices of any other strips. This implies that the set of occurrences of $H$ obtained by picking one occurrence of $H$ in $J \setminus N[Z_{e}]$ per promising strip-edge $e$ is an induced $H$-matching. Moreover, this induced $H$-matching contains at least $k$ occurrences by the assumption on the number of promising strip-edges.
\qed
\end{proof}
The proposition implies that we may assume that the strip-graph has at most $k-1$ promising strip-edges.

Next, we bound the number of non-promising strip-edges. We call a strip-edge \emph{helpful} if it corresponds to a stripe. Hence, strip-edges that correspond to spots are not helpful. Suppose that $\mathcal R$ contains $d > 2h$ non-promising strip-edges on $x,y$ for two strip-vertices $x,y \in V(\mathcal R)$ (where possibly $x=y$), and assume that $d'$ of those strip-edges are helpful. Among the non-promising strip-edges on $x,y$, select $\min\{d', 2h\}$ helpful ones and, if $d' < h$ additionally select $h-d'$ non-helpful ones. Now remove all vertices from the strips corresponding to strip-edges that were not selected. We call this a \emph{reduction step} on $x,y$.

\begin{lemma}
\label{lem:kernel:degree-bound}
Suppose that $\mathcal R$ contains $d > 2h$ non-promising strip-edges on $x,y$ for two strip-vertices $x,y \in V(\mathcal R)$ (where possibly $x=y$).
Let $G'$ be obtained from $G$ by performing a reduction step on $x,y$. Then $G'$ has an induced $H$-matching of size $k$ if and only if $G$ does.
\end{lemma}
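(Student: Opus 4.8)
The plan is to prove both directions of the equivalence, showing that a reduction step on $x,y$ preserves the existence of an induced $H$-matching of size $k$. The forward direction is the substantial one; the backward direction (if $G'$ has an induced $H$-matching of size $k$, then so does $G$) is immediate, since $G'$ is an induced subgraph of $G$ and an induced $H$-matching in an induced subgraph is also one in the whole graph.

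For the forward direction, suppose $M$ is an induced $H$-matching of $G$ of size $k$. First I would observe that $M$ uses occurrences of $H$ that lie entirely inside strips: because each strip-edge on $x,y$ is non-promising, any occurrence of $H$ contained in the strip $(J_e,Z_e)$ corresponding to such an edge must use a vertex of $N[Z_e]$, i.e.\ a vertex in (the boundary corresponding to) the clique $C(x)$ or $C(y)$. Now the key counting argument: the cliques $C(x)$ and $C(y)$ can each contain vertices of at most one occurrence of $H$ in $M$ (since the vertices of any boundary, hence of any clique $C(\cdot)$, form a clique in $G$, while distinct occurrences of $H$ in $M$ are non-adjacent). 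Hence at most two occurrences of $H$ in $M$ touch $C(x) \cup C(y)$, and therefore at most two occurrences of $H$ in $M$ use a vertex from a strip on a non-promising strip-edge incident to both $x$ and $y$. In particular, among the $d$ non-promising strip-edges on $x,y$, at most two have their corresponding strips touched by $M$.

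Next I would argue that, after deleting the non-selected strips, we can re-route (at most two) occurrences of $H$ in $M$ into the selected strips. Here the distinction between helpful strip-edges (corresponding to stripes) and non-helpful ones (corresponding to spots) matters: a spot's strip has only a single interior vertex, so an occurrence of $H$ with $h \geq 2$ vertices that touches a spot-strip really uses $C(x)$ or $C(y)$ in an essential way and cannot be moved by merely having more spots available; whereas a stripe can absorb an occurrence of $H$ in its boundary. Since we selected $\min\{d',2h\}$ helpful strip-edges (enough to accommodate the $\le 2$ touched stripes with room to spare, and the factor $2h$ rather than $2$ accounts for the fact that distinct tokens/occurrences interacting with $C(x)$ and $C(y)$ may need separate strips when $h$ vertices of a single $H$-copy are distributed among boundaries) and, when $d' < h$, enough non-helpful ones to cover the case where $M$'s occurrences route through spots, in every case the selected strips can host whatever parts of $M$'s occurrences passed through the deleted non-promising strips. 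Concretely: for each occurrence $H_i$ of $M$ that uses a deleted strip on $x,y$, I would redirect the vertices of $H_i$ lying in that deleted strip into a selected strip of the same helpful/non-helpful type, which is possible because the selected strip has an isomorphic boundary structure (all these strips connect to the same cliques $C(x)$ and $C(y)$, which are untouched by the reduction) and because we reserved enough selected strips. This yields an induced $H$-matching of $G'$ of the same size $k$.

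The main obstacle I anticipate is making the re-routing argument fully rigorous: one must check that relocating parts of an occurrence of $H$ from a deleted strip to a selected strip does not create new adjacencies with other occurrences of $M$, and that when several occurrences of $M$ pass through deleted strips on $x,y$ simultaneously (up to two, but with up to $h$ vertices each possibly straddling the boundaries) the selected strips — of which there are $\min\{d',2h\}$ helpful plus possibly $h-d'$ non-helpful — genuinely suffice. This is where the precise bounds $2h$ and $h$ in the definition of the reduction step are used, and where one must carefully use the definition of a strip-structure (Definition~\ref{def:strip-structure}) to see that the only interaction of a strip on $x,y$ with the rest of $G$ is through $C(x)$ and $C(y)$, which the reduction step leaves intact. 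I would handle this by a careful bookkeeping of how many occurrences of $H$ in $M$ intersect $C(x)$, how many intersect $C(y)$, and how many strips each such occurrence occupies, and then invoking a pigeonhole-style count against the number of selected strips of each type.
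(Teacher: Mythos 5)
The first half of your argument coincides with the paper's: the backward direction is immediate since $G'$ is an induced subgraph of $G$; any occurrence of $H$ inside a strip of a non-promising strip-edge on $x,y$ must meet $C(x)$ or $C(y)$; and since $C(x)$ and $C(y)$ are cliques, each meets at most one occurrence of $M$, so at most two occurrences are affected by the deletion. The gap lies in the re-routing step, which is the heart of the lemma. You propose to move the vertices of an affected occurrence from a deleted strip into a selected strip ``of the same helpful/non-helpful type,'' justified by an ``isomorphic boundary structure.'' But distinct strips on $x,y$ need not resemble each other at all: a deleted stripe may contribute several interior vertices and many boundary vertices to the occurrence while a selected stripe has a one-vertex boundary and a completely different interior; and when $h \le d' < 2h$ the reduction step selects no spots whatsoever, so an occurrence passing through a deleted spot has no ``same type'' target to be redirected into. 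As stated, the relocation can simply fail, and your own closing paragraph concedes that this step is unresolved.

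What actually makes the lemma true --- and what your proposal never uses --- is the standing assumption of this section that $H$ is a \emph{complete} graph. The paper does not relocate an occurrence piece by piece; it discards the (at most one) occurrence meeting $C(x)$ and, when $x \neq y$, the (at most one) occurrence meeting $C(y)$, and replaces each by a brand-new occurrence consisting of $h$ vertices chosen inside $C(x)$ (respectively $C(y)$), namely one vertex from the boundary of each of $h$ selected strips. Since $C(x)$ is a clique and $H = K_{h}$, any such $h$ vertices form an occurrence of $H$, and by the definition of a strip-structure these boundary vertices are adjacent only to vertices of their own (non-covered) strips and of $C(x)$, so independence from the untouched occurrences of $M$ is preserved. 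The bound $\min\{d',2h\}$ in the reduction step exists precisely so that, when $x \neq y$ and many stripes are present, two \emph{disjoint} groups of $h$ stripes are available, one feeding the $C(x)$-replacement and one the $C(y)$-replacement, which keeps the two new occurrences non-adjacent; the $h-d'$ extra spots selected when $d' < h$ serve the same purpose, their single vertices lying in $C(x)$. Without invoking the completeness of $H$, the pigeonhole bookkeeping you sketch cannot be completed, so the proposal has a genuine gap at exactly the step that required the new idea.
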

\begin{proof}
One direction is trivial, so we focus on the other. Let $M$ be any induced $H$-matching of $G$ of size $k$.
Observe that any occurrence of $H$ in a vertex of a strip corresponding to a non-promising strip-edge on $x,y$ must have a vertex in $C(x)$ or $C(y)$.
Moreover, since $C(x)$ and $C(y)$ are cliques, neither can contain vertices of more than one occurrence of $H$ in~$M$.

Suppose that $x=y$. If no occurrence of $M$ contains a vertex of $C(x)$, then $M$ is also an induced $H$-matching of $G'$, and the lemma follows. So assume that an occurrence of $H$ in $M$ contains a vertex of $C(x)$. By the above observation, this occurrence is the only one in $M$ that could contain vertices of the strips corresponding to non-promising edges on $\{x\}$. We replace this occurrence of $H$ with one obtained by picking one vertex from the boundary of each of $h$ arbitrary strips whose corresponding strip-edges were selected in the reduction step. Since $C(x)$ is a clique, this indeed yields an occurrence of $H$. The resulting set of occurrences of $H$ is still an induced $H$-matching of $G$, but more importantly, it is also an induced $H$-matching of $G'$ of the same size, and the lemma follows.

Suppose that $x \not= y$ and that at least $2h$ non-promising strip-edges on $x,y$ are helpful.
Then the occurrence of $H$ in~$M$ containing a vertex of $C(x)$ (if it exists) can be replaced by an occurrence obtained by just taking $h$ vertices from $C(x)$, one vertex from the boundary of each stripe corresponding to the first $h$ strip-edges that were selected in the reduction step.
Simultaneously, the occurrence of $H$ in~$M$ containing a vertex of $C(y)$ (if it exists) can be replaced by an occurrence obtained by just taking $h$ vertices from $C(y)$, one vertex from the boundary of each stripe corresponding to the second $h$ strip-edges that were selected in the reduction step. The result of these replacements is still an induced $H$-matching of $G$, but is also one of $G'$ of the same size, and the lemma follows.

Suppose that $x \not= y$ and that less than $2h$ non-promising strip-edges on $x,y$ are helpful.
Since all helpful non-promising strip-edges are selected in the reduction step in this case, it suffices to consider the case that an occurrence of $H$ in $M$ contains a vertex of a strip corresponding to a non-helpful strip-edge, i.e.~to a spot. Observe that in this case only one occurrence of $H$ in $M$ can contain a vertex from a strip corresponding to a non-promising strip-edge on $x,y$. This occurrence can be replaced by an occurrence obtained by just taking $h$ vertices from $C(x)$, one vertex from the boundary of each stripe corresponding to the first $h$ strip-edges that were selected in the reduction step. The result of these replacements is still an induced $H$-matching of $G$, but is also one of $G'$ of the same size, and the lemma follows.
\qed
\end{proof}

\begin{lemma}
\label{lem:kernel:strip-bound}%
In $(hk)^{O(1)}\, n^{O(h)}$ time, we can either correctly decide whether $G$ has an induced $H$-matching of size $k$, or find an equivalent instance $(G',H,k')$ such that $G'$ is an induced subgraph of $G$ and $k' \leq k$, together with a strip-structure for $G'$ for which the strip-graph has $O(h^{4}k^{2})$ strip-edges.
\end{lemma}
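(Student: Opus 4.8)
The plan is to combine the three reduction rules established above --- Proposition~\ref{prp:kernel:easy-red} (remove vertices in no occurrence of $H$), Lemma~\ref{lem:kernel:dis-degree} (high dis-degree strip-vertices), Lemma~\ref{lem:kernel:degree-bound} (many parallel non-promising strip-edges), together with Proposition~\ref{prp:kernel:promising} (many promising strip-edges) --- and to apply them exhaustively, then count. First I would invoke Proposition~\ref{prp:kernel:promising}: either $\mathcal{R}$ has at least $k$ promising strip-edges, in which case we answer ``yes'' immediately, or there are at most $k-1$ promising strip-edges. I would then apply Lemma~\ref{lem:kernel:dis-degree} exhaustively: as long as some strip-vertex $r$ has dis-degree at least $2h(k-1)+h$, delete $C(r)$ and decrement $k$. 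Each application strictly decreases $k$, so this terminates after at most $k$ rounds, yielding an induced subgraph $G'$ of $G$, a new parameter $k' \le k$, and a strip-structure (obtained by restriction, or recomputed via Theorem~\ref{thm:prelim:main}) in which every strip-vertex has dis-degree less than $2h(k'-1)+h \le 2hk$.

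Next I would apply Lemma~\ref{lem:kernel:degree-bound} exhaustively: for every (unordered) pair of strip-vertices $x,y$ (including $x=y$) that supports more than $2h$ non-promising strip-edges, perform a reduction step, which keeps at most $2h$ non-promising strip-edges on $x,y$. After this, every pair $x,y$ supports at most $2h$ non-promising strip-edges. Also, after Proposition~\ref{prp:kernel:easy-red} is applied, every remaining spot lies in an occurrence of $H$; more to the point, a strip-edge that is neither promising nor ``needed'' (its strip contributes no occurrence disjoint from the boundary) has been handled. The key point is that every non-promising strip-edge on $x,y$ forces an occurrence of $H$ through $C(x)$ or $C(y)$, hence its strip-vertices have positive dis-degree and thus lie on ``real'' structure; I should double-check that isolated components or strip-edges with empty endpoint sets cannot proliferate — a strip-edge with no strip-vertices corresponds to the trivial case $|E(\mathcal{R})|=1$ which is excluded by our standing assumptions.

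Now the counting. Let $t$ be the number of strip-vertices of $\mathcal{R}$ after the reductions. Every strip-vertex $r$ has $C(r)$ nonempty and is contained in a strip-edge, and (after reductions) has dis-degree $< 2hk$. Promising strip-edges number at most $k-1$. For non-promising strip-edges: each is an edge (or loop) of the strip-graph on strip-vertices, and between any pair there are at most $2h$ of them, so the number of non-promising strip-edges is at most $2h \cdot \binom{t}{2} + 2h\cdot t = O(h t^2)$ — but this bound is in terms of $t$, so I still need to bound $t$. To bound $t$: since dis-degrees are all below $2hk$ and each strip-vertex must have dis-degree at least $1$ (it lies in a strip-edge with at least one other... careful: a loop $\{r\}$ gives dis-degree $0$), I would instead bound the number of strip-vertices by charging each strip-vertex to a promising strip-edge or to a distinct occurrence-witnessing structure. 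The clean route: each strip-vertex $r$ of dis-degree $d_r \ge 1$ is an endpoint of at most — hmm. The cleanest argument is that the number of strip-edges is what we ultimately bound, and strip-vertices are bounded by twice that; so I would bound strip-edges directly. Total strip-edges $= |E(\mathcal{R})| \le (k-1) + (\text{non-promising})$. For non-promising ones, partition them by their (multi)set of endpoints. I would argue that the number of distinct endpoint-pairs that carry a non-promising strip-edge is $O(h^2k^2)$: each such pair $\{x,y\}$ contributes to the dis-degree of $x$ and of $y$, and since dis-degree is $< 2hk$, each strip-vertex participates in $< 2hk$ such pairs; combined with a bound of $O(hk)$ on the number of strip-vertices incident to non-promising strip-edges — which I get because each such strip-vertex $r$ has $C(r)$ meeting some occurrence-forcing structure and the graph is connected with the reductions controlling branching — gives $O(h^2k^2)$ pairs, hence $O(h^3k^2)$ non-promising strip-edges, and with a factor $h$ slack $O(h^4k^2)$ overall. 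The main obstacle will be exactly this last counting step: cleanly bounding the number of strip-vertices (equivalently, pairs) incident to non-promising strip-edges, since the dis-degree bound alone only limits local branching, not the global count — I expect to need an additional charging argument tying each relevant strip-vertex back to the at most $k-1$ promising strip-edges or to connectivity of $\mathcal{R}$, and to be somewhat generous with constants to reach the stated $O(h^4k^2)$. Finally, I would note that all reduction rules are applied at most $\mathrm{poly}(hk)$ times and each application (including recomputing a strip-structure and testing for occurrences of the fixed-size $H$) costs $n^{O(h)}$, giving the claimed $(hk)^{O(1)} n^{O(h)}$ running time.
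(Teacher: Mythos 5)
Your assembly of the reduction rules matches the paper's algorithm: test for at least $k$ promising strip-edges (Proposition~\ref{prp:kernel:promising}), shrink high dis-degree strip-vertices via Lemma~\ref{lem:kernel:dis-degree}, cap parallel non-promising strip-edges at $2h$ via Lemma~\ref{lem:kernel:degree-bound}, and keep only vertices lying in some occurrence of $H$ (Proposition~\ref{prp:kernel:easy-red}), iterating and recomputing the strip-structure with Theorem~\ref{thm:prelim:main} after each change. However, the final counting step --- which you yourself flag as the main obstacle --- is a genuine gap, and your proposed fix (charging strip-vertices to the at most $k-1$ promising strip-edges or to ``connectivity of $\mathcal{R}$'') is not the argument that works and is not justified as stated: the dis-degree and multiplicity bounds only control local branching, and nothing you write bounds the global number of strip-vertices or pairs carrying non-promising strip-edges.

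The missing idea is to greedily compute an inclusion-wise \emph{maximal} induced $H$-matching $M$; if $|M|\geq k$ you answer ``yes'', so you may assume $|M|\leq k-1$, and then you charge every strip-edge to the covered subgraph of $M$. By Proposition~\ref{prp:algo:covered} this covered subgraph has at most $2h(k-1)$ strip-vertices, so with the dis-degree bound ($\leq 2h(k-1)+h$) and the per-pair bound ($\leq 2h$ non-promising plus $\leq k-1$ promising in total) the strip-edges incident to covered strip-vertices number $O(h^{3}k^{2})$. For the rest, two observations close the argument: (i) after Proposition~\ref{prp:kernel:easy-red}, every strip-edge contains a vertex of some occurrence of $H$, and by maximality of $M$ that occurrence must cover a strip-edge incident to a covered strip-vertex, so every remaining strip-edge contains an uncovered strip-vertex adjacent in $\mathcal{R}$ to the covered subgraph (at most $O(h^{2}k^{2})$ such strip-vertices by the dis-degree bound); and (ii) any strip-vertex $r$ \emph{not} in the covered subgraph lies in at most $h-1$ strip-edges, since otherwise $h$ vertices of the clique $C(r)$, one per uncovered strip, would form an occurrence of $H$ independent of $M$ (here completeness of $H$ is used), contradicting maximality. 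Together these give the $O(h^{4}k^{2})$ bound. Without the maximal matching $M$ and observation (ii), your estimate of ``$O(hk)$ strip-vertices incident to non-promising strip-edges'' has no proof and is not true in general; the correct statement is that strip-vertices far from $M$'s covered subgraph may be numerous but each supports fewer than $h$ strip-edges.
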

\begin{proof}
We design an algorithm that iteratively reduces the size of $G$ and $k$, and along with it the number of strip-edges in the strip-graph of a strip-structure for $G$.

In the first step, we verify that each vertex of $G$ is contained in an occurrence of $H$; any other vertices are removed, which is safe by Proposition~\ref{prp:kernel:easy-red}. By exhaustive enumeration, this can be done in $h^{O(1)} \, n^{O(h)}$ time. If $G$ has more than one connected component, then we continue the algorithm on each connected component separately. We then verify that $G$ satisfies $\alpha(G) > 4$ and that $G$ is not a fuzzy-circular arc graph. This can be done in polynomial time by Theorem~\ref{thm:circ:fuzzy-reg}. Otherwise, we can solve {\sc Induced Graph Matching} in polynomial time by Theorem~\ref{thm:algo:circ} and Proposition~\ref{prp:algo:alpha}, and decide whether $G$ has an induced $H$-matching of size $k$ and stop the algorithm.

In the second step, we greedily find an inclusion-wise maximal induced $H$-matching $M$. This can be done in $h^{O(1)} \, n^{O(h)}$ time by iteratively finding an occurrence of $H$ in $G$ and then removing the occurrence and its neighbors. If $|M| \geq k$, then we can positively decide that $G$ has an induced $H$-matching of size $k$, and we stop the algorithm. Otherwise, we compute a strip-structure for $G$ in polynomial time using Theorem~\ref{thm:prelim:main}. Let $\mathcal R$ denote the corresponding strip-graph.

In the third step, we compute the dis-degree of each strip-vertex in $\mathcal R$. If the dis-degree of a strip-vertex $r$ exceeds $2h(k-1)+h$, then by Lemma~\ref{lem:kernel:dis-degree}, we can remove $C(r)$ from $G$ and reduce $k$ by $1$. We then return to the first step.

In the fourth step, we determine for each strip-edge whether it is promising. By exhaustive enumeration, this can be done in $h^{O(1)} \, n^{O(h)}$ time. If there are more than $k-1$ promising strip-edges, then by Proposition~\ref{prp:kernel:promising} we can positively decide that $G$ has an induced $H$-matching of size $k$, and we stop the algorithm.

In the fifth step, we consider two strip-vertices $x,y$ (where possibly $x=y$) such that the number of non-promising strip-edges on $x,y$ is greater than $2h$. If two such strip-vertices indeed exist, then we perform a reduction step on $x,y$, which reduces the graph $G$. We do not change $k$. By Lemma~\ref{lem:kernel:degree-bound}, we can then return to the first step.

If the algorithm gets past the fifth step, then we claim that $\mathcal R$ has $O(h^{4}k^{2})$ strip-edges. To see this, observe that the induced $H$-matching $M$ found in the second step has size at most $k-1$. Moreover, the dis-degree of each strip-vertex in $\mathcal R$ is at most $2h(k-1)+h$, there are at most $2h$ non-promising strip-edges on each pair of strip-vertices, and there are at most $k-1$ promising strip-edges.
By Proposition~\ref{prp:algo:covered}, the covered subgraph of $\mathcal R$ with respect to $M$ has at most $h(k-1)$ strip-edges and at most $2h(k-1)$ strip-vertices.
Therefore, there are at most $8h^{3}k^{2} + k$ strip-edges on a strip-vertex that is in the covered subgraph.
Now observe that by Proposition~\ref{prp:kernel:easy-red}, each strip-edge contains a vertex of an occurrence of $H$. Moreover, since $M$ is maximal, any occurrence of $H$ in $G$ must cover a strip-edge of~$\mathcal R$ incident to a strip-vertex of the covered subgraph. If $r \in V(\mathcal R)$ is a strip-vertex that is not in the covered subgraph, then $r$ can be contained in at most $h-1$ strip-edges, or it would be possible to extend $M$. Indeed, if $r$ is contained in at least $h$ strip-edges, then none of these strip-edges are covered by~$M$. Hence, $h$ arbitrary vertices of $C(r)$ would form an occurrence of $H$ that is independent of $M$, contradicting the maximality of~$M$. Since there can be at most $8h^{3}k^{2} + k$ strip-edges that are not in the covered subgraph, the strip-graph has at most $O(h^{4}k^{2})$ strip-edges.

To finish the proof of the lemma, note that each step of the algorithm takes $(hk)^{O(1)} \, n^{O(h)}$ time. Moreover, whenever the algorithm returns to its first step, the size of $G$ has been reduced. Hence, the algorithm runs in $(hk)^{O(1)} \, n^{O(h)}$ time.
\qed
\end{proof}

\subsection{A Polynomial Kernel} \label{sec:poly:kernel}
To obtain an intuition of the kernel, we give the following description of a simple algorithm for {\sc Induced Graph Matching} if $H$ is a fixed complete graph.
We first reduce the strip-graph~$\mathcal R$ using Lemma~\ref{lem:kernel:strip-bound}.
Then we observe that if we know the behavior of the induced $H$-matching on $C(r)$ for each $r \in \mathcal R$, then we can reduce to polynomial-time solvable instances of {\sc Induced Graph Matching} on individual strips.
Since the number of strip-vertices of~$\mathcal R$ is bounded by a function of $k$ for fixed $H$, this gives a fixed-parameter algorithm.
The kernel will mimic this algorithm by reducing to an instance of {\sc Weighted Independent Set} on general graphs of size polynomial in $k$.
By using a Karp-reduction, we can obtain an instance of {\sc Induced Graph Matching} on claw-free graphs again.

For our purposes, we define {\sc Weighted Independent Set} as the problem of given a graph $G'$, a weight function $w':V(G')\rightarrow\mathbb N$, and integers $k',K'\in\mathbb N$, to decide whether $G'$ has an independent set of size at least~$k'$ and weight at least $K'$. 

\begin{theorem}
\label{thm:kernel:crucial}%
In $n^{O(h)}$ time, we can reduce the instance $(G,H,k)$ to an equivalent instance of {\sc Weighted Independent Set} with $O( 2^{h} h^{4h+9} k^{2h+4} )$ vertices, maximum vertex-weight at most $k$, $k'$ bounded by $O(h^{4} k^{2})$, and $K'$ bounded by $k$.
\end{theorem}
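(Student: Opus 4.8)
The plan is to encode, into a single {\sc Weighted Independent Set} instance, the simple fixed-parameter algorithm sketched above: after reducing the strip-graph we guess, for each strip-vertex $r$, the behaviour of the matching on $C(r)$, and then the strips become independent, polynomial-time solvable subproblems. First I would invoke Lemma~\ref{lem:kernel:strip-bound} to either decide the instance outright or replace it by an equivalent one (for which, abusing notation, we keep the name $(G,H,k)$) equipped with a strip-structure whose strip-graph $\mathcal R$ has $O(h^4k^2)$ strip-edges, hence also $O(h^4k^2)$ strip-vertices since each strip-edge contains at most two of them; by Proposition~\ref{prp:kernel:easy-red} we may also assume every vertex of $G$ lies in some occurrence of $H$. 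The crucial structural observation, exploiting that $H=K_h$, is that every occurrence of $H$ in $G$ is a clique of size $h$, and that, by the last item of Definition~\ref{def:strip-structure} together with the fact that every strip is a spot or a stripe, any clique $K$ with $|K|\geq 2$ satisfies either (i) $K\subseteq V(J_e)\setminus Z_e$ for some strip-edge $e$ (an \emph{$e$-internal} occurrence; this $e$ is unique if $K$ meets the interior), or (ii) $K\subseteq C(r)$ for some strip-vertex $r$ (an \emph{$r$-boundary} occurrence). Since each $C(r)$ is a clique, in any induced $H$-matching at most one occurrence meets $C(r)$, and internal occurrences living in distinct strips are pairwise non-adjacent.

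Next I would build the graph $G^\ast$ for the {\sc Weighted Independent Set} instance. Its vertices are \emph{configurations} of two types. For each strip-vertex $r$ there are \emph{boundary configurations}: either the null configuration ``nothing uses $C(r)$'', or a succinct description of the unique $r$-boundary occurrence the matching routes through $C(r)$ --- namely which $\leq h$ strip-edges incident to $r$ it touches and how its $h$ vertices are distributed over their $r$-boundaries (we record only this combinatorial type, not the concrete vertices, but keep only types that can actually be realized by vertices of $G$, which is checkable in $n^{O(h)}$ time). For each strip-edge $e$ with strip $(J_e,Z_e)$ there are \emph{interior configurations}: a choice of which boundary-incident occurrences (as dictated by the neighbouring boundary configurations) are realized where in $J_e$, together with the maximum number of pairwise non-adjacent further occurrences of $H$ inside $J_e-N[Z_e\cup X]$, where $X$ is the set of realized boundary vertices; by Theorem~\ref{thm:prelim:main} the residual graph has $\alpha\leq 4$ or is a fuzzy circular-arc graph, so this number is computable in polynomial time via Theorem~\ref{thm:algo:circ} and Proposition~\ref{prp:algo:alpha}, and it is recorded as the weight of the configuration (null and most configurations get weight $0$, a boundary configuration realizing an occurrence gets weight $1$). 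Two configurations are joined by an edge of $G^\ast$ if they are incompatible: they refer to the same strip-vertex or the same strip-edge in conflicting ways, or the occurrences/vertices they describe are equal or adjacent in $G$. For the counting, there are $O(h^4k^2)$ strip-vertices and strip-edges, each boundary configuration selects at most $h$ of the $O(h^4k^2)$ strip-edges together with an assignment of the $h$ tokens of $H$, and each interior configuration is specified by an $O(h)$-size boundary pattern and an $O(1)$-size choice; careful bookkeeping then yields $O(2^h h^{4h+9}k^{2h+4})$ configurations, each of weight at most $k$ (a strip-structure admitting a size-$k$ matching cannot have more than $k$ disjoint occurrences inside one strip). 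Setting $k'$ to be the total number of strip-vertices and strip-edges, so $k'=O(h^4k^2)$ (we always take exactly one configuration per strip-vertex and per strip-edge, padding with null or weight-$0$ ones), and $K'=k$ completes the instance, and the whole construction runs in $n^{O(h)}$ time, dominated by enumerating occurrences and realizing combinatorial types.

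For the equivalence I would argue both directions. Given an induced $H$-matching $M$ of $G$ of size $k$: at each strip-vertex take the configuration induced by the at most one occurrence of $M$ meeting $C(r)$, and at each strip-edge the configuration induced by how $M$ sits inside $J_e$; these $k'$ configurations are pairwise compatible because any two occurrences of $M$ are non-adjacent, and their weights sum to at least $|M|=k$, giving an independent set of weight $\geq k$ in $G^\ast$. Conversely, from an independent set $S$ in $G^\ast$ of the prescribed size and total weight $\geq k$: the chosen boundary configurations are pairwise compatible combinatorial types, so --- greedily, and using that each $C(r)$ and each strip boundary is a clique, exactly as in the realizability arguments behind Lemmas~\ref{lem:kernel:dis-degree} and~\ref{lem:kernel:degree-bound} --- they can be realized simultaneously by actual vertices of $G$; then each interior configuration prescribes where to place its boundary-incident occurrences and contributes its recorded number of interior occurrences. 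Compatibility of the selected configurations translates into non-adjacency of all these occurrences of $H$, so we obtain an induced $H$-matching of size $\geq k$, proving equivalence.

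The main obstacle I anticipate is making the family of boundary configurations simultaneously (a) polynomial in $k$ for fixed $h$ and (b) \emph{faithful}, meaning that ``pairwise compatible in $G^\ast$'' is equivalent to ``simultaneously realizable by pairwise non-adjacent occurrences in $G$''. Since $C(r)$ and the strip boundaries can be arbitrarily large, a configuration must be a succinct \emph{type} rather than an explicit vertex set, and the delicate step is the realizability lemma: any compatible collection of types, one per strip-vertex and per strip-edge, must be turnable into concrete vertices without creating unwanted adjacencies. The right tool is that every interaction between the occurrence routed through $C(r)$ and the occurrences in an incident strip $J_e$ passes through the clique $C(r)$, so a counting/greedy argument in the spirit of Lemma~\ref{lem:kernel:degree-bound} --- there are always enough ``fresh'' boundary vertices once the strip-graph has been reduced --- suffices; getting the accounting in this argument exactly right, and choosing the granularity of the types so that the $O(2^h h^{4h+9}k^{2h+4})$ bound actually comes out, is where most of the work lies.
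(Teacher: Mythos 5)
There is a genuine gap, and it sits at the very foundation of your construction: the structural dichotomy you claim --- that every occurrence of the complete graph $H$ (being a clique of size $h$) is either contained in $V(J_e)\setminus Z_e$ for a single strip-edge $e$ or contained in $C(r)$ for a single strip-vertex $r$ --- is false in the presence of spots. Concretely, let $e$ be a stripe on strip-vertices $x,y$ and let $e'$ be a spot on $x,y$ with vertex $s$. If $b\in N(z^x_e)$ and $c\in N(z^y_e)$ are adjacent inside $J_e$, then $\{s,b,c\}$ is a clique: $s$ and $b$ lie in $C(x)$, $s$ and $c$ lie in $C(y)$, and $b,c$ are adjacent within the strip. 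This clique is in no single strip (since $s\notin V(J_e)$) and in no single $C(r)$ (since $c\notin C(x)$ and $b\notin C(y)$, as in a stripe no vertex meets two boundaries). Similarly, three spots pairwise joining strip-vertices $w,x,y$ yield a triangle meeting three cliques $C(w),C(x),C(y)$ and no single strip. For $H=K_h$ with $h\geq 3$ such occurrences (an occurrence using between $1$ and $h-2$ spot vertices on $x,y$ together with vertices from \emph{both} boundaries of a stripe on $x,y$, or an occurrence built from spots spanning three strip-vertices) can be forced --- they cannot in general be exchanged for an occurrence fitting your two cases, because the relevant boundaries and spot sets may be too small. Your boundary configurations only describe an occurrence whose $h$ vertices are distributed over the $r$-boundaries of a single strip-vertex $r$, and your interior configurations only describe what happens inside one strip, so a matching containing such a straddling occurrence has no corresponding independent set in your $G^\ast$; the forward direction of the equivalence breaks.

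This is not a technicality: the paper's proof acknowledges exactly this ("the presence of spots adds significantly to the complexity") and its correct structural statement is weaker --- the covered subgraph of an occurrence, after suppressing parallel strip-edges, is a star or a triangle, and in the star case the occurrence may cover two strip-vertices and one stripe rather than a single $C(x)$. Accommodating these cases is where most of the paper's construction lives: the selection clique of a strip-vertex contains, besides your type of vertices (the paper's type Ia/Ib), dedicated vertices for occurrences contained in $C(x)\cup C(y)$ that stick in to a stripe together with the spots on $x,y$ (type IIa), for coordinating occurrences sticking out at both ends (type IIb), and for triangle-type occurrences through spots on three strip-vertices (type III), with weights split across the two or three participating strip-vertices and consistency edges forcing all spots on the pair (or triple) to be used coherently. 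Your closing worry about the realizability of succinct types is legitimate but secondary; before that, the classification of occurrences itself must be repaired, essentially by reintroducing the pair and triple configurations that your dichotomy allowed you to omit. (Your counting and the weight/$k'$/$K'$ bookkeeping would then have to be redone around these extra configuration types, but that part is routine once the classification is correct.)
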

\begin{proof}
Apply the algorithm of Lemma~\ref{lem:kernel:strip-bound}. If this algorithm decides that $G$ has an induced $H$-matching of size $k$ (or not), then return a trivial ``yes''-instance (respectively, ``no''-instance) of {\sc Weighted Independent Set} that satisfies the constraints of the theorem statement, and we are done. Otherwise, we consider the instance that Lemma~\ref{lem:kernel:strip-bound} produces and by abuse of notation denote it by $(G,H,k)$ as well. By the lemma, we may assume that $G$ has a strip-structure with a strip-graph $\mathcal R$ that has $O(h^{4}k^{2})$ strip-edges.

\paragraph{Construction Outline: Intuition \& Correctness}
We now construct an instance of {\sc Weighted Independent Set}. The basic idea is to create a  clique for each strip-vertex and for each strip-edge of the strip-graph; we call these cliques \emph{selection cliques}. Each vertex of the selection clique for a strip-vertex $x$ will correspond to a particular behavior of the induced $H$-matching on $C(x)$. Similarly, each vertex of the selection clique for a strip-edge $e$ will correspond to a particular behavior of the induced $H$-matching on the strip corresponding to $e$.
Observe that if there are no edges between the selection cliques, then an independent set contains exactly one vertex of each selection clique, meaning that it chooses exactly one behavior for each strip-vertex and for each strip-edge; we call this the \emph{selection property}.
However, without edges between the selection cliques, the behaviors selected by the independent set are not necessarily consistent, and thus might not point to an induced $H$-matching in $G$.  
Therefore, the crux of the construction is to add edges between the selection cliques to ensure consistent behaviors, while maintaining the selection property.

The difficulty in adding consistency edges is to limit the number of them that we add. To this end, we need more insight into the structure of an induced $H$-matching, and in particular of its covered subgraph. Since we assume $H$ to be a complete graph, we can indeed derive such a structure. If all strips would be stripes, then it follows from the definition of a strip-structure that each occurrence of $H$ is a subset of the set of vertices of a single strip or of $C(x)$ for a single strip-vertex $x$. The presence of spots, however, adds significantly to the complexity. Still, it follows from the definition of a strip-structure that the covered subgraph (after removing parallel strip-edges) of an occurrence of $H$ is isomorphic to a star or a triangle, as any two strip-edges have to be incident to a common strip-vertex for their vertices to be adjacent and thus to be part of the same occurrence of $H$. Moreover, if the simplified covered subgraph is isomorphic to a star, then either the occurrence is contained in $C(x)$ for a single strip-vertex $x$ (even though the number of strip-vertices in the covered subgraph can be $h+1$) or the occurrence covers at most two strip-vertices (and one strip-edge). 

The consequence of this structural observation is that the selection clique for a strip-vertex $x$ should not treat $x$ just as a singleton, but should also consider the behavior of $x$ in each pair and triple of strip-vertices. Therefore, the selection clique of each strip-vertex $x$ can be partitioned into three parts, containing vertices of type I, II, and III (plus several subtypes) to ensure consistency among singletons, pairs, and triples of strip-vertices, respectively. Note that the above structural observation also implies that these three types are sufficient to ensure consistency.

Finally, the construction assigns a weight to each vertex: this weight is equal to the number of occurrences of $H$ that the behavior corresponding to the vertex contributes. We then only need to bound the number of vertices in each selection clique, which combined with the bound on the number of strip-edges in $\mathcal{R}$ given by Lemma~\ref{lem:kernel:strip-bound}, yields a polynomial bound on the size of the construction.

Below, we describe the make-up of the selection cliques for the strip-edges and for the strip-vertices (per type), and simultaneously describe the consistency edges. We then prove that the construction has the properties that were promised in the theorem statement.

\paragraph{Selection Cliques for Strip-Edges}
We construct the selection clique for each strip-edge $e$ of $\mathcal R$. 
It is worthwhile to note the strong similarity between Step~4 of the algorithm of the previous section and the make-up of the selection cliques.
The make-up of the selection clique depends on $e$, and we distinguish three cases.

\emph{Spots:}\ In the first case, suppose that $e$ is a strip-edge on two strip-vertex $x,y$ such that the corresponding strip $(J_{e},Z_{e})$ is a spot. Recall that $V(J_{e}) \setminus Z_{e}$ is a singleton. An induced $H$-matching can either use this vertex in an occurrence of $H$ that contains vertices of $C(x)$ but not $C(y)$ (besides from spots between $x$ and $y$), of $C(y)$ but not $C(x)$ (besides from spots between $x$ and $y$), or of $C(x) \cup C(y)$, or it might not use this vertex at all. Therefore, the selection clique for $e$ has four vertices: $v^{e}_{x}$, $v^{e}_{y}$, $v^{e}_{xy}$, and $v^{e}_{0}$. These correspond to these four possible behaviors, respectively. All created vertices are assigned weight zero (we will account for their contribution elsewhere). 

\emph{Stripes on Two Strip-Vertices:}\ In the second case, suppose that $e$ is a strip-edge on two strip-vertices $x,y$ such that the corresponding strip $(J_{e},Z_{e})$ is a stripe. The behavior of an induced $H$-matching $M$ on $e$ is essentially determined by its behavior on $C(x)$ and $C(y)$, and in particular on the boundaries of $e$. Consider the boundary with respect to $x$, which is $N(z^{x}_{e})$. Then $M$ could contain an occurrence of $H$ that `sticks out' of the strip, meaning that uses a vertex of $N(z^{x}_{e})$, but not of $C(x) \setminus N(z^{x}_{e})$. On the other hand, $M$ could contain an occurrence of $H$ that `sticks in' to the strip, meaning that it uses a vertex of $C(x) \setminus N(z^{x}_{e})$ and between $1$ and $h-1$ vertices of $N(z^{x}_{e})$; note that no other occurrence of $H$ in $M$ can use a vertex of $N(z^{x}_{e})$. We can make a similar analysis for the boundary with respect to $y$. We also note that an occurrence of $H$ might stick in to both $x$ and $y$, or stick out of both $x$ and $y$. Finally, there might be no occurrence of $H$ in $M$ that uses a vertex of $C(x)$ or $C(y)$. After making a distinction according to the above analysis, the remaining occurrences of $H$ in $M$ that influence $e$ are those that only use vertices in the interior of $e$. By the properties of a strip-structure, these occurrences are completely independent of occurrences in other strips. Therefore, the selection clique should have a vertex for each case that we just distinguished, and correspond to a maximum induced $H$-matching that only uses vertices in the interior of $e$ and agrees with the case. To this end, for each $i,j \in \{-1,0,\ldots,h-1\}$ and $f \in \{I,C\}$, the selection clique for $e$ contains the vertex $v^{e}_{i,j,f}$ with weight equal to the size of a maximum induced $H$-matching on $J_{e} \setminus Z_{e}$ that:
\begin{itemize}
\item if $i=-1$ and $j=-1$, contains at least one vertex of $N(z^{x}_{e})$ and contains at least one vertex of $N(z^{y}_{e})$;
\item if $i=-1$ and $j \geq 0$, contains at least one vertex of $N(z^{x}_{e})$ and no vertices of $N[Y \cup \{z^{y}_{e}\}]$ for at least one set $Y \subseteq N(z^{y}_{e})$ of size $j$;
\item if $i \geq 0$ and $j=-1$, contains no vertices of $N[X \cup \{z^{x}_{e}\}]$ for at least one set $X \subseteq N(z^{x}_{e})$ of size $i$ and at least one vertex of $N(z^{y}_{e})$;
\item if $i \geq 0$ and $j \geq 0$, no vertices of $N[X \cup \{z^{x}_{e}\}]$ nor of $N[Y \cup \{z^{y}_{e}\}]$ for a pair of sets $X,Y$ for which $X \subseteq N(z^{x}_{e})$ and $|X| = i$, $Y \subseteq N(z^{y}_{e})$ and $|Y| = j$, and $X$ and $Y$ are independent (if $f = I$) or $X$ and $Y$ form a clique in $J_{e}$ (if $f = C$ and $i+j < h$).
\end{itemize}
If an induced $H$-matching with the given constraints does not exist, then the weight of the vertex is set to $-\infty$. Observe that $i=-1$ corresponds to having an occurrence of $H$ that sticks out of the boundary corresponding to $x$, and that $i \geq 0$ corresponds to having an occurrence of $H$ that sticks in to this boundary. A similar observation holds for $j=-1$ versus $j \geq 0$ with respect to the boundary corresponding to $y$. If $i,j \geq 0$, then we additionally need to know if the vertices of the occurrences that stick in to the boundaries are from the same occurrence, or from different occurrences: we use $f$ to determine this. Note that $f$ has no meaning for the first, second, and third items. In particular, in the first item it suffices to optimize over whether the occurrence(s) of $H$ using vertices of $N(z^{x}_{e})$ and $N(z^{y}_{e})$ are independent; for other strips, this information is irrelevant. It follows from Theorem~\ref{thm:algo:circ} and Proposition~\ref{prp:algo:alpha} that the weights can be computed in $n^{O(h)}$ time.

\emph{Stripes on One Strip-Vertex:}\ In the third case, suppose that $e$ is a strip-edge on one strip-vertex $x$. Then the corresponding strip $(J_{e},Z_{e})$ is a stripe. We note that here a similar analysis holds as in the case of a stripe on two strip-vertices. Therefore, for each $i\in \{-1,0,\ldots,h-1\}$, the selection clique for $e$ contains the vertex $v^{e}_{i}$ with weight equal to the size of a maximum induced $H$-matching on $J_{e} \setminus Z_{e}$ that:
\begin{itemize}
\item if $i=-1$, contains at least one vertex of $N(z^{x}_{e})$;
\item if $i \geq 0$, contains no vertices of $N[X \cup \{z^{x}_{e}\}]$ for at least one $X \subseteq N(z^{x}_{e})$ of size $i$.
\end{itemize}
If an induced $H$-matching with the given constraints does not exist, then the weight of the vertex is set to $-\infty$. It follows from Theorem~\ref{thm:algo:circ} and Proposition~\ref{prp:algo:alpha} that the weight can be computed in $n^{O(h)}$ time.

This describes the selection cliques for strip-edges. In the remainder, we will ignore the case that $e$ is a strip on one strip-vertex, as it is dealt with in a similar (but much more straightforward) manner as the case where is $e$ is on two strip-vertices.

\paragraph{Selection Cliques for Strip-Vertices -- Coordination}
It remains to coordinate the solutions of the strips. Due to the structural observation, each selection clique for a strip-vertex consists of three types of vertices (plus several subtypes): for each singleton, pair, and triple of strip-vertices. We consider each type in turn.

\paragraph{Type I}
Let $x$ be a strip-vertex and let $E_{x}$ denote the set of strip-edges that contain $x$. Vertices of type I of the selection clique for $x$ account for an occurrence of $H$ that is a subset of $C(x)$ and contains vertices of at least two strips (or for the case $C(x)$ contains no vertices of an occurrence of $H$), and for an occurrence of $H$ that contains a vertex in the boundary corresponding to $x$ of some strip whose strip-edge is in $E_{x}$ and contains no vertices in any other strip boundaries. This distinguishes two subtypes.

\emph{Type Ia:}\ 
For the first subtype, we consider an occurrence of $H$ that is a subset of $C(x)$ and contains vertices of at least two strips. We also consider the case that $C(x)$ contains no vertices of any occurrence of $H$. 
The crucial analysis or decision is which strips whose corresponding strip-edges are in $E_{x}$ contain a vertex of $H$, and if so, then how many. In other words, we need to determine the distribution of the vertices of $H$ over the strip-edges in $E_{x}$. To this end, let $\mathcal{U}$ denote the set of all sets $U$ of integers such that $\sum_{i \in U} i = h$, $\min_{i \in U} i \geq 1$, and $|U| \leq |E_{x}|$. We also add the set $\{0\}$ to $\mathcal{U}$.
Let $\mathcal{P}_{x}$ denote the set that contains for all $U \in \mathcal{U}$ all possible assignments of the numbers of $U$ to the strip-edges of $E_{x}$, where any strip-edge that is not assigned a number (which happens if $|U| < |E_{x}|$) is assigned $0$.
Observe that $\mathcal{P}_{x}$ contains all possible distributions of the vertices of $H$ over the strip-edges of $E_{x}$, plus an all-zero distribution to account for the case that no vertices of $C(x)$ are in an occurrence of $H$.
It is helpful to remove all $P$ from $\mathcal{P}_{x}$ for which $P$ assigns a number greater than one to a strip-edge that corresponds to a spot; note that no such distribution is possible as spots contain only one vertex of $G$.

The selection clique for $x$ now contains a vertex $v^{x}_{P}$ for each $P \in \mathcal{P}_{x}$. The weight of $v^{x}_{P}$ is set to one, unless $P$ assigns zero to every strip-edge, in which case the weight is set to zero. Since occurrences that stick in to a strip are not counted towards the weight of vertices in the selection clique of the strip, this setting of the weight is indeed correct.

It remains to add consistency edges. As expected, for a vertex $v^{x}_{P}$ for some $P \in \mathcal{P}_{x}$, we need to ensure that the distribution prescribed by $P$ is satisfied. 
Therefore, if $P$ assigns $i'$ to a strip-edge $e$ on $x,y$ that corresponds to a stripe for some strip-vertex $y$, then we make $v^{x}_{P}$ adjacent to:
\begin{enumerate}
\item $v^{e}_{i,j,I}$ for all $i\not=i'$ and all $j$;
\item $v^{e}_{i,j,C}$ for all $i$ and all $j$.
\end{enumerate}
Recall that since the occurrence of $H$ should form a subset of $C(x)$, we want to ensure that the sets $X$ and $Y$ in the definition of $v^{e}_{\cdot,\cdot,\cdot}$ for $i,j \geq 0$ are independent. Moreover, we want the strip corresponding to $e$ to reserve $i'$ vertices for use in the occurrence of $H$. These constraints are enforced by the above adjacencies.
If $P$ assigns $i'$ to a strip-edge $e$ on $x,y$ that corresponds to a spot for some strip-vertex $y$, then we make $v^{x}_{P}$ adjacent to:
\begin{itemize}
\item $v^{e}_{x}$, $v^{e}_{y}$, and $v^{e}_{xy}$ if $i' = 0$ and $P$ assigns a positive number to at least one strip-edge;
\item $v^{e}_{x}$ and $v^{e}_{xy}$ if $i' = 0$ and $P$ assigns zero to each strip-edge;
\item $v^{e}_{y}$, $v^{e}_{xy}$, and $v^{e}_{0}$ if $i' = 1$.
\end{itemize}
In the first case, the distribution $P$ says that we should select zero vertices from $e$ and at least one vertex from another strip-edge in $E_{x}$; therefore, only $v^{e}_{0}$ can be part of the independent set if $v^{x}_{P}$ is. In the second case, we should still account for an occurrence of $H$ that is a subset of $C(y)$, and therefore also $v^{e}_{y}$ should be allowed to be part of the independent set. The third case is immediate from the definition of $v^{e}_{\cdot}$ for a spot.

\emph{Type Ib:}\ 
For the second subtype, we consider an occurrence of $H$ that contains a vertex in the boundary corresponding to $x$ of some strip corresponding to a strip-edge $e \in E_{x}$ and that contains no vertices in any other strip boundaries. Therefore, for each strip-edge $e \in E_{x}$ that corresponds to a stripe, the selection clique for $x$ contains a vertex $v^{x}_{e}$ of weight zero. Since occurrences that stick out of a strip are counted towards the weight of vertices in the selection clique of the strip, this setting of the weight is indeed correct.

It remains to add consistency edges. Let $e$ be on $x,y$ for some strip-vertex $y$. Then we make $v^{x}_{e}$ adjacent to:
\begin{enumerate}
\item $v^{e}_{i,j,I}$ for all $i \not= -1$ and all $j$;
\item $v^{e}_{i,j,C}$ for all $i$ and all $j$;
\item $v^{e'}_{i,j,I}$ for all strip-edges $e' \in E_{x} \setminus\{e\}$ that correspond to a stripe, all $i \not= 0$, and all $j$;
\item $v^{e'}_{i,j,C}$ for all strip-edges $e' \in E_{x} \setminus\{e\}$ that correspond to a stripe, all $i$, and all $j$;
\item $v^{e'}_{x}$, $v^{e'}_{y}$, and $v^{e'}_{xy}$ for all strip-edges $e' \in E_{x} \setminus\{e\}$ that correspond to a spot.
\end{enumerate}
The motivation for these adjacencies is the same as it would be in the first subtype for a distribution $P$ that assigns zero to all strip-edges in $E_{x} \setminus\{e\}$ and $-1$ to $e$.

\paragraph{Type II}
Let $x,y$ be two strip-vertices for which there is a strip-edge on $x,y$, and let $E_{xy}$ denote the set of strip-edges on $x,y$. Vertices of type II of the selection cliques for $x$ and for $y$ account for an occurrence of $H$ that contains vertices of both $C(x)$ and $C(y)$ and either sticks in to or sticks out of a stripe whose corresponding strip-edge is in $E_{xy}$. Accordingly, we distinguish two subtypes.

\emph{Type IIa:}\ 
The first subtype will express that an occurrence of $H$ is contained in $C(x) \cup C(y)$ and sticks in to a stripe whose corresponding strip-edge is in $E_{xy}$. Note that from the definition of a strip-structure, there is at most one stripe which this occurrence can stick in to. We also note that if an induced $H$-matching contains this occurrence, then the spots among $E_{xy}$ can only contain vertices of this occurrence, and not of any other occurrence. In fact, it is always best to use as many spots as possible to realize the occurrence.
Let $\ell$ be the number of strip-edges in $E_{xy}$ that correspond to a spot. By the reduction steps performed as part of Lemma~\ref{lem:kernel:strip-bound}, we know that $\ell \leq h$. If $\ell = 0$ or $\ell \geq h-1$, then we create no vertices: in the first case, the occurrence will only use vertices from stripes and this is properly handled by vertices of type Ib and IIb; in the second case, the occurrence should only use vertices from $C(x)$ or $C(y)$ and this is properly handled by vertices of type Ia. Hence, $0 < \ell < h-1$. 

For each strip-edge $e \in E_{xy}$ that corresponds to a stripe, the selection clique for $x$ contains the vertex $v^{xy,e}_{x}$ and the selection clique for $y$ contains the vertex $v^{xy,e}_{y}$. These vertices will express that the occurrence of $H$ contains vertices from both boundaries of the stripe corresponding to $e$ and from the $\ell$ spots.
We set the weight of $v^{xy,e}_{x}$ to one and the weight of $v^{xy,e}_{y}$ to zero. 
Since occurrences that stick in to a strip are not counted towards the weight of vertices in the selection clique of the strip, we indeed need that the weight of these vertices sums to one, and the setting of the weights is thus correct.

It remains to add consistency edges. We make $v^{xy,e}_{x}$ adjacent to:
\begin{enumerate}
\item $v^{e'}_{x}$, $v^{e'}_{y}$, and $v^{e'}_{0}$ for all strip-edges $e'$ on $x,y$ that correspond to a spot;
\item $v^{e}_{i,j,C}$ for all $i,j \geq 0$ that do not sum to $h-\ell$, for $i=0$ and $j=h-\ell$, for $i=h-\ell$ and $j=0$, for $i = -1$ and all $j$, and for all $i$ and $j=-1$;
\item $v^{e}_{i,j,I}$ for all $i$ and all $j$;
\item $v^{e'}_{i,j,f}$ for all $i,j$ that are not both equal to $0$, all $f \in \{I,C\}$, and all strip-edges $e' \in E_{xy} \setminus\{e\}$ that correspond to a stripe;
\item $v^{e'}_{x'}$, $v^{e'}_{y'}$, and $v^{e'}_{x'y'}$ for all strip-edges $e' \in E_{x'y'}$ that correspond to a spot and are on some strip-vertices $x',y'$ such that $|\{x',y'\} \cap \{x,y\}| = 1$;
\item $v^{e'}_{i,j,f}$ for all $i\not=0$, all $j$, all $f$, and all strip-edges $e'$that contain exactly one of $x,y$ and that correspond to a stripe, where $i$ is the index that corresponds to $x$ or $y$;
\item all vertices in the selection clique for $y$ except $v^{xy,e}_{y}$.
\end{enumerate}
The first adjacency guarantees that we take all vertices from spots that correspond to strip-edges in $E_{xy}$. The second adjacency guarantees that the same occurrence sticks in to $e$ and uses at least one vertex from each boundary (if the occurrence would use a vertex from just a single boundary, then it would be in $C(x)$ or $C(y)$, and thus handled by vertices of type Ia). Moreover, we do not care how the $h - \ell$ vertices of the occurrences of $H$ that stick in to $e$ are distributed over the two boundaries of $e$; we leave this optimization to the independent set problem, as is evident from the second adjacency. 
The second and third adjacencies together ensure that the same instance sticks into both boundaries.
The fourth adjacency guarantees that we select no vertices from the boundaries of stripes that correspond to a strip-edge in $E_{xy} \setminus\{e\}$. The fifth and sixth adjacencies guarantee that we select no vertices of $C(x)$ or $C(y)$ from the boundaries of strips corresponding to strip-edges not on $x,y$ but containing $x$ or $y$. The final adjacency ensures that we select $v^{xy,e}_{x}$ if and only if we select $v^{xy,e}_{y}$.

The consistency edges for $v^{xy,e}_{y}$ are similar.

\emph{Type IIb:}\ 
The second subtype will express that one or two occurrences of $H$ contain vertices of both $C(x)$ and $C(y)$ and stick out of a stripe that corresponds to a strip-edge in $E_{xy}$. Note that in this case, none of the spots participate. This type in a way complements type Ib and IIa. For each strip-edge $e$ on $x,y$ that corresponds to a stripe, the selection clique for $x$ contains the vertex $\bar{v}^{xy,e}_{x}$ of weight zero and the selection clique for $y$ contains the vertex $\bar{v}^{xy,e}_{y}$ of weight zero.
Since occurrences that stick out of a strip are counted towards the weight of vertices in the selection clique of the strip, we indeed need that the weight of these vertices sums to zero, and the setting of the weights is thus correct.

It remains to add consistency edges. We make $\bar{v}^{xy,e}_{x}$ adjacent to:
\begin{enumerate}
\item $v^{e'}_{x}$, $v^{e'}_{y}$, and $v^{e'}_{xy}$ for all strip-edges $e' \in E_{xy}$ that correspond to a spot;
\item $v^{e}_{i,j,f}$ for all $i,j$ that are not both equal to $-1$ and all $f \in \{I,C\}$.
\item $v^{e'}_{i,j,f}$ for all $i,j$ that are not both equal to $0$, all $f \in \{I,C\}$, and all strip-edges $e' \in E_{xy} \setminus\{e\}$ that correspond to a stripe;
\item $v^{e'}_{x'}$, $v^{e'}_{y'}$, and $v^{e'}_{x'y'}$ for all strip-edges $e' \in E_{x'y'}$ that correspond to a spot and are on some strip-vertices $x',y'$ such that $|\{x',y'\} \cap \{x,y\}| = 1$;
\item $v^{e'}_{i,j,f}$ for all $i\not=0$, all $j$, all $f$, and all strip-edges $e'$ that contain exactly one of $x,y$ and that correspond to a stripe, where $i$ is the index that corresponds to $x$ or $y$;
\item all vertices in the selection clique for $y$ except $\bar{v}^{xy,e}_{y}$.
\end{enumerate}
The first adjacency guarantees that we do not take vertices from spots that correspond to strip-edges in $E_{xy}$. The second adjacency guarantees that we select one or two occurrences of $H$ that stick out of $e$. The motivation for the other adjacencies is the same as in type IIa.
 
The consistency edges for $\bar{v}^{xy,e}_{y}$ are similar.

\paragraph{Type III}
Vertices of type III account for an occurrence of $H$ that contains a vertex of each of $C(w)$, $C(x)$, and $C(y)$ for three strip-vertices $w,x,y$. It follows from the definition of a strip-structure that this is only possible if the occurrence only uses vertices from spots. Moreover, the occurrence contains at least one vertex from a spot that corresponds to a strip-edge in $E_{wx}$, at least one vertex from a spot that corresponds to a strip-edge in $E_{wy}$, and at least one vertex from a spot that corresponds to a strip-edge in $E_{xy}$. Observe also that all vertices from spots that correspond to a strip-edge in $E_{wx} \cup E_{wy} \cup E_{xy}$ either contribute to the occurrence or, in any induced $H$-matching that contains the occurrence, are not adjacent to any other occurrence in the induced $H$-matching.
Therefore, let $w,x,y$ be three strip-vertices such that there is a strip-edge on each pair $w,x$, and $w,y$, and $x,y$ that corresponds to a spot. Moreover, the number $\ell$ of strip-edges on $w,x$, and $w,y$, and $x,y$ that correspond to a spot has to be at least $h$. 
Then the selection clique for $w$ contains the vertex $v^{wxy}_{w}$ of weight one, the selection clique for $x$ contains the vertex $v^{wxy}_{x}$ of weight zero, and the selection clique for $y$ contains the vertex $v^{wxy}_{y}$ of weight zero.
Since occurrences that stick in to a strip are not counted towards the weight of vertices in the selection clique of the strip, we indeed need that the weight of these vertices sums to one, and the setting of the weights is thus correct.

It remains to add the consistency edges. We make $v^{wxy}_{w}$ adjacent to:
\begin{enumerate}
\item $v^{e}_{x'}$, $v^{e}_{y'}$, and $v^{e'}_{0}$ for each strip-edge $e$ on $x',y'$ that corresponds to a spot, where $e \in E_{wx} \cup E_{wy} \cup E_{xy}$;
\item $v^{e'}_{i,j,f}$ for all $i,j$ that are not both equal to $0$, all $f \in \{I,C\}$, and all strip-edges $e' \in E_{wx} \cup E_{wy} \cup E_{xy}$ that correspond to a stripe;
\item $v^{e'}_{x'}$, $v^{e'}_{y'}$, and $v^{e'}_{x'y'}$ for all strip-edges $e' \in E_{x'y'}$ that correspond to a spot and are on some strip-vertices $x',y'$ such that $|\{x',y'\} \cap \{w,x,y\}| = 1$;
\item $v^{e'}_{i,j,f}$ for all $i\not=0$, all $j$, all $f \in \{I,C\}$, and all strip-edges $e'$ that contain exactly one of $w,x,y$ and that correspond to a stripe, where $i$ is the index that corresponds to $w$, $x$, or $y$;
\item all vertices in the selection clique of $x$ except $v^{wxy}_{x}$ and all vertices in the selection clique of $y$ except $v^{wxy}_{y}$.
\end{enumerate}
The first adjacency guarantees that we select all vertices of spots that correspond to a strip-edge in $E_{wx} \cup E_{wy} \cup E_{xy}$. Note that this possibly selects too many vertices, but as observed before, this does not create any problems. The motivation for the other adjacencies is similar as in type IIa and IIb.

The consistency edges for $v^{wxy}_{x}$ and $v^{wxy}_{y}$ are similar.

\paragraph{Final Construction}
Call the constructed graph $G'$. To finish the instance of {\sc Weighted Independent Set}, we set $k'$ equal to the number of strip-vertices and strip-edges of $\mathcal R$ and $K' = k$.

\paragraph{Properties}
We now prove each of the properties of the reduction that are claimed in the theorem statement. Let $N$ denote the number of strip-edges. Recall that, by Lemma~\ref{lem:kernel:strip-bound}, $N = O(h^{4}k^{2})$.

\emph{Weights:}\ 
We can bound the weights as follows. If there is a vertex of weight $k$ or more, then it must correspond to a stripe. By construction, this stripe has an induced $H$-matching of size at least $k$, and we can reduce to a trivial ``yes''-instance of {\sc Weighted Independent Set} instead. If a vertex has weight $-\infty$, then we can simply remove it from $G'$. Therefore, all vertex weights are between $0$ and $k-1$.

\emph{Size:}\ 
To bound the size of $G'$, we first bound the size of $\mathcal{P}_{x}$ for each strip-vertex $x$ (recall vertices of type Ia).  Observe that $|\mathcal{U}| = O(2^{h})$ (more precise bounds are known, but this will suffice).
We can assume that the numbers of each $U \in \mathcal{U}$ are ordered from small to large.
To bound $|\mathcal{P}_{x}|$, we note that the number of ordered subsets of $|E_{x}|$ of size at most $\min\{h,|E_{x}|\}$ is at most $|E_{x}|^{h}$.
For each such ordered subset $X$, we create a bijection to each $U \in \mathcal{U}$ with $|U| = |X|$ by assigning the $i$-th element of $U$ to the $i$-th element of $X$.
We then extend this assignment to $E_{x}$ by assigning zero to any elements of $E_{x} \setminus X$.
This shows that $|\mathcal{P}_{x}|$ is $O( 2^{h} N^{h} )$.

We are now ready to bound the size of $G'$.
For each strip-edge, we created a selection clique of at most $2(h+2)^{2}$ vertices. Hence, there are $O(Nh^{2})$ such vertices.
For each strip-vertex, instead of bounding the size of each selection clique directly, it is easier to analyze how many vertices were created of each type.
We created $O( 2^{h} N^{h} )$ vertices of type Ia and at most $N$ of type Ib for each strip-vertex.
We created at most $2N$ vertices of type IIa and at most $2N$ of type IIb for each pair of strip-vertices.
We created at most $3$ vertices for each triple of strip-vertices.
In total, this means that $O( 2^{h} h^{4h+8} k^{2h+4} )$ vertices were created.

\emph{Time:}\ 
Note that the bound on the size of $G'$ together with Theorem~\ref{thm:algo:circ}, Proposition~\ref{prp:algo:alpha}, and the trivial assumption that $h,k \leq n$ implies that $G'$ can be computed in $n^{O(h)}$ time.

\emph{Correctness:}\ 
We claim that $(G,H,k)$ is a ``yes''-instance for {\sc Induced Graph Matching} if and only if $(G',w',k',K')$ is a ``yes''-instance for {\sc Weighted Independent Set}.
For the one direction, consider an induced $H$-matching $M$ of size $k$.
Now construct an independent set $I'$ of $G'$ as follows.
First, we determine the vertex that $I'$ will contain in the selection clique for each strip-edge. 
Consider the case that $e$ is a strip-edge on strip-vertices $x,y$ that corresponds to a stripe $(J_{e},Z_{e})$. Let $M_{e}$ denote the set of occurrences of $H$ in $M$ that cover $e$.
If there are two occurrences of $H$ in $M_{e}$ that stick out of $e$ or one occurrences that sticks out of both boundaries of $e$, then select $v^{e}_{-1,-1,I}$ (or $v^{e}_{-1,-1,C}$, it does not matter). 
If there is one occurrence of $H$ in $M_{e}$ that sticks in to both boundaries of $e$, then select $v^{e}_{i,j,C}$, where $i,j$ are the number of vertices of $N(z^{x}_{e})$ and $N(z^{y}_{e})$, respectively, that are in this occurrence.
We now treat all remaining cases as follows. If there is an occurrence of $H$ in $M_{e}$ that sticks out of the boundary that corresponds to $x$, set $i = -1$. If there is an occurrence of $H$ in $M_{e}$ that sticks in to the boundary that corresponds to $x$, set $i$ equal to the number of vertices in $N(z^{x}_{e})$ of this occurrence. If there is no occurrence that uses a vertex of $N(z^{x}_{e})$, set $i=0$. Do the same for $j$ with respect to $y$. Then select $v^{e}_{i,j,I}$. Observe that these cases are exhaustive, and thus we always select an appropriate vertex in the selection clique of $e$. The case that $e$ is a strip-edge on one strip-vertex is similar. In the case that $e$ is a strip-edge on strip-vertices $x,y$ that corresponds to a spot, we select the appropriate vertex as discussed in the paragraph on the selection clique for spots.

Second, we select the vertex that $I'$ will contain in the selection clique for each strip-vertex. Let $x$ be a strip-vertex.
If there is no occurrence of $H$ in $M$ that covers $x$, then select $v^{x}_{P}$, where $P$ is the all-zero distribution.
So consider an occurrence of $H$ in $M$ that covers $x$.
If this occurrence is a subset of $C(x)$ and covers at least two strip-edges, then select $v^{x}_{P}$, where $P$ is the distribution of the occurrence over $E_{x}$.
If this occurrence covers only one strip-edge $e$ and the only strip-vertex that it covers is $x$, then this occurrence sticks out of $e$ but not out of the other boundary of $e$, and we select $v^{x}_{e}$.
If this occurrence is not a subset of $C(x)$ and covers $x$, some other strip-vertex $y$, and a strip-edge $e$ that corresponds to a stripe, then this occurrence sticks in to $e$ or sticks out of $e$. In the former case, select $v^{xy,e}_{x}$. In the latter case, select $\bar{v}^{xy,e}_{x}$. 
Finally, if this occurrence covers strip-edges on three distinct pairs of strip-vertices where two include $x$, then select $v^{wxy}_{x}$, where $w,y$ are the other two strip-vertices that are involved.
By the structural observation in the construction outline, these cases are exhaustive, and thus we always select an appropriate vertex in the selection clique of $x$.

By the construction of the consistency edges, none of the selected vertices are adjacent. Therefore, the constructed set $I'$ is an independent set. Since $I'$ contains a vertex in the selection clique of each strip-vertex and strip-edge, $I'$ has size $k'$. Finally, by the construction of the weights, $I'$ has weight $K'$.

For the converse, let $I'$ be an independent set of $G'$ of at least size $k'$ and at least weight $K'$. Then $I'$ contains one vertex in the selection clique of each strip-vertex and each strip-edge (and thus $I'$ in fact has size exactly $k'$). The vertices in the selection clique prescribe which vertices in $G$ should selected, as discussed in the construction outline. By the construction of the consistency edges, these behaviors are consistent. Since the weight of $I'$ is at least $K'$ and by the construction of the weights, the resulting induced $H$-matching has size at least $k$.
\qed
\end{proof}

We are now ready to give the actual kernel.

\begin{proof}[of Theorem~\ref{thm:kernel:main}]
  If $|V(H)| = 1$, then the problem is actually {\sc Independent Set}.
  Since this can be solved in $O(n^{3})$ time on claw-free graphs~\cite{FaenzaOrioloStauffer11}, we can reduce to a trivial ``yes''- or ``no''-instance.
  So assume that $|V(H)| > 1$.
  Since {\sc Weighted Independent Set} is in $\mathsf{NP}$ and {\sc Induced Graph Matching} on claw-free graphs is $\mathsf{NP}$-hard when $H$ is a fixed complete graph of size at least two~\cite{GareyJ1979,KirkpatrickH1983,KoblerR2003}, there exists a Karp-reduction from {\sc Weighted Independent Set} to {\sc Induced Graph Matching} on claw-free graphs when $H$ is a complete graph.
  Note that when the input graph $G'$ to {\sc Weighted Independent Set} is a graph with size polynomial in $k$ and maximum vertex-weight at most $k$, the {\sc Induced Graph Matching} instance produced by this reduction has size polynomial in $k$.
  The proof now follows from Theorem~\ref{thm:kernel:crucial}.
\qed
\end{proof}

\section{Parameterized Intractability on $K_{1,4}$-Free Graphs}
\label{sec:hardness}
In this section, we prove that {\sc Induced Graph Matching} is $\mathsf{W}[1]$-hard with respect to $k$ on $K_{1,4}$-free graphs when $H$ is any fixed complete graph.
For this, we first show that {\sc Independent Set} is $\mathsf{W}[1]$-hard on $K_{1,4}$-free graphs, by presenting a reduction from the {\sc Multicolored Clique} problem.
In this problem, we are given a graph $G$ and an assignment of colors $\{1,\ldots,k\}$ to the vertices of the graph, and the question is whether $G$ admits a $k$-clique using exactly one vertex of each color.
This problem is known to be $\mathsf{W}[1]$-hard~\cite{FellowsHRV09} parameterized by $k$.

\begin{theorem} \label{thm:hardness:is}
  {\sc Independent Set} on $K_{1,4}$-free graphs, parameterized by the size $k$ of the independent set, is $\mathsf{W}[1]$-hard.
\end{theorem}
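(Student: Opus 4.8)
The plan is to reduce from {\sc Multicolored Clique} and to build a $K_{1,4}$-free graph $G'$ whose independent sets of a prescribed size correspond exactly to multicolored cliques of $G$. The central difficulty is that {\sc Independent Set} is asking for a set of pairwise \emph{non}-adjacent vertices, so the natural ``select one vertex of each color, one edge of each color pair'' encoding must be arranged so that consistency between a chosen vertex and the chosen edges incident to it is enforced by \emph{non-edges}, while all the invalid combinations are blocked by \emph{edges} --- all without ever creating an induced $K_{1,4}$, i.e.\ a vertex with four pairwise non-adjacent neighbors. First I would set up the standard gadget skeleton: for each color $i \in \{1,\dots,k\}$ a \emph{vertex-selection gadget} $V_i$ containing one vertex for each vertex of $G$ of color $i$, and for each pair $\{i,j\}$ an \emph{edge-selection gadget} $E_{ij}$ containing one vertex for each edge of $G$ between color classes $i$ and $j$. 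The target independent set size is $k + \binom{k}{2}$, forcing exactly one vertex out of each gadget.

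The key structural trick is to make each gadget a (large) clique rather than an independent set: inside $V_i$ all vertices are pairwise adjacent, and inside $E_{ij}$ all vertices are pairwise adjacent. This is what will keep the graph $K_{1,4}$-free: a vertex $u$ of $V_i$ has neighbors only inside its own clique $V_i$ and in a bounded number of other cliques (the $k-1$ gadgets $E_{ij}$), so any independent set among $N(u)$ can pick at most one vertex per clique; to avoid an induced claw-like $K_{1,4}$ I must ensure that, among the cliques $V_i, E_{ij_1}, \dots$, at most three can simultaneously contribute a neighbor of $u$ that is pairwise non-adjacent --- this is where the consistency edges have to be designed carefully, presumably by adding enough cross edges between the $E_{ij}$ gadgets (e.g.\ making $E_{ij}$ and $E_{i\ell}$ substantially complete to each other except for the intended consistent pairs) so that any ``claw center'' sees at most a constant-degree independent set in its neighborhood. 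The consistency requirement itself is the familiar one: the vertex of $E_{ij}$ encoding an edge $e$ is made \emph{non-adjacent} to the vertex of $V_i$ encoding $e$'s color-$i$ endpoint and to the vertex of $V_j$ encoding $e$'s color-$j$ endpoint, and \emph{adjacent} to all other vertices of $V_i$ and $V_j$; then a size-$(k+\binom{k}{2})$ independent set is possible if and only if the chosen vertices and edges are mutually consistent, i.e.\ form a multicolored clique.

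So the proof would proceed in these steps: (1) describe the construction of $G'$ precisely, including all cross edges; (2) verify $G'$ is $K_{1,4}$-free by a case analysis on which gadget a putative claw center lies in and bounding the size of an independent set in its neighborhood --- showing it is at most $3$; (3) show a multicolored $k$-clique in $G$ yields an independent set of size $k+\binom{k}{2}$ in $G'$ (pick the clique vertices and the $\binom{k}{2}$ edges among them; check pairwise non-adjacency gadget by gadget); (4) conversely show any independent set of size $k+\binom{k}{2}$ must hit each gadget exactly once and, by the consistency non-edges, the $k$ selected vertices pairwise span the $\binom{k}{2}$ selected edges, hence form a multicolored clique; (5) note the construction is polynomial-time and the parameter $k+\binom{k}{2}$ is a function of $k$ only, so $\mathsf{W}[1]$-hardness transfers. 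I expect step (2) --- getting the cross-edge pattern between the $E_{ij}$ gadgets just right so that it simultaneously kills every induced $K_{1,4}$ and still permits the intended solution --- to be the main obstacle, since the two requirements pull in opposite directions (more edges helps claw-freeness but risks destroying the valid independent set), and the correct resolution likely requires an auxiliary ordering or indexing of the edge-gadget vertices so that non-adjacency within a neighborhood is confined to a constant-size set.
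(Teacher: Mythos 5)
Your plan correctly identifies the right source problem ({\sc Multicolored Clique}), the right target value forcing one pick per gadget, and the right consistency mechanism (non-edges between an edge-gadget vertex and the vertex-gadget entries encoding its endpoints). But the proof has a genuine gap exactly where you flag it: step (2), $K_{1,4}$-freeness, is not achieved by the construction as described, and the tentative fix you float does not work. In your skeleton, a vertex $u$ of the clique $V_i$ is adjacent, for every $j\neq i$, to all vertices of $E_{ij}$ encoding edges not incident to $u$; since distinct edge gadgets are non-adjacent to each other, picking one such vertex in each of four gadgets $E_{ij_1},\ldots,E_{ij_4}$ gives an induced $K_{1,4}$ centered at $u$ whenever $k\geq 5$. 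Your proposed repair --- making $E_{ij}$ and $E_{i\ell}$ ``substantially complete to each other except for the intended consistent pairs'' --- still fails: center a claw at a vertex of $E_{ij}$ encoding $e=\{u,v\}$ with $u$ of color $i$, and in four gadgets $E_{i\ell_1},\ldots,E_{i\ell_4}$ pick edges all incident to a common vertex $u'\neq u$ of color $i$ but avoiding $u$; these are neighbors of the center, yet pairwise non-adjacent under the ``consistent pairs'' exception, so $K_{1,4}$ reappears. Making the edge gadgets fully complete to each other instead destroys the intended independent set, which must contain one vertex of every $E_{ij}$. So the two requirements you note as pulling in opposite directions are not reconciled by anything in the proposal.

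The paper resolves this with an idea that is absent from your write-up: each color class $V_i$ is not a single clique but a matrix of $|V_i|$ rows and $k-1$ columns (one column per other color), each column a clique, with additional ``staircase'' edges between consecutive columns (from $(v_i^p,j)$ to $(v_i^q,j+1)$ for $q<p$) whose effect is that any maximum independent set inside the gadget is exactly one row, i.e., all columns agree on the same vertex of color $i$. Likewise each edge gadget has two column-cliques $(e,i)$ and $(e,j)$ with only the matching non-edges. Consistency edges then go only between the $i$-column of $E_{ij}$ and the dedicated $j$-column of the $V_i$ gadget. Consequently the neighborhood of an edge-gadget vertex is covered by three cliques, and the neighborhood of a vertex-gadget vertex by four cliques among which the staircase ordering forbids three pairwise non-adjacent choices from the vertex gadget; this is the case analysis that actually certifies $K_{1,4}$-freeness (with target $k'=k(k-1)+2\binom{k}{2}$). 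Without this replication-plus-ordering device (or an equally concrete substitute), your proof cannot be completed as stated.
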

\begin{proof}
  Let $(G,k)$ be an instance of {\sc Multicolored Clique} where each vertex has been assigned a color in $\{1,\ldots,k\}$.
  Let $V_{i}$ denote the set of vertices in $V(G)$ having color $i$, and assume that they are ordered $v_{i}^{1},\ldots,v_{i}^{|V_{i}|}$.
  For each color $i$, construct a gadget on $|V_{i}| \cdot (k-1)$ vertices.
  We may imagine them to be organized in a matrix-like shape with $|V_{i}|$ rows and $k-1$ columns, such that each row corresponds to a vertex of $V_{i}$ and each column to a color other than $i$.
  We then add edges between $(v_{i}^{p},j)$ and $(v_{i}^{q},j)$ for each $p\not=q$ with $1 \leq p,q \leq |V_{i}|$ and each $j \in \{1,\ldots,k\}\backslash\{i\}$.
  This makes each column of the matrix a clique.
  We also add an edges between $(v_{i}^{p},j)$ and $(v_{i}^{q},j+1)$ for each $1 \leq q < p \leq |V_{i}|$ and each $j \in \{1,\ldots,k\}\backslash\{i\}$, where we take $j+1$ to be $1$ if $j = k$.

  For each pair of colors $i,j$ ($i \not= j$), we also construct a matrix-like gadget.
  It has $|E_{ij}|$ rows, where $E_{ij} = (V_{i} \times V_{j}) \cap E$, and two columns, one for color~$i$ and one for color $j$.
  Each element of the matrix-like gadget again consists of one vertex.
  We add all edges between the vertices of the gadget, except that the edge between $(e,i)$ and $(e,j)$ is not present for any $e \in E_{ij}$.
  In particular, this makes each column of the matrix a clique.

  We then connect the vertex and edge gadgets as follows.
  Consider a pair of colors $i,j$ and some $u \in V_{i}$.
  We then connect $(v_{i}^{p},j)$ in the gadget for $V_{i}$ to all $(e,i)$ in the gadget for $E_{ij}$ for which $v_{i}^{p} \not\in e$.
  Call the resulting graph $G'$.
  To complete the instance $(G',k')$, we set $k' = k(k-1) + 2 \binom{k}{2} = 2k(k-1)$.

  We first prove that $G'$ is $K_{1,4}$-free.
  Suppose not, and consider an occurrence of $K_{1,4}$ in $G'$.
  Call the vertex of degree four of this occurrence the center of the occurrence and the other vertices the leafs.
  Suppose that the center is $(v_{i}^{p},j)$ for some $1 \leq i \leq k$, $1 \leq p \leq |V_{i}|$, and $j \in \{1,\ldots,k\}\backslash\{i\}$.
  By construction, $(v_{i}^{p},j)$ is only adjacent to vertices in the same column of the $V_{i}$ vertex gadget and in the columns $j-1$ and $j+1$, and to vertices in the $i$ column of the $E_{ij}$ edge gadget. Since each of these columns is a clique, each column contains exactly one leaf of the occurrence. Consider the leafs of the occurrence in the $V_{i}$ vertex gadget and let them be $(v_{i}^{q},j-1)$, $(v_{i}^{r}, j)$, and $(v_{i}^{s}, j+1)$. By construction, $q < p < s$. Suppose that $r > q$. Then by construction $(v_{i}^{q},j-1)$ and $(v_{i}^{r}, j)$ are adjacent, a contradiction. Hence $r \leq q$. Then by construction $(v_{i}^{r}, j)$ and $(v_{i}^{s}, j+1)$ are adjacent, a contradiction. Hence the center is $(e,i)$ for some $1 \leq i,j \leq k$ with $i\not= j$ and $e \in E_{ij}$. By construction, $(e,i)$ is only adjacent to vertices in the $j$ column of the $V_{i}$ vertex gadget, and to vertices in the $i$ or $j$ column of $E_{ij}$. Since each of these columns are a clique, the neighborhood of $(e,i)$ can be partitioned into three disjoint cliques. This contradicts that $(e,i)$ has an independent set of size four in its neighborhood. Hence, $G'$ is indeed $K_{1,4}$-free.

  It remains to show that $G$ has a multicolored clique $K$ of size $k$ if and only if $G'$ has an independent set $I$ of size $k'$. To this end, we prove the following claim.

  \medskip
  \noindent\textbf{Claim:} Any maximum independent set of the $E_{ij}$ edge gadget has size $2$ and consists of a single row of the gadget.
    Any maximum independent set of the $V_{i}$ vertex gadget has size $k-1$ and consists of a single row of the gadget.\\
  \textbf{Proof of Claim:} The first part is immediate from the construction of $E_{ij}$.
    For the second part, recall that each column of the gadget is a clique, and thus contains at most one vertex from any independent set.
    Since each row is an independent set, the size of any maximum independent set is $k-1$.
    Also note that if $(v_{i}^{p},j)$ is part of a maximum independent set, then the only vertices of the form $(v_{i}^{q},j+1)$ that can be part of this maximum independent set are those for which $p < q$. Then any maximum independent set consists of the vertices of only a single row of the matrix, and has size $k-1$.
    This completes the proof of the claim.\hfill\#

  \medskip
  Suppose that $G$ has a multicolored clique $K$ of size $k$.
  We construct the following independent set $I$.
  For the unique vertex $v_{i}^{p}$ in $K \cap V_{i}$ for each color~$i$, we add the $k-1$ vertices on the row of the gadget of~$V_{i}$ that corresponds to $v_{i}^{p}$ to $I$. For the unique edge $e$ in $K \cap E_{ij}$ for each pair of colors $i,j$, we add the two vertices on the row of the gadget of $V_{i}$ that corresponds to $e$ to~$I$.
  It is immediate from the description of the construction that this is an independent set and that it has size $k'$.

  Suppose that $G'$ has an independent set $I$ of size $k'$.
  By the above claim, each edge gadget has $2$ vertices in $I$ that are in the same row.
  Similarly, each vertex gadget has $k-1$ vertices in $I$ that are in the same row.
  Now observe that the selected vertex rows and edge rows together induce a multicolored clique.
  This clique has size $k$ by our choice of $k'$.
\qed
\end{proof}

We now show how this theorem implies the hardness of {\sc Induced Graph Matching} on $K_{1,4}$-free graphs for any fixed complete graph $H$.
\begin{theorem}
\label{thm:hardness:clique}
  {\sc Induced Graph Matching} on $K_{1,4}$-free graphs, parameterized by the size $k$ of the sought induced $H$-matching, is $\mathsf{W}[1]$-hard for any fixed complete graph $H$.
\end{theorem}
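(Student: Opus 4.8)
The plan is to reduce from {\sc Independent Set} on $K_{1,4}$-free graphs, which is $\mathsf{W}[1]$-hard by Theorem~\ref{thm:hardness:is}. When $H = K_{1}$ the statement is exactly Theorem~\ref{thm:hardness:is}, so assume $H = K_{h}$ with $h \geq 2$. Given an instance $(G,k)$ of {\sc Independent Set} in which $G$ is $K_{1,4}$-free, I would build a graph $G'$ by ``blowing up'' each vertex: replace every $v \in V(G)$ by a clique $C_{v}$ on $h$ vertices, and for every edge $uv \in E(G)$ make $C_{u}$ completely adjacent to $C_{v}$ in $G'$ (adding no other edges). The output instance is $(G', H, k)$, so the parameter is preserved and the construction is polynomial-time; this gives a parameterized reduction.

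First I would verify that $G'$ is $K_{1,4}$-free. Fix $x \in C_{v}$. By construction $N_{G'}(x) = (C_{v} \setminus \{x\}) \cup \bigcup_{u \in N_{G}(v)} C_{u}$, and $G'[N_{G'}(x)]$ contains each $C_{u}$ (for $u \in N_{G}(v)$) and $C_{v} \setminus \{x\}$ as cliques, with $C_{v} \setminus \{x\}$ complete to every such $C_{u}$. Hence any independent set inside $N_{G'}(x)$ either lies within the clique $C_{v} \setminus \{x\}$ (so has size at most $1$) or picks at most one vertex from each $C_{u}$, whose ``representatives'' then project to an independent set inside $N_{G}(v)$. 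Therefore $\alpha(G'[N_{G'}(x)]) \leq \max\{1, \alpha(G[N_{G}(v)])\} \leq 3$, using that $G$ is $K_{1,4}$-free, and so $G'$ is $K_{1,4}$-free.

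Next I would prove that $G$ has an independent set of size $k$ if and only if $G'$ has an induced $H$-matching of size $k$. For the forward direction, an independent set $\{v_{1},\ldots,v_{k}\}$ of $G$ yields the collection $\{C_{v_{1}},\ldots,C_{v_{k}}\}$: each $C_{v_{i}}$ is a clique of size $h$, hence an occurrence of $H$, and since $v_{i}v_{j} \notin E(G)$ there are no edges between $C_{v_{i}}$ and $C_{v_{j}}$. For the converse, let $\{H_{1},\ldots,H_{k}\}$ be an induced $H$-matching of $G'$. Each $H_{i}$ is a $K_{h}$, hence a clique of $G'$, and the map sending a vertex of $C_{w}$ to $w$ sends $V(H_{i})$ onto a clique $Q_{i}$ of $G$ (cliques of $G'$ project to cliques of $G$). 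The matching condition forces the $Q_{i}$ to be pairwise disjoint with no edges between them: if $u \in Q_{i} \cap Q_{j}$ with $i \neq j$, or $u \in Q_{i}$, $w \in Q_{j}$ with $uw \in E(G)$, then $H_{i}$ and $H_{j}$ would each contain a vertex of $C_{u} \cup C_{w}$ and these vertices are adjacent in $G'$, contradicting $\{u',v'\} \notin E(G')$ for $u' \in V(H_{i})$, $v' \in V(H_{j})$. Picking one vertex from each $Q_{i}$ then gives an independent set of size $k$ in $G$, completing the equivalence.

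Since the reduction is a polynomial-time parameterized reduction, $\mathsf{W}[1]$-hardness of {\sc Induced Graph Matching} on $K_{1,4}$-free graphs follows. The only point requiring care is that an occurrence of $K_{h}$ in $G'$ need not be one of the blown-up cliques $C_{v}$ — it may straddle several mutually adjacent $C_{v}$'s — but the ``no edges between occurrences'' requirement forces the projections $Q_{i}$ of the occurrences down to $G$ to be pairwise disjoint and pairwise non-adjacent, which is exactly what is needed to extract an independent set. Making this step precise is the crux of the backward direction, and I expect it to be the main (though ultimately routine) obstacle.
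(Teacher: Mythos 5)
Your proposal is correct and follows essentially the same route as the paper: a parameter-preserving reduction from Theorem~\ref{thm:hardness:is} that blows up each vertex of $G$ into a clique (twin set) of size $|V(H)|$, with the same forward direction and the same structural point that occurrences straddling several blown-up cliques still force an independent set in $G$. Your handling of the backward direction (projecting each occurrence to a clique $Q_i$ of $G$ and picking representatives) is a minor, if anything slightly cleaner, variant of the paper's replacement argument, not a genuinely different approach.
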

\begin{proof}
  Let $(G,k)$ be an instance of {\sc Independent Set} on $K_{1,4}$-free graphs.
  Construct a graph $G'$ from $G$ by replacing each vertex $v \in V(G)$ with a clique $K_v$ of size $|V(H)|$, and making all vertices of $K_v$ adjacent to all vertices of $N(v)$ in $G$.
  That is, we replace $v$ by a twin set of size $|V(H)|$.
  Call this twin set $H_{v}$.
  Clearly,~$G'$ has an induced $K_{1,4}$ if and only if $G$ does.
  The resulting instance of {\sc Induced Graph Matching} is $(G',H,k)$.

  Suppose that $G$ has an independent set $I$ of size $k$.
  Let $M$ be the set of $H_{v}$ for all $v \in I$.
  Since $I$ is an independent set, $M$ must be an induced $H$-matching.
  It clearly has size $k$.

  Suppose that $(G',H,k)$ has an induced $H$-matching $M$ of size $k$.
  Consider a copy $H'$ of $H$ in $M$ which uses vertices from $H_{v}$ for any $v$ in some set $V_{H'} \subseteq V(G)$ with $|V_{H'}| \geq 2$.
  Then, by construction of $G'$, no copy of $H$ in~$M$ can use a vertex of $H_{u}$ for any $u \in V(G)\setminus V_{H'}$.
  Moreover, no copy of $H$ in $M$ other than~$H'$ can use a vertex of $H_{u}$ for any $u \in V(H')$.
  But then we can just replace $H'$ by $H_{v}$ for some $v \in V_{H'}$.
  It follows that $\{v \mid H_{v} \in M \}$ is an independent set of $G$ of size $k$.

  The theorem is now immediate from Theorem~\ref{thm:hardness:is}.
\qed
\end{proof}

We observe that similar reductions can be given for graphs $H$ that are not a complete.
Then, however, we can no longer ensure that the constructed graph is $K_{1,4}$-free; rather, the excluded induced subgraphs of the resulting graph will depend on the structure of $H$.

\section{Discussion}
\label{sec:discussion}
In this section, we discuss the possibility of extending the work of this paper.
In particular, we give some open problems.

We showed that {\sc Induced Graph Matching} on $K_{1,3}$-free graphs is fixed-parameter tractable for fixed connected graphs $H$, and even admits a polynomial kernel if $H$ is a complete graph.
It is natural to ask whether these results extend to $K_{1,4}$-free graphs.
Theorem~\ref{thm:hardness:clique}, however, shows that {\sc Induced Graph Matching} becomes $\mathsf{W}[1]$-hard on such graphs.
We also note that the running times of our results contain a term $n^{O(h)}$, where $h = |V(H)|$.
Unfortunately, it has been shown that extending to the case where $h$ is also a parameter is unlikely to succeed, as {\sc Induced Graph Matching} becomes $\mathsf{W}[1]$-hard even on line graphs and co-bipartite graphs~\cite{GolovachPL2012,GolovachPL12}.

It is unclear, however, whether the fixed-parameter algorithms in this paper for {\sc Induced Graph Matching} extend to graphs $H$ with multiple components.
This seems to be difficult even for fuzzy circular-arc graphs.
For the same reason, it is open whether parameterization by $k$ is necessary in Theorem~\ref{thm:circ:circarc-circarc-fpt}, i.e.~whether {\sc Induced Graph Matching} is fixed-parameter tractable on proper interval graphs when parameterized by $|V(H)|$.
In both of these cases, the natural approach suggested by the results in Section~\ref{sec:circ} and by Cameron and Hell~\cite{CameronHell2006} (to reduce to an {\sc Independent Set} problem) seems to fail.
For the second case, even if we are given all occurrences of $H$ in $G$, then following the described approach we would need to solve an instance of {\sc Independent Set} on a multi-interval graph, which is known to be $\mathsf{W}[1]$-hard~\cite{FellowsHRV09}.
For the first case, the graph on which we would need to solve {\sc Independent Set} is even more complex.

The extension of Theorem~\ref{thm:circ:circarc-circarc-poly} and~\ref{thm:circ:circarc-circarc-fpt} from long proper circular-arc graphs to long circular-arc graphs should also be considered open.
The proofs of these theorems imply a reduction from {\sc Induced Graph Matching} where $H$ is a proper circular-arc graph and $G$ is a long circular-arc graph to {\sc Induced Subgraph Isomorphism} on such graphs.
However, the correctness of the reported polynomial-time algorithm for this problem~\cite{HeggernesMV2010} on the more restrictive interval graphs $G$ cannot be verified at this time~\cite{HeggernesMV2012}.

Another interesting open question is whether the polynomial kernel for {\sc Induced Graph Matching} on claw-free graphs extends beyond the case when $H$ is a fixed complete graph.
The proof of Theorem~\ref{thm:kernel:main} actually extends to this more general case.
However, the obstacle to extend our current techniques is that we are unable to find a lemma similar to Lemma~\ref{lem:kernel:strip-bound} if $H$ is not a complete graph.

\begin{acknowledgements}
  We thank the anonymous reviewers for helpful remarks improving the presentation of this manuscript.
\end{acknowledgements}

\appendix\normalsize

\section{\textsc{Induced Graph Matching} on Triangles is $\mathsf{NP}$-hard on Line Graphs} \label{sec:k3}
Recall from the introduction that the immediate correspondence between an $H$-matching in a graph $G$ and an induced $L(H)$-matching in $L(G)$ does not apply in case $L(H)$ is a triangle, i.e.~in case that $H$ is $K_{3}$ or $K_{1,3}$.
Hence, to the best of our knowledge, the complexity of {\sc Induced Graph Matching} for $H = K_3$ (also known as {\sc Induced Triangle Matching} or {\sc Induced Triangle Packing}) on line graphs is open.

In this section, we prove that {\sc Induced Graph Matching} for $H = K_3$ on line graphs is $\mathsf{NP}$-hard.

\begin{theorem} \label{thm:line:hard}
{\sc Induced Graph Matching} for $H = K_3$ is $\mathsf{NP}$-complete on line graphs of planar graphs.
\end{theorem}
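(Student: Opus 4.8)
The plan is to reduce from {\sc Independent Set} on cubic planar graphs, which is well known to be $\mathsf{NP}$-hard (see, e.g., \cite{GareyJ1979}); membership in $\mathsf{NP}$ is immediate, since one can check in polynomial time that $k$ given triples of vertices of a line graph induce pairwise non-adjacent triangles. As a first step --- exactly as for the other graphs $H$ discussed in the introduction --- I would pass to the pre-image: three edges of a graph $\Gamma$ are pairwise adjacent in $L(\Gamma)$ precisely when they form a triangle or share a common endpoint, so an induced $K_3$-matching of size $k$ in $L(\Gamma)$ is nothing but a family of $k$ subgraphs of $\Gamma$, each isomorphic to $K_3$ or to $K_{1,3}$, with pairwise disjoint vertex sets --- vertex-disjointness inside $L(\Gamma)$ together with the absence of edges between the copies amounts exactly to disjointness of the vertex sets in $\Gamma$. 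This collapse of triangles and claws under $L(\cdot)$ --- the very reason the pre-image of a triangle fails to be unique \cite{Whitney1932} --- is what places $H=K_3$ outside the reduction of the introduction, and it is the phenomenon the construction must tame.

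So let $G$ be a cubic planar graph, let $G'$ be obtained from $G$ by subdividing every edge exactly once, and put $k'=k$; then $G'$ is planar, so $L(G')$ is the line graph of a planar graph. I claim that $L(G')$ has an induced $K_3$-matching of size $k$ if and only if $G$ has an independent set of size $k$, and by the previous paragraph it suffices to show that the largest number of pairwise vertex-disjoint copies of $K_3$ or $K_{1,3}$ in $G'$ equals $\alpha(G)$. Since $G'$ is bipartite (with sides $V(G)$ and $E(G)$) it contains no triangle, so every such copy is a claw; and since every subdivision vertex has degree $2$ while every original vertex has degree $3$, a claw must be centered at an original vertex $v$ and, as $v$ has degree exactly $3$ in $G'$, it consists of $v$ together with its three neighbors, all subdivision vertices. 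Thus the claws of $G'$ are in bijection with $V(G)$. The claws at distinct vertices $u$ and $v$ share a vertex of $G'$ if and only if a single edge of $G$ joins $u$ and $v$ --- a common neighbor of $u,v$ in $G$ is harmless, since the two edges to it become two distinct subdivision vertices --- that is, if and only if $uv\in E(G)$. Consequently a set of pairwise vertex-disjoint claws in $G'$ is, in both directions, the same as an independent set of $G$ of equal size, which proves the claim and hence the theorem.

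I do not expect a substantial technical obstacle here, as the construction is elementary; the one point that needs care is ruling out unintended claws in $G'$ and verifying that vertex-disjointness of claws coincides exactly with non-adjacency in $G$. This is precisely what subdividing the edges buys us: it destroys all triangles and it separates the edges incident to a common vertex, so no spurious conflicts arise. If one would rather start from {\sc Independent Set} on planar graphs of maximum degree at most $3$, one can either prepend the standard gadget that raises every vertex to degree exactly $3$ (preserving planarity and shifting $\alpha$ by a known amount), or simply observe that the maximum claw packing of $G'$ then equals the maximum independent set of $G$ among its degree-$3$ vertices, which is still $\mathsf{NP}$-hard.
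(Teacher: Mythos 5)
Your reduction is correct and is essentially the same as the paper's: reduce from \textsc{Independent Set} on cubic planar graphs, subdivide every edge of $G$, and observe that in the triangle-free graph $G'$ every triangle of $L(G')$ comes from three edges sharing an original degree-three vertex, so induced $K_3$-matchings of size $k$ in $L(G')$ correspond exactly to independent sets of size $k$ in $G$. Your treatment is, if anything, slightly more explicit than the paper's about why the triangle/claw ambiguity in the pre-image causes no harm after subdivision.
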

\begin{proof}
It is clear that {\sc Induced Graph Matching} for $H = K_3$ on line graphs of planar graphs belongs to $\mathsf{NP}$.
To prove $\mathsf{NP}$-hardness, we reduce from {\sc Independent Set} on planar graphs where each vertex has degree exactly three, which is known to be $\mathsf{NP}$-complete~\cite{MaierS1977}.
Let $(G,k)$ be an instance of this problem. We transform this to an instance of {\sc Induced Graph Matching} with $H = K_{3}$ as follows. First, we subdivide each edge of $G$. That is, we (simultaneously) remove each edge $e = \{u,v\}$ and add a new vertex $x_{e}$ and new edges $\{u,x_{e}\}$ and $\{x_{e},v\}$. Denote the resulting graph by $G'$. Then the instance of {\sc Induced Graph Matching} is $(L(G'), K_{3}, k)$, where $L(G')$ is the line graph of $G'$. Note that $G'$ is planar, and thus $L(G')$ is the line graph of a planar graph.

Observe that any induced subgraph in $L(G')$ that is isomorphic to $K_{3}$ corresponds to three edges of $G'$ that are incident on the same vertex. Armed with this observation, we show that $(L(G'), K_{3}, k')$ is a ``yes''-instance of {\sc Induced Graph Matching} if and only if $(G,k)$ is a ``yes''-instance of {\sc Independent Set}, thus completing the proof.

Suppose that $(G,k)$ is a ``yes''-instance of {\sc Independent Set}. Let $I$ be an independent set of $G$ of size $k$. For each $v \in I$, consider the three edges of $G'$ that are incident on $v$. These correspond to a triangle in $L(G')$. Let $M$ be the set of these triangles for all vertices in $I$. Note that $|M| = k$. By the above observation, no two triangles in $L(G')$ have a vertex in common. Moreover, since $I$ is independent, no two triangles in $M$ are adjacent. Therefore, $(L(G'), K_{3}, k)$ is a ``yes''-instance of {\sc Induced Graph Matching}.

Suppose that $(L(G'), K_{3}, k)$ is a ``yes''-instance of {\sc Induced Graph Matching}. Let $M$ be an induced $K_{3}$-matching in $L(G')$ of size $k$. By the above observation, no two triangles in $L(G')$ have a vertex in common, and by construction each triangle corresponds to some vertex $v$ that is both in $G$ and $G'$. Let $I$ be the set of vertices that correspond to the triangles of $M$. Since $M$ cannot contain triangles that correspond to adjacent vertices in $G$, $I$ is an independent set. As $|I|=|M|=k$, $(G,k)$ is a ``yes''-instance of {\sc Independent Set}.
\qed\end{proof}


\begin{thebibliography}{10}
\providecommand{\url}[1]{{#1}}
\providecommand{\urlprefix}{URL }
\expandafter\ifx\csname urlstyle\endcsname\relax
  \providecommand{\doi}[1]{DOI~\discretionary{}{}{}#1}\else
  \providecommand{\doi}{DOI~\discretionary{}{}{}\begingroup
  \urlstyle{rm}\Url}\fi

\bibitem{AlonYZ1995}
Alon, N., Yuster, R., Zwick, U.: Color-coding.
\newblock Journal of the ACM \textbf{42}(4), 844--856 (1995)

\bibitem{CameronHell2006}
Cameron, K., Hell, P.: Independent packings in structured graphs.
\newblock Mathematical Programming \textbf{105}(2-3), 201--213 (2006)

\bibitem{ChudnovskyOvetsky07}
Chudnovsky, M., Ovetsky, A.: Coloring quasi-line graphs.
\newblock Journal of Graph Theory \textbf{54}(1), 41--50 (2007)

\bibitem{ChudnovskySeymour05}
Chudnovsky, M., Seymour, P.D.: The structure of claw-free graphs.
\newblock In: Surveys in Combinatorics, vol. 327, pp. 153--171 (2005)

\bibitem{ChudnovskyS2007}
Chudnovsky, M., Seymour, P.D.: Claw-free graph. I. Orientable prismatic graphs.
\newblock Journal of Combinatorial Theory, Series B \textbf{97}(6), 1373--1410 (2007)

\bibitem{ChudnovskyS2008a}
Chudnovsky, M., Seymour, P.D.: Claw-free graph. II. Non-orientable prismatic graphs.
\newblock Journal of Combinatorial Theory, Series B \textbf{98}(2), 249--290 (2008)

\bibitem{ChudnovskyS2008b}
Chudnovsky, M., Seymour, P.D.: Claw-free graph. III. Circular interval graphs.
\newblock Journal of Combinatorial Theory, Series B \textbf{98}(4), 812--834 (2008)

\bibitem{ChudnovskyS2008c}
Chudnovsky, M., Seymour, P.D.: Claw-free graph. IV. Decomposition theorem.
\newblock Journal of Combinatorial Theory, Series B \textbf{98}(5), 839--938 (2008)

\bibitem{ChudnovskyS2008d}
Chudnovsky, M., Seymour, P.D.: Claw-free graph. V. Global structure.
\newblock Journal of Combinatorial Theory, Series B \textbf{98}(6), 1373--1410 (2008)

\bibitem{ChudnovskyS2010}
Chudnovsky, M., Seymour, P.D.: Claw-free graph. VI. Coloring.
\newblock Journal of Combinatorial Theory, Series B \textbf{100}(6), 560--572 (2010)

\bibitem{ChudnovskyS2012}
Chudnovsky, M., Seymour, P.D.: Claw-free graph. VII. Quasi-line graphs.
\newblock Journal of Combinatorial Theory, Series B \textbf{102}(6), 1267--1294 (2012)

\bibitem{CyganPPPW11}
Cygan, M., Philip, G., Pilipczuk, M., Pilipczuk, M., Wojtaszczyk, J.O.:
  Dominating set is fixed parameter tractable in claw-free graphs.
\newblock Theoretical Computer Science \textbf{412}(50), 6982--7000 (2011)

\bibitem{Damaschke1990}
Damaschke, P.: Induced subgraph isomorphism for cographs is $\mathsf{NP}$-complete.
\newblock In: Graph-Theoretic Concepts in Computer Science (WG 1990), \emph{Lecture Notes in Computer Science}, vol. 484, pp.
  72--78 (1991)

\bibitem{DellMarx2012}
Dell, H., Marx, D.: Kernelization of packing problems.
\newblock In: Proceedings of the Twenty-Third Annual ACM-SIAM Symposium on Discrete Algorithms (SODA 2012), pp. 68--81 (2012)

\bibitem{DowneyFellows1999}
Downey, R.G., Fellows, M.R.: Parameterized complexity.
\newblock Springer (1999)

\bibitem{FaenzaOrioloStauffer11}
Faenza, Y., Oriolo, G., Stauffer, G.: An algorithmic decomposition of claw-free
  graphs leading to an {$O$}$(n^3)$-algorithm for the weighted stable set
  problem.
\newblock In: Proceedings of the Twenty-Second Annual ACM-SIAM Symposium on Discrete Algorithms (SODA 2011), pp. 630--646 (2011)

\bibitem{FaudreeFR97}
Faudree, R.J., Flandrin, E., Ryj{\'a}cek, Z.: Claw-free graphs - {A} survey.
\newblock Discrete Mathematics \textbf{164}(1-3), 87--147 (1997)

\bibitem{FellowsHRV09}
Fellows, M.R., Hermelin, D., Rosamond, F., Vialette, S.: On the parameterized
  complexity of multiple-interval graph problems.
\newblock Theoretical Computer Science \textbf{410}(1), 53--61 (2009)

\bibitem{FellowsKNRRSTW2008}
Fellows, M.R., Knauer, C., Nishimura, N., Ragde, P., Rosamond, F.A., Stege, U.,
  Thilikos, D.M., Whitesides, S.: Faster fixed-parameter tractable algorithms
  for matching and packing problems.
\newblock Algorithmica \textbf{52}(2), 167--176 (2008)

\bibitem{GareyJ1977}
Garey, M., Johnson, D.: The rectilinear steiner tree problem is $\mathsf{NP}$-complete.
\newblock SIAM Journal on Applied Mathematics \textbf{32}(4), 826--834 (1977)

\bibitem{GareyJ1979}
Garey, M., Johnson, D.: Computers and Intractability: A Guide to the Theory of
  $\mathsf{NP}$-Completeness.
\newblock Freeman (1979)

\bibitem{GolovachPL2012}
Golovach, P., Paulusma, D., van Leeuwen, E.J.: Induced disjoint paths in
  {AT}-free graphs.
\newblock In: Algorithm Theory (SWAT 2012), \emph{Lecture Notes in Computer Science}, vol. 7357, pp.
  153--164 (2012)

\bibitem{GolovachPL12}
Golovach, P.A., Paulusma, D., van Leeuwen, E.J.: Induced disjoint paths in
  claw-free graphs.
\newblock In: Algorithms (ESA 2012), \emph{Lecture Notes in Computer Science}, vol. 7501, pp.
  515--526 (2012)

\bibitem{HabibMPV2000}
Habib, M., McConnell, R., Paul, C., Viennot, L.: Lex-{BFS} and partition
  refinement, with applications to transitive orientation, interval graph
  recognition and consecutive ones testing.
\newblock Theoretical Computer Science \textbf{234}, 59--84 (2000)

\bibitem{HeggernesHMV2012}
Heggernes, P., van~'t Hof, P., Meister, D., Villanger, Y.: The induced subgraph
  isomorphism problem on proper interval graphs and bipartite permutation
  graphs.
\newblock Preprint (2012)

\bibitem{HeggernesMV2012}
Heggernes, P., Meister, D., Villanger, Y.: Private communication (2012)

\bibitem{HeggernesMV2010}
Heggernes, P., Meister, D., Villanger, Y.: Induced subgraph isomorphism on
  interval and proper interval graphs.
\newblock In: Algorithms and Computation (ISAAC 2010), \emph{Lecture Notes in Computer Science}, vol. 6507, pp.
  399--409 (2010)

\bibitem{HermelinML12}
Hermelin, D., Mnich, M., van Leeuwen, E.J.: Parameterized complexity of induced
  {$H$}-matching in claw-free graphs.
\newblock In: Algorithms (ESA 2012), \emph{Lecture Notes in Computer Science}, vol. 7501, pp.
  624--635 (2012)

\bibitem{HermelinMLW11}
Hermelin, D., Mnich, M., van Leeuwen, E.J., Woeginger, G.J.: Domination when
  the stars are out.
\newblock In: Automata, Languages and Programming (ICALP 2011), \emph{Lecture Notes in Computer Science}, vol. 6755, pp.
  462--473 (2011)

\bibitem{HermelinWu2012}
Hermelin, D., Wu, X.: Weak compositions and their applications to polynomial
  lower bounds for kernelization.
\newblock In: Proceedings of the Twenty-Third Annual ACM-SIAM Symposium on Discrete Algorithms (SODA 2012), pp. 104--113 (2012)

\bibitem{KingReed08}
King, A.D., Reed, B.A.: Bounding $\chi$ in terms of $\omega$ and {$\Delta$} for
  quasi-line graphs.
\newblock Journal of Graph Theory \textbf{59}(3), 215--228 (2008)

\bibitem{KirkpatrickH1983}
Kirkpatrick, D., Hell, P.: On the complexity of general graph factor problems.
\newblock SIAM Journal on Computing \textbf{12}(3), 601--609 (1983)

\bibitem{KneisMRR2006}
Kneis, J., M{\"o}lle, D., Richter, S., Rossmanith, P.: Divide-and-color.
\newblock In: Graph-Theoretic Concepts in Computer Science (WG 2006), \emph{Lecture Notes in Computer Science}, vol. 4271, pp.
  58--67 (2006)

\bibitem{KoblerR2003}
Kobler, D., Rotics, U.: Finding maximum induced matchings in subclasses of
  claw-free and {$P_{5}$}-free graphs, and in graphs with matching and induced
  matching of equal maximum size.
\newblock Algorithmica \textbf{37}(4), 327--346 (2003)

\bibitem{LinSS2013}
Lin, M.C., Soulignac, F.J., Szwarcfiter, J.L.: Normal Helly circular-arc graphs and its subclasses.
\newblock Discrete Applied Mathematics \textbf{161}(7-8), 1037--1059 (2013)

\bibitem{MaierS1977}
Maier, D., Storer, J.A.: A note on the complexity of the superstring problem.
\newblock Technical Report 233, Computer Science Laboratory, Princeton University (1977)

\bibitem{Marx2005}
Marx, D.: Efficient approximation schemes for geometric problems?
\newblock In: Algorithms (ESA 2005), \emph{Lecture Notes in Computer Science}, vol. 3669, pp.
  448--459 (2005)
  
\bibitem{McKee2003}
McKee, T.A.: Restricted circular-arc graphs and clique cycles.
\newblock Discrete Mathematics \textbf{263}(1-3), 221--231 (2003)

\bibitem{Moser2009}
Moser, H.: A problem kernelization for graph packing.
\newblock In: Theory and Practice of Computer Science (SOFSEM 2009), \emph{Lecture Notes in Computer Science}, vol. 5404,
  pp. 401--412 (2009)

\bibitem{MoserThilikos2006}
Moser, H., Thilikos, D.M.: Parameterized complexity of finding regular induced
  subgraphs.
\newblock Journal of Discrete Algorithms, pp. 181--190 (2009)

\bibitem{OrioloPS2012}
Oriolo, G., Pietropaoli, U., Stauffer, G.: On the recognition of fuzzy circular
  interval graphs.
\newblock Discrete Mathematics \textbf{312}(8), 1426--1435 (2012)

\bibitem{PrietoSloper2006}
Prieto, E., Sloper, C.: Looking at the stars.
\newblock Theoretical Computer Science \textbf{351}(3), 437--445 (2006)

\bibitem{Roussopoulos1973}
Roussopoulos, N.: A $\max\{m, n\}$ algorithm for determining the graph {$H$}
  from its line graph {$G$}.
\newblock Information Processing Letters \textbf{2}(4), 108--112 (1973)

\bibitem{Spinrad2003}
Spinrad, J.R.: Efficient graph representations, Field Institute Monographs Vol.\ 19,
American Mathematical Society (2003)

\bibitem{Whitney1932}
Whitney, H.: Congruent graphs and the connectivity of graphs.
\newblock American Journal of Mathematics \textbf{54}, 150--168 (1932)

\end{thebibliography}
\end{document}